\pgfplotsset{compat=newest}
\newcommand{\naturals}{\mathbb{N}}
\newcommand{\sharpp}{{{\mathrm{\#P}}}}
\newcommand{\np}{{{\mathrm{NP}}}}
\newcommand{\fpt}{{{\mathrm{FPT}}}}
\newcommand{\p}{{{\mathrm{P}}}}
\definecolor{darkgreen}{rgb}{0,0.5,0}
\definecolor{darkpink}{rgb}{0.75,0.25,0.25}
\definecolor{RED}{rgb}{1,0,0}
\newcommand{\figurecut}[1]{}
\newcommand{\cutfornow}[1]{}
\newcommand{\xp}{{{\mathrm{XP}}}}
\newcommand{\wone}{{{\mathrm{W[1]}}}}
\newcommand{\sharpwone}{{{\#\mathrm{W[1]}}}}
\newcommand{\repeatproposition}[1]{  \begingroup
  \renewcommand{\theproposition}{\ref{#1}}  \expandafter\expandafter\expandafter\proposition
  \csname repproposition@#1\endcsname
  \endproposition
  \endgroup
  \setcounter{theorem}{\value{theorem}-1}
}
\newcommand{\repeattheorem}[1]{  \begingroup
  \renewcommand{\thetheorem}{\ref{#1}}  \expandafter\expandafter\expandafter\theorem
  \csname reptheorem@#1\endcsname
  \endtheorem
  \endgroup
  \setcounter{theorem}{\value{theorem}-1}
}
\newcommand{\rxthreec}{\textsc{RX3C}}
\newcommand{\greedyAVRule}{\textsc{GreedyAV}}
\newcommand{\greedyCostRule}{\textsc{GreedyCost}}
\newcommand{\MESPB}{\textsc{MES-PB}}
\newcommand{\MESAprUtilRule}{\textsc{MES-Apr}}
\newcommand{\MESCostUtilRule}{\textsc{MES-Cost}}
\newcommand{\phragmenRule}{\textsc{Phragm{\'e}n}}
\newcommand{\Greedy}{\greedyAVRule\xspace}
\newcommand{\GreedyCost}{\greedyCostRule\xspace}
\newcommand{\MES}{\MESCostUtilRule\xspace}
\newcommand{\Phragmen}{\phragmenRule\xspace}
\newcommand{\AddRemProbName}{\textsc{Flip-Bribery}}    
\newcommand{\greedyAVAddRemPB}{\textsc{GreedyAV-\AddRemProbName}}
\newcommand{\greedyCostAddRemPB}{\textsc{GreedyCost-\AddRemProbName}}
\newcommand{\MESAprUtilAddRemPB}{\textsc{MES-Apr-\AddRemProbName}}
\newcommand{\MESCostUtilAddRemPB}{\textsc{MES-Cost-\AddRemProbName}}
\newcommand{\phragmenAddRemPB}{\textsc{Phragm{\'e}n-\AddRemProbName}}
\newcommand{\pre}{{\mathrm{pre}}}
\DeclareMathOperator{\cost}{cost}
\begin{document}

\title{Robustness of Participatory Budgeting Outcomes:\\Complexity and Experiments}

\author{Niclas Boehmer\inst{1} \and
Piotr Faliszewski\inst{2} \and
Łukasz Janeczko\inst{2} \and 
Andrzej Kaczmarczyk\inst{2}}
\authorrunning{N. Boehmer et al.}
\institute{Technische Universit\"at Berlin, Berlin, \email{niclas.boehmer@tu-berlin.de}  \and
AGH University, Krakow, \email{\{faliszew,ljaneczk,andrzej.kaczmarczyk\}@agh.edu.pl}}

\maketitle

\begin{abstract}
  We study the robustness of approval-based participatory budgeting
  (PB) rules to random noise in the votes. Our contributions are
  twofold.  First, we study the computational complexity of the
  \#\AddRemProbName{} problem, where given a PB instance we ask for
  the number of ways in which we can flip a given number of approvals
  in the votes, so that a specific project is selected.  The idea is
  that \#\AddRemProbName{} captures the problem of computing the
  funding probabilities of projects in case random noise is added.
  Unfortunately, the problem is intractable even for the simplest PB
  rules. Second, we analyze the robustness of several prominent PB
  rules (including the basic greedy rule and the Method of Equal
  Shares) on real-world instances from Pabulib. Since
  \#\AddRemProbName{} is intractable, we resort to sampling to obtain
  our results. We quantify the extent to which simple, greedy PB rules
  are more robust than proportional ones, and we identify three types
  of (very) non-robust projects in real-world PB instances.
\end{abstract}

\section{Introduction}\label{sec:intro}

We study the robustness of approval-based participatory budgeting (PB)
rules to random noise in the votes. To this end, we first analyze the complexity of the computational problems that capture our approach,
and then we perform robustness experiments on real-world PB instances.

In a PB scenario, we are given a set of projects, each with its cost,
a collection of voters, and a budget. Each voter indicates which
projects he or she approves, and the task is to choose a set of projects
that receive sufficient support and whose total cost is within the
budget. In particular, participatory budgeting is a common way for
cities to let the residents influence their spending: First, the most
active citizens submit projects they would like to see implemented,
next, the whole electorate votes on these projects, and finally the selected projects
are carried out. However, while PB exercises are very
appealing in general, they do face several issues, of which we mention two:
First, the number of submitted projects is often quite high (in
Pabulib~\cite{pabulib}, a collection of real-world PB instances, over
$30\%$ of the instances involve at least $25$ projects, and over
$10\%$ involve at least $50$). Second, usually only a small fraction
of the electorate actually votes (as seen in Pabulib~\cite{pabulib},
typically this is between 4\% and 10\% of the cities'
residents). Hence, PB results may be seen as somewhat shaky.  Indeed,
it is possible that some citizens did not inform themselves about the full set of
submitted projects and would have changed their votes had they done
so, while some other citizens might have chosen to vote (or to refrain
from voting) based on somewhat random criteria (e.g., they saw a PB
advertisement at a more or less convenient moment).  We study the
potential instability of PB results by analyzing their robustness
to random noise in the votes.

More specifically, we take the following approach: Given a PB
instance, we modify each vote by randomly flipping some of the
approvals (i.e., by randomly adding approvals for some of the
unsupported projects and randomly removing approvals from some of the
supported ones) and, then, we evaluate the probability that the
outcome changes. As part of our analysis, we also examine the
probability that specific projects are funded if we add noise to the
votes like this.  On a computational level, we capture this approach
through the \#\AddRemProbName{} problem, where we ask for the number
of ways in which a given number of approval flips can be performed, so
that a specific project is selected. Unfortunately, our complexity
results show that \#\AddRemProbName{} is intractable even for the
simplest PB rules (often already in the problem's decision variant,
where we ask if the problem's solution is non-zero).
Consequently, in our numerical experiments we use sampling to estimate
the relevant probabilities: Given a PB instance, we randomly flip
approvals in the votes, following the resampling noise model of Szufa
et al.~\cite{DBLP:conf/ijcai/SzufaFJLSST22}, parameterized by a noise
level parameter~$\phi \in [0,1]$.
In our experimental analysis of real-world PB instances, our main
robustness measure is the $50\%$-winner threshold: the smallest value
of the noise level for which the outcome changes with probability at
least
$50\%$~\cite{boe-bre-fal-nie:c:counting-bribery,boe-bre-fal-nie:c:robustness-single-winner}.

We focus on five prominent PB rules. The first two, $\greedyAVRule$
and $\greedyCostRule$, consider the projects one-by-one, in the
non-increasing order of either the number of approvals they receive,
or their approval-to-cost ratios (a project's
approval-to-cost ratio is the number of approvals this project
received divided by its cost); a project is selected if there still
are enough unallocated funds at the time when it is considered. We
study these two rules because they are among the simplest ones and,
more importantly, because $\greedyAVRule$ is one of the most commonly
used rules in practice. Yet, these two rules are often criticized as
providing results that do not represent the voters proportionally.
Hence, we also consider a PB variant of the $\phragmenRule$
rule~\cite{bri-fre-jan-lac:c:phragmen,los-chr-gro:c:phragmen-pb}, and
two rules from the \emph{Method of Equal Shares} (\textsc{MES})
family, proposed by Peters, Pierczyński, and
Skowron~\cite{pet-sko:c:welfarism-mes,pet-pie-sko:c:pb-mes}. The
\textsc{MES} rules are particularly attractive both due to their good
theoretical properties and the fact that one of them was recently used
in a real-world PB exercise~\cite{mes}. All our rules are computable in
polynomial time (at least for appropriate tie-breaking; see the work
of Janeczko and Faliszewski for a deeper discussion of this
issue~\cite{jan-fal:c:ties-multiwinner}).

\newcommand{\tentry}[1]{
  \begin{minipage}[c]{3cm}
    \centering    #1
  \end{minipage}
}

\newcommand{\binaryenc}{\\ {\scriptsize (binary encoding)}}
\newcommand{\unaryenc}{\\ {\scriptsize (unary encoding)}}
\newcommand{\unitenc}{\\ {\scriptsize (unit costs)}}

\begin{table*}[t]
    \caption{\label{tab:results}Computational complexity of
    $\AddRemProbName$ for our rules when project costs are encoded in binary. Unless stated otherwise, the
    results are for order-based tie-breaking (see \Cref{sec:prelims} for definitions). For the counting variant, all
    intractability results hold already for a single voter and all
    tractability results hold for arbitrary numbers of voters. The
    parentheses after $\fpt$, $\wone$, and $\xp$ results indicate for
    which parameters they hold ($m$ for the number of projects, $n$
    for the number of voters, and $r$ for the number of approval
    flips).}

  \centering
  \scalebox{1}{
    \begin{tabular}{c|cc|c|c}
      \toprule
                   & \multicolumn{2}{c|}{decision variant}   & \multicolumn{1}{c|}{counting variant} & \\
      rule         & $1$ voter & $n$ voters &  & reference \\
      \midrule
        \tentry{\greedyAVRule} & 
        \tentry{$\np$-com.\\ $\wone$-hard($r$) \\ $\fpt(m)$} &
        \tentry{$\np$-com.\\ $\wone$-hard($r$) \\ $\fpt(m)$} &
        \tentry{$\sharpp$-com. \\ $\sharpwone$-hard($r$)  \\ $\fpt(m)$} &
        \tentry{Thms.~\ref{thm:greedy-av-pb-npc}, \ref{thm:greedy-av-pb-ftp-m}, \ref{thm:greedyav-sharpphard-with-one-voter}} \\[6mm]

        \tentry{\greedyAVRule \\ \scriptsize{(cheaper-first)} } & 
        \tentry{$\p$} &
        \tentry{$\np$-com. \\ $\wone$-hard$(n)$ \\ $\fpt(m)$ \\ $\xp(n)$} & \tentry{$\sharpp$-com. \\ $\sharpwone$-hard($r$)  \\ $\fpt(m)$} &
        \tentry{Thms.~\ref{thm:greedy-av-pb-ftp-m}, \ref{thm:greedy-av-pb-voters-xp},  \ref{thm:greedyav-sharpphard-with-one-voter}} \\[6mm]
  
        \midrule
  
        \tentry{\greedyCostRule } & 
  			\tentry{$\np$-com.\\ $\wone$-hard($r$)} &
  			\tentry{$\np$-com.\\ $\wone$-hard($r$)} &
  			\multicolumn{1}{c|}{\tentry{$\sharpp$-com. \\ $\sharpwone$-hard($r$)}} &
		\tentry{Thm.~\ref{thm:greedycost-pb-npc}, Cor.~\ref{cor:other-rules-counting}} \\[4mm]
  
        \tentry{\phragmenRule, \\ \MESAprUtilRule} & 
  			\tentry{$\p$} &
  			\tentry{$\np$-com \unitenc} &
  			\multicolumn{1}{c|}{\tentry{$\sharpp$-com. \\ $\sharpwone$-hard($r$)}} &
       \tentry{Thm.~\ref{thm:mes-phragmen}, Cors.~\ref{cor:mes-phr-poly},~\ref{cor:other-rules-counting}} \\[4mm]
  
        \tentry{\MESCostUtilRule} & 
  			\tentry{$\np$-com.\\ $\wone$-hard($r$)} &
  			\tentry{$\np$-com \unitenc} &
  			\multicolumn{1}{c|}{\tentry{$\sharpp$-com. \\ $\sharpwone$-hard($r$)}} &
        \tentry{Thm.~\ref{thm:mes-phragmen}, Cors.~\ref{cor:mes-phr-poly},~\ref{cor:other-rules-counting}} \\[4mm]
      \bottomrule
      \end{tabular}%
  }
\end{table*}

\paragraph{Contributions.} Below we summarize our main contributions:
\begin{enumerate}
\item For all our rules, we show that both \#\AddRemProbName{} and its
  decision variant are, in general, intractable. Somewhat
  surprisingly, for \greedyAVRule, this holds already for a
  single voter. Nonetheless, we do show some ways to circumvent these intractability
  results (e.g., an $\fpt$ algorithm parameterized by the number
  of projects, or an $\xp$ algorithm parameterized by the number of
  voters, for a particular tie-breaking scheme). However, generally,
  these approaches are insufficient for large PB instances, including
  many of those from Pabulib~\cite{pabulib}. We summarize our
  complexity results in Table~\ref{tab:results}.

\item We perform a robustness analysis of all our PB rules (except for
  one of the variants of \textsc{MES}) on PB instances from
  Pabulib~\cite{pabulib}. We find that the greedy
  rules (i.e., $\greedyAVRule$ and $\greedyCostRule$) are the most
  robust ones (i.e., on average they have the highest
  $50\%$-winner thresholds), whereas the proportional rules (i.e.,
  \Phragmen and a variant of \textsc{MES}) are less robust. The extent
  of the difference between the two kinds of rules is interesting, but
  the effect itself is expected: Proportional rules follow much more
  nuanced algorithms and, hence, react to noise in less predictable and potentially more extreme
  ways.

\item We find many real-world PB instances where random noise quickly
  changes project's funding probabilities. For example, in a PB
  instance from Wrzeciono Mlociny from 2019 flipping $0.24\%$ of
  approvals randomly changes the outcome in a majority of cases.  We
  identify three main types of non-robust projects: Originally
  selected projects of the first type have low funding probability
  even for a fairly small noise level, and, as we increase it, their
  funding probability decreases even further. Such projects can be
  seen as selected ``by luck.''  (Analogously, projects that
  originally are not selected, but whose funding probability quickly
  increases even for small noise levels, can be seen as rejected ``by
  misfortune.''). Non-robust projects of the second type quickly reach
  a funding probability of $50\%$ and stay at this level irrespective
  of whether further noise is added to the votes. Such projects can be
  seen as tied with each other. Finally, for projects of the third
  type the funding probability behaves non-monotonically as the level
  of noise increases (such projects deserve individual analysis).
\end{enumerate}
All in all, we believe that a robustness
analysis should always be performed after each participatory budgeting
exercise. Its results can either be used by the city to fund some
additional projects (when they are identified as ``effectively tied''
or ``unselected due to misfortune''), or by the voters, to understand
how the projects they care about performed and how ``far away'' they
were from (not) getting funded.

\paragraph{Related Work.}
Approval-based participatory budgeting is an extension of the
approval-based (multiwinner) voting framework, where each project
(typically referred to as a candidate) has a unit cost and the goal is
to select a fixed-size committee. We point to the works of Rey and
Maly~\cite{rey-mal:t:pb-survey} and Lackner and
Skowron~\cite{lac-sko:b:approval-survey} for overviews of the PB and
multiwinner voting frameworks, respectively.

Analyses of the probability that election outcomes change under noise
were recently pursued by Boehmer et
al.~\cite{boe-bre-fal-nie:c:counting-bribery} and Baumaister and
Hogrebe~\cite{bau-hog:c:robust-winner}. Both teams analyzed counting
variants of various bribery problems, i.e., problems where we ask for
the number of ways to modify the votes so that a particular candidate
wins, and they both focused on the single-winner, ordinal case, where
the voters rank the candidates and the voting rule outputs a single
winner. Relevant to our work, Boehmer et
al.~\cite{boe-bre-fal-nie:c:counting-bribery} introduced the notion of
the $50\%$-winner threshold, which they later applied to analyzing
real-world elections~\cite{boe-bre-fal-nie:c:robustness-single-winner}. 

Another view of election robustness was considered by Shiryaev et
al.~\cite{shi-yu-elk:c:robustness}, who instead of asking for the
probability that the result changes under a given noise model, asked
for the smallest number of modifications in the votes needed for the
change (this is also related to computing the margin of victory of
particular
candidates~\cite{mag-riv-she-wag:c:stv-bribery,xia:margin-of-victory,car:c:stv-margin-of-victory}). This
idea was later pursued by Faliszewski et
al.~\cite{bre-fal-kac-nie-sko-tal:j:multiwinner-robustness} in the
ordinal, multiwinner setting, and by Gawron and
Faliszewski~\cite{gaw-fal:c:robustness-approval} and Faliszewski et
al.~\cite{fal-gaw-kus:c:greedy-robustness} in the approval-based one.
For multiwinner approval elections, similar types of bribery problem
were studied by Faliszewski et
al.~\cite{fal-sko-tal:c:bribery-measure-success} and while their
motivation was somewhat different, their technical contributions are
in a similar spirit.

More broadly, the family of bribery problems was first studied by
Faliszewski et al.~\cite{fal-hem-hem:j:bribery}. For an overview of
this line of work, we point to the survey of Faliszewski and
Rothe~\cite{fal-rot:b:control-bribery}.

\section{Preliminaries}\label{sec:prelims}

A \emph{participatory budgeting} instance (a PB instance) is a 5-tuple
$E = (C, V, A, B, \cost)$, where $C = \{c_1, c_2, \ldots, c_m\}$ is a
set of \emph{projects} (also called \emph{candidates} in the
literature), $V = (v_1, v_2, \ldots, v_n)$ is a collection of
\emph{voters}, $A \colon V \rightarrow 2^C$ is a function that
associates each voter $v_i$ with the set of projects that $v_i$
\emph{approves}, $B \in \naturals$ is the available \emph{budget}, and
$\cost \colon C \rightarrow \naturals$ is a function that associates
each project with its cost.  For convenience, we overload the function
$A$ so that for each project~$c_i$, $A(c_i)$ is the set of voters that
approve~$c_i$ (i.e.,\ the set of voters $v_j\in V$ with
$c_i \in A(v_j)$).  We refer to $|A(c_i)|$ as the \emph{approval
  score} of $c_i$.  Given a set of projects $S\subseteq C$, by
$\cost(S)$ we mean $\sum_{c_i \in S} \cost(c_i)$, i.e., the total cost
of the members of $S$.  Given a PB instance $E = (C,V, A, B, \cost)$,
a set $S \subseteq C$ is a \emph{feasible outcome} if
$\cost(S) \leq B$, i.e., if its cost is within the available budget.

\subsection{Budgeting Rules}

A \emph{budgeting rule} is a function $f$ that given a PB instance
outputs the winning feasible outcome (note that we assume that our rules are
resolute; we will discuss the used tie-breaking mechanisms later).  We
focus on sequential budgeting rules, which compute a feasible
outcome~$W$ starting from an empty set and including projects
one-by-one, until a certain condition is met. We consider the
following rules:
\begin{enumerate}
\item The \emph{$\greedyAVRule$} rule considers the projects in the
  order of their non-increasing approval scores (breaking ties between
  projects according to a given, prespecified rule) and it includes a
  project into the outcome~$W$ if doing so would not exceed the
  available budget.

\item The \emph{$\greedyCostRule$} rule proceeds analogously
  to~$\greedyAVRule$, but it considers the projects in the order of
  non-increasing ratios of their approval score to their cost.

\item The \emph{\phragmenRule} rule constructs the outcome in a
  continuous process, in which voters virtually buy the projects to be
  funded. The voters start with an empty (virtual) bank account and
  are given (virtual) money at a constant rate as the process runs
  (the exact value of the constant is irrelevant); slightly abusing
  the notation, we refer to the balance of voter~$v_i \in V$ as~$b(v_i)$
  (the balance fluctuates over time but we use this notation only at
  well-defined points of the procedure). As soon as there is a
  project~$c$ who is not yet included in $W$, such that:
  \[
    \sum_{v_i \in A(c)} b(v_i) = \cost(c),
    \quad\quad \text{and}    \quad\quad
    \cost(W) + \cost(c) \leq B,
  \]
  (i.e., the voters who approve $c$ have enough money to pay for it
  and doing so would not exceed the overall budget $B$), the voters
  who approve $c$ buy this project (i.e., $c$ is included in $W$ and
  the bank accounts of the voters who approve $c$ are reset to be
  empty). The process runs until no more projects can be added to~$W$
  (i.e., there is no project that is approved by at least one voter,
  such that adding this project to $W$ would not exceed the overall
  budget $B$).
  
\item The \emph{\MESAprUtilRule} and \emph{\MESCostUtilRule} rules are two
  incarnations of the \emph{Method of Equal Shares (MES)}.  Just like
  $\phragmenRule$, these rules are also based on the concept of voters
  buying the projects, but this time each voter receives~$\nicefrac{B}{|V|}$
  units of virtual money upfront. Then, the voters sequentially buy the
  projects with the best price-per-utility value (the difference between
  the two variants is in how they define this utility). Formally,
  let~$b_i(v)$ be the (virtual) account balance of voter~$v \in V$
  \emph{just before} stage~$i$ of the procedure, and
  let~$u_j(c) \in \naturals$ be the utility that voter~$v_j \in V$
  assigns to some project~$c \in C$. At each stage~$i$, a
  project~$c \not\in W$ is called \emph{$q$-affordable} if it holds
  that:
  \[
    \sum_{v_j \in A(c)} \min\{b_i(v_j), u_j(c) \cdot q \} = \cost(c).
  \]
  In other words, project $c$ is $q$-affordable if the voters who
  approve it can collect $\cost(c)$ amount of money by spending $q$
  units of budget for each unit of utility that they derive from the
  project, where the voters approving the project who do not have enough money in their
  accounts give all the money they have
  left.  If there are any $q$-affordable projects, then the rule
  includes in $W$ the one which is~$q$-affordable for the smallest~$q$
  (breaking ties according to a prespecified rule). Otherwise, the rule terminates. Both \MESAprUtilRule{} and
  \MESCostUtilRule{} follow the above-described scheme, but under the
  first one the utility~$u_j(c)$ that voter~$v_j$ gives to a
  project~$c$ is $1$ if the voter approves~$c$ and is~$0$ otherwise;
  under the second rule, this utility is~$\cost(c)$ if the voter
  approves $c$ and it is $0$ otherwise.
\end{enumerate}

Regarding tie-breaking, we assume that each PB instance also comes
with a fixed order of projects and whenever we need to resolve a
tie between projects, we choose the one that is ranked
highest in this order. We refer to this approach as
\emph{order-based} tie-breaking. For the case of $\greedyAVRule$ we
also consider the \emph{cheaper-first} tie-breaking: Whenever there is
a tie among a set of projects, we choose the one with the lowest cost
(and if there are several projects with the same lowest cost, we
choose one of them using order-based tie-breaking). We observe that
cheaper-first tie-breaking is a special case of the order-based
one, so all algorithms that work for the latter also work for the
former, and all hardness results for the former also hold for the
latter.

Unlike the other rules we study, \MESAprUtilRule{}
and~\MESCostUtilRule{} output only projects which they consider
``sufficiently supported,'' even for the case where each project
received at least one approval. This implies in some instances that
the set of funded projects (returned by these rules) can be extended
by some additional projects and still fit in the budget. Formally
speaking, the MES rules (as defined above) fail exhaustiveness. In our
theoretical part, we stick to these variants to focus on fundamental
properties of the MES rules.  In the experimental part of the paper we
discuss exhaustive variants of MES.

\subsection{Computational Complexity}

We assume familiarity with both classic and parameterized
computational complexity theory, including the $\p$, $\np$, $\fpt$,
$\wone$, and $\xp$ classes, as well as the notions of reductions,
hardness, and completeness.  The $O^*(\cdot)$ notation is
analogous to the classic $O(\cdot)$ one, except that it also omits
polynomial terms.
Some of our intractability results follow
by reductions from the following variant of the \textsc{Subset Sum}
problem.
\begin{definition}
  The input of the \textsc{Sized Subset Sum} problem consists of a set
  $U = \{u_1, \ldots, u_n\}$ of positive integers, a target integer
  $t$, and a number $k \in \naturals$. We ask if there are $k$~numbers
  from $U$ that sum up to~$t$.
\end{definition}
The problem is well-known to be $\np$-complete (see, e.g., the
textbook of Garey and Johnson~\cite{gar-joh1979:b:comp-and-intract})
and $\wone$-hard with respect to parameterization by the solution
size~$k$ (see the work of Downey and
Fellows~\cite{dow-fel:c:fpt-intractability}).

Given a decision problem $X$, where we ask if a mathematical object
with some properties exists, we write $\#X$ to denote its counting
variant, i.e., the problem where we ask for the number of such objects.
A counting problem belongs to the class $\sharpp$ if its decision
variant belongs to $\np$. Similarly, $\sharpwone$ is the counting
analog of $\wone$~\cite{flu-gro:j:parameterized-counting}.  For the
case of counting problems, instead of the classic many-one reductions
it is typical to use Turing reductions. That is, to show that a
counting problem $\#A$ reduces to a counting problem $\#B$ in
polynomial time, it suffices to show a polynomial time algorithm for
$\#A$ that is free to invoke a $\#B$ oracle (i.e., it can obtain
answers to the instances of $\#B$ in unit time).  We mention that
\textsc{\#Sized Subset Sum} is both $\sharpp$-complete and
$\sharpwone$-hard for the parameterization by the solution size
$k$~\cite{mcc:j:parameterized-counting}.

\subsection{Flip-Bribery Problem}\label{sec:flipB}

We are interested in computing the probability that a given project
wins in a given PB instance provided we can flip a given number of
approvals. Formally, we capture this problem in the following
definition.

\begin{definition}
  Let $f$ be a budgeting rule.  In the $f$-\AddRemProbName{} decision
  problem, given a PB instance~$E = (C,V,A,B,\cost)$, a preferred
  project $p \in C$, and a radius $r \in \naturals$, we ask if there
  is a PB instance~$E'$ such that $p$ belongs to $f(E')$, where $E'$
  can be obtained from $E$ by performing at most $r$~approval flips.
  In the counting variant of the problem, denoted
  $\#f$-\AddRemProbName{}, we ask for the number of such PB
  instances~$E'$ (obtained by making \emph{exactly}
  $r$~approval~flips\footnote{Note that the difference between
    \emph{at most $r$} and \emph{exactly $r$} is immaterial here from
    the point of view of computational complexity: Both variants are
    Turing-reducible to each other.}).
\end{definition}

The interpretation of the counting version of the problem is as
follows: If our original PB instance has $m$ projects and $n$ voters,
and we make exactly $r$ approval flips, then there are
$y = {mn \choose r}$ different PB instances that we can obtain. If the
solution to the counting problem is $x$ (i.e., in $x$ of these $y$
instances $p$ is selected) then the probability of selecting $p$ after
making $r$ random approval flips is $x/y$. This is why we are
interested in the complexity of the counting variant of the problem
for our PB rules. However, as we will see, even this
problem---modeling a rather basic form of noise---is
intractable. Consequently, in our experiments we will evaluate the
probability that a given project is selected (under a given noise
model) using a sampling approach. Taking advantage of that, we will
use a more involved noise model that captures reality somewhat better
than the one implied by $\#\AddRemProbName$ (however, as our
intractability proofs for $\#\AddRemProbName$ work already for a
single voter, they imply hardness for any natural model of noise).

We also consider the decision variant of the problem because, on the
one hand, if already the decision problem is intractable then so is
the counting one, and, on the other hand, because the solution to the decision problem already gives us a first idea
regarding how well a particular project performed: The fewer approval
flips we need to ensure a project's victory, the stronger it is.

\section{Complexity Results for $\greedyAVRule$}\label{sec:greedyav}
We start our complexity analysis of the $(\#)\AddRemProbName$ problems
by focusing on the $\greedyAVRule$ rule.  Intuitively, it seems that
$\greedyAVAddRemPB$ is rather simple: If we want some project $p$ to
be selected, then we should give it as many approvals as possible and,
if we still have some approval flips left (i.e., if $p$ ends up being
approved by all the voters before we run out of the available flips),
then we should remove approvals from the most expensive,
approved-by-all projects that precede $p$ in the tie-breaking order (if such projects exist).
However, it turns out that the situation is more involved and, instead of giving $p$ as many
approvals as possible, it may be more beneficial to promote some other
project.  For example, $p$ may not be selected because some expensive
project~$c$ is considered and selected earlier, not leaving enough
budget for~$p$.  However, if we promote some project~$d$ to be
considered (and selected) before~$c$, then $\greedyAVRule$ may first select $d$ and then
reject~$c$ due to insufficient funds left, and may later
select~$p$.
Consequently, $\greedyAVAddRemPB$ can be quite tricky.  Indeed, we
find that
$\greedyAVAddRemPB$ is $\np$-complete (and, in fact,
para-$\np$-complete for the parameterization by the number of voters).

\begin{theorem} \label{thm:greedy-av-pb-npc}
$\greedyAVAddRemPB$ is $\np$-complete, even for a single
voter, as well as $\wone$-hard for the parameterization by the
number~$r$ of approval flips.
\end{theorem}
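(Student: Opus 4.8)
The plan is to settle membership first and then give a single reduction from \textsc{Sized Subset Sum} that, by carrying the solution size $k$ over to the flip radius $r$, yields both the $\np$-hardness (already for one voter) and the $\wone$-hardness for~$r$ at once. Membership in $\np$ is immediate: a certificate is the set of at most $r$ flipped voter--project pairs; given it, I build $E'$, run the polynomial-time $\greedyAVRule$ rule, and check whether $p \in \greedyAVRule(E')$. Since \textsc{Sized Subset Sum} is $\np$-complete and $\wone$-hard for the parameter~$k$, and the reduction below is polynomial-time with $r = k$, it is an $\fpt$ reduction preserving the parameter, so both intractability claims follow from a single construction that uses only one voter.

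Given an instance $U = \{u_1,\dots,u_n\}$, target $t$, and size $k$ (after discarding every $u_i > t$, which can never occur in a solution), I would build a single-voter PB instance with one \emph{item project} $c_i$ of $\cost(c_i) = u_i$ for each $u_i$, a preferred project $p$, and a \emph{blocker} $g$, and set $r = k$. With a single voter every project has approval score $0$ or $1$, so the $\greedyAVRule$ order is: the approved projects (in tie-breaking order), then the unapproved ones. The budget, the costs of $g$ and $p$, and the tie-breaking order are chosen so that the only productive use of flips is to \emph{promote} item projects (give them the voter's approval, pulling them ahead of $p$), and so that, once the promoted items are funded, $g$ acts as a threshold test around~$t$: if the promoted items consume strictly less than $t$ units of budget, enough money remains for $g$, which is then funded and over-spends so that $p$ no longer fits; if they consume exactly $t$, then $g$ just misses being affordable while $p$ still fits; and if they consume more than $t$, then $p$ does not fit directly. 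Concretely, one can make the budget and the costs of $g$ and $p$ differ by one unit around $t$, so that $g$ is funded precisely when the promoted items sum to at most $t-1$.

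The forward direction is then routine: given a $k$-subset of $U$ summing to $t$, promote exactly the corresponding item projects using $k = r$ flips, so the promoted items consume exactly $t$, the blocker misses, and $p$ is funded. The backward direction---showing that \emph{every} bribery of at most $k$ flips that funds $p$ yields a genuine $k$-subset summing to $t$---is the main obstacle. Because there is only one voter, every approval is individually flippable, so I must rule out ``shortcut'' briberies that bypass the subset-sum logic, most notably directly unapproving $g$ or simply approving $p$ to drag it to the front of the order. The construction has to be hardened against these: the tie-breaking order is chosen so that $g$ precedes $p$ \emph{even after $g$ is unapproved}, so that removing $g$'s approval does not remove $g$ from consideration before $p$; and additional guard structure (or a suitable placement of $p$ so that it is never affordable until the budget has been consumed down to the exact threshold) is used to neutralize the benefit of approving $p$ outright. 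The numeric gaps are then set so that spending a flip on anything other than the ``right'' item projects either leaves too much budget (so $g$ fires and blocks $p$) or wastes a flip that was needed to reach the threshold, so that no combination of at most $k$ flips can fund $p$ unless exactly $k$ items summing to exactly $t$ are promoted.

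Finally, I would record the parameterized consequences: since the whole instance has a single voter, the $\np$-hardness holds for one voter; and since $r = k$ and the reduction runs in polynomial time, the $\wone$-hardness of \textsc{Sized Subset Sum} for $k$ transfers to $\greedyAVAddRemPB$ parameterized by the number $r$ of approval flips. The technical heart, and where I expect to spend the most effort, is verifying the backward direction under the single-voter restriction, i.e.\ checking that the guard/blocker gadget genuinely defeats all cheap shortcut briberies.
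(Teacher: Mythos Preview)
Your high-level strategy is the paper's: reduce from \textsc{Sized Subset Sum} to a single-voter instance with $r=k$, let approval flips ``promote'' item projects so that they consume exactly $t$ units before a blocker decides whether $p$ still fits. The gap is in the backward direction, and it is not closed by the countermeasures you list.

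With a \emph{single} blocker $g$, the two-flip bribery ``unapprove $g$, approve $p$'' leaves $p$ as the only project with approval score~$1$, so $\greedyAVRule$ considers $p$ first. Tie-breaking is irrelevant here: it only orders projects \emph{within} a score tier, and after these two flips $p$ is alone in the top tier. Your other fallback---``place $p$ so that it is never affordable until the budget has been consumed down to the threshold''---cannot work either, because the remaining budget is monotone non-increasing during a $\greedyAVRule$ run; if $p$ is affordable at the moment it is supposed to be funded in a yes-instance, it is a fortiori affordable when considered first with the full budget. Hence for every $k\ge 2$ your construction is a trivial yes-instance of $\greedyAVAddRemPB$, regardless of the \textsc{Sized Subset Sum} input. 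The paper closes exactly this hole by making the blockers too numerous to disarm: it uses $k{+}1$ copies $d_1,\dots,d_{k+1}$ of the threshold blocker plus an additional tier of $2k{+}1$ filler projects $y_1,\dots,y_{2k+1}$, all initially approved and placed ahead of $p$ in tie-breaking; with only $k$ flips at least one $d_i$ and at least $k{+}1$ of the $y_i$ remain approved, and the fillers also absorb the slack when fewer than $k$ item projects are promoted. A smaller point: without the paper's preprocessing step (adding a large constant to every $u_i$ and to $t$ so that only size-$k$ subsets can hit $t$), your reduction would also accept subsets of size strictly less than $k$ summing to $t$, since the decision variant allows \emph{at most} $r$ flips.
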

\begin{proof}
  We see that $\greedyAVAddRemPB$ is in $\np$: It suffices to guess the approvals to flip and verify that the preferred
  project is selected in the resulting modified PB instance. To show hardness,
  we give a reduction from \textsc{Sized Subset Sum}.

  Consider an instance $I$ of \textsc{Sized Subset Sum} with a set
  $U = \{u_1, \ldots, u_n\}$ of positive integers, a target integer
  $t$, and a number $k \in \naturals$, where we ask if there are
  $k$~numbers from $U$ that sum up to~$t$. W.l.o.g., we assume that
  there is no collection of fewer than~$k$~elements of~$U$ that sum up
  to~$t$ (otherwise, we could add $S = 1 + t + \sum_{u \in U} u_i$ to
  each element of $U$ and increase~$t$ by~$k \cdot S$, so the
  updated~$t$ could only be achieved by summing exactly~$k$ elements
  from the updated~$U$).  Further, we assume that $t > 0$, and we
  define $T = 3 \sum_{i=1}^n u_i$. Clearly, $T$ is larger than the sum
  of all the elements from $U$. We assume that $t \leq \frac{1}{3}T$
  (otherwise $I$ would certainly be a \emph{no}-instance).

  We form a PB instance with project set
  $C = \{x_1, \ldots, x_n\} \cup \{y_1, \ldots, y_{2k+1}\} \cup \{d_1,
  \ldots, d_{k+1}\} \cup \{p\}$ and a single voter who approves all
  the projects except for $x_1, \ldots, x_n$. The project costs are as
  follows:
  \begin{enumerate}
  \item For each $i \in [n]$, $\cost(x_i) = T + u_i$;
  \item For each $i \in [2k+1]$, $\cost(y_i) = T$;
  \item For each $i \in [k+1]$, $\cost(d_i) = T-t+1$;
  \item $\cost(p) = T-t$.
  \end{enumerate}
  The internal tie-breaking order is
  $x_1\succ \ldots\succ x_n\succ y_1\succ \ldots\succ y_{2k+1}\succ
  d_1\succ \ldots\succ d_{k+1}\succ p$, i.e., $x_1$ is ranked first in
  the tie-breaking order and $p$ is ranked last. We set the budget to
  be $B = kT+T$ and the number of operations to be $r = k$.

  Assume that $I$ is a \emph{yes}-instance of \textsc{Sized
    Subset Sum} and let $S \subseteq [n]$ be such that $|S| = k$ and
  $\sum_{i \in S}u_i = t$. In this case it is possible to ensure that
  $p$ is among the winning projects by adding a single approval for
  each project $x_i$ with $i \in S$. Indeed, after doing so
  $\greedyAVRule$ proceeds as follows: First, due to the specified tie-breaking, it selects the $k$
  projects $\{x_i\mid i\in S\}$ that got the approvals; their total cost is $kT+t$. Next,
  $\greedyAVRule$ considers projects $y_1, \ldots, y_{2k+1}$, but each
  of them costs $T$ and is thus above the available budget. Then,
  $\greedyAVRule$ considers projects $d_1, \ldots, d_{k+1}$ which,
  again, are too expensive (each of them costs $T-t+1$, while there is
  only $T-t$ units of budget left). Eventually, $p$, which costs
  exactly $T-t$, is selected.

  Considering the opposite direction, let us assume that it is
  possible to flip up to $r=k$ approvals so that $p$ is selected.  The
  approvals can only be added to projects~$x_1$, $x_2$, $\ldots$,
  $x_n$ and removed from the remaining ones, excluding~$p$
  (otherwise~$p$ would not be in the winning committee; this follows
  from our assumptions regarding instance $I$).  Let
  $X' \subseteq \{x_1, \ldots, x_n\}$ be the set of projects that got
  the extra approvals.
  On the modified instance, $\greedyAVRule$ proceeds as follows:
  First, it selects all the projects from $X'$. Their total cost is of
  the form $k'T+t'$, where $0 \leq k' \leq k$ and $0 \leq t' \leq
  T$. Then, $\greedyAVRule$ considers those projects among
  $y_1, \ldots, y_{2k+1}$ that did not lose their approval. Since $r = k$,
  there are at least $k+1$ of them, each with cost equal to $T$.  If
  $t' = 0$, which is possible only if no approvals were added, then
  $\greedyAVRule$ chooses $k+1$ still approved projects among
  $y_1, \ldots, y_{2k+1}$, with total cost $kT+T$.  Consequently, the
  whole budget is used and $p$ is not selected.  Hence, we know that
  $t' > 0$. So, while considering still-approved projects among
  $y_1, \ldots, y_{2k+1}$, $\greedyAVRule$ selects $k-k'$ of them. The
  total cost of the selected projects is then $kT+t'$ (where
  $t' > 0$). Next there are three possible cases:
  \begin{enumerate}
  \item If $t' > t$, then both $p$ and all the projects from $D$ are too
    expensive and $\greedyAVRule$ does not select any of them.
  \item If $t' = t$, then projects from $D$ are too expensive (each of
    them costs $T-t+1$), but after considering them (and the remaining
    ones among $y_1, \ldots, y_{2k+1}$), $\greedyAVRule$ chooses~$p$
    (which costs $T-t$).
  \item If $t' < t$, then $\greedyAVRule$ selects the first
    still-approved project from $D$ (since $|D| = k+1$, there is at
    least one such project). After this, the total cost of the so-far
    selected projects is~$kT+t' + T-t+1 \geq kT+1 + T-t+1 = kT+T-t+2$.
    Due to our choice of $T$ and the assumptions about $t$, we see
    that the remaining budget is smaller than $T-t$, so
    $\greedyAVRule$ does not choose $p$.
  \end{enumerate}
  Consequently, if by flipping at most $k$ approvals it is possible to
  ensure that $\greedyAVRule$ selects $p$, then it must be the case
  that there are exactly $k$ numbers in $U$ that sum up to $t$ (the
  reason why there are exactly $k$ of them is because we have assumed
  that only size-$k$ subsets of $U$ may sum up to~$t$).  The reduction
  clearly runs in polynomial time, which proves that $\greedyAVRule$
  is $\np$-hard and, overall, $\np$-complete. Moreover, the number~$r$
  of approval flips in the constructed instance is the same as the
  solution size~$k$ in the input instance, which yields the requested
  $\wone$-hardness.  \qed\end{proof}
In the following, we explore ways to circumvent
this hardness result.

\subsection{Parameterization by the Number of Projects} \label{sec:para_m}

One way to get around the above hardness result is to seek $\fpt$
algorithms.  However, we have already established in
\Cref{thm:greedy-av-pb-npc} that such algorithms are (most likely) out
of reach for the parameterizations by the number of voters and the
number of available approval flips.  This leaves us with the number of
projects as the remaining most natural parameter and, indeed, we will
present a fixed-parameter tractable algorithm for this parameter that
even works for the counting variant of the problem.  The reader may
wonder if there is any value in considering $\fpt$ algorithms for this
parameterization, given that in typical participatory budgeting
scenarios one would consider fairly large project sets. However, small
PB instances are also common in practice. For example, at the time of
writing this paper, the Pabulib library of real-world PB
instances~\cite{pabulib} contained $730$ instances, of which $224$
instances included at most $10$ projects.

Interestingly, previous attempts to find such $\fpt$ algorithms for
counting the number of successful briberies in ordinal elections
remained unsuccessful~\cite{boe-bre-fal-nie:c:counting-bribery}, even
though such algorithms for the decision variants of the problem are
common
\cite{kno-kou-mni:j:fpt-m-bribery,fal-rot:b:control-bribery}. The
reason for this is that in the decision case for ordinal elections it
is common to use ILP formulations of the problem and employ one of
several $\fpt$ algorithms for ILPs.
However, for the counting variants,
the number of solutions of these ILPs does not directly translate to
the number of solutions of the original problem.
In our case, we obtain our algorithm because the structure of
$\greedyAVAddRemPB$ is sufficiently simple to apply dynamic
programming.

\newcommand{\sig}{{\mathit{sig}}}
\newcommand{\sigs}{{\mathit{Sig}}}
\begin{theorem} \label{thm:greedy-av-pb-ftp-m}
  There is an $\fpt$ algorithm for
  $\#\greedyAVAddRemPB$ parameterized by the number
  of projects with running time $O^*(3^m)$.
\end{theorem}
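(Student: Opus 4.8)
The plan is to exploit the fact that the outcome of $\greedyAVRule$ depends only on the vector of final approval scores $\vec{s}' = (s_1', \ldots, s_m')$ of the modified instance (together with the fixed tie-breaking order, costs, and budget), so that counting flip-sets reduces to counting, over all \emph{winning} score vectors (those for which $p$ is selected), the number of size-$r$ flip-sets that realize them. First I would note that flips on distinct projects are independent: to change the score of project~$c_j$ from~$s_j$ to a target~$s_j'$, one adds $x$ approvals (from the $n - s_j$ non-approvers) and removes $y$ approvals (from the $s_j$ approvers) with $x - y = s_j' - s_j$, which can be done in $\binom{n - s_j}{x}\binom{s_j}{y}$ ways using $x+y$ flips. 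I would encode this per project as a polynomial $P_j(s_j'; z) = \sum_{x - y = s_j' - s_j} \binom{n - s_j}{x}\binom{s_j}{y}\, z^{x+y}$ in a bookkeeping variable~$z$ tracking the number of flips. Since every flip-set determines a unique final score vector and distinct score vectors correspond to disjoint sets of flip-sets, the sought count is exactly $[z^r] \sum_{\vec{s}' \text{ winning}} \prod_j P_j(s_j'; z)$.

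The core difficulty is evaluating this sum, because ``winning'' is not a per-project condition: whether $p$ is selected depends on the total cost of the projects $\greedyAVRule$ selects before reaching $p$, which in turn depends on the processing order induced by the (as yet unchosen) scores. The key idea is a dynamic program that simulates $\greedyAVRule$ in its natural, score-descending processing order while simultaneously assigning final scores. The DP state is a pair of disjoint subsets $(X, Y)$ with $Y \subseteq X$, where $X$ is the set of already-processed projects and $Y$ the set of those already selected; crucially, the remaining budget is exactly $B - \cost(Y)$ and is read off the subset $Y$, so the binary encoding of costs causes no trouble. Since each project is in exactly one of three states --- unprocessed, processed-and-rejected, or processed-and-selected --- there are $\sum_{X} 2^{|X|} = 3^m$ states, which is the source of the $O^*(3^m)$ bound. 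A transition appends the next project in greedy order, chooses its final score (non-increasing along the order, ties broken by the fixed rule), multiplies in the matching factor of $P_j(\cdot\,; z)$, and declares the project selected iff $\cost(Y) + \cost(c_j) \le B$. Starting from $(\emptyset, \emptyset)$ and summing over all terminal states $(C, Y)$ with $p \in Y$, the answer is the coefficient of $z^r$.

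I expect the main obstacle to be the bookkeeping that guarantees each flip-set is counted exactly once, i.e.\ that DP paths are in bijection with score vectors. Because the greedy order depends on the scores being assigned on the fly, I must ensure that appending projects in non-increasing score order, with tie-breaking, reconstructs precisely the canonical greedy processing order of the chosen score vector; this requires carrying a small amount of extra, polynomial-size information --- such as the last score used and the tie-breaking position of the last project placed at that score --- so that no project is ever inserted in a position inconsistent with the fixed tie-breaking rule. The remaining steps are routine: the factors $P_j(\cdot\,; z)$ are polynomials of degree at most~$n$, all convolutions in~$z$ are over vectors of length $r+1$, and both forcing $p$ into $Y$ when it is processed and continuing the (now irrelevant) simulation after $p$ add only polynomial overhead. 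This yields running time $O^*(3^m)$, and since the procedure manipulates exact counts throughout, it solves the counting variant $\#\greedyAVAddRemPB$.
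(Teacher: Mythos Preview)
Your proposal is correct and follows essentially the same approach as the paper. Both proofs set up a dynamic program whose state records, for each project, whether it is still unprocessed, already processed and selected, or already processed and rejected (your pair $(X,Y)$ is exactly the paper's ``signature function'' with values in $\{0,1,2\}$, giving the $3^m$ bound), and both augment the state with the identity and score of the last-processed project to enforce the non-increasing-with-tie-breaking order; the only cosmetic difference is that you package the per-project flip counts as coefficients of a polynomial $P_j(\cdot;z)$ in a bookkeeping variable, whereas the paper carries the flip budget $r'$ as an explicit DP argument and uses a helper function $h(r'',\ell')$ for the same count.
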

\begin{proof}
  Let us consider a PB instance $E = (C,V,A,B,\cost)$, where
  $C = \{c_1, \ldots, c_m\}$ and $V = \{v_1, \ldots, v_n\}$. The
  preferred project is $p$ (so $p = c_\ell$ for some $\ell \in [m]$)
  and we should perform $r$ approval flips.  Our algorithm proceeds by
  dynamic programming, based on partitioning the space of possible
  situations that $\greedyAVRule$ may run into after the bribery.  To
  be more concrete, we make use of the observation that to make a
  funding decision on a project in a $\greedyAVRule$ run, it suffices
  to know which projects the rule has already examined and which of
  them got funded.

  To capture this idea, we introduce the notion of a \emph{signature
    function}, which associates each project from~$C$ with a number
  from $\{0,1,2\}$. For a signature function~$\sig$, we
  write~$C(\sig)$ to denote the set of those projects~$c \in C$ for
  which $\sig(c) > 0$. We interpret $\sig$ as describing the
  following situation (after the bribery): $\greedyAVRule$ was ran
  exactly on the projects from the set $C(\sig)$ and selected all
  those for which the signature function gives value $1$ (the projects
  with value $2$ were considered, but not selected due to insufficient
  funds).  For each subset $D$ of projects, by $\sigs(D)$ we mean the
  set of all possible signature functions that assign nonzero values
  exactly to the projects from $D$.  The idea of our dynamic
  programming is that to capture a run of $\greedyAVRule$ on a subset
  of projects $D$ after the bribery, it is sufficient to know the
  respective signature function from $\sigs(D)$, the last project among
  those in $D$ that $\greedyAVRule$ considered, and the number of its
  approvals. Using this, it is then possible to extend this run with
  another project with fewer approvals (or, at most as many
  approvals, depending on the tie-breaking order).

  To be more concrete, for each signature function $\sig$, project $c' \in C(\sig)$, and
  nonnegative integers~$\ell'$ and~$r'$, we define the following
  function:
  \begin{enumerate}
  \item[] $f(\sig, c', \ell', r')$ is the number of ways to perform
    $r'$ approval flips affecting only projects from $C(\sig)$ so that
    (a) $\greedyAVRule$, restricted to the projects from $C(\sig)$,
    selects exactly those projects $c \in C(\sig)$ for which
    $\sig(c) = 1$, and (b) the last considered project, i.e., the
    project with the minimum number of approvals appearing last in the
    tie-breaking order, is $c'$, and it has exactly $\ell'$ approvals.
  \end{enumerate}
  Our algorithm should output:
  \[
    \sum_{\sig \in \sigs(C)} \, \sum_{c \in C} \, \sum_{\ell \in [n] \cup \{0\}} f(\sig,c,\ell,r) \cdot [\sig(p) = 1],
  \]  
  where by $[sig(p) = 1]$ we mean $1$ if $\sig(p) = 1$ and $0$
  otherwise. Intuitively, this sum takes into account all projects $c$
  that $\greedyAVRule$ may consider last, all their possible scores
  $\ell$, and requires that $p$ is selected (upon performing $r$
  approval flips).  It remains to show how to compute the values of
  $f$ in time $O^*(3^m)$.

  First, we observe that for each project $c'$, each signature
  function $\sig \in \sigs(\{c'\})$, and each two nonnegative integers
  $\ell' \leq n$ and $r' \leq r$, computing $f(\sig, c', \ell', r')$
  is straightforward.
  Next, we show how to compute $f(\sig, c', \ell', r')$ for the case
  where $C(\sig)$ includes at least two projects; i.e., $c'$ and some
  further one(s).  Let $\sig'$ be the signature function identical to
  $\sig$, except that $\sig'(c') = 0$ (capturing the situation before
  $c'$ is considered).  We make a case distinction: If $\sig(c')=1$
  and the cost of all projects that have value $1$ under $\sig'$
  exceeds $B-\cost(c')$ then we set $f(\sig, c', \ell', r')$ to be
  zero (because $\sig$ describes an infeasible situation).  Similarly,
  we set $f(\sig, c', \ell', r')$ to zero if $\sig(c')=2$ and
  the cost of all projects that have value $1$ under $\sig'$ is at
  most $B-\cost(c')$.  Otherwise, we proceed as follows.  For each
  nonnegative integer $r''$, let $h(r'',\ell')$ be the number of ways
  to perform $r''$ approval flips regarding project $c'$ so that it
  ends up with $\ell'$ approvals.  Further, for each
  $d \in C(\sig) \setminus \{c'\}$ let $t_d = 0$ if $d$ precedes $c'$ in
  the tie-breaking order, and let $t_d = 1$ otherwise.  Then, for each
  $d \in C(\sig) \setminus \{c'\}$ we define:
  \[
    g(d) = \sum_{\ell'' = \ell'+t_d}^n \, \sum_{r'' = 0}^{r'} f(\sig', d,
    \ell'', r'-r'') \cdot h(r'', \ell').
  \]
  Intuitively, $g(d)$ gives the number of ways in which we can reach
  the situation described by function $f(\sig,c',\ell',r')$, provided
  that the project that $\greedyAVRule$ considers just before $c'$
  is $d$. 
  Based on this equation, using standard dynamic programming
  techniques, and observing that there are at most $3^m$ signature
  functions, we conclude that it is possible to compute the values of
  function~$f$ in time $O^*(3^m)$.  \qed\end{proof}
Additionally, we provide a slower FPT algorithm, which uses polynomial
space (see \Cref{app:greedy})

\begin{restatable}{theorem}{thmfptmpspace}
  There is an $\fpt$ algorithm for
  $\#\greedyAVAddRemPB$ parameterized by the number
  of projects, with running time $O^*(m!)$, which uses
  polynomial space.
\end{restatable}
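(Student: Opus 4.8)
The plan is to trade the exponential space of the $O^*(3^m)$ dynamic program for $O^*(m!)$ time by iterating over the $m!$ possible \emph{consideration orders} of $\greedyAVRule$ one at a time. The key observation is that, after any bribery, the modified approval scores induce a \emph{unique} order in which $\greedyAVRule$ examines the projects (projects sorted by non-increasing score, with ties broken according to the fixed tie-breaking order). Moreover, whether the preferred project $p$ ends up selected is a deterministic function of this order \emph{alone}, because the project costs and the budget $B$ are unaffected by flipping approvals. Hence every bribery is charged to exactly one permutation $\pi$ of $C$, and the permutations partition the space of all briberies. This removes the double-counting concern and reduces the task to: for each $\pi$, decide (by simulating the greedy process along $\pi$) whether $p$ is selected, and if so, count the briberies whose induced order is exactly $\pi$.

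First I would enumerate the permutations recursively, which needs only polynomial space for the recursion stack and the current permutation. For a fixed $\pi = (c_{\pi(1)}, \ldots, c_{\pi(m)})$ I simulate $\greedyAVRule$ greedily along $\pi$ to test whether $p \in W$; if not, this permutation contributes $0$ and is skipped. For a surviving $\pi$ I count the number of ways to perform exactly $r$ flips so that the modified scores induce precisely the order $\pi$. The flips on each project's column are independent: to move project $c_i$ from its original score $a_i = |A(c_i)|$ to a target score $s_i$ using exactly $f_i$ flips requires $x$ additions among its $n - a_i$ non-approvers and $y$ removals among its $a_i$ approvers with $x + y = f_i$ and $x - y = s_i - a_i$; these equations determine $x, y$ uniquely, so the number of column-$i$ briberies is the single product $\binom{n - a_i}{x}\binom{a_i}{y}$ (and $0$ when $x, y$ are not non-negative integers).

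I then combine the columns with a left-to-right dynamic program over the positions $1, \ldots, m$ of $\pi$, whose state records the score $s$ assigned to the current project and the number of flips used so far. The transition to the next position enforces the consecutive constraint that makes the scores induce exactly $\pi$ (the next score is at most $s$, and strictly less than $s$ whenever the tie-breaking order would otherwise contradict $\pi$) and multiplies in the corresponding binomial factor; at the final position I sum the entries with flip count equal to $r$. This table has only $O(n \cdot r)$ states, so it uses polynomial space and runs in polynomial time per permutation, yielding $O^*(m!)$ time and polynomial space overall.

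The main obstacle is getting the partition exactly right: the ``induces precisely $\pi$'' condition must be enforced strictly at each consecutive pair (strict decrease, or equality allowed only when the tie-breaking order already places $c_{\pi(j)}$ before $c_{\pi(j+1)}$), so that no bribery is counted under two different permutations and none is missed. The remaining care is purely bookkeeping: translating between additions, removals, and the net score change, and tracking the exactly-$r$ budget of flips across the columns within the dynamic program.
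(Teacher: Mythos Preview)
Your proposal is correct and follows essentially the same approach as the paper: enumerate all $m!$ consideration orders, for each one simulate $\greedyAVRule$ along that order to test whether $p$ is selected, and if so count (via a polynomial-space DP over positions whose state is the current score and the number of flips used so far) the briberies that induce exactly that order. Your per-column binomial analysis is the paper's auxiliary function~$g$, and your left-to-right DP is the paper's function~$f$; the consecutive score constraint you describe (strict decrease, or equality only when tie-breaking already agrees with~$\pi$) is exactly the paper's offset~$t_i$.
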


\subsection{Cheaper-First Tie-Breaking} \label{sec:cheap-first}

Interestingly, it turns out that another way toward obtaining
tractability results is to consider a more restricted tie-breaking
scheme: If instead of allowing for an arbitrary tie-breaking order, we
break ties according to project costs, with the cheaper-first policy,
the complexity changes: For the parameterization by the number of
voters, $\greedyAVAddRemPB$ now lies in $\xp$, yet still is
$\wone$-hard.  The $\wone$-hardness proof follows by adapting the
arguments from the proof of \Cref{thm:greedy-av-pb-npc} (using
sufficiently many voters, we remove the dependence on the very
specific tie-breaking order used there), and the $\xp$ algorithm
considers all possible bribery flips if their number is bounded by the
number of voters, and follows the simple greedy strategy outlined just
before \Cref{thm:greedy-av-pb-npc} otherwise (in this case it turns
out to be correct).
While the parameterization by the number of voters is not particularly
practical---PB instances in Pabulib typically include at least
hundreds of votes~\cite{pabulib}---we find it interesting to observe
how the internal tie-breaking scheme affects the complexity of the
problem.

\begin{restatable}{theorem}{cheapFirst}
\label{thm:greedy-av-pb-voters-xp}
For cheaper-first tie-breaking and the parameterization by the
  number of voters, $\greedyAVAddRemPB$ is $\wone$-hard and belongs to
  $\xp$.
\end{restatable}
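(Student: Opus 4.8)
The plan is to prove the two parts separately, reusing as much of the machinery of \Cref{thm:greedy-av-pb-npc} as possible. For the $\wone$-hardness I would again reduce from \textsc{Sized Subset Sum} with the parameter being the solution size $k$, keeping the same projects ($x_1,\dots,x_n$ with $\cost(x_i)=T+u_i$, the $2k+1$ copies $y_j$ of cost $T$, the $k+1$ copies $d_j$ of cost $T-t+1$, and $p$ of cost $T-t$), the same budget $B=(k+1)T$, and the same radius $r=k$. The one thing that must change is the mechanism that forces \greedyAVRule{} to examine the projects in the intended order, since under cheaper-first tie-breaking we can no longer stipulate an arbitrary order and, in fact, the cheapest project $p$ would otherwise be considered \emph{first}. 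The idea is to encode the order through genuine differences in approval scores rather than through tie-breaking: I would use $N=O(k)$ voters to assign each project group a baseline score, placing the $y_j$ (and $d_j$) at a high ``selection tier'', the $x_i$ one unit below it, and $p$ at a low tier, so that the intended bribery is exactly ``add one approval to each chosen $x_i$,'' promoting it by one tier into the selection region. Since the total score range is only $O(k)$, the number of voters is a function of $k$ alone, which is what transfers $\wone$-hardness to the parameterization by $n$.

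The key design principle, replacing the ``very specific tie-breaking order,'' is to separate consecutive score tiers by gaps strictly larger than $r=k$. Because each flip changes a single project's score by one, $k$ flips cannot move any project across more than one tier boundary; in particular the adversary cannot boost the cheap project $p$ into the selection region, which is the obvious way the single-voter argument would break down once cheaper-first tie-breaking is in force. With the tiers thus ``rigid,'' the only profitable action is promoting choice projects by one step, and the forward direction goes through exactly as before: promoting $k$ projects of value sum $t$ consumes $kT+t$, leaving $T-t=\cost(p)$, and the surviving $y_j$ and $d_j$ (guaranteed by their multiplicities $2k+1$ and $k+1$) are all too expensive, so $p$ is funded. \textbf{The main obstacle I expect is the reverse direction}: with many voters and cheaper-first tie-breaking the space of $k$-flip briberies is richer than in the one-voter case, so the three-case analysis ($t'>t$, $t'=t$, $t'<t$) must be re-run while also ruling out degenerate manipulations (boosting $d_j$ or $p$, or manufacturing a cheap top-tier project). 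Establishing that the tier gaps confine every useful flip to a single one-step promotion, and that the $t'=0$ case still leaves $p$ unfunded, is where the careful choice of the baseline scores and multiplicities has to be verified.

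For membership in $\xp$ I would follow the dichotomy hinted at in the text, splitting on whether $r\le n$. When $r\le n$, I would simply enumerate all briberies: a bribery of size at most $r$ chooses at most $r$ of the $mn$ voter--project pairs to flip, and $\sum_{i\le r}\binom{mn}{i}\le (n+1)(mn)^{n}$, which for fixed $n$ is polynomial in the input size. For each candidate bribery I would run \greedyAVRule{} in polynomial time and check whether $p$ is selected, giving an overall running time of the $O^*\!\big((mn)^{n}\big)$ form required for $\xp$ parameterized by $n$.

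When $r>n$, brute force is too expensive, so here I would argue that the greedy strategy sketched just before \Cref{thm:greedy-av-pb-npc} is already optimal, i.e., it selects $p$ whenever some bribery does: make $p$ approved by all voters (at most $n$ flips, affordable since $r>n$), thereby giving $p$ the maximum possible score, and spend the remaining flips removing approvals from the most expensive projects that still precede $p$ under the score--then--cheaper-first order. The crucial claim to establish is that under cheaper-first tie-breaking the ``promote $d$ to block $c$'' phenomenon that makes the general problem hard cannot help in this regime: once $p$ is at the top score and $r>n$ flips are available, any reordering achievable by a clever bribery is dominated by directly demoting blockers of $p$. I would prove this by an exchange argument transforming an arbitrary successful bribery into the greedy one without losing $p$'s selection, and this exchange argument is the technically delicate step of the $\xp$ part.
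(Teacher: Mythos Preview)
Your $\xp$ argument---brute-force when $r$ is at most the number of voters, and otherwise make $p$ unanimously approved and greedily demote the most expensive blockers, justified by an exchange argument---is exactly the paper's approach.

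The $\wone$-hardness construction, however, does not work as described. You place the $y_j$ and $d_j$ at the top score tier and the $x_i$ one unit below, intending to promote each chosen $x_i$ by a single approval into that tier. But then the promoted $x_i$'s merely \emph{tie} with $y_j$ and $d_j$, and cheaper-first tie-breaking now governs the order inside the tier: since $\cost(d_j)=T-t+1<\cost(y_j)=T<\cost(x_i)=T+u_i$, \greedyAVRule{} first scans all the $d_j$'s, then the $y_j$'s, and only afterwards the promoted $x_i$'s. With budget $(k+1)T$ it funds $d$'s and $y$'s before it ever reaches an $x_i$, so the subset-sum accounting never takes place and already the forward direction collapses. (This is exactly the ``cheap top-tier project'' failure you warn the adversary against---your own construction produces it.) There is also an internal inconsistency: you ask for consecutive tiers to be separated by more than $k$, yet the $x_i$ sit one unit below the top; both cannot hold simultaneously.

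The paper avoids this by reversing the ordering: with $6k+1$ voters the $x_i$ receive the \emph{highest} baseline score $6k$, then $y_j$ at $4k$, $d_j$ at $2k$, and $p$ at $0$. All inter-group gaps are $2k>k=r$, so $k$ flips keep the four groups strictly score-ordered and cheaper-first never mixes them. Adding one approval to each chosen $x_i$ sends it to $6k+1$, strictly above every other project, so those $k$ projects are examined first regardless of cost, after which the three-case analysis of \Cref{thm:greedy-av-pb-npc} carries over verbatim. The idea you are missing is that under cheaper-first the promoted projects must \emph{strictly} dominate on score rather than merely tie---which forces the $x_i$'s to start at the top of the score order, not just below it.
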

\begin{proof}[Algorithm]
  We first give the $\xp$ algorithm.  Let us consider a PB instance
  $E = (C,V,A,B,\cost)$, a preferred project $p \in C$, and a
  nonnegative integer $r$. Our goal is to decide if it is possible to
  flip at most $r$ approvals so that $\greedyAVRule$ selects $p$ (we
  assume that $\greedyAVRule$ uses the cheaper-first tie-breaking
  scheme, with a fixed, known tie-breaking order for projects with the
  same cost and the same number of approvals). Let $n = |V|$ be the
  number of voters and let $m$ be the number of projects.

  Our algorithm proceeds as follows. If $r < n$ then we brute-force
  over all possible solutions. This has running time $O^*((2m)^r)$,
  because for each of the at most $r$ approval flips we need to choose
  the project to which we would apply this flip, and whether it would
  lead to increasing or decreasing the number of this project's
  approvals. Since $r < n$, the running time is bounded by
  $O^*((2m)^n)$.  If $r \geq n$ then our algorithm first ensures that
  $p$ is approved by all the voters (by adding an approval for $p$ to
  every voter who initially did not approve $p$), and then performs
  the following operation until it  either runs out of allowed approval
  flips, or the projects to consider: It finds the most costly project
  that is also approved by all the voters and whose cost is smaller
  than $\cost(p)$ (or is equal to $\cost(p)$, but this project is
  preferred to $p$ in the tie-breaking order); then it removes a
  single approval from this project.  This part of the algorithm can
  be implemented to run in polynomial time. All in all, it is clear
  that if we assume that $n$ is a constant, then the full algorithm
  runs in polynomial time. This means that if the algorithm is correct
  then $\greedyAVAddRemPB$ is in $\xp$ for the parameterization by the
  number of voters. We show the correctness of this algorithm in
  \Cref{app:greedy}. Therein, we also show how to adapt the proof
  of~\Cref{thm:greedy-av-pb-npc} to obtain the hardness result.  \qed
  \end{proof}

On the other hand, cheaper-first tie-breaking seems not to make the   counting variants of our problems much easier, as they turn out to be
$\sharpp$-hard even for a single voter. 

\begin{restatable}{theorem}{greedyavsharpone}
\label{thm:greedyav-sharpphard-with-one-voter}
$\#\greedyAVAddRemPB$ with cheaper-first tie-breaking is
  $\sharpp$-complete and $\sharpwone$-hard for the parameterization by
  the number of approval flips, even for $1$ voter.
\end{restatable}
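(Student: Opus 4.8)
The plan is to prove membership and hardness separately. Membership in $\sharpp$ is immediate: the decision variant lies in $\np$, since one can guess the $r$ flipped approvals and check in polynomial time whether $\greedyAVRule{}$ selects $p$; hence the counting version is in $\sharpp$. For hardness I would give a polynomial-time (parameterized) Turing reduction from \textsc{\#Sized Subset Sum}, which is $\sharpp$-complete and $\sharpwone$-hard for the parameter $k$. The conceptual point, and the main obstacle, is that cheaper-first $\greedyAVRule{}$ naturally tests only an \emph{inequality} on a subset sum, whereas \textsc{Sized Subset Sum} asks for an \emph{exact} sum; I resolve this by differencing two threshold instances, which is precisely why a Turing reduction (standard in the counting world) succeeds here even though the decision reduction of \Cref{thm:greedy-av-pb-npc} had to engineer equality directly.

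Given $U=\{u_1,\dots,u_n\}$, target $t$, and size $k$ (WLOG $0<t<\sum_i u_i$ and $k<n$), and a threshold $\theta$, I build a single-voter instance $E_\theta$ with projects $x_1,\dots,x_n$ and a preferred project $p$. The voter approves only $p$ (each $x_i$ is initially unapproved), the costs are $\cost(x_i)=u_i$ and $\cost(p)=P:=1+\sum_i u_i$, the budget is $B=\theta+P$, and $r=k$; cost-ties are broken by an arbitrary fixed order, which will turn out to be irrelevant. Since a flip toggles a single project for the single voter and we make exactly $r=k$ flips, a solution is a size-$k$ set $F$ of toggled projects.

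The core claim I would establish is that the number of solutions of $E_\theta$ equals $N(\theta):=\#\{S\subseteq[n] : |S|=k,\ \sum_{i\in S}u_i\le\theta\}$. If $p\notin F$, then $F$ consists of $k$ of the $x_i$'s, which become the only approved projects besides $p$; under cheaper-first tie-breaking $\greedyAVRule{}$ examines all these $x_i$ (each of cost $u_i<P$) before $p$, selects all of them (their total $\sigma=\sum_{i\in F}u_i<B$ fits, as does every prefix), and then selects $p$ iff $\sigma+P\le B$, i.e.\ iff $\sigma\le\theta$; the still-unapproved $x_i$ have approval score $0$ and are examined only after $p$, so they cannot affect it. If $p\in F$, then $p$ becomes unapproved, so every $x_i$ (all cheaper than $p$) is examined before $p$: $\greedyAVRule{}$ spends at most $\sum_i u_i$ on them, and afterwards $p$ needs $P\le B-\sum_i u_i=\theta+P-\sum_i u_i$, i.e.\ $\sum_i u_i\le\theta$, which fails because $\theta\le t<\sum_i u_i$; hence $p$ is never selected. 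Thus exactly the sets $F\subseteq\{x_1,\dots,x_n\}$ with $\sum_{i\in F}u_i\le\theta$ are counted, giving $N(\theta)$. The only routine verifications are these two cases, ensuring that flips touching $p$ (and the leftover unapproved $x_i$) never cause a spurious selection of $p$.

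Finally, since
\[
\#\{S : |S|=k,\ \textstyle\sum_{i\in S}u_i=t\}=N(t)-N(t-1),
\]
two oracle calls, to $E_{t}$ and to $E_{t-1}$, solve \textsc{\#Sized Subset Sum} in polynomial time, which together with membership yields $\sharpp$-completeness. Both invoked instances keep $r=k$ and the reduction makes only two calls, so it is a valid parameterized Turing reduction; as the parameter $r$ equals the source parameter $k$, this transfers $\sharpwone$-hardness to $\#\greedyAVAddRemPB$ (with cheaper-first tie-breaking, one voter) for the parameterization by $r$.
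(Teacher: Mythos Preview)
Your argument is correct and essentially identical to the paper's: the same single-voter construction with the voter approving only $p$ (whose cost exceeds $\sum_i u_i$), and the same Turing reduction via the difference $N(t)-N(t-1)$ of two threshold instances. One minor slip: in the $p\in F$ case, ``spends at most $\sum_i u_i$'' should be ``spends exactly $\sum_i u_i$'' (all $x_i$ are selected, since their total cost is below $B$ and hence so is every prefix), as the mere upper bound on spending would not by itself force $p$ to be rejected.
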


Finally, a very natural way to circumvent the hardness from
\Cref{thm:greedy-av-pb-npc} would be to consider a variant of the
problem where project costs are encoded in unary. Unfortunately, in
this case the complexity has, so far, been elusive. On the one hand,
one might hope to design a polynomial-time algorithm based on dynamic
programming (as is the case for the classic \textsc{Knapsack}
problem). However, the problem is that dynamic programming typically
needs to work on some easily-identifiable order over the projects,
such as the order in which $\greedyAVRule$ considers them. However,
this order changes due to approval flips and keeping track of it seems
to require exponential time (see, e.g., the proof of
\Cref{thm:greedy-av-pb-ftp-m}). On the other hand, for a hardness
reduction the problem seems to have relatively little to work
with. Nonetheless, we do have a polynomial-time algorithm for the case
of unit prices (while this is a very restricted setting, in the next
section we will see that for some other rules even this is too much to
hope for).

\begin{restatable}{theorem}{greedyavunitx}
  \label{thm:greedyav-unit}
  There is a polynomial-time algorithm for $\#\greedyAVAddRemPB$ where
  each project has the same unit cost.
\end{restatable}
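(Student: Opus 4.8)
The plan is to exploit the fact that with uniform unit costs $\greedyAVRule$ degenerates into a very simple rule. When every project costs $1$, the condition ``including a project would not exceed the budget'' just means that fewer than $B$ projects have been selected so far; hence $\greedyAVRule$ funds precisely the top $\min(B,m)$ projects in the order that sorts projects by non-increasing approval score and breaks ties by the fixed tie-breaking order. (If $B\ge m$ every project is funded and the answer is trivially $\binom{mn}{r}$, so assume $B<m$.) Consequently, after any sequence of flips, the preferred project $p$ is funded if and only if at most $B-1$ of the other projects rank strictly above it, where a project $c\neq p$ ranks above $p$ exactly when $|A(c)|>|A(p)|$ after the flips, or $|A(c)|=|A(p)|$ and $c$ precedes $p$ in the tie-breaking order. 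The whole funding decision therefore depends only on the final approval scores of the projects together with the fixed tie-breaking order.

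Next I would observe that a flip set of size $r$ (a size-$r$ subset of $V\times C$) decomposes across projects: the flips touching a given project $c$ are disjoint from those touching any other project, and the final score of $c$ depends only on the flips touching $c$. For a project $c$ with initial score $a_c=|A(c)|$, if we spend $k$ flips on it, of which $j^{+}$ add approvals (out of $n-a_c$ non-approvers) and $j^{-}=k-j^{+}$ remove approvals (out of $a_c$ approvers), then its final score is $a_c+j^{+}-j^{-}$ and the number of ways to pick the affected pairs is $\binom{n-a_c}{j^{+}}\binom{a_c}{j^{-}}$. Given a target final score $s$ and a flip budget $k$, the values $j^{+}$ and $j^{-}$ are uniquely determined, so I can compute in polynomial time a table $N_c(s,k)$ counting the number of ways to flip exactly $k$ of $c$'s pairs so that $c$ ends with score~$s$.

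The core of the algorithm is a dynamic program that combines these per-project counts subject to the two global constraints: that exactly $r$ flips are used in total, and that at most $B-1$ projects end up above $p$. I would first fix the final score $s_p\in\{0,\dots,n\}$ of the preferred project and the number $k_p$ of flips spent on it; there are $N_p(s_p,k_p)$ ways to realize this. For this fixed $s_p$, each other project $c$ splits its ways into those that place it \emph{above} $p$ and those that do not, tracked by the number of flips used: summing $N_c(\cdot,k)$ over the score ranges dictated by the above/below criterion yields polynomials $\mathrm{Above}_c$ and $\mathrm{Below}_c$ in a variable counting flips. Processing the projects $c\neq p$ one at a time, I maintain a table indexed by the number of above-projects seen so far and the number of flips used so far, updating it by convolving with $\mathrm{Above}_c$ (incrementing the above-count) or with $\mathrm{Below}_c$ (leaving it unchanged). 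Finally I sum the table entries over above-counts $0,\dots,B-1$ and flip-count $r-k_p$, multiply by $N_p(s_p,k_p)$, and sum over all choices of $s_p$ and $k_p$.

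Correctness follows because flips on distinct projects choose disjoint voter--project pairs, so the products and convolutions count exactly the unions of such choices without double counting, and the conditioning on $s_p$ is valid since $p$'s flips are disjoint from the rest. For the running time, every index in the DP is bounded by a polynomial in $m$ and $n$ (note $r\le mn$), and all counts are at most $\binom{mn}{r}\le 2^{mn}$, so each big-integer arithmetic operation runs in time polynomial in $mn$; hence the whole procedure is polynomial. I expect the main obstacle to be purely in the bookkeeping: getting the above/below classification to respect the tie-breaking order at equal scores, handling the degenerate cases $B\ge m$ and projects that can reach score $0$, and confirming that the two-dimensional constraint (flip total and above-count) is tracked without over- or under-counting. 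The conceptual content---reducing unit-cost $\greedyAVRule$ to ``top $B$'' and decomposing flips per project---is straightforward; the care lies in the convolution indices.
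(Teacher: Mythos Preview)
Your proposal is correct and follows essentially the same approach as the paper: both reduce unit-cost $\greedyAVRule$ to selecting the top $B$ projects, fix the preferred project's final score and the flips spent on it, and run a dynamic program over the remaining projects that jointly tracks the number of flips used and the number of projects ending up above $p$ (with the tie-breaking adjustment at equal scores). Your $N_c(s,k)$ is the paper's $h_c(\ell,x)$, and your two-index DP is the paper's $g(\ell,r',i,j)$; the arguments and running-time analysis coincide.
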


\section{Complexity Results For Rules Beyond \greedyAVRule}
\label{sec:beyond}

Next let us move to the analysis of the complexity of
$\AddRemProbName$ for our remaining rules, namely $\greedyCostRule$,
$\phragmenRule$, and the two variants of \textsc{MES}. Unfortunately,
we mostly obtain fairly strong hardness results. Indeed, for the
latter three rules the problem is $\np$-complete even for the case
where all projects have the same unit cost. These results follow by
careful analysis (and, for the case of \textsc{MES}, some adaptation)
of proofs already available in the
literature~\cite{fal-gaw-kus:c:greedy-robustness,jan-fal:c:ties-multiwinner};
for $\phragmenRule$ the proof given by Faliszewski et
al.~\cite{fal-gaw-kus:c:greedy-robustness} already deals with a very
related bribery problem, whereas for variants of \textsc{MES}, the
work of Janeczko and Faliszewski~\cite{jan-fal:c:ties-multiwinner}
deals with the complexity of detecting ties (their proof uses the
parallel-universes tie-breaking model and our modification ensures
that we can simulate this approach under order-based tie-breaking by
flipping approvals). We provide details in \Cref{app:mes-phragmen}.
$\greedyCostRule$ is polynomial-time computable in this case, as it is
then equivalent to $\greedyAVRule$ (and so \Cref{thm:greedyav-unit}
applies).

\begin{theorem}\label{thm:mes-phragmen}
  $\AddRemProbName$ is $\np$-complete for each of $\phragmenRule$,
$\MESCostUtilRule$ and $\MESAprUtilRule$, even if all projects have
the same unit cost. $\greedyCostRule$ is in $\p$ in this case.
\end{theorem}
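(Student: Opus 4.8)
The plan is to prove Theorem~\ref{thm:mes-phragmen} by giving separate $\np$-hardness reductions for $\phragmenRule$, $\MESAprUtilRule$, and $\MESCostUtilRule$ (membership in $\np$ being immediate, exactly as in the proof of Theorem~\ref{thm:greedy-av-pb-npc}: guess the at most $r$ flips and simulate the rule in polynomial time), and then separately observing the easy $\p$ result for $\greedyCostRule$. The decisive structural feature to exploit is that all projects share the same unit cost, say $\cost(c)=1$ for every project and budget $B$ equal to the target committee size. In this unit-cost regime $\MESAprUtilRule$ and $\MESCostUtilRule$ coincide (utilities $1$ versus $\cost(c)=1$ are identical), so a single reduction handles both \textsc{MES} variants. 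For $\greedyCostRule$, in the unit-cost case the approval-to-cost ratio equals the raw approval score, so $\greedyCostRule$ becomes literally identical to $\greedyAVRule$; hence $\greedyCostAddRemPB$ reduces to $\greedyAVAddRemPB$ with unit costs, which Theorem~\ref{thm:greedyav-unit} places in $\p$. That disposes of the final sentence immediately.

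For the two hard rules I would not reinvent the wheel but instead lift existing constructions, as the excerpt signals. For $\phragmenRule$ the strategy is to start from the hardness reduction of Faliszewski et al.~\cite{fal-gaw-kus:c:greedy-robustness}, whose bribery problem for $\phragmenRule$ is already very close in spirit: there one modifies approvals to change the set of funded candidates in an approval committee election, and with unit costs a PB instance with budget $B$ is exactly an approval multiwinner election choosing $B$ winners. I would first recast that reduction's instance as a unit-cost PB instance, then check that ``making the preferred candidate a winner'' in their formulation corresponds precisely to ``$p \in \phragmenRule(E')$'' after at most $r$ flips, adjusting the radius $r$ and any padding candidates so that the intended flips are the only profitable ones. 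The main thing to verify is that the continuous load-balancing process of $\phragmenRule$ selects the preferred project in the yes-case and is blocked in the no-case, which amounts to tracking voter bank-account balances $b(v_i)$ at the critical purchase times.

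For the \textsc{MES} variants the plan is to adapt the tie-detection hardness proof of Janeczko and Faliszewski~\cite{jan-fal:c:ties-multiwinner}. Their construction establishes hardness of deciding ties under the parallel-universes tie-breaking model; the key technical move, flagged in the surrounding text, is to simulate parallel-universes tie-breaking within our fixed order-based model by spending approval flips. Concretely, whenever their argument relies on a universe in which a particular tie is broken a particular way, I would realize that universe by flipping a small number of approvals that forces the $q$-affordability comparison to resolve in the desired direction, i.e.\ nudging the smallest $q$ for which a given project becomes $q$-affordable below that of its competitors. I would bundle these simulating flips into the radius $r$ and argue that $p$ enters the \textsc{MES} outcome if and only if the source tie-instance admits the sought configuration.

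The main obstacle I expect is the \textsc{MES} adaptation, not the $\phragmenRule$ one. The difficulty is that the $q$-affordability condition $\sum_{v_j \in A(c)} \min\{b_i(v_j), u_j(c)\cdot q\} = \cost(c)$ couples a project's fate to the full history of account balances, so a single flip can cascade through many subsequent stages; I must ensure that the flips simulating one tie-break do not accidentally perturb later affordability thresholds in uncontrolled ways. Controlling this requires choosing the number of voters approving each gadget project, and the distribution of budget $\nicefrac{B}{|V|}$ across voters, so that the simulating flips act locally and the correspondence with the parallel-universes runs is exact. Once that locality is pinned down, the equivalence in both directions should follow by the same case analysis as in~\cite{jan-fal:c:ties-multiwinner}, and collecting the three reductions together with the $\greedyCostRule$ observation completes the theorem. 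Full details are deferred to \Cref{app:mes-phragmen}.
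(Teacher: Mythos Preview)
Your proposal is correct and matches the paper's approach in all four parts: reducing $\greedyCostRule$ to $\greedyAVRule$ via Theorem~\ref{thm:greedyav-unit}, lifting the $\phragmenRule$ reduction (from $\rxthreec$) directly from~\cite{fal-gaw-kus:c:greedy-robustness}, and adapting the \textsc{MES} tie-detection construction from~\cite{jan-fal:c:ties-multiwinner} by using approval flips to simulate parallel-universes tie-breaking. The paper's concrete device for the locality you correctly anticipate is to replicate each voter in the Janeczko--Faliszewski instance $T = 14400n^7$ times, so that the at most $r=2n$ allowed flips shift any project's supporters' total budget by only $O(n/T)$---too little to move a project between the construction's groups and hence only able to reorder projects within a tie group, which is precisely the simulation you sketch.
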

For $\greedyCostRule$, our hardness result requires binary encoding,
but works already for a single voter.

\begin{restatable}{theorem}{greedycostnpc}
\label{thm:greedycost-pb-npc}
$\greedyCostAddRemPB$ is $\np$-complete, as well as $\wone$-hard
  for the parameterization by the number of approval flips, already
  for a single voter. The counting variant is $\sharpp$-hard and
  $\sharpwone$-hard (for the same parameterization).
\end{restatable}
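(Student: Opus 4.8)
The plan is to prove Theorem~\ref{thm:greedycost-pb-npc} by adapting the reduction from \textsc{Sized Subset Sum} used in the proof of \Cref{thm:greedy-av-pb-npc}, modified so that the argument goes through for $\greedyCostRule$ rather than $\greedyAVRule$. Membership in $\np$ (and $\sharpp$ for the counting variant) is immediate by the same guess-and-check argument as before. The key difference is that $\greedyCostRule$ orders projects by non-increasing approval-to-cost ratio rather than by raw approval score, so I must engineer the costs of the gadget projects so that, with a single voter, adding an approval to a project has a predictable effect on where that project lands in the processing order. First I would set up a single-voter instance with the same family of project types as before: ``selector'' projects $x_1,\dots,x_n$ encoding the elements of $U$, a block of ``blocker'' projects playing the role of the $y_i$, a block of $d_i$ projects, and the preferred project $p$. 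The budget and the number $r=k$ of allowed flips would again be chosen so that $p$ can be funded precisely when exactly $k$ of the $x_i$ receive an approval and the corresponding $u_i$ sum to the target $t$.

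The main technical work is calibrating costs so that the approval-to-cost ratios induce exactly the processing order the argument needs. With a single voter, each project that the voter approves has ratio $1/\cost(\cdot)$, so $\greedyCostRule$ considers the approved projects in order of increasing cost; an unapproved project has ratio $0$ and is considered last. This is actually convenient, because giving an approval to an $x_i$ moves it from ratio $0$ up to $1/\cost(x_i)$, inserting it into the cost-ordered sequence at a controlled position. I would therefore choose the costs so that the (now-approved) selector projects are processed first, followed by the blockers, then the $d_i$, then $p$, mirroring the tie-breaking order of the original proof; the budget arithmetic ($kT+t'$ after selecting the chosen $x_i$, the case split on $t' > t$, $t' = t$, $t' < t$) then transfers essentially verbatim. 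Since $\greedyCostRule$ with a single voter effectively sorts by cost, the bulk of the effort is verifying that the cost assignment simultaneously realizes the desired order and preserves the feasibility/budget thresholds from \Cref{thm:greedy-av-pb-npc}.

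The hard part will be reconciling two competing demands on the costs: for the budget argument to work I need the numerical relationships $\cost(p)=T-t$, $\cost(d_i)=T-t+1$, and the $T$-sized blockers, but for $\greedyCostRule$ to process these in the right order I simultaneously need their \emph{costs} to be monotone in the processing order, which constrains the same numbers. I expect to resolve this by scaling: multiply all costs by a large factor and add small, carefully chosen offsets (or perturb via the tie-breaking order for projects of equal ratio) so that the coarse cost structure drives the ratio order while the fine structure encodes the subset-sum target. Once the order is fixed, the $\wone$-hardness for parameter $r$ follows exactly as before because $r=k$ in the construction, and the counting hardness claims ($\sharpp$-hardness and $\sharpwone$-hardness for parameter $r$) follow because the same reduction is parsimonious (or can be made so): under the assumption that only size-$k$ subsets sum to $t$, each successful bribery corresponds to exactly one choice of which $x_i$ to promote, so the number of witnesses equals the number of solutions to the \textsc{\#Sized Subset Sum} instance, and $\sharpwone$-hardness then transfers from that problem with the parameter preserved.
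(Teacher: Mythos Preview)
Your plan has a genuine gap at exactly the point you flag as ``the hard part.'' With a single voter, $\greedyCostRule$ processes approved projects in \emph{increasing} cost order, so to get the processing order $x_i$'s first, then $y$'s, then $d$'s, then $p$, you would need $\cost(x_i) < \cost(y_j) < \cost(d_\ell) < \cost(p)$. But the budget arithmetic you want to import from \Cref{thm:greedy-av-pb-npc} relies on the \emph{opposite} cost profile: there $\cost(x_i)=T+u_i$ is largest and $\cost(p)=T-t$ is smallest, and the entire $kT+t'$ case split (in particular, the role of the $y$-blockers ``topping up'' to $kT+t'$ and the $d$-projects absorbing any slack when $t'<t$) depends on those specific values. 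Scaling all costs by a constant preserves the ordering and hence cannot reverse it; adding offsets large enough to flip the order destroys the arithmetic relationships $\cost(d_i)=\cost(p)+1$ and $\cost(y_j)=T$ that drive the three-case analysis. So the claim that the budget argument ``transfers essentially verbatim'' once you fix the order is not supportable: fixing one breaks the other.

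The paper sidesteps this conflict entirely by a different, simpler construction. Rather than adding approvals to promote selector projects, it has the single voter approve \emph{all} projects except $p$, and the bribery \emph{removes} approvals from $k$ of the element-projects $c_i$ (with $\cost(c_i)=u_i$ directly, no $T$ offset and no $y$-blockers). The $k+1$ dummies each cost $t+1$, $p$ costs $t$, and the budget is $\sum_i u_i$. After removing the approvals of a subset $S$, the still-approved $c_i$'s are cheapest and get selected first, spending $B-\sum_{i\in S}u_i$; the dummies are skipped exactly when the remainder is $t$, leaving room for $p$ (which, being unapproved, is considered last but first in tie-breaking among ratio-$0$ projects). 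The three-case analysis then runs on $\sum_{i\in S}u_i$ versus $t$. This reduction is parsimonious, giving the $\sharpp$- and $\sharpwone$-hardness directly. If you want to salvage your approach you would essentially have to rediscover this: once you make the element-projects cheapest (so they are processed first), the natural move is to let them cost $u_i$ and let the flip \emph{remove} them from the selected set rather than insert them.
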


Results in the single-voter case for $\phragmenRule$ and $\MESAprUtilRule$
follow by observing that in this specific case these rules become equivalent to
$\greedyAVRule$ with cheaper-first tie-breaking (with the exception that
$\greedyAVRule$ can select projects that do not receive any approvals, whereas
$\phragmenRule$ and $\MESAprUtilRule$ disregard such projects). 
Moreover, for any fixed tie-breaking order $\MESCostUtilRule$ becomes equivalent
to $\greedyAVRule$ with the same tie-breaking order (under the same caveat
regarding projects without approvals). Consequently,
\Cref{thm:greedy-av-pb-npc} and \Cref{thm:greedy-av-pb-voters-xp} yield the
following corollary about, respectively, hardness and polynomial-time
solvability of the discussed rules in the single-voter case.

\begin{corollary}\label{cor:mes-phr-poly}
  For a single voter, $\MESAprUtilAddRemPB$ and $\phragmenAddRemPB$ are
  poly\-nomial-time solvable, whereas $\MESCostUtilRule$ is $\np$-complete and
  $\wone$-hard for the parameterization by the number of approval flips.
\end{corollary}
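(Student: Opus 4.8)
The plan is to derive both halves of the corollary from the $\greedyAVRule$ results by exploiting the fact that, for a single voter, all three rules reduce to $\greedyAVRule$ run on the approved projects. First I would make the stated equivalences precise. With $n=1$ every project has approval score either $0$ or $1$, so $\greedyAVRule$ considers all currently approved projects (score~$1$) before any unapproved one (score~$0$). Under cheaper-first tie-breaking it therefore scans the approved projects in order of non-decreasing cost, funding each that still fits the budget; this is exactly how $\phragmenRule$ and $\MESAprUtilRule$ behave with a single voter, since both fund the approved projects cheapest-first subject to the same budget check (for $\phragmenRule$ the single account fills at a constant rate, so the cheapest affordable project is reached first; for $\MESAprUtilRule$ the minimal affordability parameter of an approved project equals its cost, again ordering purchases by increasing cost). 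For $\MESCostUtilRule$ with a fixed tie-breaking order one checks that, with a single voter holding the whole budget~$B$, every currently affordable project is $1$-affordable, so the rule simply scans the approved projects in the tie-breaking order and funds each that fits---matching $\greedyAVRule$ with the same order.

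The only discrepancy is the zero-approval caveat: $\greedyAVRule$ may fund a project with no approvals, whereas the other three never do. I would argue this is immaterial for deciding whether the preferred project~$p$ is funded. Since $p$ can be selected by $\phragmenRule$, $\MESAprUtilRule$, or $\MESCostUtilRule$ only if $p$ is approved in~$E'$, and since $\greedyAVRule$ considers all score-$0$ projects strictly after all score-$1$ projects, any unapproved project is examined only after~$p$ and hence cannot influence $p$'s funding. Thus, on every instance in which $p$ ends up approved, the four rules agree on whether $p$ is funded, so the equivalences above are exact for our decision question.

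For the polynomial-time claim I would invoke the $\xp$ algorithm of \Cref{thm:greedy-av-pb-voters-xp} for $\greedyAVAddRemPB$ under cheaper-first tie-breaking: at the constant value $n=1$ its running time becomes polynomial. Its strategy---ensure $p$ is approved (one flip if needed) and then greedily strip approvals from the most expensive approved projects that precede $p$---always leaves $p$ approved and never touches score-$0$ projects, so by the previous paragraph it solves $\phragmenAddRemPB$ and $\MESAprUtilAddRemPB$ for a single voter verbatim. For the hardness claim I would appeal to \Cref{thm:greedy-av-pb-npc}, which gives $\np$-completeness and $\wone$-hardness parameterized by~$r$ for $\greedyAVAddRemPB$ with a single voter under an arbitrary order-based tie-breaking order. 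Reading that reduction through the $\MESCostUtilRule$ equivalence (with the same tie-breaking order) yields the matching hardness for $\MESCostUtilAddRemPB$; membership in $\np$ is immediate by guessing the flips and verifying. The point requiring most care---and the main obstacle---is confirming that the instance built in \Cref{thm:greedy-av-pb-npc} behaves identically under $\MESCostUtilRule$: one must check that the projects receiving added approvals are funded in the intended order and that the unapproved projects which $\MESCostUtilRule$ discards are precisely those $\greedyAVRule$ would have reached only after~$p$, so that the entire subset-sum analysis carries over unchanged.
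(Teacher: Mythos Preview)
Your proposal is correct and follows essentially the same approach as the paper: establish that for a single voter $\phragmenRule$ and $\MESAprUtilRule$ coincide with $\greedyAVRule$ under cheaper-first tie-breaking while $\MESCostUtilRule$ coincides with $\greedyAVRule$ under the given order-based tie-breaking (modulo zero-approval projects), and then invoke \Cref{thm:greedy-av-pb-voters-xp} and \Cref{thm:greedy-av-pb-npc} respectively. You spell out the zero-approval caveat and the behavior of the $\xp$ algorithm in more detail than the paper does, but the underlying argument is the same.
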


For the counting variant,
\Cref{thm:greedyav-sharpphard-with-one-voter} implies
$\sharpp$-completeness and $\sharpwone$-hardness for all three rules
(recall that intractability for cheaper-first tie-breaking clearly implies
intractability for the order-based one).

\begin{corollary}\label{cor:other-rules-counting}
  For $\phragmenRule$, $\MESAprUtilRule$, and
$\MESCostUtilRule$, $\#\AddRemProbName$ is $\sharpp$-complete and $\sharpwone$-hard for
  the parameterization by the number of approval flips even for a
  single voter.
\end{corollary}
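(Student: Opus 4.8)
The plan is to derive the whole statement from Theorem~\ref{thm:greedyav-sharpphard-with-one-voter} through the single-voter equivalences sketched just above, so that essentially the same instances witness hardness for all three rules. Membership is immediate: for each $f\in\{\phragmenRule,\MESAprUtilRule,\MESCostUtilRule\}$ the decision variant of $f$-\AddRemProbName{} lies in $\np$, since we may guess the (at most)~$r$ approvals to flip and then verify in polynomial time that $p\in f(E')$, all three rules being polynomial-time computable; hence $\#f$-\AddRemProbName{} belongs to $\sharpp$, and combined with the hardness below this gives $\sharpp$-completeness.

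For the hardness I would first make the single-voter equivalences precise, taking the cheaper-first order of Theorem~\ref{thm:greedyav-sharpphard-with-one-voter} as the order-based tie-breaking order. With one voter~$v$, a project~$c$ approved by~$v$ becomes buyable under $\phragmenRule$ (resp.\ affordable under $\MESAprUtilRule$) exactly when the single account first accumulates $\cost(c)$ units, so both rules select the approved projects in cheaper-first order as long as they fit within~$B$---precisely the behaviour of $\greedyAVRule$ with cheaper-first on the approved projects, the only difference being that these two rules never select unapproved projects. For $\MESCostUtilRule$ every approved project is $1$-affordable (its utility equals its cost, which forces the minimal~$q$ to equal~$1$), so it selects approved projects in the fixed order, again matching $\greedyAVRule$ under that order, subject to the same caveat.

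I would then run the reduction on the very instances of Theorem~\ref{thm:greedyav-sharpphard-with-one-voter}, keeping the PB instance, the preferred project~$p$, and the radius~$r$ unchanged. The one thing to check---and the step I expect to be the main obstacle---is that the ``unapproved projects'' caveat never alters the number of flip-sets that select~$p$. Since $\greedyAVRule$ examines all score-$1$ (approved) projects before any score-$0$ one, whenever a flip-set leaves~$p$ approved the selection of~$p$ hinges only on the cheaper (or tied-and-earlier) approved projects, on which all four rules agree; such a flip-set therefore selects~$p$ under $\greedyAVRule$ iff it does under~$f$. The only potential mismatch is a flip-set that removes~$p$'s approval, where $\greedyAVRule$ could still select~$p$ in its score-$0$ phase although $f$ never does. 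To rule this out I would exploit the cost structure of the construction: once $p$ drops to score~$0$ it is considered only after the budget has already been consumed by the approved projects, so $\greedyAVRule$ does not select it either and the two counts coincide exactly.

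With the counts thus equal, the reduction is count-preserving and leaves~$r$ untouched, so both $\sharpp$-hardness and $\sharpwone$-hardness for the parameterization by~$r$ carry over to $\phragmenRule$, $\MESAprUtilRule$, and $\MESCostUtilRule$; finally, intractability under cheaper-first immediately yields intractability under the order-based tie-breaking assumed in the statement, completing the argument.
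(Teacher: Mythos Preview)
Your proposal is correct and follows essentially the same route as the paper: derive the corollary from Theorem~\ref{thm:greedyav-sharpphard-with-one-voter} via the single-voter equivalences between $\phragmenRule$/$\MESAprUtilRule$ and $\greedyAVRule$ with cheaper-first tie-breaking, and between $\MESCostUtilRule$ and $\greedyAVRule$ with the same fixed order. The paper simply asserts these equivalences and invokes the theorem; you additionally spell out why the ``unapproved projects'' caveat does not disturb the count, which the paper leaves implicit.

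One small imprecision worth tightening: in your last step you say that if $p$ loses its approval then ``the budget has already been consumed by the approved projects.'' That is not quite the mechanism in the construction of Theorem~\ref{thm:greedyav-sharpphard-with-one-voter}. There, $p$ is the unique most expensive project (cost $2T-t$), so under cheaper-first every $x_i$ (approved or not) is considered before $p$; since $\sum_i \cost(x_i)=T\le B$, all $x_i$ are selected, leaving at most $T$ units, which is strictly less than $\cost(p)=2T-t$ because $t<T$. Hence $\greedyAVRule$ never selects a score-$0$ $p$ either, and the counts match for both instances $I$ and $I'$ of the Turing reduction. With that correction your argument is complete.
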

Notably, as $\phragmenAddRemPB$ and $\MESAprUtilAddRemPB$ are polynomial-time solvable for a single voter, our results are not sufficient to rule out the existence of $\fpt$ (or XP) algorithms for the number of voters for these problems in the decision and counting variants.
Moreover, to maintain focus, we have not considered the parameter number of projects in this section, leaving this as a direction for future work.

\section{Experiments}

As already discussed in \Cref{sec:intro,sec:flipB}, counting and decision
variants of \textsc{Flip-Bribery} can be used to estimate the robustness of
outcomes of budgeting rules.  Unfortunately, we have shown above that
in most cases it is computationally intractable to solve these
problems. Our only algorithms for the counting variant depend
exponentially on the number of projects of which there are up to $150$
in our considered real-world instances, making these algorithms
unsuitable for our
purposes. 
Thus, in this section, we resort to a sampling-based approach to
assess the robustness of funding decisions in participatory budgeting
to random noise in the votes.  In particular, we use real-world
participatory budgeting instances from Pabulib \cite{pabulib}
\begin{enumerate*}[label=(\roman*)]
\item to analyze examples of non-robust outcomes,   
\item to evaluate how robust outcomes typically are in practice and how this depends on the used budgeting rule, and
\item to form insights regarding what types of projects and instances have particularly (non)-robust funding decisions.
\end{enumerate*}
We will describe the setup of our experiments in
\Cref{sec:setup}, analyze the general robustness of outcomes produced
by different rules in \Cref{sub:aggreg}, show and analyze non-robust
outcomes in \Cref{sec:non-robust}, and conclude by summarizing
additional experiments from \Cref{app:experiments} in
\Cref{sub:additional}.

\subsection{Setup} \label{sec:setup}

\paragraph{Rules.}
In line with our theoretical investigations, in our experiments, we focus on the following four rules:
\begin{enumerate*}[label=(\roman*)]
\item \Greedy,   
\item \GreedyCost, 
\item \Phragmen, and 
\item \MES.\footnote{To maintain focus, we do not consider \MESAprUtilRule{} here. The reason why we examine \MES is that this is the variant that has been used in practice \cite{mes}.}
\end{enumerate*}
As \MES regularly does not output exhaustive outcomes, i.e., the
leftover funds would be sufficient to include additional projects
\cite{mes}, a completion method is needed in practical applications.
In our experiments, we use a variant of \MES similar to the rule used
in real-world applications \cite{mes}:\footnote{We describe, analyze,
and compare other existing completion variants in
\Cref{sub:completion}.} In case the produced outcome is not
exhaustive, the initial endowments of the voters are increased (by
increasing the total budget by $1\%$).  We continue to do so until the
cost of the projects selected by the rule exceeds the original budget.
Unfortunately, the produced outcome is still not guaranteed to be
exhaustive.  In this case, we apply \Greedy to spend the remaining
budget.  We use the implementation provided by the Pabutools library
\cite{pabutools} for \Phragmen and \MES (and our own implementations for the other rules).  In the execution of the
rules, we break ties according to the order in which the projects
appear in the file on Pabulib.

\paragraph{Noise Model and 50\%-Winner Threshold.} 
We use the resampling method introduced by Szufa et al.\
\cite{DBLP:conf/ijcai/SzufaFJLSST22} as our noise model, which is
parameterized by a \emph{resampling probability} $\phi\in [0,1]$ (we
will also sometimes refer to~$\phi$ as the noise level).  To define
the model, for each voter $v_i\in V$, let $p_i:=1/|A(v_i)|$ be the
inverse of the number of projects approved by voter $v_i$. For each
voter $v_i\in V$, we execute the following procedure: For every
project $c_j\in C$, with probability $1-\phi$, we do not change
whether $v_i$ approves $c_j$; however, with probability~$\phi$ we
``resample'' the approval. This means that---independent of whether
$v_i$ initially approved~$c_j$---with probability $p_i$ we let $v_i$
approve $c_j$, and with probability $1-p_i$ we let $v_i$ disapprove
$c_j$.  One advantage of this model is that it does not change the
expected number of projects approved by a voter, which also implies
that the expected number of new approvals is the same as the expected
number of new disapprovals.\footnote{Note that there are naturally
  also other possible noise models. For instance, it would also be
  possible to flip each approval with probability $\phi$. However, as
  initially voters typically only approve a small fraction of
  projects, a growing flipping probability leads to an increased
  expected number of approvals, which makes the instances less and
  less realistic. We also conducted some preliminary experiments using
  this noise model and found that funding decisions are less robust
  under it than under the resampling model.}

In our experiments, to assess the robustness of an outcome produced by
rule $\mathcal{R}$ on some par\-tici\-patory budgeting instance
$E=(C,V,A,B,\cost)$, we execute the following procedure.  For each
$\phi\in \{0,1\%,2\%, \dots, 24\%, 25\%\}$, we modify $E$ as described
above using the resampling model with resampling parameter $\phi$ and
record which projects are funded when rule $\mathcal{R}$ is applied to
the modified instance.  For each considered value of the resampling
probability $\phi$, we repeat this process $100$ times, and, then,
compute for each project its \emph{funding probability}, i.e., the
fraction of sampled instances in which the project is included in the
computed outcome.\footnote{Our choice of focusing on resampling
  probabilities up to $25\%$ is in some sense arbitrary. However, we
  believe that this range captures practically relevant cases. Indeed,
  if we need to introduce more than $25\%$ of noise to affect the
  results, then it is natural to consider the original results to be
  robust.}  Informally speaking, we say that the funding decision on a
project is robust if its funding probability does not quickly change
when the resampling probability is increased; in contrast, we speak of
non-robust funding decisions if already in the case where we perturb
votes using a small resampling probability, the project's funding
probability changes substantially.

Given a PB instance and a budgeting rule, we quantify this instance's
robustness using the notion of \emph{$50\%$-winner threshold}: Its
value is the smallest (considered) value of $\phi$ for which a
majority of sampled instances have different outcome than the initial
one. The idea of the 50\%-winner threshold was introduced by Boehmer
et al.~\cite{boe-bre-fal-nie:c:counting-bribery}, and was later used
by the same team for practical analysis of the robustness in
single-winner
elections~\cite{boe-bre-fal-nie:c:robustness-single-winner}.

\paragraph{Data.}
In our experiments, we analyze instances from the Pabulib platform \cite{pabulib}. 
Specifically, we consider all instances available on Pabulib as of April 2023 in which approval ballots are used (these can be downloaded at \url{http://pabulib.org/?hash=643d7a7937f76}). 
However, in order to ensure a feasible computation time of our experiments, we discarded all instances for which \MES took more than 2 minutes to compute on the initial instance (on a single thread of an Intel(R) Xeon(R) Gold 6338 CPU @ 2.00GHz core). 
The resulting dataset contains $460$ instances and we refer to it as the \emph{full dataset}. 
In some cases, it will be interesting to focus on instances where some funding decisions are close. 
For this, we put together a second \emph{selected dataset}, which contains all instances where the $50\%$-winner threshold for at least one of our rules is smaller or equal to $25\%$.
This selected dataset consists of $257$ out of our $460$ instances.

\paragraph{Visualization of Results.}
Examining the robustness of the
outcome of a single instance in more detail, we will
visualize the funding probabilities of projects as line plots for
varying values of the resampling probability (see, e.g.,
\Cref{fig:non-robust}; we always indicate the place where the PB exercise took place in the caption).  In these plots, each line corresponds to one
project and shows the funding probability of this project ($y$-axis),
depending on the used value of the resampling probability
($x$-axis).\footnote{We sometimes refer to projects by the color of
  their lines. Doing so for the first time, we often specify the
  number of approvers of the project and its cost in brackets. }
Thus, when moving from left to right on these plots, we move further
and further away from the initial instance and increase the level of
noise.  To ensure the readability of these plots, we only include
projects which meet the following criterion: For the initially funded
projects, we include those whose funding probability drops below
$90\%$ for some resampling probability, and for the initially
non-funded ones, we include those whose funding probability reaches at
least $10\%$ for some resampling probability.  In some cases, these
line plots will have legends.  In such legends, the entries are of the
form $x/y$, where $x$ is the number of approvals the project got in
the initial instance and $y$ is the project's cost.
For all instances visualized as line plots, we increased the resolution and sample size of our robustness experiment. 
Specifically, on these instances, we examine all values $\phi\in \{0,0.01\%, 0.02\%, \dots, 24.99\%, 25\%\}$ of the resampling probability and for each of them generate $1000$ samples to obtain the average funding probabilities.

\subsection{An Aggregate View on Outcome's
  Robustness} \label{sub:aggreg} In this section, we take an aggregate
look at the robustness of outcomes of real-world participatory
budgeting instances, to check how fragile are the outcomes to random
noise, and how this depends on the budgeting rule used.  Generally
speaking, in our analysis we distinguish two types of instances: Those
with a $50\%$-winner threshold above $25\%$, i.e., in which even for a
resampling probability of $25\%$ the probability of a change in the
outcome is below $50\%$, and those whose $50\%$-winner threshold is
smaller than $25\%$.  The former type of instances can be viewed as
very robust, whereas the latter ones may exhibit a more complex
behavior in terms of robustness and are, thus, of higher interest to
our analysis.  To get a first overview of the robustness of PB
outcomes, in \Cref{ta:50-winner}, we provide some statistical
quantities regarding the 50\%-winner threshold.  We can see here that
for all the rules a clear majority of instances has a very robust
outcome (their $50\%$-winner threshold is above $25\%$).  However,
there are also numerous instances where funding decisions were closer;
in particular, there is a non-negligible number of instances for which
already a resampling probability of $5\%$ or even smaller is
sufficient to change the outcome in a majority of cases (we will
explore those instances in more detail in \Cref{sec:non-robust}).
This contrast between many very robust outcomes and some less robust
ones underlines that it might be worthwhile to check for the
robustness of outcomes in practice.
\begin{table}[t!]
    \caption{Statistical quantities regarding the $50\%$-winner
      threshold for different budgeting rules for the full dataset, consisting
      of~$460$~instances. The first three columns contain the number of
      instances with a 50\%-winner threshold smaller or equal to
      $25\%/10\%/5\%$. The last two columns give the mean/median $50\%$-winner
      threshold among all the instances for which it is smaller or equal to
      $25\%$. }\label{ta:50-winner} 
    \centering
    \begin{tabular}{c|c|c|c|c|c}
         budgeting rule & \# instances $\leq 25\%$ & \# instances $\leq 10\%$ & \# instances $\leq 5\%$ & mean (for $\leq 25\%$) & median (for $\leq 25\%$)   \\\hline
        \Greedy & 77  & 34 & 17  & $12\%$  & $12\%$ \\
        \GreedyCost & 128  & 67 & 41 & $11\%$ & $10\%$\\
        \Phragmen & 151 & 86 & 47 &  $10\%$ & $9\%$ \\
        \MES & 187 & 121 & 75 & $9\%$ & $7\%$ \\
    \end{tabular}
\end{table}

Notably, we see in \Cref{ta:50-winner} that the robustness of outcomes substantially depends on the
used budgeting rule. In terms of the 50\%-winner threshold, in
aggregate, \Greedy produces the most robust results, then \GreedyCost,
followed by \Phragmen and \MES, which has the lowest robustness.
This observation holds irrespective of whether we focus on instances with $50\%$-winner thresholds up to $25\%$, $10\%$, or $5\%$.
Moreover, it also applies to the distribution of the $50\%$-winner
thresholds among the less robust instances (i.e., those with a
threshold of at most $25\%$): For \Greedy, the distribution is roughly
uniform, and all possible threshold values (below $25\%$) appear
roughly the same number of times, which results both in the mean and
the median being around $12\%$.  In contrast, for \MES the distribution
is significantly skewed towards lower thresholds, which results in a
mean of $9\%$ and a median of $7\%$ (in particular, there are $24$
instances with a $50\%$-winner threshold of $1\%$).  Overall, the
general gap between \Greedy and \MES is quite significant: For \MES,
more than twice as many instances have a $50\%$-winner threshold
smaller or equal to $25\%$ than for \Greedy, and among these
instances, the fraction of very non-robust ones is substantially
higher.

As a rationale for the different behavior of the rules, recall that
the $50\%$-winner threshold is only concerned with whether the outcome
changes, and not by how much.  For the greedy rules, if the ordering
of the projects (according to the number of approvals or the
approval-to-cost ratio) does not change, then the outcome also stays
fixed.  On real-world instances, there is typically only a small
number of swaps in this ordering that are both likely to happen and are
significant enough to change the outcome of the greedy rules.  Thus,
practically speaking, whether an outcome is robust often comes down to whether the
difference in the approval scores (or the approval-to-cost ratios) of
several selected projects is small.  In contrast, for \Phragmen and
\MES, changes in the votes can have a much more subtle influence on the
execution of the rules and, thus, on the produced outcome: Here, an
additional approval can not only cause a different project to be
funded, but also a different distribution of costs to voters,
potentially leading to additional follow-up changes.  Thus, on a high
level, for \Phragmen and \MES there are typically more ``fragile''
moments in the execution of the rules that can have a decisive
influence on the outcome.

\begin{figure}[t!]
	\centering 
	\begin{subfigure}{0.23\textwidth}
		\resizebox{1.05\textwidth}{!}{\begin{tikzpicture}[every plot/.append style={line width=2.5pt}]

\definecolor{color0}{rgb}{1,0.549019607843137,0}
\definecolor{color1}{rgb}{0.133333333333333,0.545098039215686,0.133333333333333}
\definecolor{color2}{rgb}{0.117647058823529,0.564705882352941,1}

\begin{axis}[
legend columns=2, 
legend cell align={left},
legend style={
  fill opacity=0.8,
  draw opacity=1,
  draw=none,
  text opacity=1,
  at={(0.42,1.46)},
  line width=1.5pt,
  anchor=north,
   /tikz/column 2/.style={
  	column sep=10pt,
  }, font=\Large
},
legend image post style={line width =4.5pt},
legend entries={\Greedy,
	full dataset,
	\GreedyCost,
	selected dataset, 
	\Phragmen, {\phantom{a}},
	\MES},
tick align=outside,
tick pos=left,
x grid style={white!69.0196078431373!black},
xlabel={resampling probability},
xmin=0, xmax=0.25,
xtick style={color=black},
ytick={0.4,0.6,0.8,1},
yticklabels={40\%,60\%,80\%,100\%},
xtick={0,0.05,0.1,0.15,0.2,0.25},
xticklabels={0\%,5\%,10\%,15\%,20\%,25\%},
y grid style={white!69.0196078431373!black},
ylabel={probability of same outcome},
ymin=0.3, ymax=1,
ytick style={color=black},every tick label/.append style={font=\Large}, 
label style={font=\Large}
]
\addlegendimage{red!54.5098039215686!black}
\addlegendimage{gray}
\addlegendimage{color0}
\addlegendimage{gray,dotted}
\addlegendimage{color1}
\addlegendimage{white,dashed}
\addlegendimage{color2}
\addplot [semithick, red!54.5098039215686!black]
table {%
0 0.999999999999995
0.01 0.963239130434778
0.02 0.952391304347822
0.03 0.939847826086953
0.04 0.933282608695649
0.05 0.925391304347824
0.06 0.917326086956518
0.07 0.913413043478258
0.08 0.908369565217389
0.09 0.902999999999997
0.1 0.894913043478258
0.11 0.889608695652171
0.12 0.885326086956519
0.13 0.880413043478259
0.14 0.874673913043476
0.15 0.869304347826084
0.16 0.864978260869564
0.17 0.858456521739128
0.18 0.854217391304346
0.19 0.851565217391303
0.2 0.847934782608693
0.21 0.840586956521737
0.22 0.838847826086954
0.23 0.83382608695652
0.24 0.828021739130433
0.25 0.820521739130433
};
\addplot [semithick, red!54.5098039215686!black, dotted]
table {%
0 0.999999999999998
0.01 0.935291828793773
0.02 0.917042801556419
0.03 0.897354085603112
0.04 0.887003891050583
0.05 0.874824902723734
0.06 0.862217898832684
0.07 0.856264591439688
0.08 0.849260700389104
0.09 0.840739299610894
0.1 0.828365758754863
0.11 0.820233463035019
0.12 0.814941634241245
0.13 0.809105058365759
0.14 0.79875486381323
0.15 0.790817120622568
0.16 0.784941634241245
0.17 0.774785992217899
0.18 0.770350194552529
0.19 0.766731517509727
0.2 0.761050583657587
0.21 0.750739299610895
0.22 0.748132295719844
0.23 0.742023346303502
0.24 0.736186770428016
0.25 0.722996108949416
};
\addplot [semithick, color0]
table {%
0 0.999999999999995
0.01 0.968347826086952
0.02 0.947717391304344
0.03 0.929630434782605
0.04 0.914413043478257
0.05 0.903913043478257
0.06 0.893869565217388
0.07 0.882239130434779
0.08 0.87069565217391
0.09 0.857282608695649
0.1 0.844565217391301
0.11 0.832999999999997
0.12 0.822304347826084
0.13 0.812760869565215
0.14 0.804152173913041
0.15 0.795347826086954
0.16 0.786499999999997
0.17 0.776934782608693
0.18 0.767826086956519
0.19 0.76069565217391
0.2 0.752347826086954
0.21 0.74428260869565
0.22 0.738630434782606
0.23 0.731630434782606
0.24 0.727021739130432
0.25 0.718804347826084
};
\addplot [semithick, color0, dotted]
table {%
0 0.999999999999998
0.01 0.944085603112839
0.02 0.90793774319066
0.03 0.875719844357976
0.04 0.849221789883268
0.05 0.830583657587548
0.06 0.812918287937742
0.07 0.792723735408559
0.08 0.772334630350194
0.09 0.748871595330739
0.1 0.726225680933851
0.11 0.706264591439688
0.12 0.686575875486381
0.13 0.670389105058365
0.14 0.654747081712062
0.15 0.639066147859922
0.16 0.624785992217898
0.17 0.607976653696498
0.18 0.591439688715953
0.19 0.579416342412451
0.2 0.565797665369649
0.21 0.551906614785992
0.22 0.543774319066148
0.23 0.531206225680934
0.24 0.523657587548638
0.25 0.510155642023346
};
\addplot [semithick, color1]
table {%
0 0.999999999999995
0.01 0.959739130434779
0.02 0.936869565217387
0.03 0.920456521739127
0.04 0.902456521739127
0.05 0.887978260869562
0.06 0.872391304347823
0.07 0.855934782608693
0.08 0.841456521739128
0.09 0.827347826086953
0.1 0.814173913043476
0.11 0.800499999999997
0.12 0.787543478260867
0.13 0.774934782608693
0.14 0.763369565217389
0.15 0.752630434782606
0.16 0.740217391304345
0.17 0.729543478260868
0.18 0.719521739130432
0.19 0.713999999999998
0.2 0.705760869565215
0.21 0.698260869565215
0.22 0.691217391304345
0.23 0.684456521739128
0.24 0.677630434782607
0.25 0.671413043478259
};
\addplot [semithick, color1, dotted]
table {%
0 0.999999999999998
0.01 0.928404669260699
0.02 0.88828793774319
0.03 0.859105058365758
0.04 0.827898832684824
0.05 0.80260700389105
0.06 0.774474708171206
0.07 0.745175097276264
0.08 0.719922178988326
0.09 0.695175097276264
0.1 0.671945525291828
0.11 0.647626459143969
0.12 0.625914396887159
0.13 0.603346303501945
0.14 0.583268482490272
0.15 0.564513618677043
0.16 0.542762645914397
0.17 0.52408560311284
0.18 0.506731517509728
0.19 0.497587548638132
0.2 0.482762645914397
0.21 0.470272373540856
0.22 0.459688715953307
0.23 0.449299610894942
0.24 0.437859922178988
0.25 0.429066147859922
};
\addplot [semithick, color2]
table {%
0 0.999999999999995
0.01 0.916391304347823
0.02 0.883065217391301
0.03 0.858086956521736
0.04 0.834456521739128
0.05 0.81423913043478
0.06 0.797717391304345
0.07 0.778260869565215
0.08 0.764760869565216
0.09 0.749021739130433
0.1 0.736543478260867
0.11 0.724652173913042
0.12 0.713586956521737
0.13 0.701521739130433
0.14 0.692978260869564
0.15 0.682521739130433
0.16 0.672043478260868
0.17 0.66386956521739
0.18 0.657086956521738
0.19 0.649717391304346
0.2 0.641521739130433
0.21 0.632543478260868
0.22 0.625673913043477
0.23 0.618695652173912
0.24 0.614304347826086
0.25 0.605913043478259
};
\addplot [semithick, color2, dotted]
table {%
0 0.999999999999998
0.01 0.852607003891049
0.02 0.795719844357976
0.03 0.75124513618677
0.04 0.710661478599221
0.05 0.675214007782101
0.06 0.646731517509728
0.07 0.613307392996109
0.08 0.590194552529182
0.09 0.562879377431907
0.1 0.540544747081712
0.11 0.520311284046693
0.12 0.502295719844358
0.13 0.482529182879378
0.14 0.467898832684825
0.15 0.451206225680934
0.16 0.432918287937744
0.17 0.42
0.18 0.408949416342413
0.19 0.395992217898833
0.2 0.383813229571985
0.21 0.371556420233463
0.22 0.360272373540856
0.23 0.349416342412451
0.24 0.343774319066148
0.25 0.332879377431907
};

\end{axis}

\end{tikzpicture}}
		\caption{Probability that the outcome is identical to the initial one.}\label{fig:ov_robust1}
	\end{subfigure}\hfill
\begin{subfigure}{0.23\textwidth}
	\resizebox{1.05\textwidth}{!}{\begin{tikzpicture}[every plot/.append style={line width=2.5pt}]

\definecolor{color0}{rgb}{1,0.549019607843137,0}
\definecolor{color1}{rgb}{0.133333333333333,0.545098039215686,0.133333333333333}
\definecolor{color2}{rgb}{0.117647058823529,0.564705882352941,1}

\begin{axis}[
legend columns=2, 
legend cell align={left},
legend style={
  fill opacity=0.8,
  draw opacity=1,
  draw=none,
  text opacity=1,
  at={(0.42,1.46)},
  line width=1.5pt,
  anchor=north,
   /tikz/column 2/.style={
  	column sep=10pt,
  }, font=\Large
},
legend image post style={line width =4.5pt},
legend entries={\Greedy,
	full dataset,
	\GreedyCost,
	selected dataset,
	\Phragmen, {\phantom{a}},
	\MES},
tick align=outside,
tick pos=left,
x grid style={white!69.0196078431373!black},
xlabel={resampling probability},
xmin=0, xmax=0.25,
xtick style={color=black},
ytick={0.4,0.6,0.8,1},
yticklabels={40\%,60\%,80\%,100\%},
xtick={0,0.05,0.1,0.15,0.2,0.25},
xticklabels={0\%,5\%,10\%,15\%,20\%,25\%},
y grid style={white!69.0196078431373!black},
ylabel={funding probability},
ymin=0.3, ymax=1,
ytick style={color=black},every tick label/.append style={font=\Large}, 
label style={font=\Large}
]
\addlegendimage{red!54.5098039215686!black}
\addlegendimage{gray}
\addlegendimage{color0}
\addlegendimage{gray,dotted}
\addlegendimage{color1}
\addlegendimage{white,dashed}
\addlegendimage{color2}
\addplot [semithick, red!54.5098039215686!black]
table {%
0 0.999999999999995
0.01 0.966152173913039
0.02 0.956434782608692
0.03 0.944934782608692
0.04 0.939347826086953
0.05 0.932021739130432
0.06 0.926086956521736
0.07 0.922347826086953
0.08 0.918043478260867
0.09 0.913282608695649
0.1 0.907130434782606
0.11 0.90269565217391
0.12 0.899086956521737
0.13 0.893891304347824
0.14 0.890369565217389
0.15 0.885326086956519
0.16 0.882326086956519
0.17 0.877130434782606
0.18 0.874043478260867
0.19 0.871108695652172
0.2 0.868456521739129
0.21 0.863869565217389
0.22 0.862608695652171
0.23 0.858478260869563
0.24 0.853652173913041
0.25 0.844260869565216
};
\addplot [semithick, red!54.5098039215686!black, dotted]
table {%
0 0.999999999999998
0.01 0.940505836575874
0.02 0.924280155642022
0.03 0.906459143968871
0.04 0.897859922178987
0.05 0.88669260700389
0.06 0.877898832684824
0.07 0.872217898832684
0.08 0.86657587548638
0.09 0.859027237354085
0.1 0.85
0.11 0.843307392996109
0.12 0.839105058365758
0.13 0.83260700389105
0.14 0.826108949416343
0.15 0.818677042801556
0.16 0.815252918287938
0.17 0.80692607003891
0.18 0.804630350194552
0.19 0.799844357976653
0.2 0.796498054474708
0.21 0.790583657587549
0.22 0.788560311284047
0.23 0.78295719844358
0.24 0.777976653696498
0.25 0.76284046692607
};

\addplot [semithick, color0]
table {%
0 0.999999999999995
0.01 0.984999999999996
0.02 0.971739130434778
0.03 0.958717391304343
0.04 0.945695652173909
0.05 0.935239130434778
0.06 0.925456521739127
0.07 0.915195652173909
0.08 0.90315217391304
0.09 0.889630434782605
0.1 0.878369565217388
0.11 0.866804347826084
0.12 0.855065217391301
0.13 0.843673913043475
0.14 0.833565217391301
0.15 0.82323913043478
0.16 0.811456521739128
0.17 0.799826086956519
0.18 0.789652173913041
0.19 0.780652173913041
0.2 0.771369565217389
0.21 0.761304347826085
0.22 0.752999999999997
0.23 0.745652173913041
0.24 0.740304347826084
0.25 0.73223913043478
};
\addplot [semithick, color0, dotted]
table {%
0 0.999999999999998
0.01 0.973891050583656
0.02 0.950933852140076
0.03 0.927743190661477
0.04 0.9052140077821
0.05 0.886614785992217
0.06 0.869455252918287
0.07 0.851595330739298
0.08 0.830428015564202
0.09 0.80669260700389
0.1 0.786731517509727
0.11 0.766653696498054
0.12 0.745175097276263
0.13 0.725603112840466
0.14 0.707315175097276
0.15 0.688793774319065
0.16 0.669299610894941
0.17 0.648910505836576
0.18 0.630428015564201
0.19 0.614980544747081
0.2 0.599610894941634
0.21 0.582217898832684
0.22 0.569416342412451
0.23 0.555992217898833
0.24 0.547120622568093
0.25 0.534007782101167
};

\addplot [semithick, color1]
table {%
0 0.999999999999995
0.01 0.984456521739126
0.02 0.970413043478257
0.03 0.959760869565213
0.04 0.946369565217387
0.05 0.934847826086952
0.06 0.920869565217388
0.07 0.906456521739127
0.08 0.891760869565214
0.09 0.878826086956518
0.1 0.866760869565214
0.11 0.852130434782606
0.12 0.840891304347823
0.13 0.828499999999997
0.14 0.81473913043478
0.15 0.804043478260867
0.16 0.789543478260867
0.17 0.778739130434781
0.18 0.764543478260867
0.19 0.756499999999998
0.2 0.745652173913041
0.21 0.736760869565215
0.22 0.724847826086954
0.23 0.717586956521737
0.24 0.707913043478259
0.25 0.700543478260867
};
\addplot [semithick, color1, dotted]
table {%
0 0.999999999999998
0.01 0.972645914396886
0.02 0.948210116731516
0.03 0.929260700389104
0.04 0.906147859922178
0.05 0.886108949416341
0.06 0.860856031128404
0.07 0.835291828793773
0.08 0.809494163424124
0.09 0.786770428015563
0.1 0.76556420233463
0.11 0.739377431906614
0.12 0.720661478599221
0.13 0.698677042801556
0.14 0.674396887159532
0.15 0.65556420233463
0.16 0.630389105058365
0.17 0.61136186770428
0.18 0.586731517509727
0.19 0.573190661478599
0.2 0.553346303501945
0.21 0.538677042801556
0.22 0.519338521400778
0.23 0.507898832684825
0.24 0.491128404669261
0.25 0.480544747081712
};

\addplot [semithick, color2]
table {%
0 0.999999999999995
0.01 0.945239130434779
0.02 0.920239130434779
0.03 0.898326086956519
0.04 0.877347826086953
0.05 0.860543478260866
0.06 0.843652173913041
0.07 0.825586956521737
0.08 0.814130434782607
0.09 0.799999999999998
0.1 0.786543478260868
0.11 0.774065217391303
0.12 0.762260869565215
0.13 0.747478260869564
0.14 0.738043478260868
0.15 0.727130434782607
0.16 0.716891304347825
0.17 0.707630434782607
0.18 0.700413043478259
0.19 0.695782608695651
0.2 0.685673913043476
0.21 0.678608695652173
0.22 0.671108695652172
0.23 0.666543478260868
0.24 0.664260869565216
0.25 0.653021739130434
};
\addplot [semithick, color2, dotted]
table {%
0 0.999999999999998
0.01 0.904124513618676
0.02 0.861478599221789
0.03 0.822373540856031
0.04 0.786459143968871
0.05 0.75692607003891
0.06 0.728054474708171
0.07 0.697081712062257
0.08 0.67704280155642
0.09 0.652996108949416
0.1 0.628715953307393
0.11 0.607198443579766
0.12 0.587704280155642
0.13 0.563463035019455
0.14 0.547198443579766
0.15 0.529455252918288
0.16 0.511750972762646
0.17 0.496809338521401
0.18 0.484708171206226
0.19 0.476575875486381
0.2 0.460389105058366
0.21 0.45136186770428
0.22 0.439338521400778
0.23 0.431439688715953
0.24 0.429416342412452
0.25 0.413968871595331
};
\end{axis}

\end{tikzpicture}}
	\caption{Funding probability of the ``least robust'' initially funded project.}\label{fig:ov_robust2}
\end{subfigure}\hfill
	\begin{subfigure}{0.23\textwidth}
		\resizebox{1.05\textwidth}{!}{\begin{tikzpicture}[every plot/.append style={line width=2.5pt}]

\definecolor{color0}{rgb}{1,0.549019607843137,0}
\definecolor{color1}{rgb}{0.133333333333333,0.545098039215686,0.133333333333333}
\definecolor{color2}{rgb}{0.117647058823529,0.564705882352941,1}

\begin{axis}[
legend columns=2, 
legend cell align={left},
legend style={
  fill opacity=0.8,
  draw opacity=1,
  draw=none,
  text opacity=1,
  at={(0.44,1.46)},
  line width=1.5pt,
  anchor=north,
   /tikz/column 2/.style={
  	column sep=10pt,
  }, font=\Large
},legend image post style={line width =4.5pt},
legend entries={\Greedy,
	full dataset,
	\GreedyCost,
	selected dataset,
	\Phragmen, {\phantom{a}},
	\MES},
tick align=outside,
tick pos=left,
x grid style={white!69.0196078431373!black},
xlabel={resampling probability},
xmin=0, xmax=0.25,
xtick style={color=black},
ytick={0.85,0.9,0.95,1},
yticklabels={85\%,90\%,95\%,100\%},
xtick={0,0.05,0.1,0.15,0.2,0.25},
xticklabels={0\%,5\%,10\%,15\%,20\%,25\%},
y grid style={white!69.0196078431373!black},
ylabel={fraction of still funded projects},
ymin=0.85, ymax=1,
ytick style={color=black},every tick label/.append style={font=\Large}, 
label style={font=\Large}
]
\addlegendimage{red!54.5098039215686!black}
\addlegendimage{gray}
\addlegendimage{color0}
\addlegendimage{gray,dotted}
\addlegendimage{color1}
\addlegendimage{white,dashed}
\addlegendimage{color2}
\addplot [semithick, red!54.5098039215686!black]
table {%
0 0.999999999999998
0.01 0.993633750086056
0.02 0.991860649159699
0.03 0.989198410611883
0.04 0.987938262349318
0.05 0.986493304053233
0.06 0.984834056981555
0.07 0.984132521664506
0.08 0.982984309713475
0.09 0.98197018180568
0.1 0.980385151020419
0.11 0.979433643400557
0.12 0.978762087186753
0.13 0.97774552515663
0.14 0.976546192275664
0.15 0.975054902174502
0.16 0.974262972865783
0.17 0.973210041853794
0.18 0.972185996029206
0.19 0.971602360840478
0.2 0.969957462825344
0.21 0.969304459643422
0.22 0.968560781198056
0.23 0.967573706010317
0.24 0.966139250873663
0.25 0.964700692030069
};
\addplot [semithick, red!54.5098039215686!black, dotted]
table {%
0 0.999999999999998
0.01 0.989028346261922
0.02 0.98605404081392
0.03 0.98248748178824
0.04 0.980559927717836
0.05 0.978362209303824
0.06 0.976177116211805
0.07 0.97493891580027
0.08 0.973814025617754
0.09 0.972030907874124
0.1 0.969774920718743
0.11 0.96842357990663
0.12 0.967544559312998
0.13 0.966929812452274
0.14 0.964616494009901
0.15 0.962327859509309
0.16 0.961012727859715
0.17 0.959990442802366
0.18 0.958871495492666
0.19 0.958141340179888
0.2 0.956078180171736
0.21 0.955309274529116
0.22 0.95380618999616
0.23 0.952835520989633
0.24 0.951818887939826
0.25 0.949829268236231
};

\addplot [semithick, color0]
table {%
0 0.999999999999999
0.01 0.997325605335923
0.02 0.995073623005613
0.03 0.992901441926279
0.04 0.990942394980821
0.05 0.989455795286948
0.06 0.987726183868573
0.07 0.98584595201453
0.08 0.98401040981185
0.09 0.9822010199297
0.1 0.980672491651608
0.11 0.979517384503893
0.12 0.977938466352424
0.13 0.976388093610388
0.14 0.974816008876621
0.15 0.973151990782729
0.16 0.971143263892777
0.17 0.969419697985879
0.18 0.96785126295145
0.19 0.966465323569976
0.2 0.965215719938371
0.21 0.963911362496691
0.22 0.962671952006416
0.23 0.962086512021013
0.24 0.961352929905322
0.25 0.959750228198922
};
\addplot [semithick, color0, dotted]
table {%
0 0.999999999999999
0.01 0.995401861690759
0.02 0.992001426391372
0.03 0.988149792293472
0.04 0.984947477397579
0.05 0.982358621914385
0.06 0.979357501606526
0.07 0.9759975299588
0.08 0.973183353489435
0.09 0.969875602232386
0.1 0.967380075848535
0.11 0.965257032987751
0.12 0.962438500086052
0.13 0.960312847660147
0.14 0.957069117016129
0.15 0.954044290471554
0.16 0.951315719949579
0.17 0.948114461373322
0.18 0.945304415243145
0.19 0.94311362498042
0.2 0.941157700978401
0.21 0.939177217627613
0.22 0.93806580106572
0.23 0.936501684339866
0.24 0.935825731814451
0.25 0.933686142406619
};

\addplot [semithick, color1]
table {%
0 0.999999999999998
0.01 0.997847461609692
0.02 0.995763755382695
0.03 0.99395143673974
0.04 0.991880451304748
0.05 0.989829299809288
0.06 0.987648541130269
0.07 0.985829608845749
0.08 0.983343624155841
0.09 0.981399205858696
0.1 0.980102202679975
0.11 0.978777688996768
0.12 0.97704011132988
0.13 0.97504001002369
0.14 0.973650543787869
0.15 0.971734253781197
0.16 0.969123214118745
0.17 0.967647303028519
0.18 0.966151764857471
0.19 0.964989189663881
0.2 0.963279020060392
0.21 0.962308458651421
0.22 0.961249943760633
0.23 0.960480871705918
0.24 0.959290875181554
0.25 0.957986952466654
};
\addplot [semithick, color1, dotted]
table {%
0 0.999999999999996
0.01 0.996577817148348
0.02 0.993147837131155
0.03 0.989960418548433
0.04 0.986637831925536
0.05 0.983159935320823
0.06 0.979390796038157
0.07 0.976142180249915
0.08 0.972001663133658
0.09 0.96870472823938
0.1 0.966651568450999
0.11 0.963948693480668
0.12 0.961006941686563
0.13 0.957676550107048
0.14 0.955246920081657
0.15 0.952379456404552
0.16 0.94792454495474
0.17 0.945163275016098
0.18 0.942595838908847
0.19 0.940704309630341
0.2 0.938011773468203
0.21 0.936426541254943
0.22 0.935164901465319
0.23 0.934093793757167
0.24 0.932312911710073
0.25 0.931260762084559
};

\addplot [semithick, color2]
table {%
0 1
0.01 0.992282896086307
0.02 0.989028737672515
0.03 0.98671503234718
0.04 0.984033808428136
0.05 0.981390869674576
0.06 0.979258811111678
0.07 0.976694838804856
0.08 0.974527683148489
0.09 0.972208377874559
0.1 0.970483409047499
0.11 0.96875491151082
0.12 0.967253784153923
0.13 0.965449144094362
0.14 0.964150562167052
0.15 0.962929783490465
0.16 0.961714275154378
0.17 0.959974040729836
0.18 0.959101286742173
0.19 0.957846717486794
0.2 0.956381986119084
0.21 0.955590988741744
0.22 0.953640180218076
0.23 0.952250850269926
0.24 0.950735375631652
0.25 0.949130059365602
};
\addplot [semithick, color2, dotted]
table {%
0 0.999999999999999
0.01 0.987044760010489
0.02 0.982132782726774
0.03 0.977832392161414
0.04 0.973562060510366
0.05 0.969054480766415
0.06 0.965680034133005
0.07 0.961088605753732
0.08 0.957586254661713
0.09 0.953908732200903
0.1 0.950389296677129
0.11 0.94753806606862
0.12 0.94544780504352
0.13 0.942578215512869
0.14 0.941019137294207
0.15 0.939102880053768
0.16 0.936431583398599
0.17 0.934135754556998
0.18 0.933168864967399
0.19 0.930929413112646
0.2 0.928896513575437
0.21 0.927702633561042
0.22 0.925416534114975
0.23 0.922807374879147
0.24 0.921209403100045
0.25 0.919056197241671
};
\end{axis}

\end{tikzpicture}}
		\caption{Fraction of initially funded projects that are still funded.}\label{fig:ov_robust3}
	\end{subfigure}\hfill
	\begin{subfigure}{0.23\textwidth}
		\resizebox{1.05\textwidth}{!}{	\begin{tikzpicture}[every plot/.append style={line width=2.5pt}]
	
	\definecolor{color0}{rgb}{1,0.549019607843137,0}
	\definecolor{color1}{rgb}{0.133333333333333,0.545098039215686,0.133333333333333}
	\definecolor{color2}{rgb}{0.117647058823529,0.564705882352941,1}
	
	\begin{axis}[
	legend columns=2, 
	legend cell align={left},
	legend style={
		fill opacity=0.8,
		draw opacity=1,
		draw=none,
		text opacity=1,
		at={(0.44,1.46)},
		line width=1.5pt,
		anchor=north,
		/tikz/column 2/.style={
			column sep=10pt,
		}, font=\Large
	},legend image post style={line width =4.5pt},
	legend entries={\Greedy,
		full dataset,
		\GreedyCost,
		selected dataset,
		\Phragmen, {\phantom{a}},
		\MES},
	tick align=outside,
	tick pos=left,
	x grid style={white!69.0196078431373!black},
	xlabel={resampling probability},
	xmin=0, xmax=0.25,
	xtick style={color=black},
ytick={0.85,0.9,0.95,1},
yticklabels={85\%,90\%,95\%,100\%},
	xtick={0,0.05,0.1,0.15,0.2,0.25},
	xticklabels={0\%,5\%,10\%,15\%,20\%,25\%},
	y grid style={white!69.0196078431373!black},
	ylabel={fraction of budget spend the same},
	ymin=0.85, ymax=1,
	ytick style={color=black},every tick label/.append style={font=\Large}, 
label style={font=\Large}
	]
	\addlegendimage{red!54.5098039215686!black}
	\addlegendimage{gray}
	\addlegendimage{color0}
	\addlegendimage{gray,dotted}
	\addlegendimage{color1}
	\addlegendimage{white,dashed}
	\addlegendimage{color2}
	\addplot [semithick, red!54.5098039215686!black]
	table {%
		0 0.999999999999996
		0.01 0.994629484828849
		0.02 0.992718237432664
		0.03 0.989922421250827
		0.04 0.988641986202698
		0.05 0.987364210365502
		0.06 0.985620919559924
		0.07 0.984626716890984
		0.08 0.983851249171741
		0.09 0.982309446734773
		0.1 0.980860453736605
		0.11 0.980254543943538
		0.12 0.979249767201843
		0.13 0.978495467492228
		0.14 0.976855328844152
		0.15 0.975546719447029
		0.16 0.974863134159883
		0.17 0.973367616336063
		0.18 0.972234300731548
		0.19 0.971394382648013
		0.2 0.970170588099014
		0.21 0.969746606161129
		0.22 0.969273232485804
		0.23 0.967582733776247
		0.24 0.966621306055326
		0.25 0.964534589066306
	};
	\addplot [semithick, red!54.5098039215686!black, dotted]
	table {%
		0 0.999999999999999
		0.01 0.990773108051721
		0.02 0.987674151735368
		0.03 0.983832126888199
		0.04 0.981931409769794
		0.05 0.980152136419559
		0.06 0.978075663944087
		0.07 0.976572409858973
		0.08 0.975940493554415
		0.09 0.973609405385157
		0.1 0.971437352193139
		0.11 0.970817432067488
		0.12 0.969433069460782
		0.13 0.969418633694839
		0.14 0.966268331007882
		0.15 0.964359577092387
		0.16 0.963408445570245
		0.17 0.961666533788334
		0.18 0.960923889514293
		0.19 0.959538863404856
		0.2 0.958378978687916
		0.21 0.957881237902814
		0.22 0.95665461707792
		0.23 0.954869602070178
		0.24 0.954571789205161
		0.25 0.951723239025048
	};
	
	\addplot [semithick, color0]
	table {%
		0 0.999999999999996
		0.01 0.995309787193525
		0.02 0.991773131196201
		0.03 0.988462381753933
		0.04 0.985352894239514
		0.05 0.983109469388635
		0.06 0.980647987628086
		0.07 0.977879661876134
		0.08 0.974905386312997
		0.09 0.9717338724209
		0.1 0.968969509921153
		0.11 0.966395274850769
		0.12 0.963493163189988
		0.13 0.960617384303179
		0.14 0.958049973447648
		0.15 0.955078624278398
		0.16 0.95181463729749
		0.17 0.948899284593768
		0.18 0.946081925959205
		0.19 0.94351926253928
		0.2 0.940990567389401
		0.21 0.938342241137214
		0.22 0.935921322077844
		0.23 0.934033284831888
		0.24 0.932620525088569
		0.25 0.929711226235506
	};
	\addplot [semithick, color0, dotted]
	table {%
		0 0.999999999999999
		0.01 0.991931516877496
		0.02 0.986262657007255
		0.03 0.980348939628191
		0.04 0.975195964349083
		0.05 0.971272885225437
		0.06 0.967011127679918
		0.07 0.962133974691986
		0.08 0.957243626608292
		0.09 0.951574681742552
		0.1 0.946877523822231
		0.11 0.942365085026809
		0.12 0.937009574436433
		0.13 0.932598033036857
		0.14 0.927612038745963
		0.15 0.922274374200135
		0.16 0.917386145114518
		0.17 0.912133148697226
		0.18 0.907066545375516
		0.19 0.902870649126167
		0.2 0.898806682146838
		0.21 0.894512303181558
		0.22 0.891427950063601
		0.23 0.887635996581776
		0.24 0.885830288726275
		0.25 0.881423911521071
	};
	
	\addplot [semithick, color1]
	table {%
		0 0.999999999999997
		0.01 0.99544159156577
		0.02 0.991357579346961
		0.03 0.988270042624906
		0.04 0.984662284352145
		0.05 0.981352240392958
		0.06 0.977702982297751
		0.07 0.974322194322863
		0.08 0.970219616181594
		0.09 0.967217911626461
		0.1 0.964399322456844
		0.11 0.961633098987751
		0.12 0.958512364712071
		0.13 0.954777201711197
		0.14 0.95198226333271
		0.15 0.948724400336821
		0.16 0.944496943818931
		0.17 0.941495486097518
		0.18 0.938265443469262
		0.19 0.936097822895268
		0.2 0.933010457666908
		0.21 0.930966791364606
		0.22 0.928219540285781
		0.23 0.926103385076896
		0.24 0.923614271152821
		0.25 0.920880993926958
	};
	\addplot [semithick, color1, dotted]
	table {%
		0 0.999999999999999
		0.01 0.992273332938998
		0.02 0.985292850252621
		0.03 0.979831319871617
		0.04 0.973811507162159
		0.05 0.968120762500692
		0.06 0.961709969847368
		0.07 0.955691005589034
		0.08 0.948681799900815
		0.09 0.943542916296192
		0.1 0.938792414108311
		0.11 0.933564801007861
		0.12 0.928395368889414
		0.13 0.92195895280782
		0.14 0.917084457852565
		0.15 0.911696434233188
		0.16 0.904547414645665
		0.17 0.899225597067278
		0.18 0.893684705616404
		0.19 0.890150427968415
		0.2 0.884878234698663
		0.21 0.881581636074786
		0.22 0.877669484511385
		0.23 0.874566739335204
		0.24 0.870709569813849
		0.25 0.867421728219562
	};
	
	\addplot [semithick, color2]
	table {%
		0 0.999999999999997
		0.01 0.98900726791416
		0.02 0.983820995510298
		0.03 0.979951403078769
		0.04 0.975859195919574
		0.05 0.971617275247184
		0.06 0.968210189943286
		0.07 0.963914583201517
		0.08 0.960625888918292
		0.09 0.956735082822816
		0.1 0.953473547174523
		0.11 0.95047454988819
		0.12 0.947528372144721
		0.13 0.944654161200288
		0.14 0.942548508815018
		0.15 0.939822755303433
		0.16 0.937879555356376
		0.17 0.935049622059006
		0.18 0.933162254337991
		0.19 0.931250045234623
		0.2 0.928931141384628
		0.21 0.92659119580522
		0.22 0.92366978148385
		0.23 0.921126400338291
		0.24 0.918831645552481
		0.25 0.916190198385148
	};
	\addplot [semithick, color2, dotted]
	table {%
		0 0.999999999999999
		0.01 0.981351707965072
		0.02 0.973236519744494
		0.03 0.966244823994657
		0.04 0.959558010117018
		0.05 0.952265289821012
		0.06 0.946768125351546
		0.07 0.93944161461137
		0.08 0.933917511492157
		0.09 0.927546441952615
		0.1 0.921340176659781
		0.11 0.916111434605112
		0.12 0.911735663588374
		0.13 0.90712272003606
		0.14 0.903975161089392
		0.15 0.899690634175096
		0.16 0.895667716839302
		0.17 0.891772710117932
		0.18 0.889047854798916
		0.19 0.885559518192342
		0.2 0.882073963079176
		0.21 0.878883359384819
		0.22 0.874737970741959
		0.23 0.870079138117058
		0.24 0.867405774186485
		0.25 0.8634411211058
	};
	\end{axis}
	
	\end{tikzpicture}}
		\caption{Fraction of the budget spent as in the initial outcome.}\label{fig:ov_robust4}
	\end{subfigure}
	\caption{Some average statistics reflecting the robustness of outcomes for different rules.}
	\label{fig:ov_robust}
\end{figure}
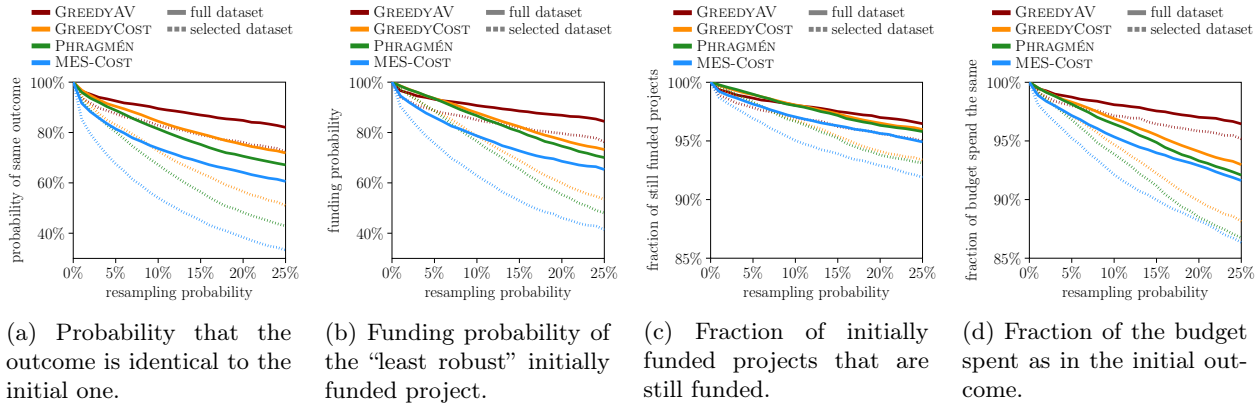

So far, we have only examined the $50\%$-winner threshold, which
indicates under what level of noise it is likely that the outcome
changes.  However, it remains unclear how drastic the change is (or,
what changes are possible under smaller noise levels).  Note that,
intuitively, as our rules work in a sequential fashion and thus one
change in their execution could potentially lead to selecting
completely different outcomes, it is unclear what to expect.  To shed
some light on this issue, in \Cref{fig:ov_robust} we take a more
nuanced look at the robustness of outcomes.\footnote{Note that in
  these plots we only consider averaged values. We did not include
  other statistical quantities for the sake of readability and their
  lack of relevance to the goal of our study.}  We examine here both
the \emph{full} dataset, as well as the smaller dataset of
\emph{selected} instances.  First, in \Cref{fig:ov_robust1} we show
the probability that the outcome remains unchanged, depending on the
resampling probability.  Our observation regarding the relation
between the different rules still holds here for both datasets and for
all considered values of the resampling probability.

Second, in \Cref{fig:ov_robust2} we show the funding probability of
the ``least robust'' initially funded project, i.e., the project that
has the lowest funding probability at resampling probability $25\%$.
The results here look relatively similar to the results from
\Cref{fig:ov_robust1}, which indicates that in case the outcome
changes, this might often be because the same funding
decision gets changed.  These results suggest that there often is one
funded project whose funding decision is closest to being overturned.
In fact, we will now show that for a majority of projects the decision
of whether or not they are selected is quite robust.

In \Cref{fig:ov_robust3}, we depict how the fraction of initially
funded projects which remain funded depends on the resampling
probability.  We see here that this fraction is quite high and in
particular above $95\%$ for all considered rules and resampling
probabilities on the full dataset.  It is thus in particular much
higher than the probability of the same outcome (see
\Cref{fig:ov_robust1}).  Even on the selected dataset, more than
$90\%$ of funding decisions do not get reverted.  This indicates that
even in the case that the outcome changes, most of the originally
funded projects remain funded.  Together with \Cref{fig:ov_robust2},
it also highlights that there is a clear gap in the robustness of
funding decisions of the ``least robust'' projects and the other ones.
This observation is reassuring for practitioners, as it means that
even in case that random changes can affect the outcome, these often
only regard a limited number of projects; however, it also motivates
one to identify which funding decisions are the non-robust ones in
practice, to understand them better.

Lastly, in \Cref{fig:ov_robust4} we show the average fraction 
of the budget which under a given noise level is spent on the same projects
as in the initial instance.
Comparing \Cref{fig:ov_robust4} to \Cref{fig:ov_robust3}, we see that (except for \Greedy) the fraction of still funded projects is generally higher than the fraction of budget spent in the same way. 
This indicates that funding decisions of expensive projects are more fragile to noise (we will explore this connection between the cost of a project and its robustness in more depth in \Cref{sub:properties}).

\subsection{(Types of) Non-Robust Instances} \label{sec:non-robust}
Next, we want to get a sense for how funding probabilities behave in non-robust instances. 
We will find that for all rules there exist examples where the funding probabilities of projects quickly change, which is often due to complex interactions between projects in the selection process that our approach identifies. 
To this end, in \Cref{fig:non-robust} for each rule we depict an instance where the funding decisions have been particularly non-robust, and where drastic changes in funding probabilities appeared already for a small noise level. 
We next analyze our rules one-by-one.

\begin{figure}[t!]
	\centering 
	\begin{subfigure}[b]{0.23\textwidth}
		\includegraphics[width=1\textwidth]{./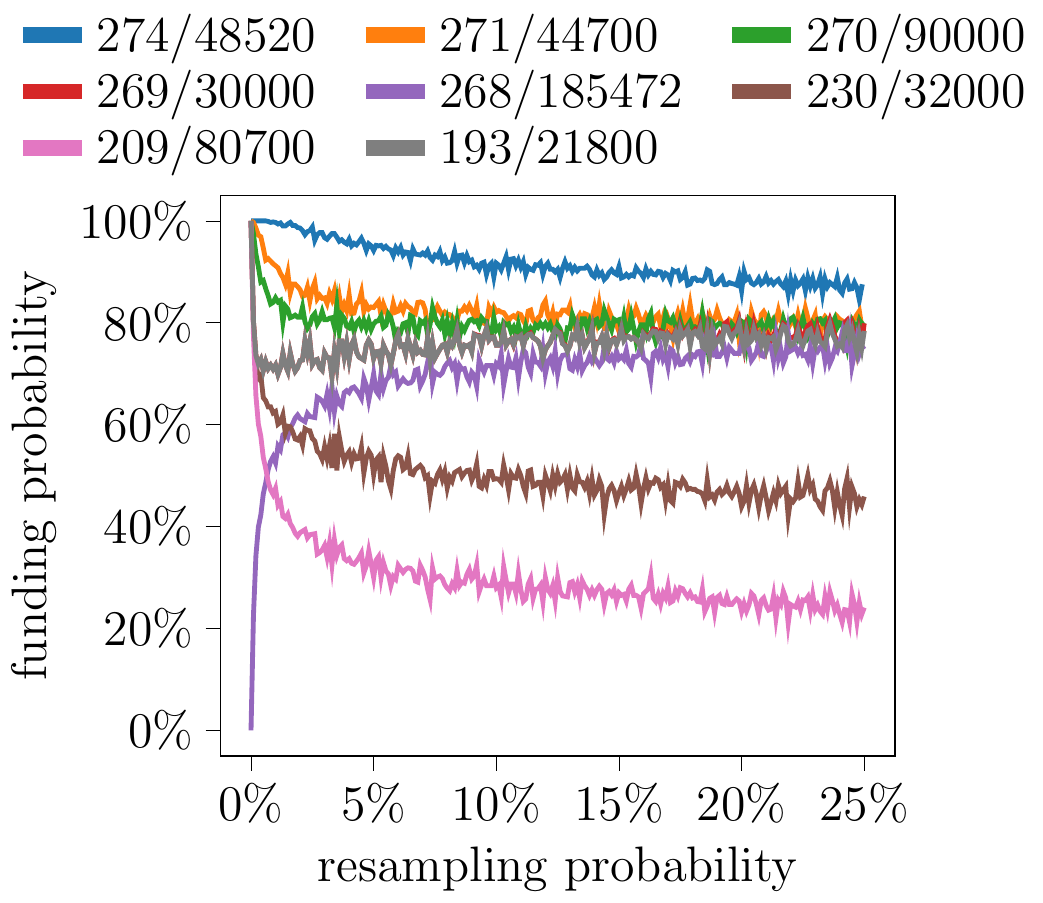}
		\caption{Wrzeciono Mlociny 2019 for \Greedy}\label{fig:non-robust1}
	\end{subfigure}\hfill
	\begin{subfigure}[b]{0.23\textwidth}
		\includegraphics[width=1.1\textwidth]{./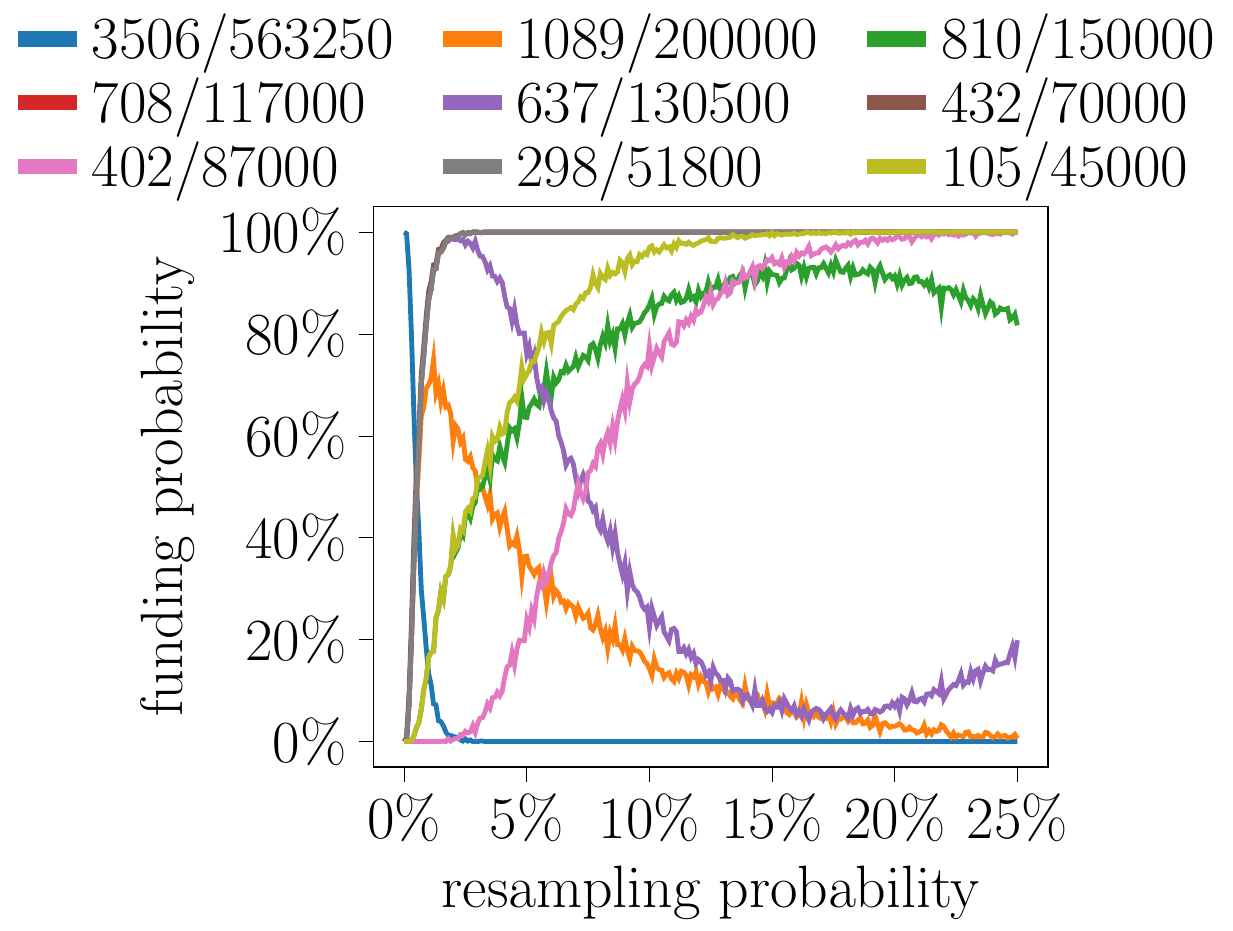}
		\caption{Wola 2021  for \GreedyCost}\label{fig:non-robust2}
	\end{subfigure} \hfill
	\begin{subfigure}[b]{0.23\textwidth}
		\includegraphics[width=1.1\textwidth]{./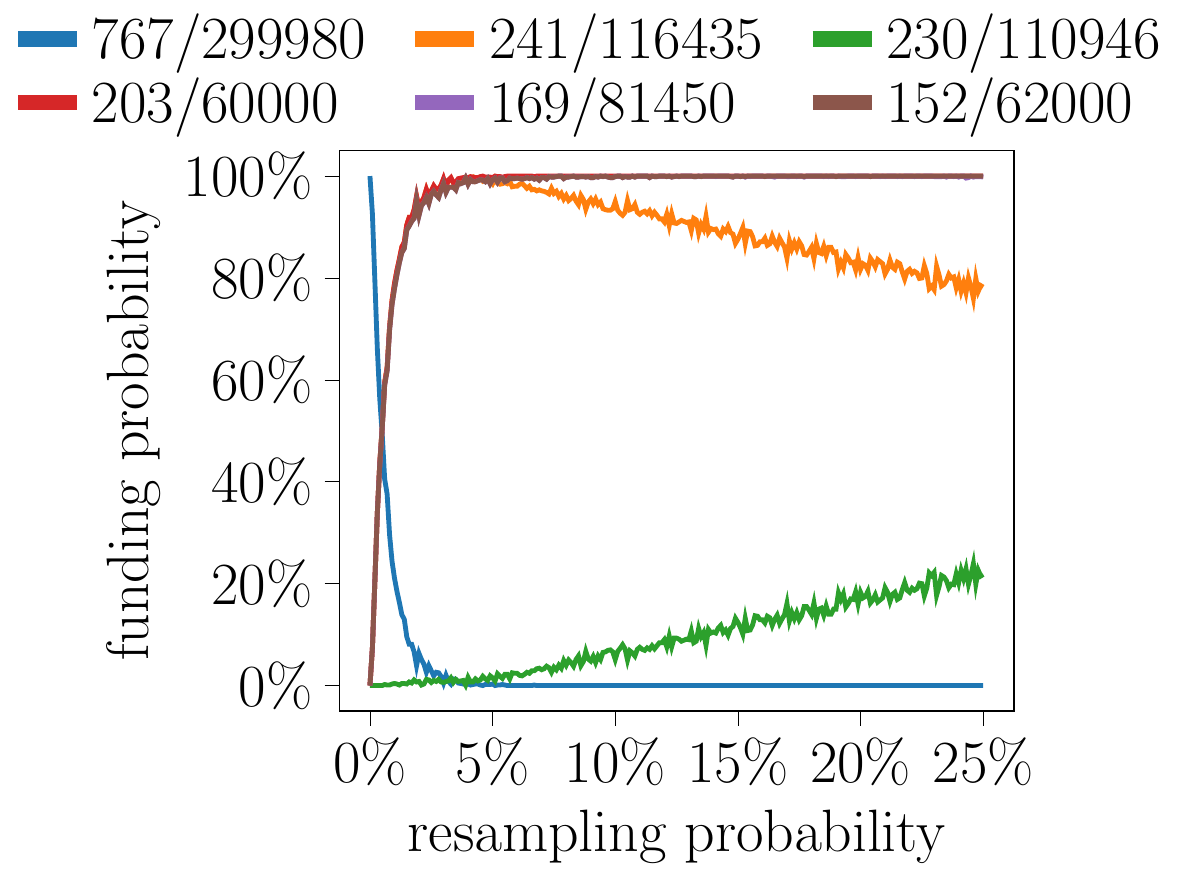}
		\caption{Wilanow Obszar I 2017  for \Phragmen}\label{fig:non-robust3}
	\end{subfigure}\hfill
	\begin{subfigure}[b]{0.23\textwidth}
		\includegraphics[width=1.1\textwidth]{./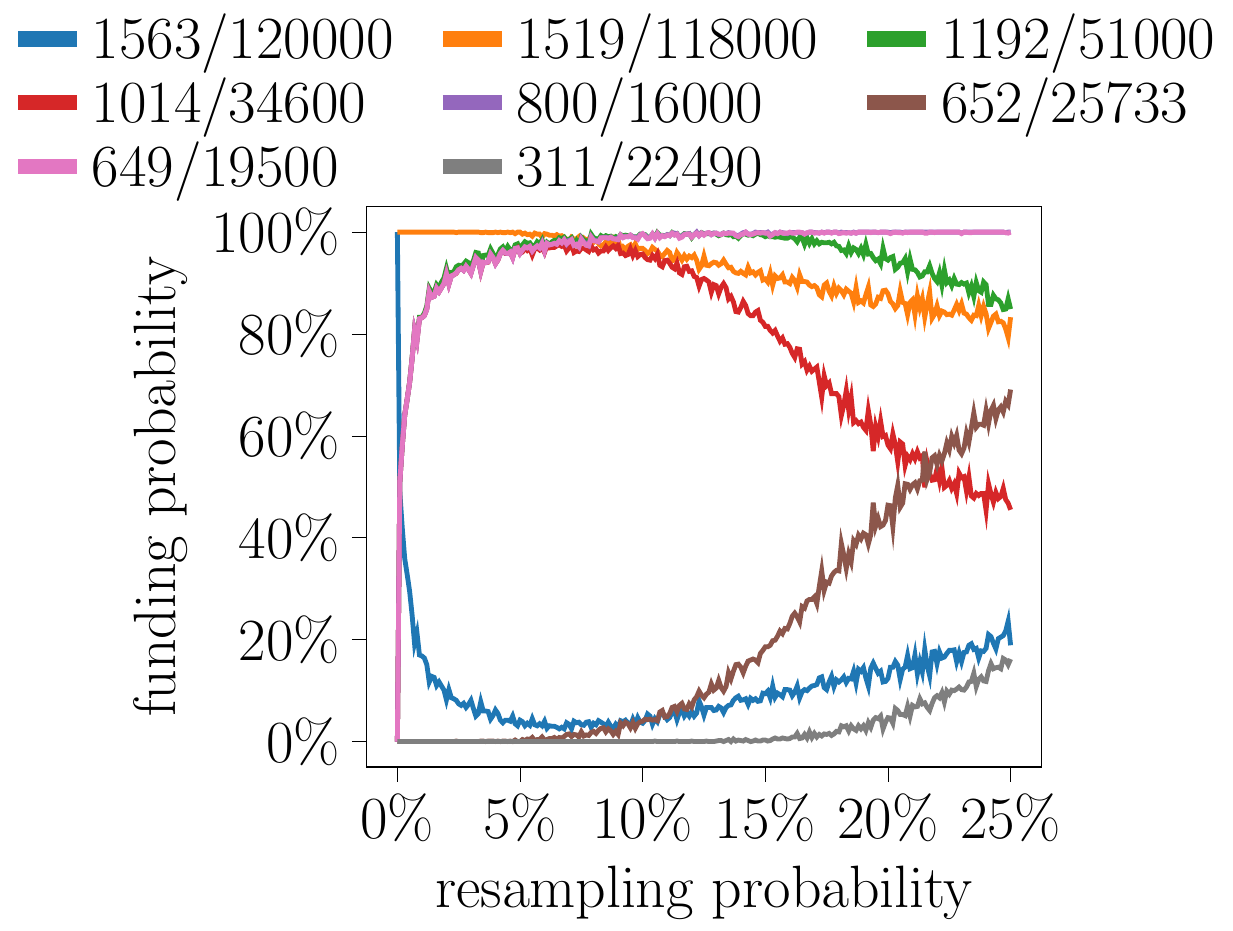}
		\caption{Bielany 2019 for \MES}\label{fig:non-robust4}
	\end{subfigure}
	\caption{Examples for non-robust funding decisions for each
          budgeting rule. For each plot, the caption includes the name
          of the presented instance from Pabulib and the rule used. In
          \Cref{fig:non-robust1}, the red and gray line overlap. In
          \Cref{fig:non-robust2}, the brown, gray and red line
          overlap. In \Cref{fig:non-robust3}, the brown, red and
          purple line overlap. In \Cref{fig:non-robust4}, the pink and
          purple line overlap. }\label{fig:non-robust}
\end{figure}

\subsubsection{\Greedy.} \Cref{fig:non-robust1} shows an example of a
non-robust instance for \Greedy.  In this instance, the outcome
quickly starts to change with non-negligible probability: The pink
project ($209$/$80700$) is the least robust initially funded one: Its
funding probability is, respectively, $77.4\%$ and $47.5\%$ for
resampling probabilities of $0.1\%$ and $1\%$.  In contrast, the
purple project ($268$/$185472$), which initially is not funded, has a
funding probability of, respectively, $22.4\%$ and $52.5\%$ for
resampling probabilities of $0.1\%$ and $1\%$.  Note that in this
instance a resampling probability of $0.1\%$ (respectively, $1\%$) means
that in expectation $10.2$ (respectively, $102$) approvals are flipped (this amounts
to a $0.024\%$ or $0.24\%$ fraction of all approvals in the
instance). Recalling that the performed changes are random and, thus,
do not affect the pink or purple projects with high probability, it is
quite remarkable that such non-robust instances appear in practice.

To understand this instance better, let us take a closer look at the
approval scores and costs of the projects.  What stands out here is
that the purple project has significantly higher cost than the others.
As a result of this, in the initial execution of the rule, the funds
are not sufficient to afford the purple project ($268$/$185472$), yet
the brown ($230$/$32000$), pink ($209$/$80700$), and gray ($193$/$21800$)
projects which all have fewer approvals than the purple one are
funded. Observe that for the outcome to change, and the purple
project ($268$/$185472$) to get selected, it is sufficient that it has
a higher number of approvals than the red ($269$/$30000$), green
($270$/$90000$), orange ($271$/$44700$), or blue project
($274$/$48520$).  In particular, overtaking the red project is
possible by deleting or adding two approvals\footnote{Note that the
  purple project appears behind the other ones in the tie-breaking
  order.}.  This is also reflected in the plot.  For a small
resampling probability, the purple project overtakes in some case one
or several among the red, green, orange or blue projects, which
results in the purple project replacing them in the committee.  As the
red project has the lowest approval score, it is most likely to be
swapped out and replaced by the purple project, and accordingly it has
the lowest funding probability among these projects.  Another tradeoff
that is visible in the plot is that in case the purple project gets
selected, the pink one is almost never funded, as funding the purple
project instead of one of the red, orange, green, or blue ones leads
to a cost increase, which results in that there are not enough funds
left once it is the pink project's turn.  This highlights that in some
situations whether a project gets funded depends much more on how the
approvals of other projects change (e.g., the red and purple project
in this case) than on the approvals of the project itself (e.g., the
pink project in this case).  In essence, this is the same effect that
we used in the proof of \Cref{thm:greedy-av-pb-npc}.

Lastly, let us remark that from the plot we can also see that the gray
project, which is the cheapest one, seems to be only funded in cases
where the red project is funded as well (their lines overlap).  This is because the red project
is the second-cheapest project and in case some other more expensive
project is funded (instead of the red one) there is not enough budget
left for the gray one.  Notably, the shown example is the instance
where funding probabilities change quickest under \Greedy. In
particular, there are no instances where the funding probability of
one project quickly changes by around $100\%$ as in the instances
shown for the other rules, which highlights the generally higher
robustness of outcomes produced by
\Greedy.

\begin{figure}[t!]
	\centering 
\begin{subfigure}[b]{0.32\textwidth}
\includegraphics[width=0.8\textwidth]{./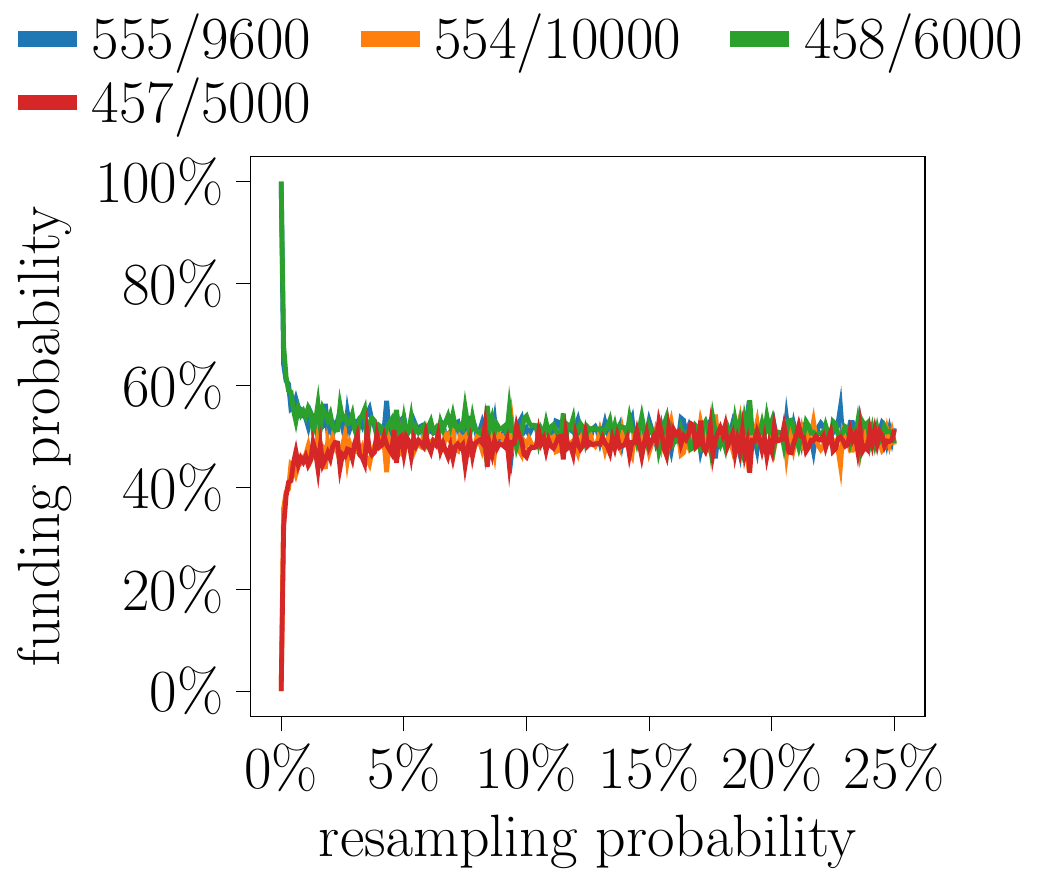}
\caption{Chojny Dabrowa 2020 (Lodz)}\label{fig:type1}
\end{subfigure}\qquad \qquad \qquad
\begin{subfigure}[b]{0.32\textwidth}
		\includegraphics[width=0.9\textwidth]{./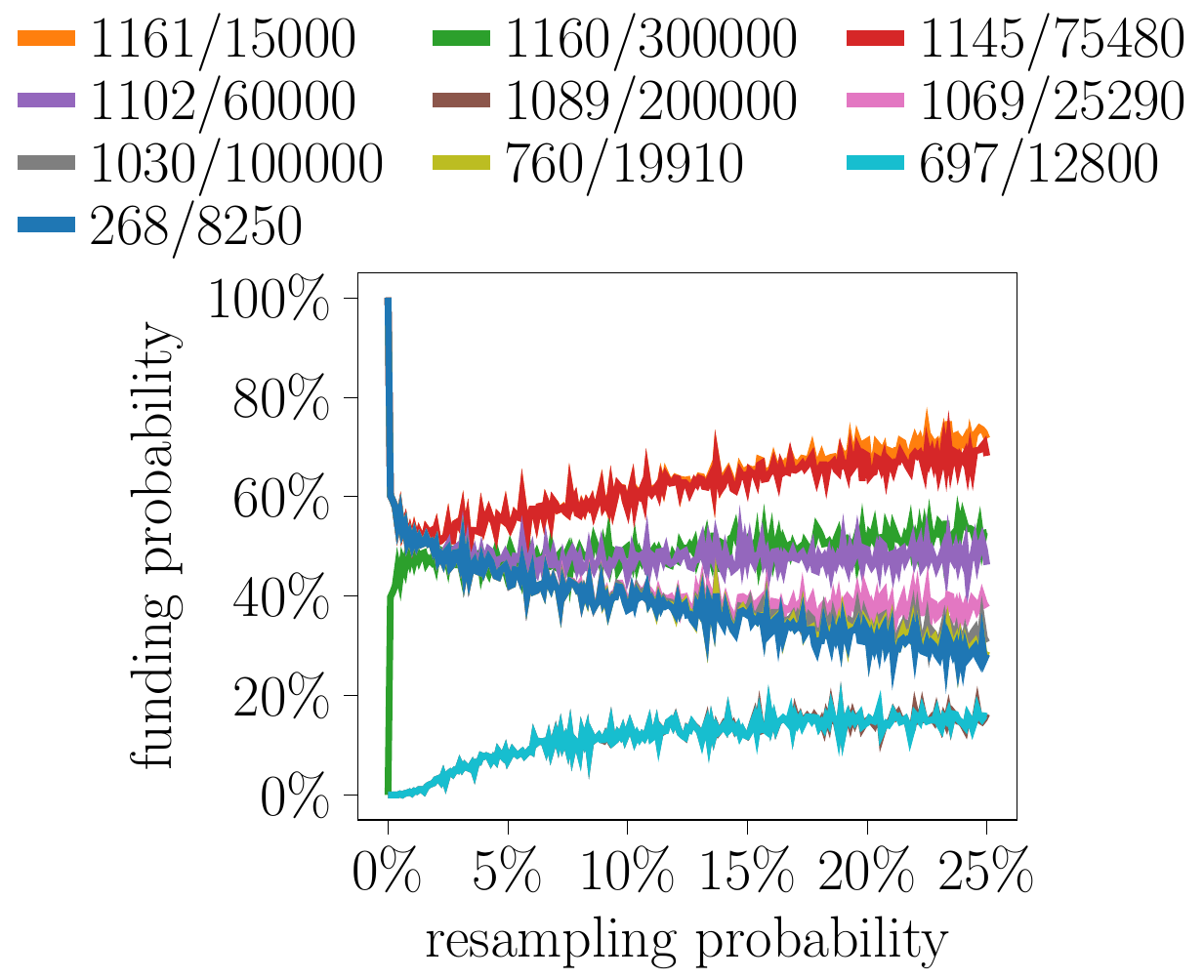}
		\caption{Wola 2021 (Warszawa)}\label{fig:type2}
	\end{subfigure}
	\caption{Different types of non-robust projects. Results for \Greedy on different instances. In \Cref{fig:type1}, the green and blue line and the red and orange line overlap. }\label{fig:type}
\end{figure} 

\paragraph{Different Types of Non-Robust Projects.} 
In  \Cref{fig:non-robust1}, we have seen an instance where the funding probability of an initially funded  project quickly dropped and continued to drop further when further increasing the resampling probability.
Interestingly, there are also initially funded projects with a quickly dropping funding probability that show a different behavior. 
Three different types of such projects seem to regularly occur:
\begin{enumerate*}[label=(\roman*)]
\item projects whose funding probability further decrease when adding more and more noise to the instance,\label{item:a}
\item projects whose funding probability stays around $50\%$ also when adding more and more noise to the instance, and\label{item:b}
\item projects whose funding probability increases again when adding more noise.\label{item:c}
\end{enumerate*}
Notably, the type of a project also influences the interpretation of
the non-robustness of its funding decision.  For type \ref{item:a} projects
one might raise justified concerns whether the initial funding
decision made on this project should not be overturned.  Type
\ref{item:b} projects question whether certain funding decisions should not have been viewed as ties instead.
For type \ref{item:c} projects, the situation is less clear and in some cases one might view the initial funding decision as justified despite the initial drop in the
funding probability.

We have already seen examples for projects of type \ref{item:a} in \Cref{fig:non-robust1}.  In
\Cref{fig:type1} we present a good example for projects of type
\ref{item:b}: The winning probabilities of the green ($458$/$6000$)
and blue projects ($555$/$9600$) quickly drop from $100\%$ to $50\%$,
and when increasing the resampling probability further, their funding
probability stays around $50\%$.  For the red ($457$/$5000$) and
orange projects ($554$/$10000$), we see the reverse of this behavior,
as their funding probability quickly goes up from $0\%$ to around
$50\%$, but then stays close to this value.  This indicates that
the initial decision to select the green and blue projects over the
the red and orange ones is somewhat arbitrary and one might rather see
these four projects as tied (and, perhaps, the city should choose
which ones to fund based on external arguments, such us the
compatibility between them, or simply fund them all).  Turning to
\Cref{fig:type2}, we see that non-robust projects of different types
can even occur in the same instance and may behave identically for
very low levels of noise.  In particular, the funding probability of
the red ($1145$/$75480$), blue ($268$/$8250$) and purple
($1102$/$60000$) initially funded projects all drop to around $50\%$
at a resampling probability of $1\%$.  However, when increasing the
noise level, for the blue project (type \ref{item:a}), the funding
probability drops further, for the purple project (type \ref{item:b})
it stays roughly constant, and for the red project (type \ref{item:c})
it increases again.

\subsubsection{\GreedyCost} \Cref{fig:non-robust2} shows an example of
a non-robust instance for \GreedyCost.  The situation here is in some
sense reversed compared to \Cref{fig:non-robust1}.  The initially
funded blue project ($3506$/$563250$) is quite expensive and in case
it is no longer funded, the saved money suffices to fund multiple
other projects. As a consequence, there are eight initially not-funded
projects which have a funding probability above $10\%$ at some point,
whereas there is only one initially funded project (the blue one) who
has a funding probability below $90\%$ at some point.  Looking at the
instance in more detail, the blue project has a approval-to-cost
ratio of $0.00622$, followed by the brown project ($432$/$70000$) with
a ratio of $0.00616$ and the red project ($708$/$117000$), with a
ratio of $0.00605$.  Changing the ordering of the blue project and at
least one of the other ones is sufficient to change the outcome, and
in fact already at resampling probability $0.5\%$ (which corresponds
to flipping $808$ of all the approvals in expectation, i.e., flipping
a fraction of $0.08\%$ of all the approvals) the outcome changes
in a majority of cases: At this point the initially funded blue
project and the initially not-funded brown ($432$/$70000$), gray
($298$/$51800$), purple ($637$/$130500$), red ($708$/$117000$), and
orange ($1089$/$200000$) projects all have a funding probability around
$50\%$.\footnote{Note that in contrast to the instance for \Greedy
  discussed above, in this instance at a resampling probability of
  $0.1\%$, the outcome changes only in $0.3\%$ of cases. Parts of the
  reason for this is that here at least five approvals are needed
  to change the outcome.}  Notably, increasing the resampling
probability further to $2.8\%$, the blue project has a funding
probability of zero, whereas the brown, gray, purple and red project
have a funding probability around $99\%$.  Note that the orange
project now behaves differently than these four projects, as it has only a funding probability of $58\%$ at this point.  In fact,
when increasing the resampling probability further, more initially not
funded projects start to have a non-negligible funding probability; at the same time, the funding probability of other initially not funded projects
already starts to decrease again.  For instance, there seems to be a
tradeoff between the purple and orange projects, on the one hand, and
the pink ($402$/$87000$) and dark ($810$/$150000$) and light green
($105$/$45000$) projects, on the other hand, for values of the
resampling probability larger than $1.5\%$.  The resulting
non-monotonic behavior of the funding probability of the orange and
purple projects is quite strong in this instance.  Generally, such a
pattern implies that only for a certain noise level the projects have a
high funding probability, and also appears regularly on other
instances.  Analyzing in which situations project's funding
probabilities behave monotonicly and in which cases they do not, as
well as possible implications of this phenomenon, is an interesting
question for future research.

\subsubsection{\Phragmen and \MES}
We conclude by briefly discussing some non-robust instances for our two proportional rules. 
\Cref{fig:non-robust3} shows an example of a non-robust instance for
\Phragmen.  In this instance, the funding probability of the initially
funded blue project ($767$/$299980$) starts to drop quickly.  For
small values of the resampling probability, in case the blue project
is no longer funded, always the orange ($241$/$116435$), brown
($152$/$62000$), red ($203$/$60000$), and purple ($169$/$81450$)
projects get funded instead, implying that their funding probabilities
are complements of one another.  Specifically, the funding probability
of the blue project drops to $92.6\%/49.3\%/10\%/0\%$ for a resampling
probability of $0.1\%/0.5\%/1.5\%/4.6\%$.

\Cref{fig:non-robust4} shows an example of a non-robust instance for
\MES. Here, as for \Phragmen, for small values of the resampling
probability, in case the blue project ($1563$/$120000$) is no longer
funded, always the pink ($649$/$19500$), green ($1192$/$51000$), red
($1014$/$34600$), and purple ($800$/$16000$) projects get funded.
However, compared to \Phragmen, the situation changes even quicker
here: The funding probability of the blue project drops to $49\%$,
respectively, $16.8\%$ for a resampling probability of $0.1\%$,
respectively, $1\%$.  Notably, in this case, a resampling probability
of $0.1\%$ means that we flip $55$ approvals in expectation, which
corresponds to only $0.026\%$ of all approvals.

\begin{figure}[t!]
	\centering 
	\begin{subfigure}{0.32\textwidth}
		\includegraphics[width=\textwidth]{./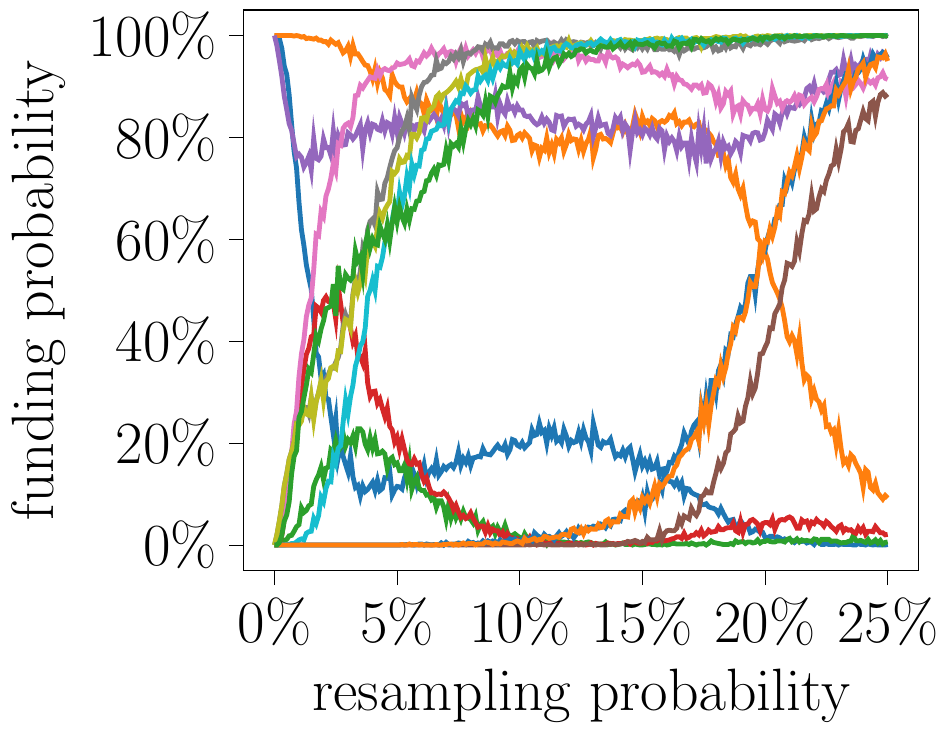}
		\caption{\Phragmen}
	\end{subfigure}\qquad \qquad \qquad \qquad 
 \begin{subfigure}{0.32\textwidth}
		\includegraphics[width=\textwidth]{./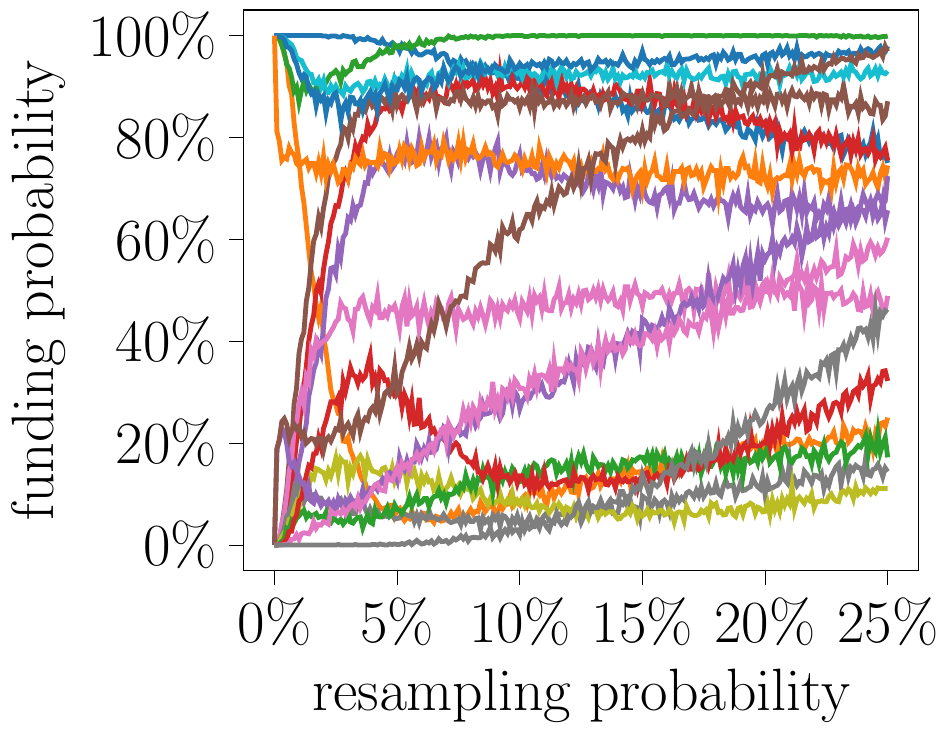}
		\caption{\MES}
	\end{subfigure}
	\caption{Praga Polnoc 2023 (Warszawa)}\label{fig:praga}
\end{figure} 

As for the other two rules, using our approach  we find tradeoffs between different sets of projects in both examples. 
Moreover, as for \GreedyCost, in both examples when increasing the resampling probability the funding probabilities of some initially not funded project already start to decrease again, indicating that new close decisions between different sets of projects emerge.  
However, there are also certain patterns that appear for the two proportional rules, i.e., \Phragmen and \MES, but not for the two greedy rules. 
To give an example of this, in \Cref{fig:praga}, we show the behavior of \Phragmen and \MES on the instance from Praga Polnoc in 2023. 
Overall, these examples exhibit a very chaotic behavior, highlighting the general non-robustness of \Phragmen and \MES on some instances.
In particular, there are many projects with a quickly changing funding probability. 
In contrast to most examples given above, the funding probabilities of these non-robust projects do not change simultaneously (which is caused by two projects often being funded in the same instances) indicating that a higher number of different outcomes are likely to appear here. 
Moreover, funding probabilities often behave in a non-monotonic fashion,
changing from increasing to decreasing (or the other way round) multiple times. 
Lastly, note that while both rules output non-robust outcomes on this instance, the precise picture and tradeoffs for \Phragmen and \MES are quite different here.
In fact, there are many instances where the two rules have a very different behavior in terms of the robustness of outcomes, which we will explore in more depth in \Cref{sec:correlation}.

\subsection{Additional Experiments}\label{sub:additional}
In \Cref{app:experiments}, we present further experiments, whose main findings we briefly survey here. 
In \Cref{sec:correlation}, we study the correlation between the robustness of outcomes produced by the different rules on the same instance.
We find no strong correlation between our rules with many instances where the outcome of one rule is very non-robust, whereas for other rules the robustness is high. 
In \Cref{sub:properties}, we examine how the robustness of outcomes relates to properties of the instance, yet find no connection to  properties such as the number of voters or projects. 
Moreover, we analyze how the properties of initially funded projects typically influence the robustness of their funding decision, observing rule-dependent differences.
Lastly, in \Cref{sub:completion}, we analyze different completion methods for \MES, finding that they behave quite differently in terms of robustness.

\section{Conclusions}
We have adopted the recently introduced approach for evaluating the
robustness of election results from the world of single-winner
voting~\cite{bau-hog:c:robust-winner,boe-bre-fal-nie:c:counting-bribery,boe-bre-fal-nie:c:robustness-single-winner}
to that of participatory budgeting. In our theoretical analysis, we
have explored the computational complexity of (\#)\AddRemProbName{}
problems.  As we proved most of these problems to be computationally
intractable even in very restricted cases, in our experiments we
turned to a sampling-based approach.

Our experimental findings illustrate several possible use cases of our
approach.  First, the $50\%$-winner threshold allows for a simple
quantification of the robustness of an outcome, which sheds some light
on how ``close'' the announced funding decisions are.  Our examples of
extremely non-robust real-world PB instances provide motivation to
carry out such an analysis in practice, to increase the transparency
of the process for voters, policy makers, and project proposers.  In
extreme cases non-robustness could lead either to audits or
reelections.  Second, by analyzing how quickly the funding
probabilities of different projects change, we can also quantify a
project's distance to receiving a different funding outcome on an
individual level. In particular, we identified three types of non-robust
projects (those that are selected/not-selected ``by luck,'' those that
are effectively tied, and those that act non-monotonically).
Third, our approach also allowed us to identify situations in which
rules made ``close'' decisions between different (sets) of projects,
as in this case project's funding probabilities changed simultaneous.

For future work, the most interesting theoretical open problem is the
study of the complexity of (\#)\AddRemProbName{} in case costs are
encoded in unary.  Moreover, from a practical perspective, it would be
interesting to explore different noise models taking into account
real voters' behavior.  Another research direction would be to
seek simpler robustness measures (both in terms of their computational
complexity and how easy it is to explain them and their results to the
public).

\section*{Acknowledgments}
NB was supported by the DFG project ComSoc-MPMS (NI 369/22).
This project has received funding from the European Research Council (ERC) under the European Union’s Horizon 2020 research and innovation programme (grant agreement No 101002854).
\begin{center}
  \includegraphics[width=3cm]{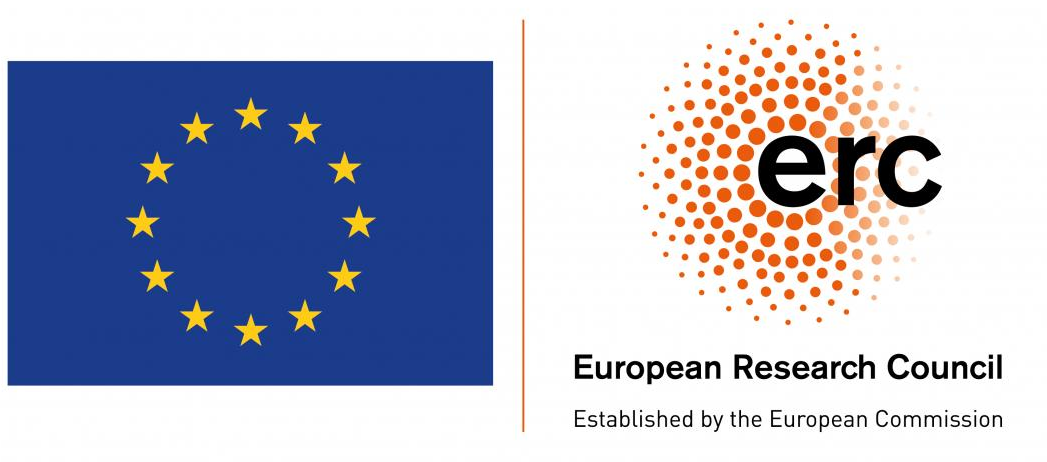}
\end{center}

\appendix

\newpage

\section{Missing Proofs from Section~\ref{sec:greedyav}}\label{app:greedy}

\thmfptmpspace*

\begin{proof}
  Let us consider a PB instance $E = (C,V,A,B,\cost)$, where
  $C = \{c_1, \ldots, c_m\}$ and $V = \{v_1, \ldots, v_n\}$. The
  preferred project is $p$ (so $p = c_\ell$ for some $\ell \in [m]$)
  and we should perform $r$ operations of adding or removing
  approval. Our algorithm considers all possible orderings of the
  projects from $C$ and for each of them does the following: First, it
  checks if $\greedyAVRule$ considered the projects in the given
  ordering, then would it include $p$ in the outcome. If so, then it
  counts the number of ways in which it is possible to flips approvals
  so that $\greedyAVRule$ indeed considers the projects in the
  considered order and adds it to the final answer (which initially is
  set to zero). After considering all the possible orderings, the
  algorithm outputs the final answer.  It is immediate to see that
  this strategy is correct and, in particular, there is no double
  counting.

  It remains to show that given an ordering of the projects, it is
  possible to compute in polynomial time the number of ways of
  flipping approvals that ensure that $\greedyAVRule$ considers the
  projects in the given order (formally, even an FPT algorithm would
  suffice, but we do not need this). W.l.o.g., we assume that the
  considered ordering of the projects is $c_1, c_2, \ldots, c_m$. Let
  $g$ be a function such that:
  \begin{enumerate}
  \item[] $f(c_i, s_i, r_i)$ is the number of ways in which it is
    possible to perform $r_i$ approval flips to projects
    $c_1, \ldots, c_i$ so that: (a) $c_i$ ends up with $s_i$
    approvals, and (b) if $\greedyAVRule$ is restricted to projects
    $c_1, \ldots, c_i$, then it considers them in the order
    $c_1, c_2, \ldots, c_i$.
  \end{enumerate}
  Our goal is to compute $\sum_{s_m=0}^n f(c_m, s_m, r)$. Next we show
  how to compute $f$ in polynomial time using dynamic programming. To
  do so, we will also need the following function:
  \begin{enumerate}
  \item[] $g(c_i,s_i,r_i)$ is the number of ways in which it is
    possible to perform $r_i$ approval flips for project $c_i$ so
    that it ends up with score $s_i$ (if doing so is impossible, then
    we have $g(c_i,s_i,r_i) = 0$).
  \end{enumerate}
  Computing the values of $g$ is straightforward (if $c_i$ originally
  has $a_i = |A(c_i)|$ approvals, then the value is
  $\sum_{x=\max(0,n-a_i)}^{\min(r_i,a_i)} {[a_i+x-(r_i-x) = 0] \cdot
    {n-a_i \choose x} \cdot {a_i \choose r_i-x}}$).  Next, we see that
  $f(c_1, s_1, r_1) = g(c_1, s_1, r_1)$. Finally, for each $i \geq 2$,
  we observe that $f(c_i, s_i, r_i)$ is equal to the following value
  (we take $t_i$ to either be $1$, if the internal tie-breaking would
  require $\greedyAVRule$ to choose $c_i$ before $c_{i-1}$ should they
  have the same number of approvals, or to be $0$, otherwise):
  \[
    \textstyle
     \sum_{s_{i-1}=s_i+t_i}^n {\sum_{x=0}^{r_i} f\big(c_{i-1},s_{i-1},r_i-x \big) \cdot g(c_i,s_i,x)}.
  \]
  Intuitively, this formula considers all approval scores that
  $c_{i-1}$ can end up with so that it still has more approvals than
  $c_i$ (or, more-or-equally-many approvals, if internal tie-breaking
  allows for this) so that $\greedyAVRule$ would consider $c_{i-1}$
  before $c_i$. Using this recursive formula and standard dynamic
  programming techniques, we can compute the values of $f$ in
  polynomial time.  \qed\end{proof}

    \cheapFirst*
    \begin{proof}[Continued]
  It is immediate that our algorithm works correctly for the case
  where $r < n$, so we focus on the case where $r \geq n$.  To this
  end, we will show that if there is a way to ensure that $p$ is
  selected, then it suffices to focus on solutions of a particular
  form.
  Assume that it is possible to ensure that $\greedyAVRule$ selects
  $p$ by flipping at most $r$ approvals, and let
  $E' = (C,V,A',B,\cost)$ be a PB instance obtained from $E$ in this
  way, for which $\greedyAVRule$ selects~$p$.  Without loss of
  generality, we assume that the sets of voters that approve $p$ are
  the same in both $E$ and $E'$ (if this were not the case, then we
  could guess how many approvals $p$ gets in $E'$ and view $E$ as the
  instance obtained after these approval were already added; further,
  it is clear that if there is a solution that removes $p$'s
  approvals, then omitting these removals does not preclude $p$ from
  winning). We write $\pre(E)$ to denote the set of projects that
  $\greedyAVRule$ considers prior to $p$ in instance $E$. We use
  analogous notation for $E'$ and other PB instances that include $p$.
  We will argue that it is
  possible to choose $E'$ in such a way that either $\pre(E) \subseteq \pre(E')$ or $p$ is approved by all the voters.
  If all the voters in $E'$ approve $p$, then the claim is clearly
  true, so we assume that this is not the case.

  Assume that it is not the case that $\pre(E) \subseteq \pre(E')$ and
  let $c$ be a project from $\pre(E)$ that does not belong to
  $\pre(E')$.  In other words, in $E'$ we remove sufficiently many
  approvals from $c$ so that $\greedyAVRule$ considers (and selects)
  $p$ prior to considering $c$. Let $E''$ be a PB instance identical
  to $E'$, except that instead of removing approvals from $c$, we add
  approvals to $p$. If $p$ ends up approved by all the voters, but we
  still have some $c$'s approval deletions left, we remove a single
  approval from $c$. Clearly, $\pre(E'') \subseteq \pre(E')$ and
  $c \notin \pre(E'')$. We claim that $p$ is still selected in
  $E''$. The reason is that in this instance $\greedyAVRule$ considers
  projects in the same order as in $E'$, except that it considers $p$
  at latest in the same iteration where it would have considered $c$
  in $E'$.  Since in $E'$ it would have selected $p$ later on, it must
  select $p$ in $E''$ at this iteration. Hence we can replace $E'$
  with $E''$.

  By repeatedly applying the above reasoning for each project
  that belongs to $\pre(E)$ but not to $\pre(E')$, we see that
  there is a PB instance $E^*$ such that:
  (a) it is possible to obtain $E^*$ from $E$ by flipping at most
  $r$ approvals, 
  (b) $\greedyAVRule$ selects $p$ in $E^*$,
  (c) either $p$ is approved by all the voters in $E^*$ or
  $\pre(E) \subseteq \pre(E^*)$.  We consider two cases:
  \begin{enumerate}
  \item If $p$ is approved by all the voters in $E^*$ then it
    certainly is also selected in the instance computed by our
    algorithm. This is so, because our algorithm ensures that in all
    the instances that can be obtained from $E$ by at most $r$
    approval flips and where $p$ is approved by all the voters, in the
    instance that it computes $\greedyAVRule$ uses up the smallest
    possible amount of budget prior to considering $p$.
  \item If $p$ is not approved by all the voters in $E^*$, but
    $\pre(E) \subseteq \pre(E^*)$, then $\greedyAVRule$ also selects
    $p$ in the instance computed by our algorithm. The reason is that
    in $E^*$, prior to considering $p$, $\greedyAVRule$ considers and
    selects all the projects from $\pre(E^*)$ (and, hence, all the
    projects from $\pre(E)$) that either are cheaper than $p$ or cost
    as much as $p$, but are preferred to $p$ by the tie-breaking order
    (this is so, because it finally also selects $p$). In the instance
    computed by our algorithm, every project that $\greedyAVRule$
    considers prior to $p$ belongs to $\pre(E)$ and is either cheaper
    than $p$ or costs the same, while being preferred by the tie-breaking
    order. Hence, after considering these projects, $\greedyAVRule$
    still has sufficient amount of budget left to select $p$.
  \end{enumerate}
  This shows that our algorithm  is indeed correct.

  To show $\wone$-hardness, it suffices to use the same construction
  as in the proof of \Cref{thm:greedy-av-pb-npc}, except for the following
  changes: (a) there are $6k+1$ voters, (b) $p$ is not approved by any
  voter, (c) each project $d_i$ is approved by $2k$ voters, (d) each
  project $y_i$ is approved by $4k$ voters, and (e) each project $x_i$
  is approved by $6k$ voters. One can verify that irrespective how we
  flip up to $k$ approvals, $\greedyAVRule$ will first consider all
  the $x_i$ projects, followed by all the $y_i$ ones, followed by all
  the $d_i$ ones, and eventually it will consider $p$. Further, by
  appropriate up-to-$k$ approval flips we can ensure that it first
  selects $k$ of the $x_i$ projects that correspond to the
  \textsc{Sized Subset Sum} solution. All in all, analogous arguments as
  in the proof of \Cref{thm:greedy-av-pb-npc} work in the
  current case.
  \qed\end{proof}

  \greedyavsharpone*
\begin{proof}
  Membership in $\sharpp$ follows by standard arguments.  We give a
  Turing reduction from \textsc{\#Sized Subset Sum}, where we are
  given a set $U = \{u_1, \ldots, u_n\}$ of positive integers, a
  targer integer $t$, and solution size $k$, and we ask for the number
  of size-$k$ subsets of $U$ that sum up to $t$.  This problem is
  $\sharpp$-complete and $\sharpwone$-hard for the parameterization
  by $k$.  We let $T = \sum_{i=1}u_i$ and, w.l.o.g., we assume that
  $t < T$.  Our reduction forms an instance $I$ of
  $\#\greedyAVAddRemPB$ with project set $C = \{x_1, \ldots, x_n,p\}$,
  where each project $x_i$ has cost $u_i$, $p$ has cost $2T-t$, and
  the budget is $2T$.  Further, there is just a single voter who
  initially approves project $p$ only, and we perform exactly $r = k$
  approval flips.

  We observe that $p$ costs more than every other project (indeed,
  more than all of them taken together). Hence, if we remove the
  approval from $p$, then in the resulting instance $\greedyAVRule$
  will choose all the projects except for $p$. Hence, to ensure that
  $\greedyAVRule$ selects $p$ by making exactly $k$ approval flips, we
  need to add approvals to $k$ projects whose total cost is at most
  $t$. In other words, the number of solutions for $I$ is the number
  of size-$k$ subsets of $U$ whose sum is at most $t$. We denote this
  number as $\#I$. Next, we form an instance $I'$, which is identical
  to $I$ except that we decrease the available budget to $2T-1$. By
  the same reasoning as before, the number of solutions for $I'$ is
  equal to the number of size-$k$ subsets of $U$ that sum up to at
  most $t-1$; we denote this value by $\#I'$.

  Our reduction outputs value $\#I - \#I'$, which---by the above
  discussion---is equal exactly to the number of size-$k$ subsets of
  $U$ that sum up to $t$. 
\qed\end{proof}

\greedyavunitx*

\begin{proof}
  Consider an instance of $\#\greedyAVAddRemPB$ with a PB instance
  $E=(C, V, A, B, \cost)$, preferred project $p \in C$, and required
  number of approval flips $r$. We rename the projects so that
  $C = \{p, c_1, \ldots, c_m\}$ and we let $n$ be the number of
  voters. Since all the projects have unit costs, w.l.o.g., we assume
  that $B < |C|$ (otherwise all the projects would always be selected,
  irrespective of the votes).  Our goal is to compute the number of
  ways in which we can ensure that $\greedyAVRule$ selects $p$,
  provided that we flip exactly $r$ approvals.  For each project $c$,
  we write $A(c)$ to refer to the set of voters that approve it.

  Given two nonnegative integers $\ell$ and $x$, and a project $c$, we
  let $h_c(\ell,x)$ be the number of ways in which it is possible to
  flip exactly $x$ approvals regarding $c$ so that, in the end, $c$ is
  approved by exactly $\ell$ voters (simple analysis shows that this
  value is computable in polynomial time).  Further, for each
  nonnegative integer $\ell$ and each nonnegative integer $r'$, we let
  $f(\ell,r')$ be the number of ways in which it is possible to
  perform exactly $r'$ approval flips (but excluding those that
  involve project $p$) so that $\greedyAVRule$ would select $p$,
  provided that~$p$ had exactly~$\ell$ approvals (if $r' < 0$, then we
  define $f(\ell,r')$ to be zero). One can verify that our algorithm
  should output the following value:
  \[
    \sum_{\ell \in [n] \cup \{0\}} \sum_{ x \in [r] \cup \{0\}}
    h_p(\ell,x) \cdot f(\ell, r-x).
  \]
  Hence, it suffices to argue that we can compute values $f(\ell,r')$
  in polynomial time. Throughout the rest of the proof, we describe
  how to do it using dynamic programming.

  For each nonnegative integer $\ell$, each nonnegative integer $r'$,
  each $i \in [m]$, and each $j \in [B]$, we define $g(\ell, r',i,j)$
  to be the number of ways in which it is possible to perform exactly
  $r'$ approval flips that regard projects form the set
  $C_i := \{c_1, \ldots, c_i\}$, so that the number of those projects
  from $C_i$ that have more than $\ell$ approvals or that have exactly
  $\ell$ approvals but precede $p$ in the tie-breaking order, is
  exactly $j$. We see that:
  \[
    f(\ell,r') = \sum_{b \in [B-1] \cup \{0\}} g(\ell, r', m, (B-1) - b)
  \]
  Indeed, the right-hand side of this equation simply gives the number
  of ways of performing $r'$ approval flips regarding the projects in
  $C \setminus \{p\}$, so that $\greedyAVRule$ considers  at most $B-1$ other
  projects, prior to considering $p$.  One can verify that for each $\ell'$, each $r'$, and each
  $j$, it is possible to compute $g(\ell,r',1,j)$, via simple case
  analysis. Next, we observe that for $i \geq 2$ the following holds (we let $t_i = 1$ if $c_i$ precedes $p$
  in the tie-breaking order and we let $t_i = 0$ otherwise):
  \begin{align*}
    g(\ell, r',i,j) &=\sum_{r'' \in [r'] \cup \{0\}} \sum_{\ell'' \in [\ell-t_i] \cup \{0\}} g(\ell, r'-r'', i-1,j) h_{c_i}(\ell'',r'') \\
                    &+\sum_{r'' \in [r'] \cup \{0\}} \sum_{\ell'' \geq \ell+(1-t_i)} g(\ell, r'-r'', i-1,j-1) h_{c_i}(\ell'',r'').
  \end{align*}
  The sums in the first row correspond to the situations where the
  $i$-th project ends up being considered after project~$p$, and the
  second row corresponds to the situations where it is considered
  prior to $p$. Using the above formula and standard dynamic
  programming techniques, we can compute $g(\ell, r',i,j)$ in
  polynomial time.    
  \qed\end{proof}

\section{Missing Proofs from Section~\ref{sec:beyond}}
\label{app:beyond}

\greedycostnpc*
\begin{proof}
  It is immediate that $\greedyCostAddRemPB{}$ belongs to~$\np$. To
  show~$\np$-hardness, we provide a polynomial-time many-one reduction
  from the \textsc{Sized Subset Sum} problem. Let~$I$ be an instance
  thereof, with the set~$U = \{u_1, \ldots, u_n\}$ of positive
  integers, a target integer~$t$, and a requested number~$k$ of
  elements. Without loss of generality, we assume that for
  each~$k' < k$, the sum of elements of every size-$k'$
  subset~$U' \subset U$ is smaller than~$t$, i.e.,\
  $\sum_{u \in U'} < t$. If this is not the case, then one can add a
  sufficiently big constant~$X$ (e.g.,\ $X = 1+\sum_{u \in U} u$) to
  every element of~$U$ and increase~$t$ by~$kX$, obtaining an
  equivalent instance that fulfills the assumption. Similarly, we
  assume that, for each~$u_i \in U$, it holds that~$u_i < t$, and
  that~$k < n$.
 
  We reduce~$I$ to an instance~$I'$ of~$\greedyCostAddRemPB{}$ with a
  PB instance~$E = (C, \{v\}, A, B, \cost)$ as follows. We compose~$C$
  of the preferred project~$p$, projects~$c_i$ corresponding to the
  elements $u_i \in U$, and $k+1$ \emph{dummy projects}~$d_1$ up
  to~$d_{k+1}$. Regarding the project costs: (a)~we
  set~$\cost(p) = t$; (b)~for each~$u_i \in U$, we
  set~$\cost(c_i) = u_i$; and (c)~we set $\cost(d_i) = t+1$ for each
  dummy project~$d_1$ to~$d_{k+1}$.  Instance~$E$ consists of a single
  voter~$v$ who approves all the candidates except for~$p$, i.e.,\
  $A(v) = C \setminus \{p\}$. Concluding the construction of~$I'$, we
  set the budget~$B = \sum_{u_i \in U} u_i$, the radius~$r = k$, and
  an arbitrary but fixed tie-breaking order~$\succ$ where~$p$ is the
  most-preferred candidate (recall that under \greedyCostRule,
  internal tie-breaking between projects that receive a single
  approval first follows the increasing order of their costs, and only
  for projects with equal costs uses $\succ$; on the other hand,
  internal tie-breaking for projects without any approvals always
  follows $\succ$).
 
  Before we proceed with a proof of correctness of the reduction,
  observe that prior to any flips, $\greedyCostRule{}$ selects a
  committee (of size~$n$) consisting solely of candidates~$c_1$
  to~$c_n$, which exhaust budget~$B$.  Hence, to ensure that $p$ is in
  at least one winning committee, one needs to perform some approval
  flips. Notably, since we are allowed to perform only at
  most~$k$~flips, after each possible alterations of~$E$, there always
  exists at least one approved dummy project, out of the dummy
  projects~$d_1$ to~$d_{k+1}$.

  We start by showing that a solution~$S \subseteq U$ to instance~$I$
  implies that we can flip at most~$k$ approvals to make~$p$ be
  selected.  To achieve the latter, we remove the approvals from a set
  of projects~$S'$ whose elements correspond to the elements of~$S$,
  that is, $S' = \{ c_i \mid u_i \in S\}$. After doing so,
  $\greedyCostRule{}$ performs as follows. First, it considers the
  (still approved) projects corresponding to the elements
  from~$U \setminus S$ because these are the cheapest approved
  projects. By our assumption on~$S$, the total cost of these projects
  is~$B - t$, so they are all selected to be funded. Then,
  $\greedyCostRule{}$ skips the dummy projects, since selecting any of
  them (each of cost~$t+1$) would exceed the budget. Now, only the
  unapproved projects remain. For all of them the ratio of their
  approval score to their cost is zero, so $\greedyCostRule{}$ resorts
  to the tie-breaking order~$\succ$, according to which $p$ should be
  considered first. Since~$\cost(p) = t$ and the remaining budget is
  exactly~$t$, $p$ is selected, which proves that a solution~$I$
  implies a solution to~$I'$.
 
  Next, we show that if one can get~$p$ to be selected by performing
  at most~$k$~approval flips, then there is a solution~$S \subseteq U$
  to~$I$.  Let $E'$ be the PB instance obtained from $E$ by performing
  up to $k$ approval flips, such that $\greedyCostRule$ selects $p$ in
  $E'$.  We write~$A'(v)$ to denote the set of projects that~$v$
  approves in~$E'$.
 
  We show that to obtain~$E'$ all of the $k$~approval flips have to be
  spent to delete approvals from (a subset of) the candidates~$c_1$
  to~$c_n$, i.e.,\, $|A'(v) \cap \{c_1, \ldots, c_n\}| = n -
  k$. Toward a contradiction, we assume that there are fewer
  than~$k$~approval flips spent on removing approvals from
  candidates~$c_1$ to~$c_n$ and simulate a run of~$\greedyCostRule$
  on~$E'$. The rule starts by considering the approved projects
  among~$c_1$ to~$c_n$. This holds because, by our assumption on~$I$,
  for each~$c_i$, $\cost(c_i) < \cost(p)$ and the dummy projects are
  even more expensive than~$p$. Furthermore, because (by definition)
  $B = \sum_{u_i \in U} \cost(c_i)$, all these approved projects
  among~$c_1$ to~$c_n$ are selected to the winning committee. Hence,
  applying our assumption on~$I$ that no group of~$k-1$ (or fewer)
  elements of~$U$ sum up to at least~$t$, the total cost of the
  selected projects at this stage is greater
  than~$(\sum_{u_i \in U} u_i) - t = B - t$. So, afterward, the
  remaining funds are insufficient for~$p$.  This yields a
  contradiction and confirms that to obtain~$E'$, we need to spend all
  $k$~approval flips to remove approvals from projects~$c_1$
  to~$c_n$. Notably, this observation implies that $p$ has no
  approvals in~$E'$. Consequently, let us consider a set~$S$ of the
  projects among~$c_1$ to~$c_n$ from which we remove the approvals,
  and let~$y = \sum_{u_i \in S} u_i$~be the total cost of these
  projects. Note that after~$\greedyCostRule$ analyzes the projects
  from~$c_1$ to~$c_n$ that remain approved in $E'$, the remaining budget
  is~$B - (B - y) = y$. We now analyze the following three cases
  depending on the value of~$y$:
 \begin{enumerate}
 \item $y < t$: then the remaining budget $y$ is smaller than~$t$,
   and~$p$ is too expensive to be selected;
 \item $y = t$: then the remaining budget is exactly~$t$,
   and~$\greedyCostRule$ skips all dummy projects, which are too
   expensive, and proceeds with considering~$p$, which it selects;
 \item $y > t$: then $y<2t$, which follows from the fact that, for
   each (arbitrarily selected) $s' \in S$,
   $y = (\sum_{s \in S} s - s') + s'$ and, by our assumptions, both
   the sum~$(\sum_{s \in S} s - s')$ equivalent to the sum of
   some~$k-1$~elements of~$U$ and the element~$s'$ have cost at
   most~$t$. What follows is that~$\greedyCostRule$ in the next step
   select exactly one dummy project, making the remaining budget too
   small to select~$p$.
 \end{enumerate}
 The above case distinction shows that $p$ is selected if and only one can make
 $k$~approval flips corresponding to a set of~$k$ elements from~$U$ which sum up
 to~$t$.

 The reduction clearly works in polynomial-time. Furthermore, the
 value of parameter~$k$ for instance~$I$ is used as the number of
 approval flips, which yields the $\wone$-hardness result. Since
 \#\textsc{Sized Subset Sum} is $\sharpp$-hard and the solutions
 to~$I'$ one-to-one correspond to solutions to~$I$, the claimed
 $\sharpp$-hardness follows as well.  \qed\end{proof}

\subsection{Proof of \Cref{thm:mes-phragmen}}
\label{app:mes-phragmen}

We prove \Cref{thm:mes-phragmen} by providing the following two lemmas.

\begin{lemma} \label{thm:phragmen-pb-np-complete-unit-costs}
    $\phragmenAddRemPB$ is $\np$-complete, even if all projects cost exactly $1$.
\end{lemma}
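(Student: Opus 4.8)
The plan is to first dispatch membership: $\phragmenAddRemPB \in \np$ is routine, since a certificate is just the set of at most $r$ approval flips, and because $\phragmenRule$ is polynomial-time computable we can verify in polynomial time that $p \in \phragmenRule(E')$ for the modified instance $E'$. The real content is the hardness reduction. Following the observation made at the start of \Cref{sec:beyond}, the natural route is to adapt the construction of Faliszewski, Gawron, and Kusek~\cite{fal-gaw-kus:c:greedy-robustness}, who already handle a closely related flip-style bribery problem for approval-based rules; my task would be to check that their construction can be realized with unit costs and that their notion of modification coincides with our approval flips. Note that, because every cost is forced to be $1$, I cannot encode the hard instance into the numbers (as was done via \textsc{Sized Subset Sum} in \Cref{thm:greedy-av-pb-npc}); the instance must instead be encoded purely through the approval structure, so I would reduce from a combinatorial $\np$-hard problem such as \textsc{Vertex Cover} or an exact-cover variant.

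The feature to exploit is that, with unit costs and budget $B$, $\phragmenRule$ selects a size-$B$ committee through its continuous load-balancing process: a project $c$ is bought at the first moment its approvers' pooled balance reaches $1$, and buying $c$ resets those approvers' accounts. Consequently, adding approvals to $p$ makes its pool fill faster (so $p$ is bought earlier), while deleting approvals from a competitor delays the moment that competitor becomes buyable. I would therefore design the instance so that, before any flips, a block of $B$ high-support ``dummy'' projects is bought first and exhausts the budget, leaving $p$ just outside the committee. The approval-flip gadgets would then be arranged so that the only way to delay enough dummies past $p$ within the flip budget $r$ is to perform a set of deletions mirroring a feasible solution of the source instance (for example, deletions that must simultaneously hit shared voters in the way a vertex cover hits shared edges). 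In the reverse direction, any successful set of at most $r$ flips must, by the accounting built into the gadgets, project down to a solution of the source problem.

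I expect the main obstacle to be precisely what separates $\phragmenRule$ from the greedy rules: its selection order emerges from a \emph{dynamic} process in which each purchase resets balances, so a single flip can cascade and reorder many later purchases. (This is also why the single-voter case collapses to $\greedyAVRule$ with cheaper-first tie-breaking, cf.\ \Cref{cor:mes-phr-poly}, and why several voters are genuinely required to obtain hardness with unit costs.) The bulk of the work is thus to engineer the gadgets---padding them with enough ``inert'' voters and projects---so that these cascades stay confined and the buying time of each relevant project can be computed exactly as a function of which flips were applied. Once the timeline is pinned down, both directions of the equivalence between \emph{yes}-instances of the source problem and successful briberies follow by a direct case analysis much like the one in \Cref{thm:greedy-av-pb-npc}; verifying that everything is consistent with unit costs and a single fixed tie-breaking order is the remaining bookkeeping.
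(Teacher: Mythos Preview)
Your plan is essentially the paper's approach: invoke the construction of Faliszewski, Gawron, and Kusek~\cite{fal-gaw-kus:c:greedy-robustness} and reduce from an exact-cover variant (the paper uses $\rxthreec$). Membership in $\np$ is handled identically.

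Where you diverge is in how much work you think remains. You propose to re-engineer the gadgets from scratch, pad them with inert voters, and carefully control cascades so that only deletions corresponding to a cover can push $p$ into the committee. The paper's proof is far shorter: it simply observes that the cited construction for \textsc{Phragm\'en-Add-Robustness-Radius} (and its remove variant) already has the property that, with the given flip budget~$r$, \emph{any} sequence of flips can only permute the order in which the $S$-projects are considered, and $p$ is selected precisely when the first $n$ of them form an exact cover. No new gadgetry or cascade analysis is needed; one just checks that allowing both additions and removals (rather than only one) does not enlarge the space of reachable orderings. Also note that in the actual construction the decisive flips are \emph{additions} (to previously empty voters) that promote the exact-cover projects, not deletions that delay dummies as you sketch---so your speculative mechanism is off, though harmlessly so since you defer to the cited paper anyway.
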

\begin{proof}
  Naturally, $\phragmenAddRemPB$ belongs to $\np$ class because we can
  guess a set of at most $r$ operations, perform them on a given
  instance, and check if a given project is included in the final
  committee. Please note that in our input we are given a default
  tie-breaking order, therefore the outcome is unique and our
  algorithm runs in polynomial time.

  To show that $\phragmenAddRemPB$ is $\np$-hard, we reduce from
  $\rxthreec$. In the $\rxthreec$ problem, we are given a universe
  $U = \{u_1, u_2, \ldots, u_{3n}\}$ and a collection of size-$3$ sets
  $S = \{S_1, S_2, \ldots, S_{3n}\}$ such that each element of $U$
  belongs to exactly $3$ sets from $S$. We ask if there are $n$ sets
  from $S$ that form an exact cover over $U$.

  Let us focus on the proof provided by Faliszewski et al. in
  \cite{fal-gaw-kus:c:greedy-robustness} for the theorem stating that
  \textit{\textsc{Phragmén-Add-robustness-radius} is $\np$-complete}.

  As the authors observed, with the budget $n$ for buying $n$
  approvals, we can only change the order of considering $S$-projects
  and, if we put $n$ sets forming an exact cover at the beginning of
  the order by adding approvals to the empty voters, successfully
  select $p$ to the committee. Further, the authors observe that
  \textit{\textsc{Phragmén-remove-robustness-radius} is
    $\np$-complete} for the same reason, but with a slight
  modification of the above construction -- budget is increased from
  $n$ to $2n$ and $n$ empty voters are replaced with $n$ voters, each
  of them approves a unique $S$-project. Summing up, even if we add
  $n$ approvals or remove $2n$ approvals from a single project, the
  only thing that can can change is the tie-breaking order of
  $S$-projects, so the whole analysis remains the same as in the
  proof.
    
  For this reason, if we take an instance of $\rxthreec$, construct a
  PB instance as in the cited theorem
  \textit{\textsc{Phragmén-Add-robustness-radius} is $\np$-complete},
  and set the preferred project to be $p$, the available budget to be
  $B=n$, and bribery budget to be $r=n$, then there exists an exact
  cover for the $\rxthreec$ instance if and only if we can perform up
  to $r$ operations of flipping approvals on the constructed PB
  instance such that the preferred project $p$ is winning.
\end{proof}

The proof for $\MESPB$ will require more explanation since, in contrast to $\phragmenRule$ rule, it slightly modifies the constructed instance.

\begin{lemma} \label{thm:mes-pb-np-complete-unit-costs}
    $\MESCostUtilAddRemPB$ and $\MESAprUtilAddRemPB$ are $\np$-complete, even if all projects cost exactly $1$.
\end{lemma}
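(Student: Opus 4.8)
Since every project has unit cost, the utility functions of \MESAprUtilRule{} and \MESCostUtilRule{} coincide (for an approving voter both evaluate to $1$), so the two rules compute identical outcomes on every instance we build; it therefore suffices to establish the claim for a single rule, which I simply call \textsc{MES} below. Membership in $\np$ is immediate: we guess the (at most) $r$ approvals to flip, run the deterministic, polynomial-time rule on the modified instance under the fixed order-based tie-breaking, and check whether $p$ is selected.

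For hardness, the plan is to reuse the reduction that Janeczko and Faliszewski~\cite{jan-fal:c:ties-multiwinner} devised to prove that detecting ties for \textsc{MES} is intractable. Their construction takes an instance of an $\np$-hard problem and produces an approval election together with a distinguished project $p$ such that, under the \emph{parallel-universes} tie-breaking model, $p$ belongs to the \textsc{MES} outcome for some way of resolving ties if and only if the source instance is a \emph{yes}-instance. The essential feature is that the hardness is driven entirely by the freedom to break a controlled collection of ties in favor of $p$ (or of the helper projects that clear the way for $p$); a single \emph{fixed} order-based tie-breaking rule removes this freedom and renders the instance trivially decidable.

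My approach is to recover exactly this freedom through approval flips. I would take the Janeczko--Faliszewski instance essentially verbatim (all projects already of unit cost), fix an arbitrary order-based tie-breaking order, and set the flip budget $r$ just large enough to re-route each tie the way a chosen parallel universe would. Concretely, at each point where the original proof lets the adversary decide which of several mutually $q$-affordable (equal-$q$) projects is bought first, I would make that choice realizable by adding or removing one or two approvals from the relevant project, so that it becomes strictly more (respectively, less) $q$-affordable than its competitors under the now-fixed order. In the forward direction, a solution to the source instance designates a winning universe, and I spend the flips to enforce the corresponding sequence of purchases, after which \textsc{MES} selects $p$. In the backward direction, any successful flipping of at most $r$ approvals must, by a counting argument bounding how much a single flip can perturb the relevant $q$-values, induce one of the admissible purchase sequences, which I then read off as a solution to the source instance. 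As in the preceding \phragmenRule{} lemma, tracking $r$ through the reduction yields the reduction's polynomial-time bound.

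The main obstacle is that, unlike \phragmenRule{} (where within the allotted budget only the order in which equally priced projects are considered can change), \textsc{MES} couples the approval structure to the \emph{flow of money}: a flip does not merely re-order tied projects but also shifts the $q$-affordability thresholds and the residual balances $b_i(v_j)$, potentially triggering unintended follow-up purchases. The delicate part of the argument is therefore to verify that the flips needed to simulate the tie-breaking decisions are few enough and local enough that they change only the intended comparisons and leave the rest of the \textsc{MES} execution---the endowments, the set of voters paying for each project, and the resulting balances---exactly as in the target universe. Establishing this invariant, namely that $r$ flips buy precisely the parallel-universe freedom and nothing more, is what the modification alluded to in the statement accomplishes, and it is the step I would treat with the most care.
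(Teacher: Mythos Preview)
Your high-level strategy matches the paper's: reuse the Janeczko--Faliszewski tie-detection construction and recover the parallel-universes freedom via approval flips. You also correctly locate the difficulty---that under \textsc{MES} a flip perturbs the $q$-values and residual balances, not merely the order among tied projects.

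What is missing is the concrete mechanism that neutralizes this difficulty, and taking the instance ``essentially verbatim'' will not work. The paper's modification is to replace every voter by $T = 14400\,n^{7}$ identical copies. With this blow-up each voter's endowment shrinks to $1/T$, so the $r = 2n$ permitted flips can change any approver set by at most $r$ voters and hence any $q$-affordability threshold or residual balance by at most $O(r/T) = O(1/n^{6})$. The paper then carries out a sequence of explicit inequalities showing that such perturbations are far too small to move a project between the structural ``groups'' of the construction (the $C_B$, $C_U$, $S$, and $\{p,d\}$ blocks), yet large enough to reorder projects \emph{within} the $S$-group and push a chosen exact-cover subfamily to the front. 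Without the voter multiplication, a single flip reallocates an entire $1/|V|$ share of the budget, and the resulting cascade on balances is of the same order as the gaps the construction relies on; your hoped-for counting argument on the unmodified instance therefore does not go through.

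In short, the step you flag as the one ``to treat with the most care'' needs more than care: it needs the voter-blow-up trick. Once that is in place, the remainder is a lengthy but routine calculation bounding the price $\rho_d$ of~$d$ against the price $\rho_p$ of~$p$ to show that $p$ is selected if and only if the flips encode an exact cover of the underlying \textsc{RX3C} instance.
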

\begin{proof}
  Since $\MESCostUtilRule$ and $\MESAprUtilRule$ work in the same way
  for unit costs (because dividing by cost does not change anything in
  this case), we will focus on $\MESAprUtilRule$. Further, since they
  become equivalent to MES approval voting rule for unit project
  costs, we will use a construction provided for multiwinner approval
  voting case and adapt it to the PB world.

    Similarly to $\phragmenAddRemPB$ proved above, $\MESAprUtilAddRemPB$
    belongs to $\np$ class because we can guess a set of at most $r$
    operations, perform them on a given instance, and check if a given
    project is included in the final committee. We note that since in
    our input we are given a default tie-breaking order, the outcome
    is unique and thus our algorithm runs in polynomial time.

    To show that $\MESAprUtilAddRemPB$ is $\np$-hard, we reduce from
    $\rxthreec$. In the $\rxthreec$ problem, we are given a universe
    $U = \{u_1, u_2, \ldots, u_{3n}\}$ and a collection of size-$3$
    sets $S = \{S_1, S_2, \ldots, S_{3n}\}$ such that each element of
    $U$ belongs to exactly $3$ sets from $S$. We ask if there are $n$
    sets from $S$ that form an exact cover over $U$.

    We form an instance in the same way as that provided by Janeczko
    and Faliszewski \cite{jan-fal:c:ties-multiwinner} in the theorem stating
    that \textit{\textsc{Unique-Committee} is coNP-complete for Phase
      1 of MEqS.}, but with one significant change: we replace each
    voter in this instance with exactly $T = 14400n^7$ copies of the
    same preferences. We set the preferred project to be $p$, the
    available budget to be $B' = k$, and the bribery budget to be
    $r = n$. In contrast to the former theorem, we assume that we have
    some fixed tie-breaking order in which $p$ is preferred over $d$ (the rest can be arbitrary). We ask if we can flip up to $r = 2n$
    approvals such that $p$ belongs to a winning outcome.

    Now let us analyze how $\MESAprUtilRule$ behaves on the particular
    instance.  Since we scaled the number of voters by $T$, each voter
    receives exactly $\frac{1}{T} = \frac{1}{14400n^7}$ units of the
    budget.  Therefore, even if we add $r=2n$ approvals to some project
    or remove $r=2n$ existing approvals, then the number of voters to
    participate in buying project may change by at most $r = 2n \ll T$
    and their total budget by at most
    $2n \cdot \frac{1}{T} = \frac{2n}{14400n^7} = \frac{2}{14400n^6}$.
    In our case, since each voter was replaced with $T \gg 2n$ copies,
    these changes are far too low to move projects between groups,
    they can only affect the order of considering them inside a group.
    
    More precisely, at the beginning we will select all projects
    from $C_B$ and then all projects from $C_U$. The below
    inequalities were estimated with the following observations: since
    we can add or remove up to $r = 2n$ approvals, we can either
    add/remove them from the considered group of voters, or add/remove
    them from some other group of voters (approving or disapproving
    some project(s)).  After selecting all projects from $C_B$,
    each voter from $B \cup V_{pd}$ will pay at least
    $\frac{1}{(144n^3)T+r}$ for each project, and thus will be be
    left with at most
    $\frac{1}{T} - (144n^3-12n^2) \cdot \frac{1}{(144n^3)T+r} =
    \frac{((144n^3)T+r) - (144n^3-12n^2)T}{T \cdot ((144n^3)T+r)} =
    \frac{12n^2T+r}{T \cdot ((144n^3)T+r)}$.  Analogously, after
    selecting all projects from $C_U$, each voter from
    $B_U \cup U' \cup U''$ will pay at least
    $\frac{1}{(54n^3+27n^2+3n)T+r}$ for each project, and thus will
    be be left with at most
    $\frac{1}{T} - (54n^3 + 24n^2 + \nicefrac{5n}{2}) \cdot
    \frac{1}{(54n^3+27n^2+3n)T + r} = \frac{((54n^3+27n^2+3n)T + r) -
      (54n^3 + 24n^2 + \nicefrac{5n}{2})T}{T \cdot ((54n^3+27n^2+3n)T
      + r)} = \frac{T(3n^2+\nicefrac{n}{2})+r}{T \cdot
      (54n^3+27n^2+3n)T + r)}$.  Since $T=14400n^7$ and $r=2n$, the
    total budget that voters from $V_{pd} \cup U'$ have after selecting all projects from $C_B$ and $C_U$ is at most
    $12nT \cdot \frac{12n^2T+r}{T \cdot ((144n^3)T+r)} + 3nT \cdot
    \frac{T(3n^2+\nicefrac{n}{2})+r}{T \cdot (54n^3+27n^2+3n)T + r)}
    \leq 12nT \cdot \frac{12n^2T+r}{T(144n^3T)} + 3nT \cdot
    \frac{T(3n^2+\nicefrac{n}{2})+r}{T^2(54n^3+27n^2+3n)} =
    \frac{12n^2T+r}{12n^2T} +
    \frac{T(3n^2+\nicefrac{n}{2})+r}{T(18n^2+9n+1)} = 1 +
    \frac{r}{12n^2T} + \frac{1}{6} -
    \frac{n+\nicefrac{1}{6}}{18n^2+9n+1} + \frac{r}{T(18n^2+9n+1)} < 1
    + \nicefrac{1}{12} + \nicefrac{1}{6} + \nicefrac{1}{12} = \nicefrac{16}{12} = \nicefrac{4}{3}$. For this
    reason, since
    $\nicefrac{4}{3} + r \cdot \frac{1}{T} = \nicefrac{4}{3} +
    \frac{2}{14400n^6} < 2$, we will never select both $p$ and $d$ to
    the committee.

    Further, we will be selecting sequentially some $S$-projects whose
    elements were not present in the previous iterations.  We claim
    that if this phase does not end up with selecting sets forming an
    exact cover over $U$, then $d$ must be selected to the committee.
    For the sake of the proof's length, we will not explain every
    inequality in details, especially if its intuition is clear from
    the context.
    
    Suppose that an element $u_i$ was not covered.  Then it means that
    $T$ voters from $U'$ resembling $u_i$ must have some left
    budget. More precisely, they couldn't have spent more than
    $\frac{1}{(54n^3+27n^2+3n)T-r}$ on each $B_U$-project so each of them must be left with at least
    $\frac{1}{T} - (54n^3 + 24n^2 + \nicefrac{5n}{2}) \cdot
    \frac{1}{(54n^3+27n^2+3n)T - r} = \frac{((54n^3+27n^2+3n)T - r) -
      (54n^3 + 24n^2 + \nicefrac{5n}{2})T}{T \cdot ((54n^3+27n^2+3n)T-r)} = \frac{T(3n^2+\nicefrac{n}{2})-r}{T \cdot
      ((54n^3+27n^2+3n)T - r)} > \frac{1}{90nT}$.  Let $\rho_d$ be the
    maximum amount of money a voter needs to pay to buy $d$.  In the
    worst case, we remove some existing approvals of voters $V_{pd}$
    towards $d$ and of voters $U'$ towards $d$. Then even if each voter from $U$ would contribute ``only'' $\frac{1}{90nT}$, each voter from
    $V_{pd}$ would need to pay for $d$ at most
    $\frac{1 - (T-r) \cdot \frac{1}{90nT}}{12nT-r} = \frac{1}{12nT-r} -
    \frac{T-r}{90nT(12nT-r)}$.  Thus, since $\frac{1}{12nT-r} - \frac{T-r}{90nT(12nT-r)} > \frac{1}{90nT}$,
    $\rho_d \leq \frac{1}{12nT-r} - \frac{T-r}{90nT(12nT-r)}$. Let us
    compute analogous $\rho_p$ value for project $p$. In the best
    case, we convince some dummy voters to vote for $p$, after these
    operations the value of $\rho_p$ would be at least
    $\frac{1}{12nT+r}$ (if all supporters of $p$ would contribute equally). It is not hard to see that
    $\rho_d \leq \frac{1}{12nT-r} - \frac{T-r}{90nT(12nT-r)} <
    \frac{1}{12nT+r} \leq \rho_p$, thus no matter
    of operations we would select $d$ rather than $p$.

    We showed above that if there is no exact cover over $U$, then we
    will always prefer $d$ to $p$. Moreover, the opposite direction
    also holds: if there is an exact cover, then we can perform up to
    $r = 2n$ operations such that $p$ is selected to the
    committee. Particularly, if ${i_1, i_2, \ldots, i_n}$ are the
    indices of sets forming an exact cover, we remove one approval
    from sets $S \ \{S_{i_1}, S_{i_2}, \ldots, S_{i_n}\}$ (preferably
    from separate $V_S$-voters, $3n-n=2n$ operations in total). This bribery makes the exact cover projects being considered at the beginning and would use the budget of $U'$-voters exhaustively. 
    For this reason, while considering projects from
    $\{p, d\}$, $d$ ties with $p$, so due to the tie-breaking order $p$ will be selected to the committee.
    
    Summing up, there exists an exact cover for the $\rxthreec$
    instance if and only if project $p$ can be winning after doing at
    most $r=2n$ approval flips in the constructed PB instance.
\end{proof}

\section{Additional Experiments}\label{app:experiments}

\subsection{Correlation between Rules}\label{sec:correlation}
We have already observed above that the used budgeting rule has a decisive influence on the robustness of outcomes and that \MES and \Phragmen tend to produce less robust results than \Greedy and \GreedyCost. In this section, we want to explore the relation between the different budgeting rules in more depth. 
In particular, we analyze whether instances that have a very robust outcome under one rule might have a non-robust outcome under a different rule.

\begin{figure}[t!]
    \centering 
\begin{subfigure}{0.32\textwidth}
\resizebox{\textwidth}{!}{
\begin{tikzpicture}

\definecolor{darkgray176}{RGB}{176,176,176}

\begin{axis}[
tick align=outside,
tick pos=left,
x grid style={darkgray176},
xlabel={\Greedy},
xmin=-0.00685, xmax=0.31985,
xtick style={color=black},
y grid style={darkgray176},
ylabel={\GreedyCost},
ymin=-0.0045, ymax=0.3145,
ytick style={color=black},
xtick={0,0.05,0.1,0.15,0.2,0.25,0.3},
xticklabels={0\%,5\%,10\%,15\%,20\%,25\%,$\ge\hspace*{-0.1cm}25\%$},
ytick={0,0.05,0.1,0.15,0.2,0.25,0.3},
yticklabels={0\%,5\%,10\%,15\%,20\%,25\%,$\ge\hspace*{-0.1cm}25\%$},
	ytick style={color=black},every tick label/.append style={font=\large}, 
label style={font=\large}
]
\addplot [draw=black, fill=black, mark=*, only marks]
table{%
x  y
0.3 0.16
0.22 0.3
0.01 0.09
0.299 0.16
0.09 0.3
0.3 0.17
0.11 0.25
0.3 0.1
0.08 0.16
0.3 0.09
0.16 0.3
0.3 0.03
0.3 0.07
0.08 0.03
0.3 0.01
0.07 0.3
0.03 0.3
0.299 0.01
0.11 0.01
0.3 0.19
0.069 0.3
0.1 0.11
0.3 0.12
0.3 0.15
0.299 0.17
0.23 0.02
0.3 0.2
0.09 0.13
0.3 0.02
0.3 0.13
0.3 0.24
0.3 0.22
0.23 0.3
0.299 0.07
0.06 0.3
0.3 0.21
0.3 0.06
0.299 0.02
0.03 0.12
0.04 0.01
0.18 0.1
0.21 0.22
0.02 0.19
0.3 0.04
0.219 0.3
0.299 0.12
0.301 0.01
0.11 0.05
0.229 0.3
0.13 0.3
0.3 0.05
0.24 0.21
0.2 0.13
0.19 0.08
0.299 0.04
0.24 0.14
0.14 0.09
0.299 0.05
0.299 0.1
0.3 0.14
0.301 0.05
0.299 0.03
0.06 0.11
0.301 0.02
0.06 0.1
0.14 0.15
0.14 0.3
0.3 0.11
0.12 0.3
0.299 0.14
0.301 0.03
0.3 0.08
0.059 0.3
0.01 0.3
0.139 0.3
0.298 0.01
0.22 0.08
0.301 0.14
0.24 0.11
0.009 0.3
0.029 0.3
0.301 0.17
0.159 0.3
0.302 0.01
0.011 0.3
0.299 0.21
0.02 0.3
0.301 0.1
0.298 0.02
0.299 0.13
0.298 0.05
0.299 0.11
0.301 0.11
0.009 0.09
0.299 0.08
0.141 0.3
0.302 0.02
0.3 0.23
0.298 0.03
0.299 0.19
0.299 0.06
0.19 0.3
0.03 0.01
0.301 0.13
0.298 0.1
0.189 0.3
0.298 0.17
0.302 0.05
0.008 0.3
0.301 0.16
0.298 0.14
0.299 0.15
0.3 0.18
0.299 0.2
0.301 0.21
0.297 0.02
0.138 0.3
0.18 0.3
0.089 0.3
0.299 0.09
0.16 0.22
0.303 0.02
0.15 0.3
0.129 0.3
0.019 0.3
0.08 0.09
0.302 0.14
0.297 0.01
0.302 0.03
0.3 0.25
0.21 0.17
0.08 0.18
0.11 0.3
0.02 0.05
0.298 0.16
0.301 0.2
0.296 0.02
0.297 0.03
0.301 0.09
0.301 0.06
0.04 0.11
0.179 0.3
0.301 0.04
0.297 0.05
0.299 0.23
0.301 0.07
0.302 0.17
0.298 0.09
0.304 0.02
0.07 0.02
0.191 0.3
0.298 0.07
0.188 0.3
0.15 0.19
0.298 0.21
0.119 0.3
0.295 0.02
0.13 0.2
0.19 0.25
0.302 0.1
0.305 0.02
0.24 0.3
0.299 0.25
0.301 0.12
0.061 0.3
0.303 0.03
0.181 0.3
0.012 0.3
0.058 0.3
};
\end{axis}

\end{tikzpicture}}
  \caption{\Greedy and \GreedyCost}
\end{subfigure}\hfill
\begin{subfigure}{0.32\textwidth}
\resizebox{\textwidth}{!}{
\begin{tikzpicture}

\definecolor{darkgray176}{RGB}{176,176,176}

\begin{axis}[
tick align=outside,
tick pos=left,
x grid style={darkgray176},
xlabel={\Greedy},
xmin=-0.0069, xmax=0.3209,
xtick style={color=black},
y grid style={darkgray176},
ylabel={\Phragmen},
ymin=-0.0045, ymax=0.3145,
ytick style={color=black},
xtick={0,0.05,0.1,0.15,0.2,0.25,0.3},
xticklabels={0\%,5\%,10\%,15\%,20\%,25\%,$\ge\hspace*{-0.1cm}25\%$},
ytick={0,0.05,0.1,0.15,0.2,0.25,0.3},
yticklabels={0\%,5\%,10\%,15\%,20\%,25\%,$\ge\hspace*{-0.1cm}25\%$},
	ytick style={color=black},every tick label/.append style={font=\large}, 
label style={font=\large}
]
\addplot [draw=black, fill=black, mark=*, only marks]
table{%
x  y
0.3 0.06
0.22 0.3
0.01 0.09
0.3 0.17
0.09 0.3
0.11 0.03
0.3 0.11
0.08 0.3
0.3 0.24
0.3 0.03
0.3 0.02
0.16 0.14
0.299 0.03
0.299 0.02
0.3 0.09
0.08 0.06
0.3 0.01
0.299 0.01
0.07 0.09
0.03 0.3
0.301 0.01
0.11 0.3
0.3 0.1
0.07 0.07
0.1 0.07
0.3 0.14
0.3 0.08
0.301 0.03
0.3 0.21
0.299 0.08
0.23 0.3
0.298 0.01
0.09 0.02
0.299 0.06
0.3 0.13
0.298 0.03
0.23 0.14
0.299 0.17
0.06 0.3
0.3 0.07
0.302 0.01
0.3 0.22
0.03 0.06
0.04 0.17
0.18 0.14
0.21 0.1
0.02 0.2
0.297 0.01
0.219 0.3
0.3 0.19
0.3 0.18
0.11 0.05
0.301 0.06
0.229 0.3
0.13 0.21
0.299 0.14
0.24 0.14
0.2 0.22
0.19 0.05
0.299 0.09
0.24 0.3
0.14 0.09
0.3 0.05
0.298 0.06
0.299 0.1
0.299 0.24
0.302 0.03
0.06 0.15
0.3 0.04
0.06 0.1
0.14 0.08
0.3 0.12
0.139 0.09
0.299 0.05
0.12 0.3
0.301 0.14
0.299 0.04
0.3 0.16
0.06 0.01
0.01 0.3
0.14 0.3
0.303 0.01
0.22 0.14
0.301 0.1
0.24 0.13
0.299 0.16
0.009 0.3
0.03 0.07
0.16 0.3
0.296 0.01
0.298 0.1
0.011 0.3
0.301 0.24
0.302 0.06
0.02 0.12
0.299 0.12
0.299 0.07
0.299 0.13
0.299 0.11
0.301 0.13
0.01 0.22
0.301 0.16
0.139 0.3
0.297 0.03
0.301 0.17
0.301 0.11
0.301 0.05
0.301 0.02
0.301 0.07
0.19 0.3
0.03 0.04
0.298 0.07
0.189 0.3
0.298 0.02
0.304 0.01
0.298 0.13
0.298 0.14
0.295 0.01
0.302 0.02
0.008 0.3
0.298 0.16
0.301 0.09
0.301 0.04
0.297 0.02
0.141 0.3
0.297 0.06
0.18 0.3
0.089 0.3
0.298 0.09
0.3 0.15
0.16 0.21
0.305 0.01
0.15 0.14
0.13 0.3
0.02 0.04
0.08 0.14
0.3 0.2
0.294 0.01
0.301 0.12
0.303 0.03
0.21 0.16
0.298 0.17
0.08 0.23
0.109 0.3
0.02 0.06
0.299 0.19
0.3 0.25
0.299 0.2
0.299 0.21
0.302 0.13
0.296 0.03
0.302 0.09
0.299 0.18
0.04 0.18
0.179 0.3
0.302 0.14
0.302 0.07
0.303 0.02
0.298 0.04
0.297 0.14
0.306 0.01
0.296 0.02
0.07 0.06
0.191 0.3
0.297 0.07
0.19 0.14
0.15 0.18
0.119 0.3
0.298 0.05
0.13 0.17
0.297 0.13
0.188 0.3
0.303 0.07
0.24 0.05
0.304 0.02
0.298 0.12
0.06 0.04
0.298 0.24
0.296 0.07
0.181 0.3
0.012 0.3
0.298 0.11
0.059 0.3
};
\end{axis}

\end{tikzpicture}}
\caption{\Greedy and \Phragmen}
\end{subfigure}\hfill
\begin{subfigure}{0.32\textwidth}
  \resizebox{\textwidth}{!}{
\begin{tikzpicture}

\definecolor{darkgray176}{RGB}{176,176,176}

\begin{axis}[
tick align=outside,
tick pos=left,
x grid style={darkgray176},
xlabel={\Phragmen},
xmin=-0.00665, xmax=0.31565,
xtick style={color=black},
y grid style={darkgray176},
ylabel={\GreedyCost},
ymin=-0.0045, ymax=0.3145,
ytick style={color=black},
xtick={0,0.05,0.1,0.15,0.2,0.25,0.3},
xticklabels={0\%,5\%,10\%,15\%,20\%,25\%,$\ge\hspace*{-0.1cm}25\%$},
ytick={0,0.05,0.1,0.15,0.2,0.25,0.3},
yticklabels={0\%,5\%,10\%,15\%,20\%,25\%,$\ge\hspace*{-0.1cm}25\%$},
	ytick style={color=black},every tick label/.append style={font=\large}, 
label style={font=\large}
]
\addplot [draw=black, fill=black, mark=*, only marks]
table{%
x  y
0.06 0.3
0.3 0.16
0.09 0.09
0.17 0.16
0.3 0.17
0.03 0.25
0.11 0.1
0.299 0.16
0.24 0.3
0.03 0.3
0.02 0.09
0.14 0.3
0.03 0.03
0.02 0.3
0.09 0.07
0.06 0.03
0.01 0.3
0.01 0.01
0.09 0.3
0.009 0.01
0.3 0.01
0.3 0.19
0.1 0.3
0.07 0.3
0.07 0.11
0.14 0.12
0.08 0.15
0.029 0.3
0.21 0.17
0.08 0.3
0.3 0.02
0.01 0.2
0.02 0.13
0.06 0.02
0.13 0.13
0.3 0.24
0.03 0.22
0.139 0.3
0.17 0.07
0.069 0.3
0.3 0.21
0.01 0.06
0.22 0.02
0.06 0.12
0.17 0.01
0.14 0.1
0.1 0.22
0.2 0.19
0.009 0.3
0.3 0.04
0.19 0.12
0.18 0.01
0.05 0.05
0.059 0.3
0.21 0.3
0.14 0.05
0.14 0.21
0.22 0.13
0.05 0.08
0.09 0.04
0.3 0.14
0.089 0.09
0.049 0.05
0.06 0.1
0.099 0.3
0.24 0.14
0.3 0.05
0.029 0.03
0.15 0.11
0.04 0.02
0.1 0.1
0.079 0.15
0.12 0.3
0.089 0.3
0.05 0.11
0.14 0.14
0.04 0.03
0.16 0.08
0.011 0.3
0.011 0.01
0.14 0.08
0.1 0.14
0.13 0.11
0.16 0.3
0.071 0.3
0.299 0.17
0.008 0.01
0.101 0.3
0.239 0.3
0.06 0.21
0.119 0.3
0.12 0.1
0.07 0.02
0.129 0.13
0.299 0.05
0.11 0.11
0.129 0.11
0.22 0.09
0.159 0.08
0.03 0.02
0.17 0.3
0.11 0.23
0.05 0.03
0.02 0.19
0.07 0.06
0.04 0.01
0.3 0.13
0.07 0.1
0.019 0.3
0.008 0.3
0.13 0.17
0.139 0.05
0.012 0.3
0.021 0.3
0.16 0.16
0.09 0.14
0.3 0.15
0.3 0.18
0.3 0.2
0.04 0.3
0.299 0.21
0.02 0.02
0.061 0.3
0.091 0.09
0.15 0.3
0.21 0.22
0.01 0.02
0.141 0.3
0.039 0.3
0.14 0.09
0.2 0.14
0.012 0.01
0.12 0.03
0.029 0.25
0.16 0.17
0.169 0.3
0.23 0.18
0.06 0.05
0.19 0.16
0.25 0.3
0.2 0.2
0.209 0.3
0.13 0.02
0.031 0.03
0.3 0.09
0.09 0.06
0.18 0.3
0.18 0.11
0.14 0.04
0.07 0.05
0.3 0.23
0.02 0.07
0.041 0.3
0.14 0.17
0.01 0.09
0.019 0.02
0.059 0.02
0.07 0.07
0.138 0.3
0.18 0.19
0.301 0.21
0.05 0.02
0.17 0.2
0.13 0.3
0.3 0.25
0.069 0.1
0.299 0.02
0.05 0.3
0.299 0.25
0.018 0.3
0.12 0.12
0.038 0.3
0.241 0.3
0.068 0.3
0.3 0.03
0.11 0.3
};
\end{axis}

\end{tikzpicture}}
  \caption{\Phragmen and \GreedyCost}
\end{subfigure}\hfill
\begin{subfigure}{0.32\textwidth}
 \resizebox{\textwidth}{!}{
\begin{tikzpicture}

\definecolor{darkgray176}{RGB}{176,176,176}

\begin{axis}[
tick align=outside,
tick pos=left,
x grid style={darkgray176},
xlabel={\MES},
xmin=-0.0078, xmax=0.3178,
xtick style={color=black},
y grid style={darkgray176},
ylabel={\Phragmen},
ymin=-0.0045, ymax=0.3145,
ytick style={color=black},
xtick={0,0.05,0.1,0.15,0.2,0.25,0.3},
xticklabels={0\%,5\%,10\%,15\%,20\%,25\%,$\ge\hspace*{-0.1cm}25\%$},
ytick={0,0.05,0.1,0.15,0.2,0.25,0.3},
yticklabels={0\%,5\%,10\%,15\%,20\%,25\%,$\ge\hspace*{-0.1cm}25\%$},
	ytick style={color=black},every tick label/.append style={font=\large}, 
label style={font=\large}
]
\addplot [draw=black, fill=black, mark=*, only marks]
table{%
x  y
0.3 0.06
0.03 0.3
0.07 0.3
0.11 0.3
0.109 0.3
0.09 0.3
0.3 0.09
0.19 0.17
0.08 0.03
0.01 0.11
0.18 0.3
0.3 0.24
0.05 0.03
0.089 0.3
0.02 0.02
0.14 0.3
0.3 0.14
0.079 0.03
0.3 0.02
0.15 0.09
0.08 0.06
0.3 0.01
0.05 0.3
0.02 0.01
0.07 0.09
0.139 0.3
0.15 0.01
0.179 0.3
0.141 0.3
0.19 0.3
0.08 0.1
0.05 0.07
0.04 0.07
0.2 0.14
0.12 0.08
0.3 0.03
0.03 0.21
0.119 0.08
0.02 0.3
0.049 0.3
0.05 0.01
0.07 0.02
0.299 0.06
0.21 0.13
0.06 0.03
0.299 0.14
0.15 0.3
0.3 0.17
0.069 0.3
0.17 0.07
0.01 0.01
0.01 0.3
0.3 0.22
0.03 0.06
0.02 0.17
0.04 0.14
0.07 0.1
0.02 0.2
0.299 0.01
0.21 0.3
0.2 0.19
0.3 0.18
0.15 0.05
0.301 0.06
0.138 0.3
0.04 0.21
0.01 0.14
0.301 0.14
0.03 0.22
0.071 0.3
0.01 0.05
0.01 0.09
0.142 0.3
0.019 0.3
0.299 0.09
0.03 0.05
0.22 0.3
0.091 0.3
0.298 0.06
0.3 0.1
0.08 0.3
0.299 0.24
0.16 0.3
0.06 0.3
0.03 0.03
0.17 0.15
0.03 0.04
0.068 0.3
0.03 0.1
0.3 0.08
0.189 0.3
0.149 0.3
0.079 0.3
0.3 0.12
0.301 0.09
0.009 0.05
0.059 0.3
0.081 0.3
0.23 0.14
0.09 0.04
0.051 0.3
0.111 0.3
0.3 0.16
0.08 0.01
0.072 0.3
0.009 0.01
0.298 0.14
0.079 0.1
0.17 0.3
0.13 0.13
0.22 0.16
0.3 0.07
0.067 0.3
0.073 0.3
0.079 0.01
0.05 0.1
0.12 0.3
0.029 0.3
0.137 0.3
0.06 0.24
0.302 0.06
0.151 0.3
0.04 0.12
0.021 0.3
0.299 0.12
0.01 0.07
0.07 0.13
0.061 0.3
0.08 0.11
0.018 0.3
0.2 0.13
0.05 0.22
0.009 0.3
0.022 0.3
0.03 0.16
0.169 0.3
0.058 0.3
0.011 0.3
0.01 0.03
0.299 0.17
0.13 0.3
0.06 0.11
0.031 0.3
0.02 0.05
0.299 0.02
0.11 0.07
0.01 0.04
0.299 0.07
0.069 0.02
0.301 0.01
0.3 0.13
0.25 0.14
0.049 0.01
0.301 0.02
0.24 0.3
0.25 0.3
0.008 0.3
0.299 0.16
0.191 0.3
0.08 0.09
0.062 0.3
0.017 0.3
0.15 0.04
0.019 0.02
0.11 0.06
0.1 0.3
0.023 0.3
0.181 0.3
0.298 0.09
0.3 0.15
0.029 0.21
0.011 0.01
0.302 0.14
0.066 0.3
0.028 0.3
0.3 0.04
0.297 0.14
0.019 0.2
0.008 0.01
0.01 0.12
0.159 0.3
0.23 0.03
0.301 0.16
0.21 0.17
0.3 0.23
0.119 0.3
0.297 0.06
0.22 0.19
0.01 0.25
0.3 0.2
0.3 0.21
0.14 0.13
0.1 0.03
0.048 0.3
0.12 0.09
0.21 0.18
0.239 0.3
0.01 0.18
0.17 0.14
0.08 0.07
0.099 0.3
0.021 0.02
0.299 0.04
0.07 0.14
0.019 0.01
0.01 0.02
0.01 0.06
0.07 0.07
0.303 0.14
0.07 0.18
0.029 0.05
0.11 0.17
0.249 0.3
0.06 0.13
0.219 0.3
0.012 0.3
0.016 0.3
0.03 0.07
0.032 0.3
0.04 0.05
0.171 0.3
0.143 0.3
0.11 0.02
0.007 0.3
0.18 0.12
0.101 0.3
0.188 0.3
0.04 0.04
0.16 0.24
0.009 0.07
0.027 0.3
0.12 0.11
0.033 0.3
};
\end{axis}

\end{tikzpicture}}
\caption{\MES and \Phragmen}
\end{subfigure}\hfill
\begin{subfigure}{0.32\textwidth}
\resizebox{\textwidth}{!}{
\begin{tikzpicture}

\definecolor{darkgray176}{RGB}{176,176,176}

\begin{axis}[
tick align=outside,
tick pos=left,
x grid style={darkgray176},
xlabel={\MES},
xmin=-0.00775, xmax=0.31675,
xtick style={color=black},
y grid style={darkgray176},
ylabel={\GreedyCost},
ymin=-0.0045, ymax=0.3145,
ytick style={color=black},
xtick={0,0.05,0.1,0.15,0.2,0.25,0.3},
xticklabels={0\%,5\%,10\%,15\%,20\%,25\%,$\ge\hspace*{-0.1cm}25\%$},
ytick={0,0.05,0.1,0.15,0.2,0.25,0.3},
yticklabels={0\%,5\%,10\%,15\%,20\%,25\%,$\ge\hspace*{-0.1cm}25\%$},
	ytick style={color=black},every tick label/.append style={font=\large}, 
label style={font=\large}
]
\addplot [draw=black, fill=black, mark=*, only marks]
table{%
x  y
0.03 0.3
0.07 0.16
0.11 0.3
0.109 0.3
0.09 0.3
0.3 0.09
0.19 0.16
0.3 0.17
0.08 0.25
0.01 0.1
0.18 0.16
0.05 0.3
0.089 0.3
0.02 0.09
0.14 0.3
0.08 0.03
0.15 0.07
0.079 0.03
0.049 0.3
0.02 0.01
0.07 0.3
0.139 0.3
0.15 0.01
0.3 0.01
0.18 0.3
0.141 0.3
0.19 0.19
0.08 0.3
0.051 0.3
0.04 0.11
0.2 0.12
0.12 0.15
0.03 0.17
0.12 0.3
0.02 0.02
0.048 0.3
0.05 0.2
0.07 0.13
0.3 0.02
0.21 0.13
0.3 0.24
0.06 0.22
0.15 0.3
0.3 0.07
0.069 0.3
0.17 0.3
0.3 0.21
0.01 0.06
0.01 0.3
0.299 0.02
0.03 0.12
0.019 0.01
0.04 0.1
0.07 0.22
0.02 0.19
0.3 0.04
0.21 0.3
0.199 0.12
0.299 0.01
0.15 0.05
0.138 0.3
0.04 0.3
0.01 0.05
0.299 0.21
0.03 0.13
0.071 0.3
0.01 0.08
0.01 0.04
0.14 0.14
0.02 0.3
0.299 0.09
0.03 0.05
0.22 0.3
0.091 0.3
0.3 0.1
0.079 0.3
0.3 0.14
0.16 0.05
0.06 0.3
0.03 0.03
0.17 0.11
0.03 0.02
0.068 0.3
0.03 0.1
0.3 0.15
0.19 0.3
0.149 0.3
0.081 0.3
0.01 0.11
0.059 0.3
0.078 0.3
0.23 0.14
0.09 0.03
0.052 0.3
0.111 0.3
0.3 0.08
0.082 0.3
0.072 0.3
0.01 0.01
0.299 0.08
0.08 0.14
0.169 0.3
0.13 0.11
0.219 0.3
0.07 0.17
0.067 0.3
0.08 0.01
0.047 0.3
0.119 0.3
0.029 0.3
0.142 0.3
0.061 0.3
0.301 0.21
0.151 0.3
0.039 0.3
0.019 0.3
0.299 0.1
0.01 0.02
0.069 0.13
0.06 0.05
0.08 0.11
0.021 0.3
0.2 0.11
0.05 0.09
0.009 0.3
0.018 0.3
0.03 0.08
0.171 0.3
0.058 0.3
0.011 0.3
0.009 0.02
0.13 0.3
0.06 0.23
0.031 0.3
0.02 0.03
0.3 0.19
0.11 0.06
0.009 0.01
0.3 0.13
0.301 0.1
0.073 0.3
0.299 0.17
0.25 0.05
0.053 0.3
0.24 0.3
0.25 0.3
0.008 0.3
0.3 0.16
0.189 0.3
0.079 0.14
0.06 0.15
0.02 0.18
0.3 0.2
0.148 0.3
0.298 0.21
0.019 0.02
0.108 0.3
0.1 0.3
0.022 0.3
0.179 0.3
0.301 0.09
0.03 0.22
0.011 0.02
0.066 0.3
0.028 0.3
0.298 0.09
0.02 0.14
0.011 0.01
0.01 0.03
0.16 0.3
0.23 0.25
0.301 0.17
0.209 0.3
0.3 0.18
0.121 0.3
0.3 0.05
0.22 0.16
0.012 0.3
0.299 0.2
0.14 0.02
0.1 0.03
0.049 0.09
0.12 0.06
0.211 0.3
0.239 0.3
0.009 0.11
0.17 0.04
0.08 0.05
0.1 0.23
0.02 0.07
0.069 0.17
0.019 0.09
0.008 0.02
0.012 0.02
0.07 0.07
0.07 0.19
0.302 0.21
0.029 0.02
0.11 0.2
0.249 0.3
0.062 0.3
0.221 0.3
0.01 0.25
0.017 0.3
0.029 0.1
0.031 0.02
0.041 0.3
0.168 0.3
0.14 0.25
0.112 0.3
0.007 0.3
0.18 0.12
0.099 0.3
0.191 0.3
0.038 0.3
0.159 0.3
0.013 0.3
0.3 0.03
0.032 0.3
0.118 0.3
0.027 0.3
};
\end{axis}

\end{tikzpicture}}
\caption{\MES and \GreedyCost}
\end{subfigure}\hfill
\begin{subfigure}{0.32\textwidth}
\resizebox{\textwidth}{!}{
\begin{tikzpicture}

\definecolor{darkgray176}{RGB}{176,176,176}

\begin{axis}[
tick align=outside,
tick pos=left,
x grid style={darkgray176},
xlabel={\MES},
xmin=-0.01205, xmax=0.27505,
xtick style={color=black},
y grid style={darkgray176},
ylabel={\Greedy},
ymin=-0.0025, ymax=0.2725,
ytick style={color=black},
xtick={0,0.05,0.1,0.15,0.2,0.25,0.3},
xticklabels={0\%,5\%,10\%,15\%,20\%,25\%,$\ge\hspace*{-0.1cm}25\%$},
ytick={0,0.05,0.1,0.15,0.2,0.25,0.3},
yticklabels={0\%,5\%,10\%,15\%,20\%,25\%,$\ge\hspace*{-0.1cm}25\%$},
	ytick style={color=black},every tick label/.append style={font=\large}, 
label style={font=\large}
]
\addplot [draw=black, fill=black, mark=*, only marks]
table{%
x  y
0.03 0.26
0.07 0.26
0.11 0.26
0.109 0.26
0.09 0.22
0.26 0.01
0.19 0.26
0.26 0.09
0.08 0.11
0.01 0.26
0.18 0.08
0.05 0.26
0.09 0.26
0.02 0.26
0.14 0.26
0.26 0.16
0.08 0.26
0.15 0.26
0.08 0.08
0.049 0.26
0.019 0.26
0.07 0.07
0.14 0.03
0.149 0.26
0.26 0.11
0.18 0.26
0.139 0.26
0.189 0.26
0.079 0.26
0.05 0.07
0.04 0.1
0.2 0.26
0.12 0.26
0.029 0.26
0.119 0.26
0.02 0.23
0.051 0.26
0.048 0.26
0.07 0.09
0.21 0.26
0.06 0.26
0.26 0.23
0.151 0.26
0.07 0.06
0.17 0.26
0.009 0.26
0.011 0.26
0.03 0.03
0.02 0.04
0.04 0.18
0.07 0.21
0.02 0.02
0.21 0.22
0.199 0.26
0.15 0.11
0.14 0.23
0.04 0.13
0.008 0.26
0.26 0.24
0.03 0.2
0.069 0.26
0.01 0.19
0.012 0.26
0.14 0.24
0.021 0.26
0.26 0.14
0.031 0.26
0.22 0.26
0.089 0.26
0.081 0.26
0.16 0.26
0.059 0.26
0.028 0.26
0.17 0.06
0.032 0.26
0.071 0.26
0.03 0.06
0.259 0.14
0.191 0.26
0.148 0.26
0.078 0.26
0.261 0.14
0.007 0.26
0.061 0.26
0.082 0.26
0.26 0.12
0.23 0.26
0.091 0.26
0.052 0.26
0.111 0.26
0.08 0.06
0.07 0.01
0.258 0.14
0.013 0.26
0.26 0.22
0.077 0.26
0.169 0.26
0.13 0.24
0.219 0.26
0.259 0.01
0.26 0.03
0.068 0.26
0.259 0.16
0.072 0.26
0.083 0.26
0.047 0.26
0.12 0.01
0.027 0.26
0.141 0.26
0.058 0.26
0.152 0.26
0.04 0.02
0.018 0.26
0.006 0.26
0.067 0.26
0.062 0.26
0.076 0.26
0.022 0.26
0.201 0.26
0.05 0.01
0.014 0.26
0.017 0.26
0.033 0.26
0.171 0.26
0.057 0.26
0.01 0.14
0.005 0.26
0.13 0.26
0.063 0.26
0.026 0.26
0.023 0.26
0.108 0.26
0.26 0.19
0.01 0.03
0.259 0.19
0.073 0.26
0.25 0.26
0.053 0.26
0.24 0.26
0.25 0.01
0.015 0.26
0.188 0.26
0.084 0.26
0.056 0.26
0.016 0.26
0.147 0.26
0.024 0.26
0.262 0.14
0.112 0.26
0.1 0.26
0.26 0.18
0.015 0.26
0.18 0.09
0.03 0.16
0.004 0.26
0.26 0.15
0.07 0.13
0.034 0.26
0.26 0.02
0.26 0.08
0.025 0.26
0.016 0.26
0.003 0.26
0.159 0.26
0.229 0.26
0.26 0.21
0.209 0.26
0.259 0.08
0.259 0.11
0.121 0.26
0.259 0.02
0.221 0.26
0.017 0.26
0.138 0.26
0.099 0.26
0.046 0.26
0.118 0.26
0.211 0.26
0.239 0.26
0.01 0.04
0.259 0.18
0.168 0.26
0.075 0.26
0.101 0.26
0.014 0.26
0.066 0.26
0.026 0.26
0.002 0.26
0.01 0.07
0.261 0.19
0.074 0.26
0.258 0.19
0.07 0.15
0.259 0.12
0.025 0.26
0.11 0.13
0.249 0.26
0.064 0.26
0.218 0.26
0.009 0.19
0.013 0.26
0.035 0.26
0.024 0.26
0.04 0.24
0.172 0.26
0.142 0.26
0.107 0.26
0.018 0.26
0.179 0.26
0.098 0.26
0.192 0.26
0.04 0.06
0.161 0.26
0.000999999999999999 0.26
0.03 0.18
0.261 0.01
0.122 0.26
0.029 0.06
};
\end{axis}

\end{tikzpicture}}
\caption{\MES and \Greedy}
\end{subfigure}
\caption{Correlation between $50\%$-winner thresholds for pairs of budgeting rules.} 
\label{fig:ov_corr}
\end{figure}
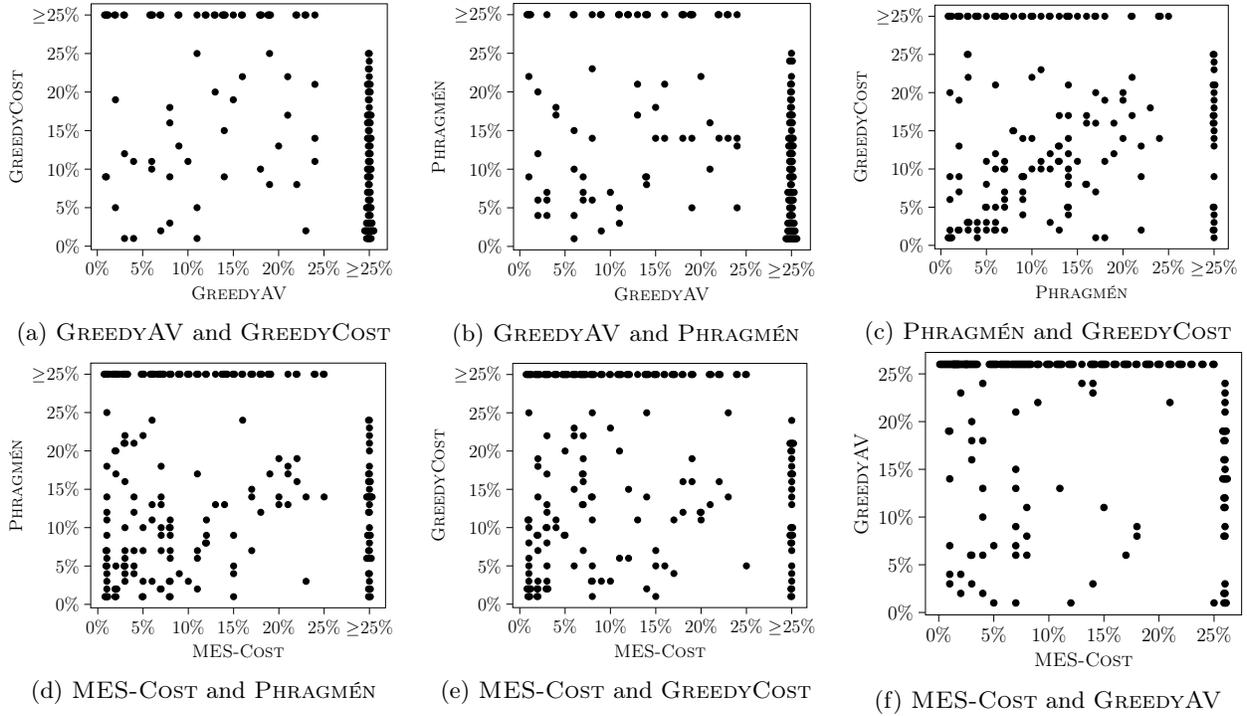 
\subsubsection{Correlation Plots.}
In \Cref{fig:ov_corr}, we show scatterplots depicting the correlation
between the robustness of outcomes for different budgeting rules.  In
these plots, each point represents one instance and its $x$- and
$y$-coordinates depict the instance's $50\%$-winner threshold under
the respective rule.  Points are slightly perturbed if they were to
overlap.  Note that as we only examine resampling probabilities up
until $25\%$, we group instances with a larger threshold together.
Instances with a $50\%$-winner threshold above $25\%$ for both rules
are omitted.  What we see in \Cref{fig:ov_corr} is that there is only
a loose correlation between the robustness of outcomes for all pairs
of rules. In particular, for all pairs there are instances which are
very non-robust under one rule (with a $50\%$-winner threshold of only
$1\%$), yet very robust under the other rule (with a $50\%$-winner
threshold above $25\%$).  If we restrict our attention to instances
with a $50\%$-winner threshold smaller or equal to $25\%$ for both
instances, some (weak) correlation can be observed.  In particular,
for \Greedy and \GreedyCost and for \GreedyCost and \Phragmen some
correlation is visible.  In contrast, for instance for \MES and
\Greedy we find no (positive) correlation.

\begin{figure}[t!]
	\centering 
	\begin{subfigure}{0.245\textwidth}
		\includegraphics[width=\textwidth]{./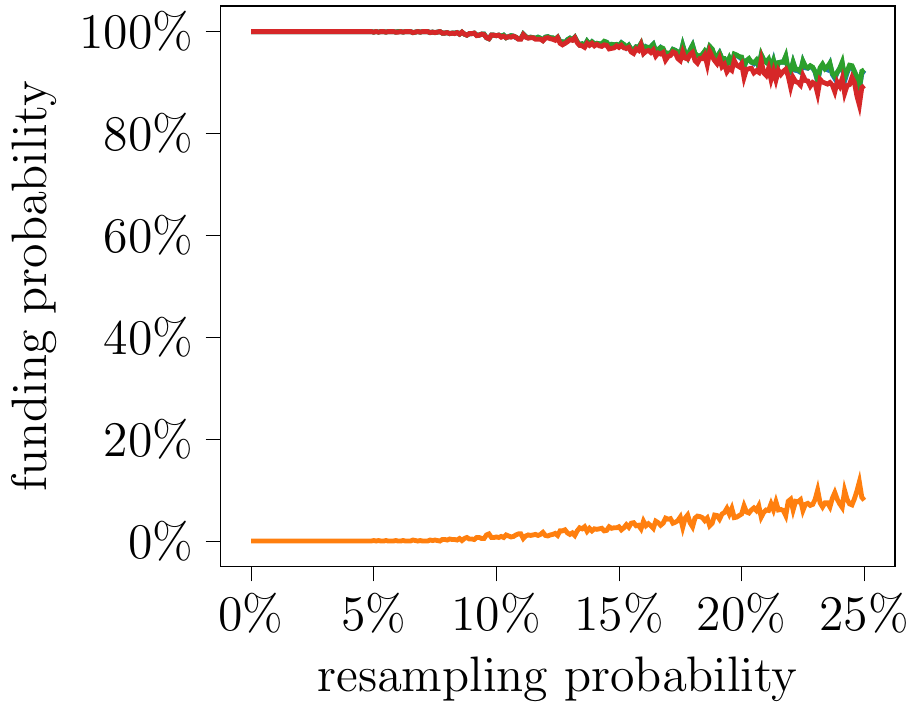}
		\caption{\Greedy}
	\end{subfigure}\hfill
\begin{subfigure}{0.245\textwidth}
\includegraphics[width=\textwidth]{./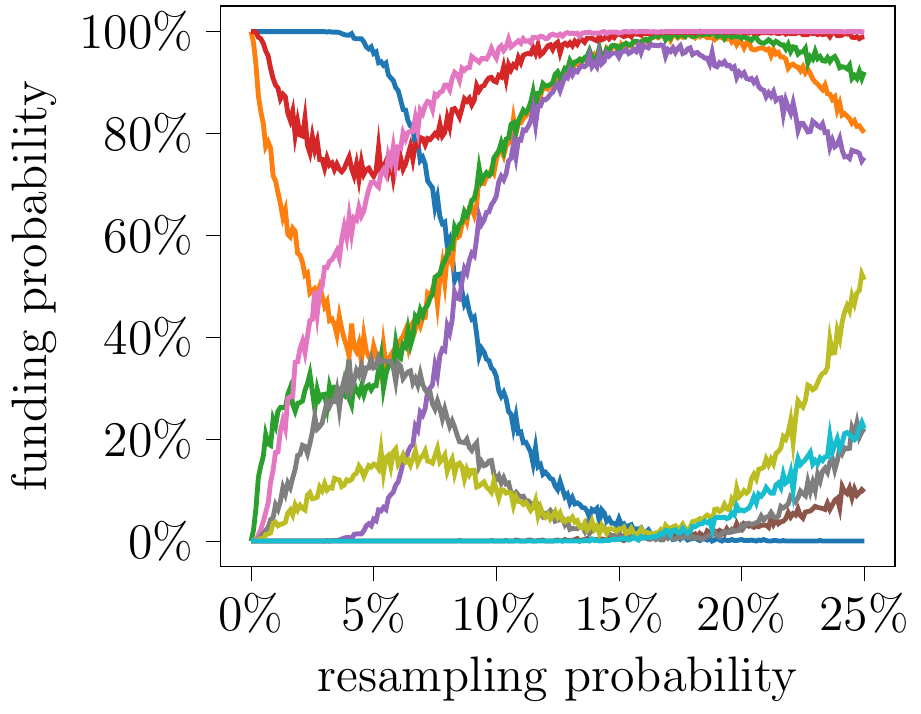}
\caption{\GreedyCost}
\end{subfigure}\hfill%
\begin{subfigure}{0.245\textwidth}
	\includegraphics[width=\textwidth]{./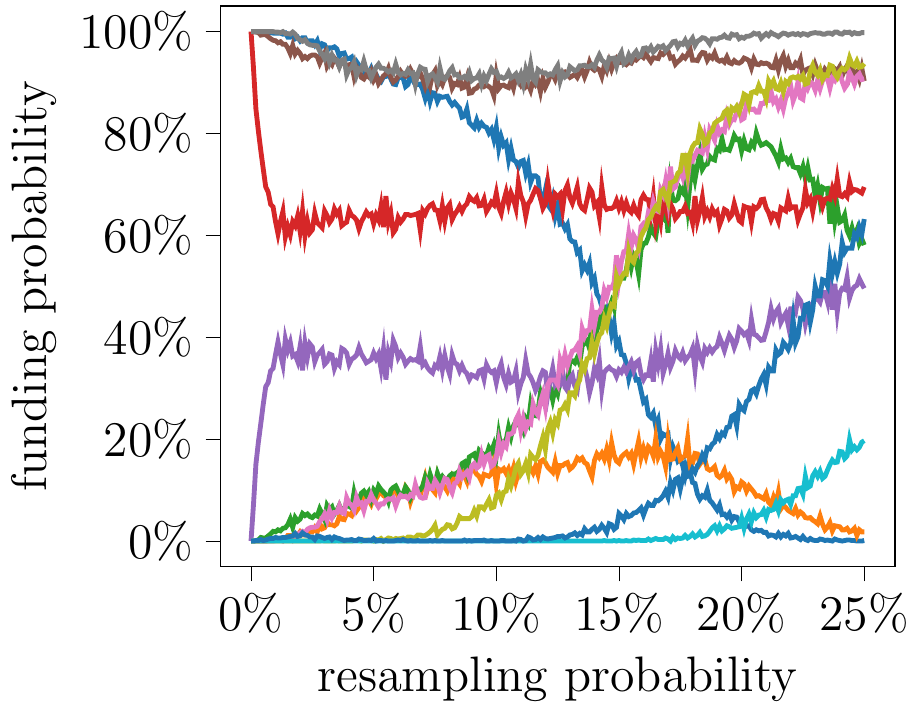}
	\caption{\Phragmen}
\end{subfigure}\hfill
\begin{subfigure}{0.245\textwidth}
	\includegraphics[width=\textwidth]{./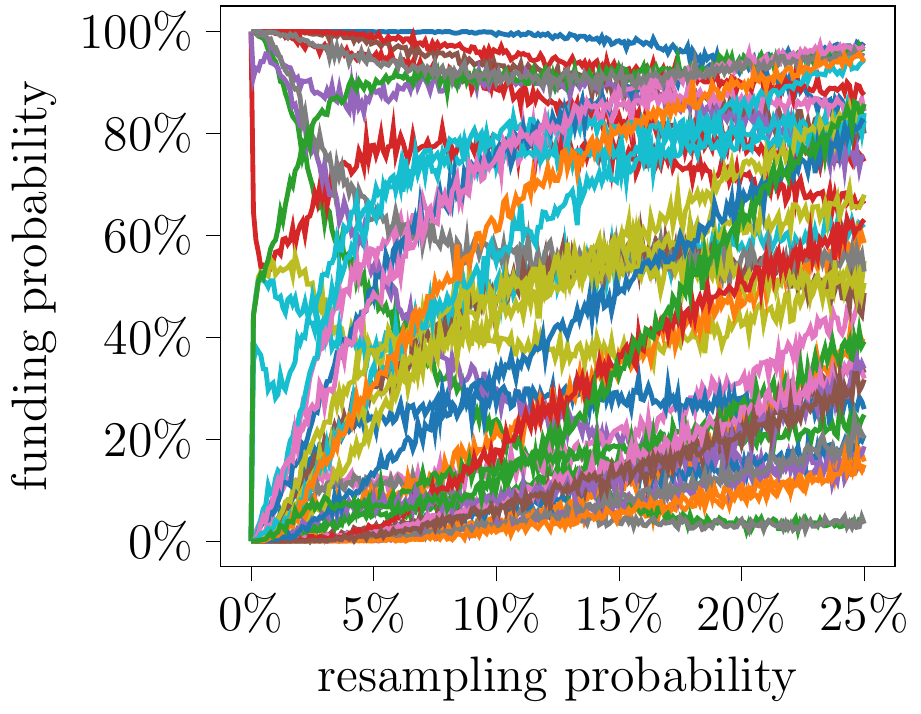}
	\caption{\MES}
\end{subfigure} 
	\caption{Robustness of outcomes on instance from Wawer 2020 (Warszawa) for different budgeting rules.} \label{plot:Wawer2020}
\end{figure}
\subsubsection{Instance-Based View.}
To get a better understanding of how different rules can behave on one instance, we give some concrete examples in the following. 
In \Cref{plot:Wawer2020}, we take a closer look at the outcomes produced by the different budgeting rules on the participatory budgeting instance from  Wawer 2020 in Warszawa. 
We see a drastic difference between the rules here. 
On the one extreme, the outcome produced by the \Greedy rule is obviously very robust. 
On the other extreme, we have \MES for which the behavior is quite chaotic with the funding probabilities of numerous projects substantially changing even for a small resampling probability.
\Phragmen and \GreedyCost are between the two extremes, yet both behave quite differently: 
For \Phragmen, we see that it was quite close whether the red or the purple project was funded in the initial outcome, as their winning probabilities substantially change already for a small resampling probability. 
All other funding decisions are more robust and only at a resampling probability of $15\%$ does the funding probability of an initially funded projects drops below $50\%$. 
In contrast, for \GreedyCost the funding probabilities of more projects are already subjected to substantial change for a small resampling probability. 
However, for \GreedyCost there is no clear trade-off between projects as visible for \GreedyCost.

\begin{figure}[t!]
	\centering 
	\begin{subfigure}{0.3\textwidth}
 \centering
		\includegraphics[width=0.8\textwidth]{./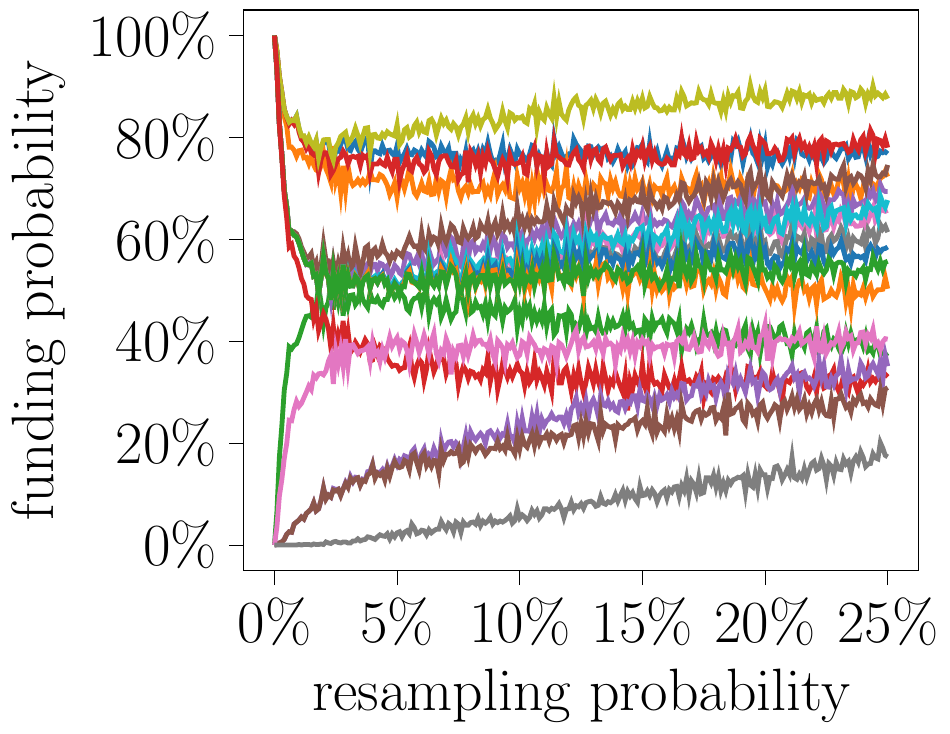}
		\caption{Sielce 2019 (Warszawa) for \Greedy}\label{fig:diffbeh1}
	\end{subfigure}	\hfill
	\begin{subfigure}{0.3\textwidth}
 \centering
		\includegraphics[width=0.8\textwidth]{./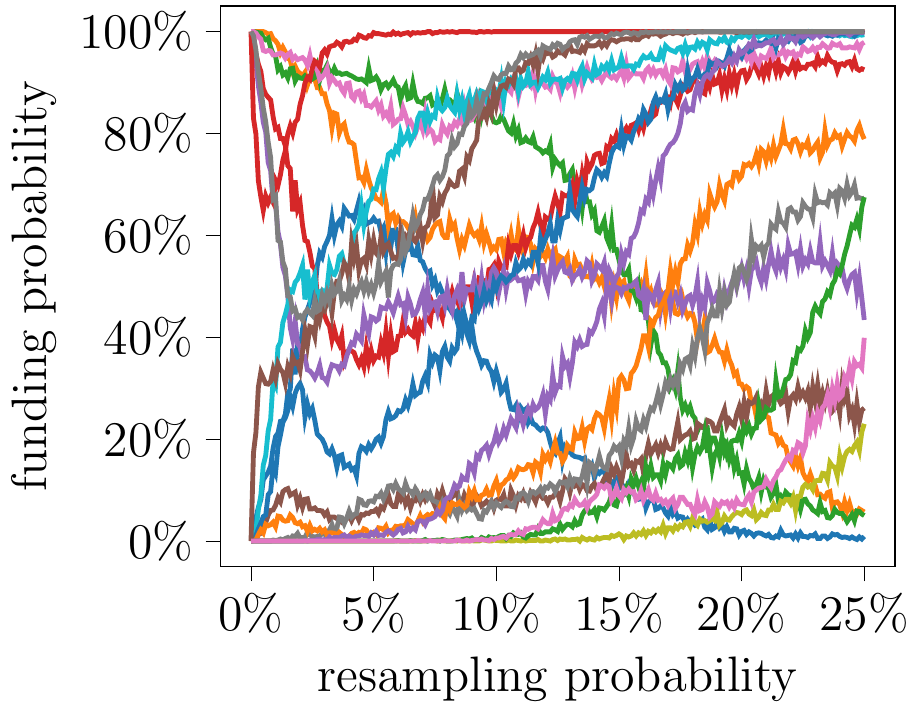}
		\caption{Ochota 2023 (Warszawa) for \Phragmen }\label{fig:diffbeh2}
	\end{subfigure} \hfill
	\begin{subfigure}{0.3\textwidth}
 \centering
		\includegraphics[width=0.8\textwidth]{./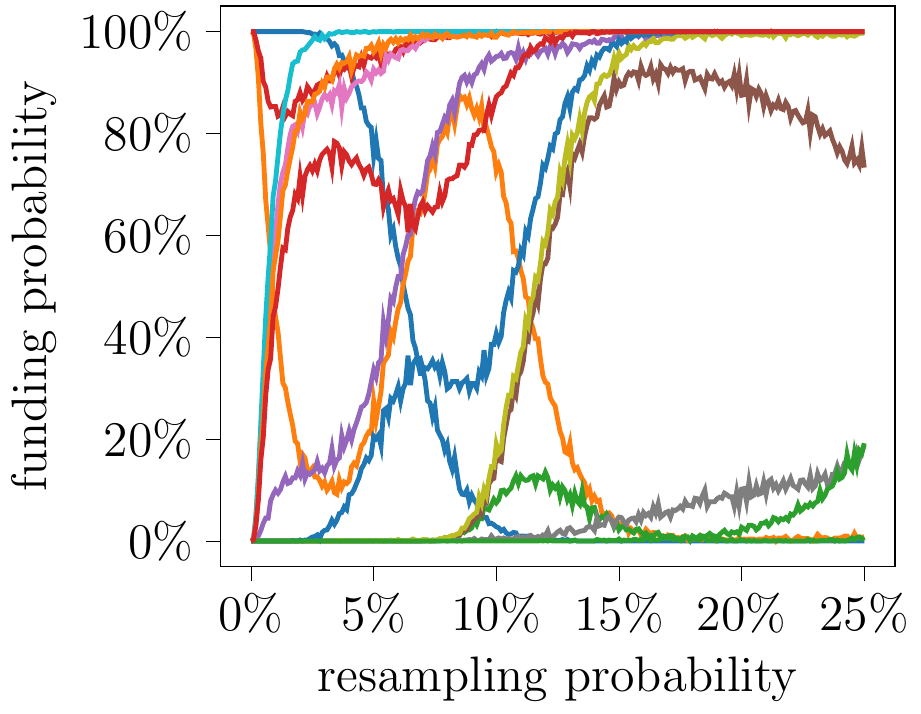}
		\caption{Ochota 2023 (Warszawa) for \GreedyCost}\label{fig:diffbeh3}
	\end{subfigure}
	\caption{Exemplaric instances where different rules produce different outcomes. On Sielce 2019, all rules except \Greedy have a $50\%$-winner threshold greater than $25\%$ and on Ochota 2023 \MES and \Greedy have a 50\%-winner threshold   greater than  $25\%$. }\label{fig:diffbeh} 
\end{figure}

The behavior of our rules in \Cref{plot:Wawer2020} is somewhat in line
with our general observations from above, e.g., in that \Greedy
produces the most, and \MES the least robust results.  However, there
are also examples differing from this trend.  For instance, in
\Cref{fig:diffbeh1} we show the results of the \Greedy rule on the
instance from Sielce in 2019.  On this instance, the funding decisions
for many of the projects were quite non-robust with drastic changes
happening already for small values of the resampling probability.  In
contrast, the outcome returned by the other three rules is very
robust: Within the considered range of the resampling probability
between $0\%$ and $25\%$ for \GreedyCost/\Phragmen/\MES the maximum
change of the funding probability of one project is $1\%/5\%/20\%$.
The reason for the non-robustness of the outcome of the \Greedy rule
on this instance is that there were multiple projects with roughly the
same number of approvals, i.e., $9$ projects with between $290$ and
$303$ approvals, which leads to many possibilities for how small
changes can alter the outcome.  Part of the reason why the outcome for
the other three rules is more robust is that these projects with
roughly the same number of approvals have significantly varying costs,
which, for instance, implies that the ``gaps'' between projects in the
ordering for \GreedyCost are quite large.  This highlights that while
\Greedy generally produces the most robust results, there are also
clear exceptions.  Moreover, there are also examples where \Phragmen
and \GreedyCost lead to substantially less robust outcomes than the
other rules: In \Cref{fig:diffbeh2,fig:diffbeh3} we see the behavior
of \Phragmen and \GreedyCost on the instance from Ochota 2023, where
the funding probabilities of multiple projects drastically change in
case of few random changes.  In contrast, on this instance \Greedy and
\MES produce much more robust results, as both have a $50\%$-winner
threshold larger than $25\%$.

\begin{figure}[t!]
	\centering 
	\begin{subfigure}{0.32\textwidth}
		\resizebox{\textwidth}{!}{\begin{tikzpicture}

\definecolor{darkgray176}{RGB}{176,176,176}

\begin{axis}[
xmode=log,
tick align=outside,
tick pos=left,
x grid style={darkgray176},
xmin=-4231.55, xmax=93878.55,
xtick style={color=black},
y grid style={darkgray176},
ymin=-0.002, ymax=0.262,
ytick style={color=black},
ytick={0,0.05,0.1,0.15,0.2,0.25},
yticklabels={0\%,5\%,10\%,15\%,20\%,25\%},
	ytick style={color=black},every tick label/.append style={font=\Large}, 
label style={font=\Large},
xlabel={number of voters},
ylabel={50\%-winner threshold}
]
\addplot [draw=black, fill=black, mark=*, only marks]
table{%
x  y
1147 0.03
580 0.07
1119 0.11
743 0.11
1551 0.09
1909 0.19
791 0.08
62529 0.01
1470 0.18
777 0.05
397 0.09
5045 0.02
510 0.14
10568 0.08
5783 0.15
2358 0.08
826 0.05
2361 0.02
1515 0.07
972 0.14
4392 0.15
851 0.18
1063 0.14
7493 0.19
2119 0.08
2827 0.05
1172 0.04
1302 0.2
6260 0.12
9173 0.03
1357 0.12
2680 0.02
11402 0.05
5765 0.05
1572 0.07
4377 0.21
7683 0.06
2149 0.15
752 0.07
8171 0.17
3990 0.01
6134 0.01
10756 0.03
8003 0.02
2468 0.04
7405 0.07
4519 0.02
1205 0.21
1147 0.2
3628 0.15
2220 0.14
2701 0.04
1288 0.01
4197 0.03
2967 0.07
4255 0.01
6603 0.01
3197 0.14
1487 0.02
3007 0.03
1506 0.22
1134 0.09
2784 0.08
861 0.16
48839 0.06
5180 0.03
5867 0.17
11063 0.03
4313 0.07
7477 0.03
2751 0.19
494 0.15
6760 0.08
2082 0.01
586 0.06
1746 0.08
1055 0.23
40092 0.09
1733 0.05
545 0.11
1601 0.08
2093 0.07
9375 0.01
4930 0.08
89419 0.17
1811 0.13
1517 0.22
67103 0.07
674 0.07
10105 0.08
1006 0.05
3371 0.12
3157 0.03
53801 0.14
1448 0.06
1107 0.15
2333 0.04
585 0.02
4641 0.01
1518 0.07
1078 0.06
1189 0.08
5201 0.02
4400 0.2
5220 0.05
2006 0.01
1472 0.02
80643 0.03
2777 0.17
2260 0.06
1774 0.01
5452 0.01
2242 0.13
4940 0.06
1684 0.03
5495 0.02
88745 0.11
8647 0.01
3112 0.07
1622 0.25
1779 0.05
1588 0.24
1560 0.25
5091 0.01
856 0.19
5349 0.08
1438 0.06
2413 0.02
3507 0.15
5438 0.02
1158 0.11
976 0.1
1735 0.02
2374 0.18
512 0.03
6672 0.01
2438 0.07
2106 0.03
689 0.02
2157 0.01
4956 0.01
1326 0.16
1717 0.23
1069 0.21
5318 0.12
1975 0.22
1578 0.01
4091 0.14
1275 0.1
1751 0.05
5520 0.12
1667 0.21
3045 0.24
2614 0.01
3819 0.17
5624 0.08
11332 0.1
4697 0.02
3779 0.07
8172 0.02
2818 0.01
4662 0.01
228 0.07
1391 0.07
2948 0.03
3995 0.11
1485 0.25
1534 0.06
793 0.22
3005 0.01
1131 0.02
2600 0.03
4849 0.03
3495 0.04
457 0.17
6828 0.14
3883 0.11
1175 0.01
5142 0.18
12546 0.1
713 0.19
1320 0.04
2207 0.16
2136 0.01
4262 0.03
2736 0.12
1356 0.03
};
\end{axis}

\end{tikzpicture}}
		\caption{Number of votes}
	\end{subfigure}\hfill
	\begin{subfigure}{0.32\textwidth}
		\resizebox{\textwidth}{!}{\begin{tikzpicture}

\definecolor{darkgray176}{RGB}{176,176,176}

\begin{axis}[
tick align=outside,
tick pos=left,
x grid style={darkgray176},
xmin=-5.1, xmax=151.1,
xtick style={color=black},
y grid style={darkgray176},
ymin=-0.002, ymax=0.262,
ytick style={color=black},
ytick={0,0.05,0.1,0.15,0.2,0.25},
yticklabels={0\%,5\%,10\%,15\%,20\%,25\%},
	ytick style={color=black},every tick label/.append style={font=\Large}, 
label style={font=\Large},
xlabel={number of projects},
ylabel={50\%-winner threshold}
]
\addplot [draw=black, fill=black, mark=*, only marks]
table{%
x  y
15 0.03
19 0.07
23 0.11
12 0.11
20 0.09
4 0.19
26 0.08
50 0.01
23 0.18
21 0.05
8 0.09
111 0.02
17 0.14
106 0.08
17 0.15
35 0.08
15 0.05
9 0.02
41 0.07
24 0.14
81 0.15
15 0.18
11 0.14
46 0.19
37 0.08
34 0.05
38 0.04
11 0.2
64 0.12
72 0.03
26 0.12
39 0.02
14 0.05
57 0.05
29 0.07
53 0.21
58 0.06
24 0.15
14 0.07
43 0.17
42 0.01
34 0.01
113 0.03
108 0.02
54 0.04
79 0.07
72 0.02
15 0.21
14 0.2
75 0.15
43 0.14
29 0.04
66 0.01
33 0.03
20 0.07
65 0.01
109 0.01
38 0.14
40 0.02
59 0.03
17 0.22
19.9985 0.09
31 0.08
21 0.16
18 0.06
83 0.03
48 0.17
110 0.03
22 0.07
44 0.03
21 0.19
11 0.15
67 0.08
33.9985 0.01
5 0.06
14 0.08
5 0.23
11 0.09
5 0.05
8 0.11
39 0.08
18 0.07
12 0.01
54 0.08
76 0.17
53 0.13
31 0.22
52 0.07
4 0.07
121 0.08
25 0.05
40 0.12
32 0.03
39 0.14
22 0.06
21 0.15
41 0.04
17 0.02
55 0.01
27 0.07
29 0.06
11 0.08
62 0.02
15 0.2
48 0.05
18 0.01
21 0.02
144 0.03
26 0.17
19 0.06
44 0.01
137 0.01
32 0.13
87 0.06
17 0.03
91 0.02
123 0.11
107 0.01
28 0.07
17 0.25
31 0.05
8 0.24
27 0.25
79 0.01
12 0.19
24 0.08
24 0.06
16.9985 0.02
44 0.15
88 0.02
13 0.11
15 0.1
18 0.02
45 0.18
19 0.03
106.9985 0.01
35 0.07
31.9985 0.03
6 0.02
14 0.01
98 0.01
20.9985 0.16
29 0.23
18 0.21
25 0.12
15 0.22
25 0.01
21 0.14
6 0.1
27 0.05
12 0.12
27 0.21
5 0.24
90 0.01
32 0.17
12 0.08
12 0.1
67 0.02
51 0.07
132 0.02
91 0.01
100 0.01
2 0.07
27.9985 0.07
68 0.03
41 0.11
15 0.25
26 0.06
16.9985 0.22
24 0.01
30 0.02
43 0.03
63 0.03
48 0.04
17 0.17
23.9985 0.14
47 0.11
17 0.01
12 0.18
8 0.1
15 0.19
44 0.04
34 0.16
23 0.01
23 0.03
39 0.12
22 0.03
};
\end{axis}

\end{tikzpicture}}
		\caption{Number of projects}
	\end{subfigure}\hfill
	\begin{subfigure}{0.32\textwidth}
		\resizebox{\textwidth}{!}{\begin{tikzpicture}

\begin{axis}[
xmode=log,
tick align=outside,
tick pos=left,
x grid style={white!69.0196078431373!black},
xmin=0, xmax=16794146.85,
xtick style={color=black},
y grid style={white!69.0196078431373!black},
ymin=-0.002, ymax=0.262,
ytick style={color=black},
ytick={0,0.05,0.1,0.15,0.2,0.25},
yticklabels={0\%,5\%,10\%,15\%,20\%,25\%},
	ytick style={color=black},every tick label/.append style={font=\Large}, 
label style={font=\Large},
xlabel={budget},
ylabel={50\%-winner threshold}
]
\addplot [draw=black, fill=black, mark=*, only marks]
table{%
x  y
505000 0.07
3906234 0.03
550000 0.2
143007 0.07
750000 0.18
1330900 0.06
1005065 0.01
1070000 0.01
831114 0.13
3087978 0.08
694000 0.01
5663326 0.08
411000 0.02
3989345 0.17
4398029 0.07
1700000 0.07
3200000 0.05
972679 0.05
16000000 0.17
935751 0.04
900000 0.08
1300000 0.25
361677 0.1
6067849 0.12
117063 0.11
831114 0.02
2030000 0.1
1994673 0.01
310000 0.1
898387 0.16
1200000 0.22
750000 0.12
600000 0.05
770039 0.06
450087 0.01
1590777 0.04
736557 0.12
735616 0.02
600000 0.03
750000 0.24
791850 0.08
1323000 0.24
1500000 0.01
1719224 0.14
4321791 0.02
1535183 0.17
550560 0.19
2807253 0.02
459900 0.21
721000 0.25
881160 0.12
725000 0.03
462911 0.14
152813 0.17
336250 0.18
1130000 0.04
1415000 0.01
2160895 0.03
2493341 0.01
1700000 0.02
136877 0.02
4753148 0.03
2493340.9985 0.01
700000 0.15
1000000 0.15
521100 0.05
2432952 0.01
310000 0.19
625320 0.15
200000 0.07
521100 0.01
2700000 0.02
1058150 0.08
1846775 0.07
750000 0.14
2000000 0.06
4865905 0.01
850000 0.06
710000 0.23
5614506 0.01
300000 0.23
500000 0.01
935751 0.01
9000000 0.11
870000 0.02
1100000 0.21
750000 0.08
1900000 0.19
475000 0.11
400000 0.14
827525 0.05
827838 0.09
753000 0.07
487500 0.02
625320 0.22
1310000 0.05
363734 0.05
750000 0.2
527500 0.07
9000000 0.03
1800000 0.07
300000 0.01
1662227 0.02
1000000 0.09
583109 0.14
1412893 0.18
1011308 0.04
969245 0.07
4000000 0.14
2160894.9985 0.03
1000000 0.07
1516962 0.08
900000 0.17
1100000 0.11
300000 0.08
216829 0.19
1000000 0.03
487500 0.16
4986682 0.08
5900907 0.08
4321790.9985 0.02
475000 0.05
4000000 0.01
124802 0.09
1700000 0.25
2629401 0.01
4854279 0.03
4072457 0.02
800000 0.21
4654236 0.01
936760 0.11
999999.9985 0.07
4500000 0.07
831114 0.07
730000 0.01
1039053 0.12
122468 0.14
776314 0.03
1000000 0.04
804828 0.03
150000 0.07
3823123 0.14
999999.9985 0.03
1045538 0.21
2245802 0.15
1500000 0.05
487500 0.07
3033924 0.01
4000000 0.09
700000 0.22
900000 0.22
3337317 0.15
1246670 0.03
1000000 0.05
327724 0.19
749999.9985 0.08
1037382 0.17
850340 0.08
400000 0.15
1026000 0.06
550000 0.08
4072456.9985 0.02
1200000 0.01
300000 0.02
4585180 0.06
3906233.9985 0.03
616000 0.03
740793 0.18
250000 0.06
1096900 0.16
2427140 0.02
1091081 0.01
1117635 0.12
5258802 0.01
309914 0.06
763500 0.11
4652017 0.11
310000 0.2
1011464 0.06
1300000 0.03
827838 0.13
1300000 0.02
394076 0.04
1227000 0.03
567000 0.03
1700000 0.1
625319.9985 0.15
};
\end{axis}

\end{tikzpicture}}
		\caption{Budget}
	\end{subfigure}\hfill
	\begin{subfigure}{0.32\textwidth}
		\resizebox{\textwidth}{!}{\begin{tikzpicture}
\begin{axis}[
tick align=outside,
tick pos=left,
x grid style={white!69.0196078431373!black},
xmin=0.39538582259534, xmax=13.6638977254979,
xtick style={color=black},
y grid style={white!69.0196078431373!black},
ymin=-0.002, ymax=0.262,
ytick style={color=black},
ytick={0,0.05,0.1,0.15,0.2,0.25},
yticklabels={0\%,5\%,10\%,15\%,20\%,25\%},
	ytick style={color=black},every tick label/.append style={font=\Large}, 
label style={font=\Large},
xlabel={average vote length},
ylabel={50\%-winner threshold}
]
\addplot [draw=black, fill=black, mark=*, only marks]
table{%
x  y
5.41709511568123 0.07
10.5315917924614 0.03
3.27027027027027 0.2
2.74137931034483 0.07
1 0.18
4.58831710709318 0.06
3.91048252911814 0.01
9.19500480307397 0.01
8.6322473771397 0.13
10.7239350912779 0.08
5.82074408117249 0.01
10.3467455621302 0.08
1.51959361393324 0.02
10.1307312084541 0.17
9.99527346387576 0.07
5.03014143287735 0.07
6.36270598438855 0.05
8.34665723381677 0.05
1 0.17
8.88787878787879 0.04
6.14538558786346 0.08
4.9 0.25
5.01331967213115 0.1
11.5223642172524 0.12
1.83302752293578 0.11
7.67854741089442 0.02
1.52080344332855 0.1
9.03243243243243 0.01
1 0.1
6.47621205256004 0.16
5.08167330677291 0.22
1 0.12
3.56885456885457 0.05
5.36504424778761 0.06
5.35832794656324 0.01
9.69570502431118 0.04
7.25350036845984 0.12
7.77173913043478 0.02
4.1328125 0.03
1 0.24
7.62460961898813 0.08
2.94647355163728 0.24
7.76215538847118 0.01
9.50900900900901 0.14
10.4653695545766 0.02
6.99106596499816 0.17
6.9058524173028 0.19
8.89772051536174 0.02
3.39569691300281 0.21
5.23551171393342 0.25
6.66630116959064 0.12
6.18346437931856 0.03
3.12699905926623 0.14
5.65864332603939 0.17
5.47003525264395 0.18
8.55831173639393 0.04
6.60172807303554 0.01
10.9005653475224 0.03
9.27498927498928 0.01
5.46380597014925 0.02
4.97777777777778 0.02
9.79624986373051 0.03
8.71955245781365 0.01
1 0.15
4.49744067007911 0.15
7.17818999437886 0.05
11.3618974751339 0.01
1 0.19
4.73139435414885 0.15
1 0.07
3.080259222333 0.01
11.6764083830033 0.02
6.01005747126437 0.08
6.83824586628325 0.07
1 0.14
13.0607835480932 0.06
10.6351658337119 0.01
5.0599739243807 0.06
6.07105416423995 0.23
11.58603117506 0.01
1 0.23
1 0.01
10.1110248447205 0.01
1 0.11
1 0.02
7.9370125974805 0.21
1 0.08
7.61110369678366 0.19
5.4763181411975 0.11
6.46913580246914 0.14
5.39065606361829 0.05
4.38297872340426 0.09
2.58274351196495 0.07
6.87267904509284 0.02
4.36065573770492 0.22
1.26916330468339 0.05
3.24455205811138 0.05
1 0.2
5.04521276595745 0.07
1 0.03
7.89706271500397 0.07
0.9985 0.01
10.3519267617628 0.02
6.24691358024691 0.09
5.11341970178441 0.14
8.67270429654591 0.18
9.30972696245734 0.04
5.9866220735786 0.07
1.87643352354045 0.14
10.2615332428765 0.03
7.27973748974569 0.07
9.58821034775233 0.08
5.08641005498822 0.17
4.01615074024226 0.11
0.9985 0.08
3.61851332398317 0.19
6.95312005068103 0.03
4.64808362369338 0.16
11.3603166749134 0.08
11.1444928084784 0.08
8.4850680994627 0.02
2.74610502019619 0.05
1.79054518703322 0.01
2.49370277078086 0.09
5.78855218855219 0.25
11.5092264017033 0.01
10.7969111969112 0.03
9.93375796178344 0.02
4.18174273858921 0.21
10.6720249797618 0.01
5.19574468085106 0.11
5.93575063613232 0.07
1.77628421978153 0.07
8.03960396039604 0.07
5.05640050697085 0.01
3.59007145543437 0.12
3.45294117647059 0.14
5.46199524940618 0.03
7.90570081440206 0.04
5.46153846153846 0.03
0.9985 0.07
7.69033468877072 0.14
6.41616828212002 0.03
5.51336531871145 0.21
10.2841786108049 0.15
8.07766990291262 0.05
4.52635046113307 0.07
8.28403064230996 0.01
1 0.09
6.67633487145682 0.22
1 0.22
11.549635701275 0.15
8.33576923076923 0.03
7.5051724137931 0.05
3.9053738317757 0.19
1.0015 0.08
6.88728844076341 0.17
8.93440302029259 0.08
4.28744939271255 0.15
7.71270718232044 0.06
3.16151202749141 0.08
8.77951452739978 0.02
7.01617021276596 0.01
2.38541234977207 0.02
9.43157894736842 0.06
9.16706954258088 0.03
5.00784655623365 0.03
4.83537414965986 0.18
1 0.06
3.68325791855204 0.16
9.4080548793981 0.02
4.24812734082397 0.01
4.79916938593889 0.12
11.3999192897498 0.01
4.99628942486085 0.06
2.77029360967185 0.11
9.02858614473345 0.11
0.9985 0.2
2.67235494880546 0.06
5.39601769911504 0.03
6.56868867082962 0.13
3.5342939481268 0.02
5.71530758226037 0.04
5.14618162364585 0.03
2.384561238855 0.03
1.31238969290505 0.1
3.8970189701897 0.15
};
\end{axis}

\end{tikzpicture}}
		\caption{Average number of approved projects per voter}
	\end{subfigure}\qquad \qquad \qquad 
	\begin{subfigure}{0.32\textwidth}
		\resizebox{\textwidth}{!}{\begin{tikzpicture}

\begin{axis}[
xmode=log,
tick align=outside,
tick pos=left,
x grid style={white!69.0196078431373!black},
xmin=-161.745454545455, xmax=3825.98787878788,
xtick style={color=black},
y grid style={white!69.0196078431373!black},
ymin=-0.002, ymax=0.262,
ytick style={color=black},
ytick={0,0.05,0.1,0.15,0.2,0.25},
yticklabels={0\%,5\%,10\%,15\%,20\%,25\%},
	ytick style={color=black},every tick label/.append style={font=\Large}, 
label style={font=\Large},
xlabel={ratio},
ylabel={50\%-winner threshold}
]
\addplot [draw=black, fill=black, mark=*, only marks]
table{%
x  y
111.142857142857 0.07
100.572727272727 0.03
81.9285714285714 0.2
30.5263157894737 0.07
428.5 0.18
59.9166666666667 0.06
125.208333333333 0.01
61.2352941176471 0.01
34.1698113207547 0.13
91.2962962962963 0.08
40.3181818181818 0.01
100.89552238806 0.08
114.833333333333 0.02
122.229166666667 0.17
93.7341772151899 0.07
196.045454545455 0.07
101.140350877193 0.05
83.1470588235294 0.05
1176.56578947368 0.17
30 0.04
30.4230769230769 0.08
57.7777777777778 0.25
65.0666666666667 0.1
97.8125 0.12
68.125 0.11
37.175 0.02
1568.25 0.1
65.4615384615385 0.01
212.5 0.1
64.9117647058823 0.16
88.5882352941177 0.22
460 0.12
37 0.05
118.947368421053 0.06
84.3818181818182 0.01
45.7037037037037 0.04
52.1923076923077 0.12
70.0952380952381 0.02
26.9473684210526 0.03
609 0.24
41.0512820512821 0.08
198.5 0.24
95 0.01
51.6279069767442 0.14
61.9090909090909 0.02
190.023255813953 0.17
131 0.19
45.4504504504505 0.02
59.3888888888889 0.21
95.4117647058823 0.25
70.1538461538462 0.12
127.181818181818 0.03
96.6363636363636 0.14
26.8823529411765 0.17
56.7333333333333 0.18
93.1379310344828 0.04
180.411764705882 0.01
50.9661016949153 0.03
46.62 0.01
68.7179487179487 0.02
34.4117647058824 0.02
127.402777777778 0.03
39.7956204379562 0.01
340.176470588235 0.15
89.5416666666667 0.15
57.3870967741936 0.05
29.0444444444444 0.01
477.25 0.19
79.7045454545455 0.15
114 0.07
111.444444444444 0.01
83.8870967741936 0.02
89.8064516129032 0.08
49.6785714285714 0.07
284.5 0.14
132.465517241379 0.06
60.5779816513761 0.01
59 0.06
59.2068965517241 0.23
62.3551401869159 0.01
211 0.23
781.25 0.01
19.5151515151515 0.01
721.50406504065 0.11
262.333333333333 0.02
61.7407407407407 0.21
222.875 0.08
162.891304347826 0.19
48.6521739130435 0.11
40.5 0.14
40.24 0.05
77.55 0.09
148.35 0.07
37.7 0.02
46.6470588235294 0.22
814.428571428571 0.05
55.0666666666667 0.05
293.333333333333 0.2
53.7142857142857 0.07
560.020833333333 0.03
74.0980392156863 0.07
154.071428571429 0.01
70.1044776119403 0.02
56.7 0.09
194.809523809524 0.14
52.7555555555556 0.18
30.8421052631579 0.04
116.277777777778 0.07
1379.51282051282 0.14
43.3529411764706 0.03
69.6571428571429 0.07
67.3714285714286 0.08
119.34375 0.17
61.9166666666667 0.11
108.090909090909 0.08
47.5333333333333 0.19
98.65625 0.03
41 0.16
83.5123966942149 0.08
99.6981132075472 0.08
74.1018518518518 0.02
346.6 0.05
1250.58 0.01
49.625 0.09
99 0.25
30.967032967033 0.01
62.4096385542169 0.03
60.3846153846154 0.02
80.3333333333333 0.21
80.8130841121495 0.01
97.4390243902439 0.11
54.2068965517241 0.07
1290.44230769231 0.07
36.9512195121951 0.07
63.12 0.01
212.72 0.12
30 0.14
99.0588235294118 0.03
56.9024390243902 0.04
65.8125 0.03
168.5 0.07
84.1315789473684 0.14
76.968253968254 0.03
82.5849056603774 0.21
48.3733333333333 0.15
64.8518518518518 0.05
56.2222222222222 0.07
64.4430379746835 0.01
3644.72727272727 0.09
48.9354838709677 0.22
131.666666666667 0.22
54.2222222222222 0.15
60.4651162790698 0.03
108.75 0.05
71.3333333333333 0.19
468.666666666667 0.08
106.807692307692 0.17
57.2702702702703 0.08
44.9090909090909 0.15
65.8181818181818 0.06
124.714285714286 0.08
61.7954545454545 0.02
69.1176470588235 0.01
141.941176470588 0.02
56.7816091954023 0.06
95.1858407079646 0.03
76.4666666666667 0.03
63.9130434782609 0.18
2713.27777777778 0.06
63.1428571428571 0.16
62.7638888888889 0.02
92.8695652173913 0.01
84.275 0.12
50.5714285714286 0.01
37.1724137931034 0.06
89.0769230769231 0.11
82.6170212765958 0.11
118.363636363636 0.2
117.2 0.06
61.6363636363636 0.03
70.0625 0.13
96.3888888888889 0.02
72.8125 0.04
169.931818181818 0.03
185.304347826087 0.03
944.333333333333 0.1
52.7142857142857 0.15
};
\end{axis}

\end{tikzpicture}}
		\caption{Ratio between the number of voters and projects}
	\end{subfigure}
	\caption{Correlation between 50\%-winner threshold and different quantities for \MES (for all instances with a $50\%$-winner threshold smaller or equal to $25\%$).}
	\label{fig:properties_instances}
\end{figure}
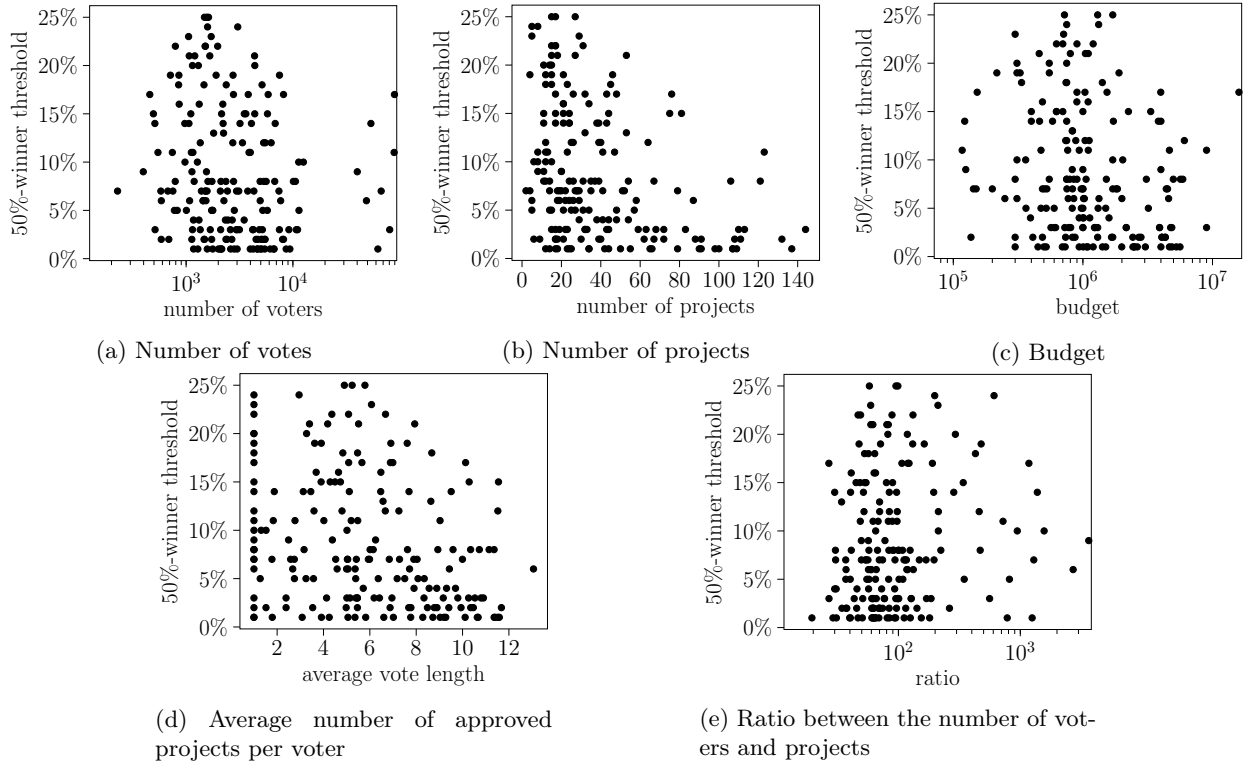

\subsection{Properties of (Non)-robust Outcomes} \label{sub:properties}

Previously, we have observed that in some real-world participatory budgeting instances funding decisions are very robust, whereas in others they might quickly be reverted if some random noise is added. 
Similarly, within one instance, some initially funded projects remain to be funded even if a substantial level of noise is introduced, whereas for others their funding probability quickly change. 
This raises the question of what are properties of instances or projects that contribute to them being (non)-robust.
We conduct a first investigation of this issue in the following, where we examine instances first and afterwards turn to projects.

\subsubsection{Instances.}
To identify properties of instances with non-robust outcomes, we again measure the robustness of an instance via the $50\%$-winner threshold.
In \Cref{fig:properties_instances}, we show the correlation between the robustness of an instance and five fundamental properties of instances: 
\begin{enumerate*}[label=(\roman*)]
\item the number of voters, 
\item the number of projects,
\item the total budget,
\item the average number of approved projects per voter, and 
\item the ratio between the number of voters and projects. 
\end{enumerate*}
In these scatterplots, each instance with a $50\%$-winner threshold smaller or equal to $25\%$ is represented by a point whose $x$-coordinate specifies the value of the properties and whose $y$-coordinate gives the instance's $50\%$-winner threshold. 
We exemplarily focus on \MES here; the general picture for all other rules is similar, yet there are naturally fewer points in the plots for the other rules, as fewer instances have a small $50\%$-winner threshold.
For all five examined properties, we see no positive correlation between it and the robustness of the instance (in fact the Pearson correlation coefficient for the depicted properties ranges between $-0.2$ and $0.2$). 
In particular, instances with non-robust outcomes do not typically fall into a specific range of the value of one of these properties. 
This observation that none of the five properties seems to have a clear influence on the robustness of outcomes might be viewed as partly surprising, as, e.g., one might expect that---in case of many projects or a small ratio of voters to projects---the difference between the approval score of projects becomes smaller, implying that decisions might be less robust. 
However, it seems that what makes outcomes of instances more or less robust is more subtle than one of these five quantities. 
This indicates that it might actually be necessary to run simulations as described in this paper to detect in which instances outcomes are robust to random noise and in which they are not (instead of resorting to measuring some simpler property of an instance).

\begin{figure}[t!]
	\centering 
	\begin{subfigure}{0.32\textwidth}
		\resizebox{\textwidth}{!}{\begin{tikzpicture}[every plot/.append style={line width=2.5pt}]

\definecolor{color0}{rgb}{1,0.549019607843137,0}
\definecolor{color1}{rgb}{0.133333333333333,0.545098039215686,0.133333333333333}
\definecolor{color2}{rgb}{0.117647058823529,0.564705882352941,1}

\begin{axis}[
legend columns=2, 
legend cell align={left},
legend style={
  fill opacity=0.8,
  draw opacity=1,
  draw=none,
  text opacity=1,
  at={(0.5,1.45)},
  line width=1.5pt,
  anchor=north,
   /tikz/column 2/.style={
  	column sep=10pt,
  }, font=\Large
},
legend image post style={line width =4.5pt},
legend entries={\Greedy,
	highest,
	\GreedyCost,
	lowest, 
	\Phragmen, {\phantom{a}},
	\MES},
tick align=outside,
tick pos=left,
x grid style={white!69.0196078431373!black},
xlabel={resampling probability},
xmin=0, xmax=0.25,
xtick style={color=black},
xtick={0,0.05,0.1,0.15,0.2,0.25},
xticklabels={0\%,5\%,10\%,15\%,20\%,25\%},
ytick={0.9,0.92,0.94,0.96,0.98,1},
yticklabels={90\%,92\%,94\%,96\%,98\%,100\%},
y grid style={white!69.0196078431373!black},
ylabel={funding probability},
ymin=0.89, ymax=1,
ytick style={color=black},
ytick style={color=black},every tick label/.append style={font=\Large}, 
label style={font=\Large}
]
\addlegendimage{red!54.5098039215686!black}
\addlegendimage{gray,dotted}
\addlegendimage{color0}
\addlegendimage{gray}
\addlegendimage{color1}
\addlegendimage{white,dashed}
\addlegendimage{color2}
\addplot [semithick, red!54.5098039215686!black]
table {%
0 0.999999999999995
0.01 0.976130434782605
0.02 0.969260869565213
0.03 0.962195652173909
0.04 0.957869565217388
0.05 0.952347826086953
0.06 0.948391304347822
0.07 0.946521739130431
0.08 0.943782608695649
0.09 0.940739130434779
0.1 0.936021739130432
0.11 0.932717391304345
0.12 0.931630434782605
0.13 0.927630434782605
0.14 0.924956521739127
0.15 0.920130434782605
0.16 0.919499999999997
0.17 0.916434782608692
0.18 0.912673913043476
0.19 0.911543478260867
0.2 0.909608695652171
0.21 0.905086956521736
0.22 0.904913043478258
0.23 0.901934782608693
0.24 0.896934782608693
0.25 0.894239130434781
};
\addplot [semithick, red!54.5098039215686!black, dotted]
table {%
0 0.999999999999995
0.01 0.999891304347821
0.02 0.999847826086952
0.03 0.999304347826082
0.04 0.999108695652169
0.05 0.999173913043473
0.06 0.998652173913039
0.07 0.998652173913039
0.08 0.998304347826082
0.09 0.998086956521734
0.1 0.99802173913043
0.11 0.997956521739126
0.12 0.997956521739126
0.13 0.997695652173908
0.14 0.997326086956517
0.15 0.997043478260865
0.16 0.996695652173908
0.17 0.99652173913043
0.18 0.99647826086956
0.19 0.996369565217387
0.2 0.995326086956517
0.21 0.995739130434778
0.22 0.995391304347821
0.23 0.99552173913043
0.24 0.994760869565213
0.25 0.994413043478256
};

\addplot [semithick, color0]
table {%
0 0.999999999999995
0.01 0.995130434782604
0.02 0.993413043478256
0.03 0.99197826086956
0.04 0.989760869565212
0.05 0.987739130434778
0.06 0.985130434782604
0.07 0.983086956521734
0.08 0.981782608695647
0.09 0.980478260869561
0.1 0.979326086956517
0.11 0.979326086956517
0.12 0.978369565217386
0.13 0.977673913043474
0.14 0.976913043478256
0.15 0.975739130434778
0.16 0.974260869565213
0.17 0.97297826086956
0.18 0.971347826086952
0.19 0.970608695652169
0.2 0.969369565217387
0.21 0.968478260869561
0.22 0.967413043478256
0.23 0.966869565217387
0.24 0.965978260869561
0.25 0.964630434782604
};
\addplot [semithick, color0, dotted]
table {%
0 0.999999999999995
0.01 0.997913043478256
0.02 0.995282608695647
0.03 0.992804347826082
0.04 0.990239130434778
0.05 0.987760869565213
0.06 0.986478260869561
0.07 0.983739130434778
0.08 0.9805652173913
0.09 0.976739130434778
0.1 0.974043478260865
0.11 0.971326086956517
0.12 0.968086956521734
0.13 0.964999999999996
0.14 0.962913043478256
0.15 0.959543478260865
0.16 0.956826086956517
0.17 0.955152173913039
0.18 0.9525652173913
0.19 0.950978260869561
0.2 0.94952173913043
0.21 0.947086956521735
0.22 0.945456521739126
0.23 0.9445652173913
0.24 0.943282608695648
0.25 0.941478260869561
};

\addplot [semithick, color1]
table {%
0 0.999999999999995
0.01 0.996869565217387
0.02 0.993782608695647
0.03 0.992173913043473
0.04 0.990630434782604
0.05 0.988434782608691
0.06 0.987347826086952
0.07 0.985782608695647
0.08 0.982739130434778
0.09 0.980826086956517
0.1 0.979869565217387
0.11 0.979173913043474
0.12 0.977760869565213
0.13 0.976304347826082
0.14 0.975499999999995
0.15 0.973978260869561
0.16 0.971847826086952
0.17 0.9700652173913
0.18 0.969804347826082
0.19 0.968891304347821
0.2 0.967739130434778
0.21 0.966543478260865
0.22 0.966499999999995
0.23 0.966434782608691
0.24 0.965434782608691
0.25 0.964413043478256
};
\addplot [semithick, color1, dotted]
table {%
0 0.999999999999995
0.01 0.998956521739126
0.02 0.996804347826082
0.03 0.995630434782604
0.04 0.99397826086956
0.05 0.992173913043473
0.06 0.990391304347821
0.07 0.988478260869561
0.08 0.985260869565213
0.09 0.9830652173913
0.1 0.981608695652169
0.11 0.979347826086952
0.12 0.9775652173913
0.13 0.975152173913039
0.14 0.973869565217387
0.15 0.971608695652169
0.16 0.968695652173909
0.17 0.966956521739126
0.18 0.964239130434778
0.19 0.962239130434778
0.2 0.959239130434778
0.21 0.957891304347822
0.22 0.954499999999995
0.23 0.953086956521735
0.24 0.950826086956517
0.25 0.949239130434778
};

\addplot [semithick, color2]
table {%
0 0.999999999999995
0.01 0.978239130434778
0.02 0.971978260869561
0.03 0.967847826086952
0.04 0.963369565217387
0.05 0.959869565217387
0.06 0.957326086956517
0.07 0.953673913043474
0.08 0.951869565217387
0.09 0.949586956521735
0.1 0.948152173913039
0.11 0.945695652173909
0.12 0.944086956521735
0.13 0.942217391304344
0.14 0.940369565217388
0.15 0.938999999999996
0.16 0.935065217391301
0.17 0.934282608695648
0.18 0.9335652173913
0.19 0.932326086956518
0.2 0.929086956521735
0.21 0.92915217391304
0.22 0.926369565217388
0.23 0.927347826086953
0.24 0.924739130434779
0.25 0.924456521739127
};
\addplot [semithick, color2, dotted]
table {%
0 0.999999999999995
0.01 0.997304347826082
0.02 0.995673913043473
0.03 0.994065217391299
0.04 0.993217391304343
0.05 0.991543478260865
0.06 0.990999999999995
0.07 0.990086956521734
0.08 0.989043478260865
0.09 0.988282608695647
0.1 0.98752173913043
0.11 0.986760869565212
0.12 0.986304347826082
0.13 0.985108695652169
0.14 0.984934782608691
0.15 0.983891304347821
0.16 0.984217391304343
0.17 0.982760869565213
0.18 0.982391304347822
0.19 0.981543478260865
0.2 0.980913043478256
0.21 0.980326086956517
0.22 0.978826086956517
0.23 0.977739130434778
0.24 0.975891304347821
0.25 0.975217391304343
};
\end{axis}

\end{tikzpicture}}
		\caption{Number of approvals}\label{fig:properties_projects1}
	\end{subfigure}\hfill
	\begin{subfigure}{0.32\textwidth}
		\resizebox{\textwidth}{!}{\begin{tikzpicture}[every plot/.append style={line width=2.5pt}]

\definecolor{color0}{rgb}{1,0.549019607843137,0}
\definecolor{color1}{rgb}{0.133333333333333,0.545098039215686,0.133333333333333}
\definecolor{color2}{rgb}{0.117647058823529,0.564705882352941,1}

\begin{axis}[
legend columns=2, 
legend cell align={left},
legend style={
  fill opacity=0.8,
  draw opacity=1,
  draw=none,
  text opacity=1,
  at={(0.5,1.45)},
  line width=1.5pt,
  anchor=north,
   /tikz/column 2/.style={
  	column sep=10pt,
}, font=\Large
},
legend image post style={line width =4.5pt},
legend entries={\Greedy,
highest,
\GreedyCost,
lowest, 
\Phragmen, {\phantom{a}},
\MES},
tick align=outside,
tick pos=left,
x grid style={white!69.0196078431373!black},
xlabel={resampling probability},
xmin=0, xmax=0.25,
xtick style={color=black},
xtick={0,0.05,0.1,0.15,0.2,0.25},
xticklabels={0\%,5\%,10\%,15\%,20\%,25\%},
ytick={0.8,0.85,0.9,0.95,1},
yticklabels={80\%,85\%,90\%,95\%,100\%},
y grid style={white!69.0196078431373!black},
ylabel={funding probability},
ymin=0.77, ymax=1,
ytick style={color=black},every tick label/.append style={font=\Large}, 
label style={font=\Large}
]
\addlegendimage{red!54.5098039215686!black}
\addlegendimage{gray,dotted}
\addlegendimage{color0}
\addlegendimage{gray}
\addlegendimage{color1}
\addlegendimage{white,dashed}
\addlegendimage{color2}
\addplot [semithick, red!54.5098039215686!black]
table {%
0 0.999999999999995
0.01 0.995760869565213
0.02 0.993826086956517
0.03 0.991195652173908
0.04 0.989391304347822
0.05 0.987391304347822
0.06 0.986239130434778
0.07 0.985847826086952
0.08 0.984695652173909
0.09 0.983673913043474
0.1 0.982586956521735
0.11 0.981434782608691
0.12 0.981217391304343
0.13 0.979543478260865
0.14 0.978543478260865
0.15 0.976673913043474
0.16 0.977804347826083
0.17 0.975499999999996
0.18 0.975130434782604
0.19 0.975869565217387
0.2 0.973869565217387
0.21 0.97252173913043
0.22 0.972304347826083
0.23 0.970239130434779
0.24 0.969391304347822
0.25 0.968999999999996
};
\addplot [semithick, red!54.5098039215686!black, dotted]
table {%
0 0.999999999999995
0.01 0.996673913043474
0.02 0.994586956521734
0.03 0.992173913043474
0.04 0.990934782608691
0.05 0.989804347826083
0.06 0.987891304347822
0.07 0.98652173913043
0.08 0.986456521739126
0.09 0.984586956521735
0.1 0.983130434782604
0.11 0.983130434782605
0.12 0.981934782608691
0.13 0.981478260869561
0.14 0.979173913043474
0.15 0.977543478260866
0.16 0.976956521739126
0.17 0.975782608695648
0.18 0.973847826086953
0.19 0.973717391304344
0.2 0.971543478260866
0.21 0.971804347826083
0.22 0.970826086956518
0.23 0.968717391304344
0.24 0.968804347826083
0.25 0.96610869565217
};

\addplot [semithick, color0]
table {%
0 0.999999999999995
0.01 0.999760869565212
0.02 0.998782608695647
0.03 0.997956521739126
0.04 0.996586956521734
0.05 0.995086956521734
0.06 0.993847826086952
0.07 0.992565217391299
0.08 0.99097826086956
0.09 0.990065217391299
0.1 0.989173913043474
0.11 0.988739130434778
0.12 0.987652173913039
0.13 0.986543478260865
0.14 0.985347826086952
0.15 0.984130434782604
0.16 0.982543478260865
0.17 0.981326086956517
0.18 0.980086956521734
0.19 0.979217391304343
0.2 0.9785652173913
0.21 0.977478260869561
0.22 0.976260869565213
0.23 0.975999999999995
0.24 0.975499999999995
0.25 0.974239130434778
};
\addplot [semithick, color0, dotted]
table {%
0 0.999999999999995
0.01 0.991282608695647
0.02 0.983108695652169
0.03 0.975543478260865
0.04 0.966173913043474
0.05 0.958999999999996
0.06 0.952413043478257
0.07 0.946217391304344
0.08 0.936978260869561
0.09 0.927260869565214
0.1 0.920369565217387
0.11 0.91265217391304
0.12 0.902826086956518
0.13 0.893499999999996
0.14 0.885956521739127
0.15 0.877282608695649
0.16 0.869304347826084
0.17 0.860521739130431
0.18 0.85265217391304
0.19 0.845782608695649
0.2 0.837978260869562
0.21 0.829260869565215
0.22 0.820913043478258
0.23 0.813282608695649
0.24 0.809326086956519
0.25 0.800891304347823
};

\addplot [semithick, color1]
table {%
0 0.999999999999995
0.01 0.999673913043473
0.02 0.998739130434778
0.03 0.997999999999995
0.04 0.997086956521734
0.05 0.995934782608691
0.06 0.994152173913039
0.07 0.993217391304343
0.08 0.991347826086952
0.09 0.990217391304343
0.1 0.989608695652169
0.11 0.989173913043473
0.12 0.988152173913039
0.13 0.986934782608691
0.14 0.986043478260865
0.15 0.9845652173913
0.16 0.982434782608691
0.17 0.981260869565213
0.18 0.980456521739126
0.19 0.97952173913043
0.2 0.978043478260865
0.21 0.977499999999995
0.22 0.976826086956517
0.23 0.976652173913039
0.24 0.975760869565213
0.25 0.974826086956517
};
\addplot [semithick, color1, dotted]
table {%
0 0.999999999999995
0.01 0.991695652173908
0.02 0.982652173913039
0.03 0.976304347826082
0.04 0.967652173913039
0.05 0.9605652173913
0.06 0.9505652173913
0.07 0.940739130434779
0.08 0.929869565217387
0.09 0.922826086956518
0.1 0.914130434782605
0.11 0.906173913043475
0.12 0.898130434782605
0.13 0.888043478260866
0.14 0.879478260869562
0.15 0.869804347826084
0.16 0.85919565217391
0.17 0.849282608695649
0.18 0.838543478260867
0.19 0.830521739130432
0.2 0.82065217391304
0.21 0.813304347826084
0.22 0.802565217391301
0.23 0.794630434782606
0.24 0.787130434782606
0.25 0.779173913043475
};

\addplot [semithick, color2]
table {%
0 0.999999999999995
0.01 0.997565217391299
0.02 0.996760869565213
0.03 0.995891304347821
0.04 0.995152173913039
0.05 0.993956521739126
0.06 0.993413043478256
0.07 0.992956521739126
0.08 0.991695652173908
0.09 0.991369565217386
0.1 0.990826086956517
0.11 0.990043478260865
0.12 0.989413043478256
0.13 0.988630434782604
0.14 0.987369565217387
0.15 0.98797826086956
0.16 0.987043478260865
0.17 0.98602173913043
0.18 0.985543478260865
0.19 0.9845652173913
0.2 0.983695652173908
0.21 0.983891304347821
0.22 0.982130434782604
0.23 0.981304347826082
0.24 0.979934782608691
0.25 0.979391304347822
};
\addplot [semithick, color2, dotted]
table {%
0 0.999999999999995
0.01 0.988195652173908
0.02 0.982652173913039
0.03 0.97902173913043
0.04 0.974369565217387
0.05 0.969108695652169
0.06 0.965086956521735
0.07 0.958695652173909
0.08 0.955217391304344
0.09 0.951347826086953
0.1 0.947304347826082
0.11 0.943782608695648
0.12 0.939434782608691
0.13 0.936347826086952
0.14 0.933130434782605
0.15 0.929956521739127
0.16 0.92765217391304
0.17 0.923826086956517
0.18 0.921260869565214
0.19 0.918260869565213
0.2 0.914826086956518
0.21 0.910217391304344
0.22 0.906434782608692
0.23 0.902608695652171
0.24 0.89910869565217
0.25 0.896108695652171
};
\end{axis}

\end{tikzpicture}}
		\caption{Project cost}\label{fig:properties_projects2}
	\end{subfigure}\hfill
	\begin{subfigure}{0.32\textwidth}
		\resizebox{\textwidth}{!}{\begin{tikzpicture}[every plot/.append style={line width=2.5pt}]

\definecolor{color0}{rgb}{1,0.549019607843137,0}
\definecolor{color1}{rgb}{0.133333333333333,0.545098039215686,0.133333333333333}
\definecolor{color2}{rgb}{0.117647058823529,0.564705882352941,1}

\begin{axis}[
legend columns=2, 
legend cell align={left},
legend style={
  fill opacity=0.8,
  draw opacity=1,
  draw=none,
  text opacity=1,
  at={(0.5,1.45)},
  line width=1.5pt,
  anchor=north,
   /tikz/column 2/.style={
  	column sep=10pt,
  }, font=\Large
},
legend image post style={line width =4.5pt},
legend entries={\Greedy,
	highest,
	\GreedyCost,
	lowest, 
	\Phragmen, {\phantom{a}},
	\MES},
tick align=outside,
tick pos=left,
x grid style={white!69.0196078431373!black},
xlabel={resampling probability},
xmin=0, xmax=0.25,
xtick style={color=black},
xtick={0,0.05,0.1,0.15,0.2,0.25},
xticklabels={0\%,5\%,10\%,15\%,20\%,25\%},
ytick={0.8,0.85,0.9,0.95,1},
yticklabels={80\%,85\%,90\%,95\%,100\%},
y grid style={white!69.0196078431373!black},
ylabel={funding probability},
ymin=0.77, ymax=1,
ytick style={color=black},
ytick style={color=black},every tick label/.append style={font=\Large}, 
label style={font=\Large}
]
\addlegendimage{red!54.5098039215686!black}
\addlegendimage{gray,dotted}
\addlegendimage{color0}
\addlegendimage{gray}
\addlegendimage{color1}
\addlegendimage{white,dashed}
\addlegendimage{color2}
\addplot [semithick, red!54.5098039215686!black]
table {%
0 0.999999999999995
0.01 0.993826086956517
0.02 0.99052173913043
0.03 0.986673913043474
0.04 0.984434782608692
0.05 0.982760869565213
0.06 0.980999999999996
0.07 0.978739130434779
0.08 0.977478260869561
0.09 0.975347826086952
0.1 0.972869565217387
0.11 0.97260869565217
0.12 0.9705652173913
0.13 0.97019565217391
0.14 0.967369565217387
0.15 0.965260869565214
0.16 0.964978260869562
0.17 0.961586956521736
0.18 0.959543478260866
0.19 0.958956521739127
0.2 0.958478260869562
0.21 0.955521739130431
0.22 0.956347826086953
0.23 0.95315217391304
0.24 0.952782608695649
0.25 0.949326086956518
};
\addplot [semithick, red!54.5098039215686!black, dotted]
table {%
0 0.999999999999995
0.01 0.999108695652169
0.02 0.998630434782604
0.03 0.997282608695647
0.04 0.997173913043474
0.05 0.996108695652169
0.06 0.994913043478256
0.07 0.995304347826082
0.08 0.994608695652169
0.09 0.993782608695648
0.1 0.993173913043473
0.11 0.993152173913039
0.12 0.992999999999995
0.13 0.991673913043474
0.14 0.991586956521735
0.15 0.990760869565213
0.16 0.990739130434778
0.17 0.990282608695648
0.18 0.989652173913039
0.19 0.989717391304343
0.2 0.988369565217387
0.21 0.988043478260865
0.22 0.987999999999996
0.23 0.986913043478257
0.24 0.985934782608691
0.25 0.9865652173913
};

\addplot [semithick, color0]
table {%
0 0.999999999999995
0.01 0.976630434782604
0.02 0.9610652173913
0.03 0.947695652173909
0.04 0.93702173913043
0.05 0.931413043478257
0.06 0.924826086956518
0.07 0.917891304347822
0.08 0.913282608695649
0.09 0.905065217391301
0.1 0.89965217391304
0.11 0.893978260869562
0.12 0.888065217391301
0.13 0.883565217391301
0.14 0.879347826086953
0.15 0.874565217391301
0.16 0.870891304347823
0.17 0.866630434782605
0.18 0.862695652173909
0.19 0.86019565217391
0.2 0.85665217391304
0.21 0.854782608695649
0.22 0.854478260869562
0.23 0.853717391304345
0.24 0.853217391304344
0.25 0.85115217391304
};
\addplot [semithick, color0, dotted]
table {%
0 0.999999999999995
0.01 0.999760869565212
0.02 0.998782608695647
0.03 0.99802173913043
0.04 0.99697826086956
0.05 0.996369565217387
0.06 0.9955652173913
0.07 0.994586956521734
0.08 0.993695652173908
0.09 0.992847826086952
0.1 0.992260869565213
0.11 0.9920652173913
0.12 0.991434782608691
0.13 0.99047826086956
0.14 0.989717391304343
0.15 0.988652173913039
0.16 0.987434782608691
0.17 0.986652173913039
0.18 0.985934782608691
0.19 0.985217391304343
0.2 0.984804347826082
0.21 0.984239130434778
0.22 0.983586956521734
0.23 0.983826086956517
0.24 0.9835652173913
0.25 0.982782608695647
};

\addplot [semithick, color1]
table {%
0 0.999999999999995
0.01 0.981934782608691
0.02 0.968304347826082
0.03 0.957999999999996
0.04 0.947021739130431
0.05 0.937304347826083
0.06 0.928173913043475
0.07 0.916891304347822
0.08 0.907717391304344
0.09 0.899347826086953
0.1 0.891086956521736
0.11 0.885130434782605
0.12 0.878739130434779
0.13 0.869826086956518
0.14 0.862434782608693
0.15 0.856130434782606
0.16 0.847499999999997
0.17 0.841282608695649
0.18 0.836347826086954
0.19 0.833543478260867
0.2 0.831021739130432
0.21 0.827847826086953
0.22 0.825217391304345
0.23 0.822847826086954
0.24 0.818391304347823
0.25 0.814717391304345
};
\addplot [semithick, color1, dotted]
table {%
0 0.999999999999995
0.01 0.999673913043473
0.02 0.998739130434778
0.03 0.997999999999995
0.04 0.997239130434778
0.05 0.996413043478256
0.06 0.995391304347821
0.07 0.994804347826082
0.08 0.993652173913039
0.09 0.992695652173908
0.1 0.99252173913043
0.11 0.992173913043473
0.12 0.991434782608691
0.13 0.990434782608691
0.14 0.989869565217387
0.15 0.988586956521734
0.16 0.987152173913039
0.17 0.986826086956517
0.18 0.986043478260865
0.19 0.985369565217387
0.2 0.984413043478256
0.21 0.984152173913039
0.22 0.983804347826082
0.23 0.983760869565213
0.24 0.983347826086952
0.25 0.982717391304343
};

\addplot [semithick, color2]
table {%
0 0.999999999999995
0.01 0.971956521739126
0.02 0.959586956521735
0.03 0.951630434782604
0.04 0.94260869565217
0.05 0.935913043478257
0.06 0.92810869565217
0.07 0.919978260869562
0.08 0.913434782608693
0.09 0.905347826086953
0.1 0.89815217391304
0.11 0.892717391304344
0.12 0.885195652173909
0.13 0.879347826086953
0.14 0.87415217391304
0.15 0.867260869565214
0.16 0.862152173913041
0.17 0.855021739130431
0.18 0.852086956521736
0.19 0.847260869565214
0.2 0.841086956521736
0.21 0.835847826086953
0.22 0.829804347826084
0.23 0.824717391304345
0.24 0.820086956521736
0.25 0.816347826086954
};
\addplot [semithick, color2, dotted]
table {%
0 0.999999999999995
0.01 0.999804347826082
0.02 0.999239130434778
0.03 0.998760869565212
0.04 0.998347826086952
0.05 0.997565217391299
0.06 0.997260869565213
0.07 0.996804347826082
0.08 0.996217391304343
0.09 0.995847826086952
0.1 0.99552173913043
0.11 0.994869565217386
0.12 0.9945652173913
0.13 0.993847826086952
0.14 0.993413043478256
0.15 0.993326086956517
0.16 0.99297826086956
0.17 0.992391304347821
0.18 0.9920652173913
0.19 0.991717391304343
0.2 0.991260869565213
0.21 0.99152173913043
0.22 0.990217391304343
0.23 0.989652173913039
0.24 0.988739130434778
0.25 0.987956521739126
};
\end{axis}

\end{tikzpicture}}
		\caption{Ratio of the number of approvals and project cost} \label{fig:properties_projects3}
	\end{subfigure}
	\caption{Average funding probability of initially funded project with highest/lowest property value for three different properties on the full dataset.}
	\label{fig:properties_projects}
\end{figure}
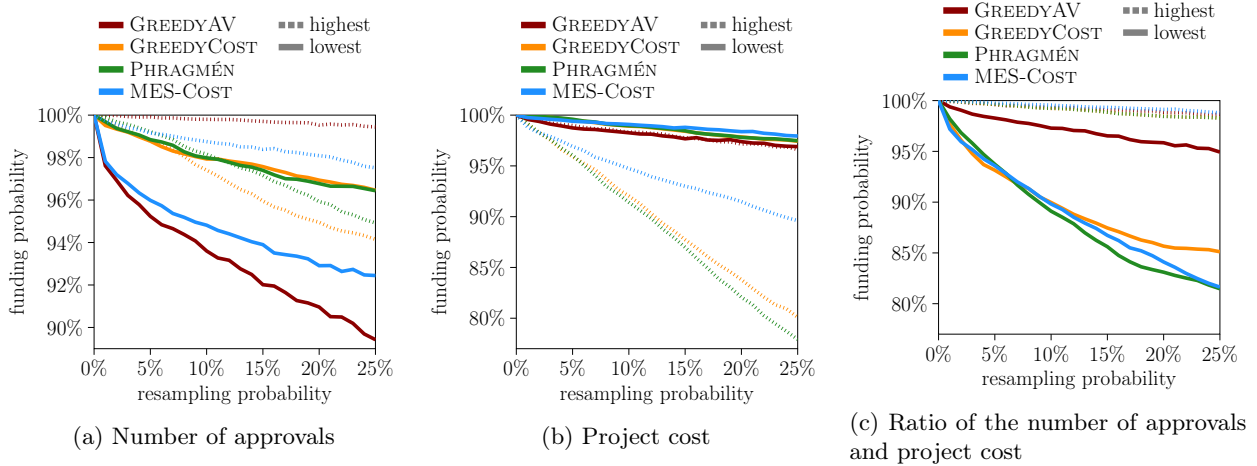 

\subsubsection{Projects.}
We now shift our focus to single projects and analyze the effect that different properties of projects have on whether an initially funded project continues to be funded. 
In particular, we consider \begin{enumerate*}[label=(\roman*)]
\item the number of approvals of a project, 
\item the project's cost, and
\item the ratio between the number of approvals and cost.
\end{enumerate*}
To keep the setup simple, we always focus on the two initially funded projects with lowest or highest property value and compare how their funding probability changes when increasing the resampling probability. 
\Cref{fig:properties_instances} shows the results of this experiment, where we depict the funding probability of the initially funded project with the lowest property value (solid) and highest property value (dashed) for the different rules. We now consider the three different properties one by one.

First, regarding the number of approvals of a project (see
\Cref{fig:properties_projects1}), we see a different behavior of the
rules.  For \Greedy, the impact of the number of approvals is
strongest: Recalling that \Greedy always funds the project which can
still be funded and has the highest number of approvals it is quite
intuitive that the project with the highest number of approvals is
also funded in almost all cases even if random noise is introduced.
In contrast, the initially funded project with the lowest number of
approvals stops to be founded, e.g., in around $10\%$ of cases for a
resampling probability of $25\%$.  In general, it is to be expected
that the initially funded project with the highest number of approvals
is more robust than the one with the lowest approval score, and, in
fact, the gap between them might be even seen as surprisingly small.
Similarly, for \MES, for all considered values of the resampling
probability, there is a clear gap between the funding probability of
the most and least approved initially funded project.  In contrast,
for \GreedyCost and \Phragmen, the situation is the other way around:
While for a small resampling probability the most and least approved
initially funded project still have a somewhat similar funding
probability for these two rules, the project with the lowest number of
approvals has a higher funding probability as the resampling
probability increases.  While this might look very counterintuitive at
first, as it is always advantageous for a project to have more
approvals, this observation can be explained by the fact that whether
a project gets funded also depends decisively on its cost and that on
our data cost is generally positively correlated with the number of
approvals.  Accordingly, the projects with the highest number of
approvals often tend to be the most expensive ones.  Whether and how
this influences its funding probability now depends on the used
budgeting rule, where our experiments show that for \MES and \Greedy a
high number of approvals seems to have a ``stronger'' influence, whereas
for \GreedyCost and \Phragmen the higher cost seems to be dominant.
The difference between \MES and \Phragmen is quite intriguing here, as
both try to achieve somewhat similar proportionality objectives.  An
explanation for why \MES tends to favor funding expensive projects is
that in \MES (in contrast to \MESAprUtilRule~and \Phragmen) voters
derive a higher utility for more expensive projects, which implies
that---in case supporters of a project have sufficient virtual money
available to them---more expensive projects are $q$-affordable for
lower values of $q$.  Especially as for \MES voters start with their
full virtual budget (and do not slowly acquire it over time), this
leads to expensive projects being funded at the beginning of the
execution of the rule.

Second, regarding project's costs (see
\Cref{fig:properties_projects2}), the picture is in some sense
reversed.  For \Greedy the robustness of projects seems to be mostly
independent of its cost, as the most and least expensive initially
funded projects have roughly the same winning probability for all
considered values of the resampling probability.  As \Greedy funds
projects in the order of the number of their approvals, it is quite
intuitive that the cost of projects only has a secondary influence on
the funding decisions; nevertheless, it is still remarkable that there
seems to be no influence at all here, given that whether a project
gets funded once it is ``its turn'' depends on whether the remaining
budget is sufficient to cover its cost.  For \MES, \Phragmen, and
\GreedyCost the cheapest initially funded project has a higher
robustness than the most expensive one, yet the contrast between the
two is less strong for \MES.  The reasons for the distinct behavior of
\MES might be similar as the ones described above for the number of
approvals.

Third, regarding the ratio between the number of approvals and the cost of
a project (see \Cref{fig:properties_projects3}), for all rules the
initially funded project with the highest number of approvals per cost
has a substantially higher funding probability than the initially
funded project with the lowest one.  This is quite intuitive, as this
measure both benefits projects with a high number of approvals (which
is of high relevance for \Greedy and \MES; see
\Cref{fig:properties_projects1}) and with a low cost (which is of high
relevance for \Phragmen and \GreedyCost; see
\Cref{fig:properties_projects2}).  Given that \GreedyCost funds
projects in order of the approval-cost ratio, it is remarkable that
for \Phragmen and \MES the influence of the approval-cost ratio on
project's robustness seems stronger than for \GreedyCost, as the
funding probability of the initially funded project with the lowest
ratio is smaller for these two rules.  A possible explanation for this
might be that \Phragmen and \MES lead in general to less robust
funding decisions.  All in all, the ratio between the number of
approvals and the cost of a project seems to be the best indicator to
judge how likely projects will continue to be funded if votes are
perturbed with random noise.

\subsection{Taking a Closer Look at Completion Methods for \MES} \label{sub:completion}

As already discussed in the main body, outcomes produced by \MES may not be exhaustive, i.e., the remaining budget might be sufficient to fund additional projects.
In this section, we compare the robustness of outcomes for different completion methods for \MES.
We analyze the following completion methods  (as implemented in the Pabutools library \cite{pabutools}), which are employed in case \MES on the initial instance returns a non-exhaustive outcome:
\begin{description}
\item[Add1+\Greedy] The initial endowments of the voters are increased
  (by increasing the total budget by $1\%$).  We continue to do so
  until the cost of the outcome output by the rule exceeds the
  original budget.  In case the returned outcome is not exhaustive, we
  apply \Greedy to spend the remaining budget. This is the method
  analyzed in the main body and recently used in Wieliczka, Poland~\cite{mes}.
\item[Add1] Like Add1+\Greedy but without the final \Greedy phase.
\item[Epsilon] 
The idea of the epsilon method is to modify the given instance such that voters also derive some utility for projects they do not approve, as in this case the outcome returned by \MES is guaranteed to be exhaustive \cite{pet-pie-sko:c:pb-mes}. Specifically, in the definition of \MES we define $u_j(c)$ to be $\cost(c)$ in case $v_j$ approves $c$ and $\epsilon\cdot \cost(c)$ in case $v_j$ does not approve $c$, where we pick the smallest possible value of $\epsilon$ such that the produced outcome is exhaustive. 
\item[\Greedy] The outcome as computed by \MES on the initial instance is completed by running \Greedy with the remaining budget and projects.
\item[None] Returns the outcome as computed by \MES on the initial instance.
\end{description}

\subsubsection{Correlation between Rules.}
We start by looking how the robustness of outcomes produced by the the different completion methods of \MES relate to each other. 
For this, in \Cref{fig:corr}, we show scatterplots for some pairs of completion methods (similar to \Cref{fig:ov_corr}).
In these plots, each point represents one instance and its $x$- and $y$-coordinates depict the instance's $50\%$-winner threshold under the respective rule.
Points are slightly perturbed if they were to overlap. 
Note that as we only examine resampling probabilities up until $25\%$, we group instances with a larger threshold together. 
Instances with a $50\%$-winner threshold above $25\%$ for both rules are omitted. 

\begin{figure}[t!]
	\centering 
	\begin{subfigure}[t]{0.245\textwidth}
		\resizebox{\textwidth}{!}{
\begin{tikzpicture}

\begin{axis}[
tick align=outside,
tick pos=left,
x grid style={white!69.0196078431373!black},
xlabel={Add1+\Greedy},
xmin=-0.01505, xmax=0.31605,
xtick style={color=black},
y grid style={white!69.0196078431373!black},
ylabel={Add1},
ymin=-0.0045, ymax=0.3145,
ytick style={color=black},
xtick={0,0.05,0.1,0.15,0.2,0.25,0.3},
xticklabels={0\%,5\%,10\%,15\%,20\%,25\%,$\ge\hspace*{-0.1cm}25\%$},
ytick={0,0.05,0.1,0.15,0.2,0.25,0.3},
yticklabels={0\%,5\%,10\%,15\%,20\%,25\%,$\ge\hspace*{-0.1cm}25\%$},
	ytick style={color=black},every tick label/.append style={font=\large}, 
label style={font=\large}
]
\addplot [draw=black, fill=black, mark=*, only marks]
table{%
x  y
0.07 0.07
0.03 0.03
0.2 0.08
0.07 0.06
0.18 0.19
0.06 0.05
0.01 0.01
0.009 0.01
0.3 0.08
0.13 0.12
0.08 0.01
0.011 0.01
0.08 0.08
0.02 0.02
0.17 0.17
0.3 0.09
0.07 0.08
0.3 0.01
0.069 0.06
0.05 0.05
0.3 0.05
0.05 0.06
0.17 0.16
0.04 0.04
0.08 0.05
0.3 0.12
0.25 0.25
0.1 0.1
0.12 0.13
0.3 0.06
0.3 0.17
0.11 0.1
0.019 0.02
0.099 0.1
0.008 0.01
0.1 0.08
0.3 0.04
0.16 0.17
0.22 0.23
0.299 0.04
0.12 0.12
0.049 0.05
0.06 0.07
0.012 0.01
0.039 0.04
0.119 0.12
0.021 0.02
0.3 0.18
0.03 0.02
0.24 0.24
0.299 0.08
0.08 0.02
0.24 0.23
0.01 0.02
0.14 0.11
0.02 0.01
0.169 0.16
0.19 0.14
0.019 0.01
0.21 0.11
0.25 0.18
0.12 0.1
0.029 0.03
0.14 0.07
0.17 0.07
0.18 0.18
0.04 0.03
0.007 0.01
0.029 0.02
0.301 0.08
0.013 0.01
0.018 0.02
0.3 0.23
0.3 0.22
0.022 0.02
0.3 0.25
0.031 0.03
0.006 0.01
0.299 0.23
0.15 0.15
0.15 0.14
0.051 0.05
0.014 0.01
0.19 0.03
0.15 0.07
0.069 0.08
0.3 0.21
0.005 0.01
0.02 0.03
0.079 0.02
0.3 0.2
0.071 0.06
0.14 0.15
0.06 0.06
0.009 0.02
0.059 0.06
0.299 0.12
0.299 0.06
0.23 0.3
0.015 0.01
0.3 0.15
0.23 0.23
0.01 0.3
0.004 0.01
0.11 0.01
0.021 0.01
0.21 0.22
0.08 0.3
0.19 0.18
0.11 0.14
0.139 0.15
0.05 0.01
0.09 0.1
0.068 0.06
0.017 0.02
0.22 0.21
0.048 0.05
0.05 0.09
0.2 0.2
0.07 0.04
0.028 0.03
0.072 0.06
0.016 0.01
0.023 0.02
0.09 0.06
0.14 0.13
0.179 0.19
0.04 0.02
0.069 0.07
0.299 0.01
0.3 0.14
0.14 0.14
0.299 0.18
0.03 0.01
0.07 0.03
0.079 0.08
0.171 0.16
0.11 0.11
0.3 0.11
0.08 0.14
0.189 0.18
0.032 0.03
0.301 0.01
0.299 0.2
0.16 0.2
0.08 0.07
0.079 0.05
0.301 0.04
0.016 0.02
0.052 0.05
0.003 0.01
0.3 0.07
0.09 0.05
0.249 0.25
0.3 0.03
0.017 0.01
0.299 0.21
0.031 0.02
0.024 0.02
0.21 0.23
0.002 0.01
0.109 0.11
0.071 0.07
0.068 0.07
0.07 0.09
0.018 0.01
0.121 0.12
0.139 0.13
0.3 0.02
0.027 0.03
0.039 0.03
0.033 0.03
0.072 0.07
0.14 0.09
0.028 0.02
0.21 0.2
0.15 0.08
0.301 0.21
0.05 0.17
0.071 0.08
0.000999999999999999 0.01
0.09 0.09
0.22 0.22
0.299 0.05
0.22 0.12
0.149 0.07
0.029 0.01
0.047 0.05
0.19 0.19
0.079 0.07
0.299 0.02
0.169 0.17
0.081 0.08
0.15 0.18
0.059 0.07
0.08 0.12
0.015 0.02
0.019 0.01
0.025 0.02
0.061 0.06
0.032 0.02
0.026 0.03
0.18 0.13
0.058 0.06
0.16 0.13
0.014 0.02
0.011 0.02
0.118 0.12
0 0.01
0.061 0.07
0.111 0.11
0.108 0.11
0.199 0.2
0.059 0.05
0.299 0.03
0.034 0.03
0.13 0.06
0.026 0.02
0.041 0.03
0.025 0.03
0.301 0.03
0.035 0.03
0.299 0.17
0.301 0.02
0.101 0.1
0.149 0.15
};
\end{axis}

\end{tikzpicture}}
		\caption{Add1+\Greedy vs Add1}\label{fig:corr1}
	\end{subfigure}\hfill
	\begin{subfigure}[t]{0.245\textwidth}
		\resizebox{\textwidth}{!}{
\begin{tikzpicture}

\begin{axis}[
tick align=outside,
tick pos=left,
x grid style={white!69.0196078431373!black},
xlabel={None},
xmin=-0.0129, xmax=0.3149,
xtick style={color=black},
y grid style={white!69.0196078431373!black},
ylabel={\Greedy},
ymin=-0.0045, ymax=0.3145,
ytick style={color=black},
xtick={0,0.05,0.1,0.15,0.2,0.25,0.3},
xticklabels={0\%,5\%,10\%,15\%,20\%,25\%,$\ge\hspace*{-0.1cm}25\%$},
ytick={0,0.05,0.1,0.15,0.2,0.25,0.3},
yticklabels={0\%,5\%,10\%,15\%,20\%,25\%,$\ge\hspace*{-0.1cm}25\%$},
	ytick style={color=black},every tick label/.append style={font=\large}, 
label style={font=\large}
]
\addplot [draw=black, fill=black, mark=*, only marks]
table{%
x  y
0.13 0.3
0.03 0.03
0.3 0.24
0.12 0.12
0.16 0.3
0.04 0.09
0.05 0.07
0.04 0.05
0.04 0.15
0.11 0.09
0.03 0.1
0.06 0.06
0.2 0.18
0.01 0.08
0.03 0.02
0.09 0.09
0.02 0.02
0.3 0.25
0.029 0.02
0.1 0.1
0.03 0.04
0.01 0.01
0.02 0.3
0.19 0.19
0.08 0.07
0.009 0.01
0.11 0.19
0.02 0.03
0.029 0.03
0.07 0.07
0.19 0.3
0.14 0.13
0.04 0.04
0.01 0.3
0.06 0.07
0.12 0.3
0.01 0.02
0.069 0.07
0.139 0.13
0.189 0.3
0.15 0.3
0.24 0.23
0.1 0.11
0.17 0.3
0.01 0.04
0.05 0.04
0.08 0.08
0.3 0.16
0.019 0.02
0.011 0.01
0.14 0.14
0.149 0.3
0.021 0.02
0.3 0.08
0.13 0.13
0.05 0.05
0.009 0.02
0.299 0.24
0.049 0.05
0.09 0.1
0.1 0.3
0.008 0.01
0.04 0.03
0.051 0.05
0.06 0.3
0.039 0.04
0.17 0.15
0.159 0.3
0.099 0.11
0.199 0.18
0.17 0.12
0.079 0.08
0.07 0.03
0.18 0.18
0.059 0.06
0.041 0.04
0.018 0.02
0.012 0.01
0.071 0.07
0.06 0.05
0.19 0.05
0.019 0.3
0.007 0.01
0.239 0.23
0.15 0.14
0.013 0.01
0.21 0.22
0.08 0.09
0.25 0.3
0.189 0.19
0.031 0.03
0.099 0.1
0.191 0.3
0.006 0.01
0.07 0.05
0.14 0.3
0.24 0.24
0.191 0.19
0.1 0.09
0.014 0.01
0.18 0.17
0.01 0.03
0.22 0.22
0.07 0.19
0.022 0.02
0.021 0.3
0.188 0.19
0.061 0.06
0.18 0.3
0.07 0.08
0.089 0.1
0.18 0.13
0.14 0.15
0.16 0.23
0.079 0.09
0.005 0.01
0.04 0.3
0.039 0.15
0.07 0.3
0.141 0.13
0.13 0.14
0.151 0.3
0.15 0.15
0.079 0.07
0.091 0.1
0.12 0.11
0.179 0.3
0.05 0.06
0.028 0.03
0.3 0.06
0.068 0.07
0.04 0.01
0.09 0.3
0.031 0.02
0.16 0.16
0.069 0.3
0.015 0.01
0.201 0.18
0.17 0.17
0.16 0.12
0.3 0.1
0.299 0.08
0.05 0.3
0.181 0.3
0.09 0.08
0.058 0.06
0.24 0.3
0.02 0.04
0.004 0.01
0.11 0.3
0.05 0.02
0.048 0.05
0.299 0.16
0.09 0.11
0.2 0.23
0.08 0.1
0.028 0.02
0.139 0.3
0.052 0.05
0.05 0.01
0.109 0.3
0.049 0.3
0.089 0.3
0.017 0.02
0.17 0.18
0.2 0.13
0.01 0.16
0.16 0.17
0.032 0.03
0.3 0.13
0.081 0.09
0.2 0.2
0.05 0.09
0.101 0.11
0.081 0.08
0.023 0.02
0.069 0.03
0.032 0.02
0.11 0.1
0.029 0.04
0.039 0.03
0.078 0.09
0.059 0.07
0.19 0.18
0.099 0.3
0.041 0.03
0.3 0.14
0.17 0.19
0.241 0.23
0.016 0.01
0.009 0.3
0.13 0.16
0.038 0.04
0.05 0.22
0.003 0.01
0.047 0.05
0.049 0.06
0.2 0.3
0.027 0.03
0.3 0.09
0.059 0.3
0.051 0.3
0.019 0.03
0.017 0.01
0.14 0.03
0.3 0.12
0.192 0.19
0.018 0.3
0.18 0.15
0.199 0.3
0.033 0.03
0.08 0.01
0.089 0.08
0.002 0.01
0.026 0.03
0.2 0.21
0.034 0.03
0.061 0.3
0.018 0.01
0.12 0.13
0.101 0.3
0.011 0.02
0.129 0.13
0.021 0.03
0.131 0.13
0.071 0.3
0.04 0.25
0.018 0.03
0.048 0.3
};
\end{axis}

\end{tikzpicture}}
		\caption{None vs Greedy}\label{fig:corr2}
	\end{subfigure}\hfill
	\begin{subfigure}[t]{0.245\textwidth}
		\resizebox{\textwidth}{!}{
\begin{tikzpicture}

\begin{axis}[
tick align=outside,
tick pos=left,
x grid style={white!69.0196078431373!black},
xlabel={Add1+\Greedy},
xmin=-0.00675, xmax=0.31775,
xtick style={color=black},
y grid style={white!69.0196078431373!black},
ylabel={\Greedy},
ymin=-0.0045, ymax=0.3145,
ytick style={color=black},
xtick={0,0.05,0.1,0.15,0.2,0.25,0.3},
xticklabels={0\%,5\%,10\%,15\%,20\%,25\%,$\ge\hspace*{-0.1cm}25\%$},
ytick={0,0.05,0.1,0.15,0.2,0.25,0.3},
yticklabels={0\%,5\%,10\%,15\%,20\%,25\%,$\ge\hspace*{-0.1cm}25\%$},
	ytick style={color=black},every tick label/.append style={font=\large}, 
label style={font=\large}
]
\addplot [draw=black, fill=black, mark=*, only marks]
table{%
x  y
0.07 0.3
0.03 0.03
0.3 0.24
0.3 0.12
0.2 0.3
0.069 0.3
0.18 0.3
0.06 0.09
0.01 0.07
0.01 0.05
0.3 0.15
0.13 0.09
0.08 0.1
0.01 0.06
0.3 0.18
0.08 0.08
0.02 0.3
0.17 0.02
0.3 0.09
0.07 0.02
0.3 0.25
0.069 0.02
0.05 0.1
0.3 0.04
0.3 0.01
0.05 0.19
0.17 0.3
0.04 0.07
0.08 0.01
0.3 0.19
0.25 0.3
0.1 0.3
0.12 0.03
0.3 0.03
0.11 0.3
0.02 0.07
0.3 0.13
0.299 0.04
0.099 0.3
0.009 0.07
0.101 0.3
0.16 0.02
0.22 0.07
0.299 0.13
0.12 0.3
0.05 0.3
0.06 0.23
0.01 0.11
0.04 0.04
0.12 0.04
0.019 0.3
0.3 0.08
0.03 0.16
0.3 0.02
0.24 0.3
0.079 0.01
0.24 0.14
0.01 0.02
0.299 0.08
0.301 0.13
0.14 0.05
0.02 0.02
0.299 0.24
0.17 0.05
0.19 0.1
0.02 0.01
0.21 0.03
0.25 0.05
0.119 0.04
0.03 0.15
0.14 0.3
0.169 0.3
0.179 0.3
0.04 0.11
0.01 0.18
0.03 0.12
0.301 0.08
0.01 0.03
0.02 0.18
0.02 0.06
0.03 0.04
0.299 0.02
0.01 0.01
0.3 0.07
0.15 0.3
0.15 0.05
0.05 0.05
0.009 0.01
0.19 0.3
0.149 0.3
0.3 0.23
0.07 0.14
0.299 0.01
0.3 0.22
0.299 0.09
0.01 0.3
0.02 0.19
0.08 0.03
0.071 0.3
0.139 0.3
0.06 0.1
0.011 0.01
0.06 0.05
0.301 0.24
0.299 0.19
0.301 0.09
0.23 0.3
0.008 0.01
0.3 0.17
0.229 0.3
0.009 0.3
0.009 0.03
0.11 0.22
0.019 0.19
0.21 0.02
0.08 0.3
0.19 0.19
0.109 0.3
0.141 0.3
0.3 0.06
0.05 0.08
0.3 0.1
0.09 0.13
0.068 0.3
0.02 0.15
0.22 0.23
0.298 0.09
0.049 0.3
0.051 0.3
0.199 0.3
0.301 0.01
0.072 0.3
0.03 0.3
0.299 0.15
0.07 0.13
0.3 0.14
0.01 0.15
0.019 0.07
0.299 0.1
0.09 0.11
0.138 0.3
0.18 0.06
0.04 0.03
0.299 0.06
0.07 0.07
0.298 0.01
0.142 0.3
0.03 0.02
0.07 0.16
0.081 0.01
0.17 0.18
0.299 0.17
0.111 0.3
0.299 0.12
0.079 0.3
0.301 0.1
0.189 0.3
0.03 0.08
0.298 0.08
0.301 0.06
0.16 0.3
0.08 0.04
0.078 0.01
0.019 0.02
0.049 0.05
0.01 0.16
0.3 0.11
0.09 0.23
0.249 0.3
0.298 0.1
0.009 0.02
0.03 0.05
0.019 0.01
0.21 0.3
0.011 0.02
0.11 0.18
0.298 0.13
0.069 0.16
0.07 0.17
0.07 0.03
0.01 0.13
0.12 0.09
0.3 0.2
0.302 0.09
0.14 0.11
0.302 0.08
0.029 0.02
0.039 0.03
0.031 0.02
0.067 0.3
0.14 0.1
0.301 0.04
0.029 0.03
0.21 0.09
0.15 0.07
0.299 0.18
0.048 0.3
0.073 0.3
0.011 0.03
0.09 0.3
0.22 0.14
0.301 0.19
0.219 0.23
0.15 0.01
0.029 0.16
0.052 0.3
0.191 0.3
0.081 0.3
0.298 0.04
0.17 0.22
0.302 0.01
0.08 0.05
0.151 0.3
0.06 0.06
0.078 0.3
0.02 0.03
0.01 0.09
0.021 0.3
0.06 0.03
0.03 0.01
0.031 0.03
0.301 0.12
0.181 0.3
0.06 0.19
0.16 0.15
0.019 0.03
0.297 0.01
0.011 0.3
0.12 0.08
0.012 0.01
0.06 0.3
0.299 0.03
0.11 0.21
0.11 0.03
0.201 0.3
0.059 0.3
0.303 0.01
0.302 0.13
0.029 0.3
0.13 0.02
0.02 0.13
0.041 0.03
0.03 0.13
0.031 0.3
0.299 0.25
0.098 0.3
0.15 0.03
};
\end{axis}

\end{tikzpicture}}
		\caption{Add1+\Greedy vs Greedy}\label{fig:corr3}
	\end{subfigure}\hfill
\begin{subfigure}[t]{0.245\textwidth}
\resizebox{\textwidth}{!}{
\begin{tikzpicture}

\begin{axis}[
tick align=outside,
tick pos=left,
x grid style={white!69.0196078431373!black},
xlabel={Add1+\Greedy},
xmin=-0.0098, xmax=0.3158,
xtick style={color=black},
y grid style={white!69.0196078431373!black},
ylabel={Epsilon},
ymin=-0.0045, ymax=0.3145,
ytick style={color=black},
xtick={0,0.05,0.1,0.15,0.2,0.25,0.3},
xticklabels={0\%,5\%,10\%,15\%,20\%,25\%,$\ge\hspace*{-0.1cm}25\%$},
ytick={0,0.05,0.1,0.15,0.2,0.25,0.3},
yticklabels={0\%,5\%,10\%,15\%,20\%,25\%,$\ge\hspace*{-0.1cm}25\%$},
	ytick style={color=black},every tick label/.append style={font=\large}, 
label style={font=\large}
]
\addplot [draw=black, fill=black, mark=*, only marks]
table{%
x  y
0.07 0.3
0.03 0.04
0.2 0.3
0.3 0.25
0.069 0.3
0.18 0.3
0.06 0.3
0.01 0.3
0.009 0.3
0.13 0.3
0.08 0.1
0.01 0.14
0.08 0.3
0.02 0.3
0.17 0.2
0.071 0.3
0.07 0.25
0.05 0.13
0.3 0.04
0.05 0.3
0.17 0.01
0.04 0.23
0.08 0.07
0.25 0.23
0.1 0.3
0.12 0.09
0.3 0.02
0.11 0.3
0.019 0.3
0.3 0.15
0.3 0.14
0.099 0.3
0.011 0.3
0.101 0.3
0.16 0.09
0.22 0.3
0.12 0.02
0.3 0.2
0.3 0.11
0.049 0.3
0.06 0.24
0.008 0.3
0.04 0.04
0.12 0.3
0.021 0.3
0.03 0.3
0.3 0.07
0.24 0.3
0.079 0.3
0.24 0.14
0.01 0.23
0.3 0.19
0.14 0.3
0.02 0.07
0.17 0.3
0.299 0.25
0.19 0.1
0.3 0.21
0.02 0.22
0.21 0.04
0.25 0.17
0.12 0.11
0.03 0.01
0.139 0.3
0.169 0.3
0.179 0.3
0.04 0.09
0.012 0.3
0.029 0.3
0.3 0.03
0.009 0.14
0.02 0.01
0.018 0.3
0.031 0.3
0.299 0.07
0.01 0.02
0.3 0.09
0.15 0.3
0.149 0.3
0.051 0.3
0.01 0.03
0.19 0.3
0.151 0.3
0.068 0.3
0.299 0.02
0.299 0.09
0.007 0.3
0.022 0.3
0.081 0.3
0.072 0.3
0.141 0.3
0.06 0.1
0.01 0.1
0.06 0.07
0.3 0.08
0.3 0.23
0.23 0.3
0.01 0.22
0.229 0.3
0.013 0.3
0.01 0.05
0.11 0.07
0.02 0.18
0.21 0.11
0.078 0.3
0.189 0.3
0.109 0.3
0.138 0.3
0.05 0.1
0.299 0.11
0.09 0.3
0.067 0.3
0.017 0.3
0.219 0.3
0.3 0.01
0.05 0.18
0.048 0.3
0.299 0.15
0.199 0.3
0.3 0.12
0.073 0.3
0.028 0.3
0.301 0.15
0.3 0.1
0.066 0.3
0.006 0.3
0.02 0.02
0.089 0.3
0.142 0.3
0.18 0.07
0.04 0.03
0.299 0.2
0.07 0.09
0.137 0.3
0.029 0.04
0.074 0.3
0.301 0.09
0.082 0.3
0.171 0.3
0.111 0.3
0.077 0.3
0.191 0.3
0.032 0.3
0.298 0.15
0.16 0.3
0.083 0.3
0.08 0.04
0.299 0.1
0.02 0.12
0.05 0.05
0.01 0.08
0.09 0.23
0.25 0.3
0.01 0.06
0.3 0.05
0.03 0.25
0.301 0.02
0.02 0.21
0.298 0.02
0.21 0.3
0.009 0.03
0.108 0.3
0.3 0.24
0.065 0.3
0.075 0.3
0.064 0.3
0.014 0.3
0.119 0.3
0.3 0.22
0.143 0.3
0.027 0.3
0.039 0.03
0.033 0.3
0.076 0.3
0.136 0.3
0.299 0.04
0.03 0.05
0.21 0.06
0.15 0.25
0.299 0.14
0.05 0.08
0.063 0.3
0.011 0.03
0.091 0.3
0.221 0.3
0.301 0.25
0.218 0.3
0.15 0.01
0.026 0.3
0.05 0.25
0.188 0.3
0.076 0.3
0.168 0.3
0.299 0.03
0.08 0.16
0.148 0.3
0.059 0.07
0.084 0.3
0.019 0.01
0.005 0.3
0.023 0.3
0.06 0.03
0.03 0.02
0.03 0.11
0.181 0.3
0.059 0.3
0.298 0.09
0.159 0.3
0.02 0.04
0.299 0.01
0.015 0.3
0.12 0.08
0.009 0.02
0.06 0.06
0.112 0.3
0.107 0.3
0.201 0.3
0.059 0.06
0.299 0.19
0.034 0.3
0.129 0.3
0.016 0.3
0.301 0.2
0.04 0.06
0.025 0.3
0.035 0.3
0.098 0.3
0.152 0.3
};
\end{axis}

\end{tikzpicture}}
\caption{Add1+\Greedy vs Epsilon}\label{fig:corr4}
\end{subfigure}
	\caption{Correlation between $50\%$-winner thresholds for some pairs of completion methods for \MES.}
	\label{fig:corr}
\end{figure}
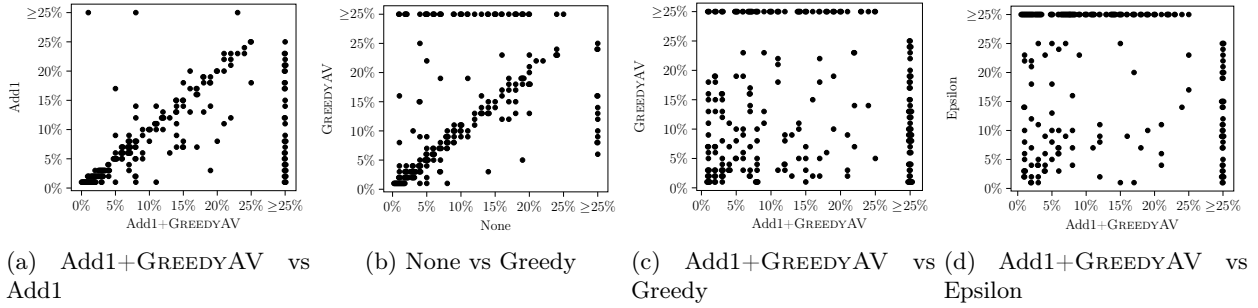

Examining the plots, what we see here is that there is a clear
correlation between $50\%$-winner thresholds for the completion
methods Add1 and Add1+\Greedy (see \Cref{fig:corr1}).  The fact that
outcomes produced by these completion methods are closely connected to
each other is also quite intuitive, as both of them produce identical
outcomes in many cases, as the final \Greedy phase takes only place on
some instances.  However, also note that the correlation is not
perfect, as there are still instances which have a $50\%$-winner
threshold above $25\%$ for Add1+\Greedy but a $50\%$-winner threshold
of $1\%$ for Add1 (see \Cref{fig:Rejon} for an example).  The
completion methods None and \Greedy (see \Cref{fig:corr2}) exhibit a
similarly strong correlation.  Here, the reason for their correlation
is because \Greedy is a very robust method. Thus, in case the outcome
produced by \MES on the initial instance does not change, which is the
one returned by ``None,'' \Greedy is also likely to spend the remaining
budget on the same projects.  In contrast, for all other pairs of
rules the correlation is less strong: In \Cref{fig:corr3}, we show the
correlation between Add1+\Greedy and \Greedy (the plots for Add1 and
\Greedy, Add1+\Greedy and None, and Add1 and None look similar).
Here, no correlation is visible, and thus the two groups of completion
methods seem not to be correlated in terms of unrobust outcomes.  The
many instances exhibiting a different behavior under the two
completion methods also highlight the high impact that the completion
method has on the produced outcome.  Lastly, in \Cref{fig:corr4}, we
include the correlation between Add1+\Greedy and Epsilon (the plots
for all other rules compared to Epsilon are similar). Here, again no
clear correlation is visible, indicating that the Epsilon method acts
in some sense differently than the other four.

To sum up, there seem to be three groups of completion methods which show a somewhat similar behavior on an instance-level with the first one containing Add1+\Greedy and Add1, the second one containing Epsilon, and the third one containing \Greedy and None.
Overall, the influence of the completion method on the robustness of outcomes (at least in terms of the $50\%$-winner threshold) seems to be surprisingly strong. 
We will investigate this influence in more detail in the next part.

\begin{table}[t!]
	\centering
	\begin{tabular}{c|c|c|c|c|c}
		completion method & \# instances $\leq 25\%$ & \# instances $\leq 10\%$ & \# instances $\leq 5\%$ & mean (for $\leq 25\%$) & median (for $\leq 25\%$)   \\\hline
		Add1+\Greedy & 187  & 121 & 75  & 0.09 & 0.07 \\
		Add1 & 223  & 149 & 97 & 0.08 & 0.07\\
		Epsilon & 114 & 68 & 36 & 0.11 & 0.09 \\
		\Greedy & 185 & 119 & 75 & 0.09 & 0.08 \\
		None & 216 & 142 & 91 & 0.09 & 0.07 \\
	\end{tabular}
	\caption{Statistical quantities regarding the $50\%$-winner threshold of \MES with different completion methods. The first three columns contain the number of instances with a 50\%-winner threshold smaller equal $25\%/10\%/5\%$. The last two columns the mean/median threshold among all instances with a threshold of smaller equal $25\%$. } \label{ta:MES-rob}
\end{table} 
\subsubsection{An Aggregate View on Outcome's Robustness.}
In \Cref{ta:MES-rob} (analogous to \Cref{ta:50-winner}), we provide
some quantities regarding the $50\%$-winner threshold of \MES with
different completion methods on our full dataset. There is a clear
ordering of the completion methods in terms of the robustness of
produced outcomes for all five considered quantities: Add1 produces
the least robust result, than None, than Add1+\Greedy and \Greedy,
which perform roughly similarly, and Epsilon produces the most robust
results.  The difference between Epsilon and Add1 is quite remarkable
here, as Add1 produces roughly twice as many instances with a
$50\%$-winner threshold below $25\%/10\%/5\%$, again highlighting the
impact of the chosen completion method.

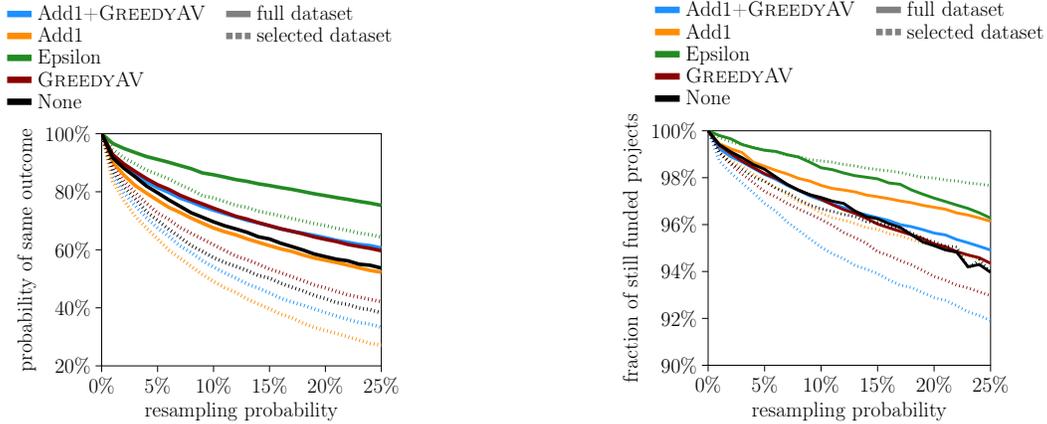
\begin{figure}[t!]
	\centering 
	\begin{subfigure}[b]{0.47\textwidth}
        \centering
		\resizebox{0.7\textwidth}{!}{\begin{tikzpicture}[every plot/.append style={line width=2.5pt}]

\definecolor{color0}{rgb}{1,0.549019607843137,0}
\definecolor{color1}{rgb}{0.133333333333333,0.545098039215686,0.133333333333333}
\definecolor{color2}{rgb}{0.117647058823529,0.564705882352941,1}

\begin{axis}[
legend columns=2, 
legend cell align={left},
legend style={
  fill opacity=0.8,
  draw opacity=1,
  draw=none,
  text opacity=1,
  at={(0.42,1.46)},
  line width=1.5pt,
  anchor=north,
   /tikz/column 2/.style={
  	column sep=10pt,
  }, font=\Large
},
legend image post style={line width =4.5pt},
legend entries={\Greedy,
	full dataset,
	\GreedyCost,
	selected dataset, 
	\Phragmen, {\phantom{a}},
	\MES},
tick align=outside,
tick pos=left,
x grid style={white!69.0196078431373!black},
xlabel={resampling probability},
xmin=0, xmax=0.25,
xtick style={color=black},
ytick={0.4,0.6,0.8,1},
yticklabels={40\%,60\%,80\%,100\%},
xtick={0,0.05,0.1,0.15,0.2,0.25},
xticklabels={0\%,5\%,10\%,15\%,20\%,25\%},
y grid style={white!69.0196078431373!black},
ylabel={probability of same outcome},
ymin=0.3, ymax=1,
ytick style={color=black},every tick label/.append style={font=\Large}, 
label style={font=\Large}
]
\addlegendimage{red!54.5098039215686!black}
\addlegendimage{gray}
\addlegendimage{color0}
\addlegendimage{gray,dotted}
\addlegendimage{color1}
\addlegendimage{white,dashed}
\addlegendimage{color2}
\addplot [semithick, red!54.5098039215686!black]
table {%
0 0.999999999999995
0.01 0.963239130434778
0.02 0.952391304347822
0.03 0.939847826086953
0.04 0.933282608695649
0.05 0.925391304347824
0.06 0.917326086956518
0.07 0.913413043478258
0.08 0.908369565217389
0.09 0.902999999999997
0.1 0.894913043478258
0.11 0.889608695652171
0.12 0.885326086956519
0.13 0.880413043478259
0.14 0.874673913043476
0.15 0.869304347826084
0.16 0.864978260869564
0.17 0.858456521739128
0.18 0.854217391304346
0.19 0.851565217391303
0.2 0.847934782608693
0.21 0.840586956521737
0.22 0.838847826086954
0.23 0.83382608695652
0.24 0.828021739130433
0.25 0.820521739130433
};
\addplot [semithick, red!54.5098039215686!black, dotted]
table {%
0 0.999999999999998
0.01 0.935291828793773
0.02 0.917042801556419
0.03 0.897354085603112
0.04 0.887003891050583
0.05 0.874824902723734
0.06 0.862217898832684
0.07 0.856264591439688
0.08 0.849260700389104
0.09 0.840739299610894
0.1 0.828365758754863
0.11 0.820233463035019
0.12 0.814941634241245
0.13 0.809105058365759
0.14 0.79875486381323
0.15 0.790817120622568
0.16 0.784941634241245
0.17 0.774785992217899
0.18 0.770350194552529
0.19 0.766731517509727
0.2 0.761050583657587
0.21 0.750739299610895
0.22 0.748132295719844
0.23 0.742023346303502
0.24 0.736186770428016
0.25 0.722996108949416
};
\addplot [semithick, color0]
table {%
0 0.999999999999995
0.01 0.968347826086952
0.02 0.947717391304344
0.03 0.929630434782605
0.04 0.914413043478257
0.05 0.903913043478257
0.06 0.893869565217388
0.07 0.882239130434779
0.08 0.87069565217391
0.09 0.857282608695649
0.1 0.844565217391301
0.11 0.832999999999997
0.12 0.822304347826084
0.13 0.812760869565215
0.14 0.804152173913041
0.15 0.795347826086954
0.16 0.786499999999997
0.17 0.776934782608693
0.18 0.767826086956519
0.19 0.76069565217391
0.2 0.752347826086954
0.21 0.74428260869565
0.22 0.738630434782606
0.23 0.731630434782606
0.24 0.727021739130432
0.25 0.718804347826084
};
\addplot [semithick, color0, dotted]
table {%
0 0.999999999999998
0.01 0.944085603112839
0.02 0.90793774319066
0.03 0.875719844357976
0.04 0.849221789883268
0.05 0.830583657587548
0.06 0.812918287937742
0.07 0.792723735408559
0.08 0.772334630350194
0.09 0.748871595330739
0.1 0.726225680933851
0.11 0.706264591439688
0.12 0.686575875486381
0.13 0.670389105058365
0.14 0.654747081712062
0.15 0.639066147859922
0.16 0.624785992217898
0.17 0.607976653696498
0.18 0.591439688715953
0.19 0.579416342412451
0.2 0.565797665369649
0.21 0.551906614785992
0.22 0.543774319066148
0.23 0.531206225680934
0.24 0.523657587548638
0.25 0.510155642023346
};
\addplot [semithick, color1]
table {%
0 0.999999999999995
0.01 0.959739130434779
0.02 0.936869565217387
0.03 0.920456521739127
0.04 0.902456521739127
0.05 0.887978260869562
0.06 0.872391304347823
0.07 0.855934782608693
0.08 0.841456521739128
0.09 0.827347826086953
0.1 0.814173913043476
0.11 0.800499999999997
0.12 0.787543478260867
0.13 0.774934782608693
0.14 0.763369565217389
0.15 0.752630434782606
0.16 0.740217391304345
0.17 0.729543478260868
0.18 0.719521739130432
0.19 0.713999999999998
0.2 0.705760869565215
0.21 0.698260869565215
0.22 0.691217391304345
0.23 0.684456521739128
0.24 0.677630434782607
0.25 0.671413043478259
};
\addplot [semithick, color1, dotted]
table {%
0 0.999999999999998
0.01 0.928404669260699
0.02 0.88828793774319
0.03 0.859105058365758
0.04 0.827898832684824
0.05 0.80260700389105
0.06 0.774474708171206
0.07 0.745175097276264
0.08 0.719922178988326
0.09 0.695175097276264
0.1 0.671945525291828
0.11 0.647626459143969
0.12 0.625914396887159
0.13 0.603346303501945
0.14 0.583268482490272
0.15 0.564513618677043
0.16 0.542762645914397
0.17 0.52408560311284
0.18 0.506731517509728
0.19 0.497587548638132
0.2 0.482762645914397
0.21 0.470272373540856
0.22 0.459688715953307
0.23 0.449299610894942
0.24 0.437859922178988
0.25 0.429066147859922
};
\addplot [semithick, color2]
table {%
0 0.999999999999995
0.01 0.916391304347823
0.02 0.883065217391301
0.03 0.858086956521736
0.04 0.834456521739128
0.05 0.81423913043478
0.06 0.797717391304345
0.07 0.778260869565215
0.08 0.764760869565216
0.09 0.749021739130433
0.1 0.736543478260867
0.11 0.724652173913042
0.12 0.713586956521737
0.13 0.701521739130433
0.14 0.692978260869564
0.15 0.682521739130433
0.16 0.672043478260868
0.17 0.66386956521739
0.18 0.657086956521738
0.19 0.649717391304346
0.2 0.641521739130433
0.21 0.632543478260868
0.22 0.625673913043477
0.23 0.618695652173912
0.24 0.614304347826086
0.25 0.605913043478259
};
\addplot [semithick, color2, dotted]
table {%
0 0.999999999999998
0.01 0.852607003891049
0.02 0.795719844357976
0.03 0.75124513618677
0.04 0.710661478599221
0.05 0.675214007782101
0.06 0.646731517509728
0.07 0.613307392996109
0.08 0.590194552529182
0.09 0.562879377431907
0.1 0.540544747081712
0.11 0.520311284046693
0.12 0.502295719844358
0.13 0.482529182879378
0.14 0.467898832684825
0.15 0.451206225680934
0.16 0.432918287937744
0.17 0.42
0.18 0.408949416342413
0.19 0.395992217898833
0.2 0.383813229571985
0.21 0.371556420233463
0.22 0.360272373540856
0.23 0.349416342412451
0.24 0.343774319066148
0.25 0.332879377431907
};

\end{axis}

\end{tikzpicture}}
		\caption{Probability that outcome is identical to initial one. }\label{fig:ov_robust_MES1}
	\end{subfigure}\hfill
	\begin{subfigure}[b]{0.51\textwidth}
 \centering
		\resizebox{0.69\textwidth}{!}{\begin{tikzpicture}[every plot/.append style={line width=2.5pt}]

\definecolor{color0}{rgb}{1,0.549019607843137,0}
\definecolor{color1}{rgb}{0.133333333333333,0.545098039215686,0.133333333333333}
\definecolor{color2}{rgb}{0.117647058823529,0.564705882352941,1}

\begin{axis}[
legend columns=2, 
legend cell align={left},
legend style={
  fill opacity=0.8,
  draw opacity=1,
  draw=none,
  text opacity=1,
  at={(0.44,1.46)},
  line width=1.5pt,
  anchor=north,
   /tikz/column 2/.style={
  	column sep=10pt,
  }, font=\Large
},legend image post style={line width =4.5pt},
legend entries={\Greedy,
	full dataset,
	\GreedyCost,
	selected dataset,
	\Phragmen, {\phantom{a}},
	\MES},
tick align=outside,
tick pos=left,
x grid style={white!69.0196078431373!black},
xlabel={resampling probability},
xmin=0, xmax=0.25,
xtick style={color=black},
ytick={0.85,0.9,0.95,1},
yticklabels={85\%,90\%,95\%,100\%},
xtick={0,0.05,0.1,0.15,0.2,0.25},
xticklabels={0\%,5\%,10\%,15\%,20\%,25\%},
y grid style={white!69.0196078431373!black},
ylabel={fraction of still funded projects},
ymin=0.85, ymax=1,
ytick style={color=black},every tick label/.append style={font=\Large}, 
label style={font=\Large}
]
\addlegendimage{red!54.5098039215686!black}
\addlegendimage{gray}
\addlegendimage{color0}
\addlegendimage{gray,dotted}
\addlegendimage{color1}
\addlegendimage{white,dashed}
\addlegendimage{color2}
\addplot [semithick, red!54.5098039215686!black]
table {%
0 0.999999999999998
0.01 0.993633750086056
0.02 0.991860649159699
0.03 0.989198410611883
0.04 0.987938262349318
0.05 0.986493304053233
0.06 0.984834056981555
0.07 0.984132521664506
0.08 0.982984309713475
0.09 0.98197018180568
0.1 0.980385151020419
0.11 0.979433643400557
0.12 0.978762087186753
0.13 0.97774552515663
0.14 0.976546192275664
0.15 0.975054902174502
0.16 0.974262972865783
0.17 0.973210041853794
0.18 0.972185996029206
0.19 0.971602360840478
0.2 0.969957462825344
0.21 0.969304459643422
0.22 0.968560781198056
0.23 0.967573706010317
0.24 0.966139250873663
0.25 0.964700692030069
};
\addplot [semithick, red!54.5098039215686!black, dotted]
table {%
0 0.999999999999998
0.01 0.989028346261922
0.02 0.98605404081392
0.03 0.98248748178824
0.04 0.980559927717836
0.05 0.978362209303824
0.06 0.976177116211805
0.07 0.97493891580027
0.08 0.973814025617754
0.09 0.972030907874124
0.1 0.969774920718743
0.11 0.96842357990663
0.12 0.967544559312998
0.13 0.966929812452274
0.14 0.964616494009901
0.15 0.962327859509309
0.16 0.961012727859715
0.17 0.959990442802366
0.18 0.958871495492666
0.19 0.958141340179888
0.2 0.956078180171736
0.21 0.955309274529116
0.22 0.95380618999616
0.23 0.952835520989633
0.24 0.951818887939826
0.25 0.949829268236231
};

\addplot [semithick, color0]
table {%
0 0.999999999999999
0.01 0.997325605335923
0.02 0.995073623005613
0.03 0.992901441926279
0.04 0.990942394980821
0.05 0.989455795286948
0.06 0.987726183868573
0.07 0.98584595201453
0.08 0.98401040981185
0.09 0.9822010199297
0.1 0.980672491651608
0.11 0.979517384503893
0.12 0.977938466352424
0.13 0.976388093610388
0.14 0.974816008876621
0.15 0.973151990782729
0.16 0.971143263892777
0.17 0.969419697985879
0.18 0.96785126295145
0.19 0.966465323569976
0.2 0.965215719938371
0.21 0.963911362496691
0.22 0.962671952006416
0.23 0.962086512021013
0.24 0.961352929905322
0.25 0.959750228198922
};
\addplot [semithick, color0, dotted]
table {%
0 0.999999999999999
0.01 0.995401861690759
0.02 0.992001426391372
0.03 0.988149792293472
0.04 0.984947477397579
0.05 0.982358621914385
0.06 0.979357501606526
0.07 0.9759975299588
0.08 0.973183353489435
0.09 0.969875602232386
0.1 0.967380075848535
0.11 0.965257032987751
0.12 0.962438500086052
0.13 0.960312847660147
0.14 0.957069117016129
0.15 0.954044290471554
0.16 0.951315719949579
0.17 0.948114461373322
0.18 0.945304415243145
0.19 0.94311362498042
0.2 0.941157700978401
0.21 0.939177217627613
0.22 0.93806580106572
0.23 0.936501684339866
0.24 0.935825731814451
0.25 0.933686142406619
};

\addplot [semithick, color1]
table {%
0 0.999999999999998
0.01 0.997847461609692
0.02 0.995763755382695
0.03 0.99395143673974
0.04 0.991880451304748
0.05 0.989829299809288
0.06 0.987648541130269
0.07 0.985829608845749
0.08 0.983343624155841
0.09 0.981399205858696
0.1 0.980102202679975
0.11 0.978777688996768
0.12 0.97704011132988
0.13 0.97504001002369
0.14 0.973650543787869
0.15 0.971734253781197
0.16 0.969123214118745
0.17 0.967647303028519
0.18 0.966151764857471
0.19 0.964989189663881
0.2 0.963279020060392
0.21 0.962308458651421
0.22 0.961249943760633
0.23 0.960480871705918
0.24 0.959290875181554
0.25 0.957986952466654
};
\addplot [semithick, color1, dotted]
table {%
0 0.999999999999996
0.01 0.996577817148348
0.02 0.993147837131155
0.03 0.989960418548433
0.04 0.986637831925536
0.05 0.983159935320823
0.06 0.979390796038157
0.07 0.976142180249915
0.08 0.972001663133658
0.09 0.96870472823938
0.1 0.966651568450999
0.11 0.963948693480668
0.12 0.961006941686563
0.13 0.957676550107048
0.14 0.955246920081657
0.15 0.952379456404552
0.16 0.94792454495474
0.17 0.945163275016098
0.18 0.942595838908847
0.19 0.940704309630341
0.2 0.938011773468203
0.21 0.936426541254943
0.22 0.935164901465319
0.23 0.934093793757167
0.24 0.932312911710073
0.25 0.931260762084559
};

\addplot [semithick, color2]
table {%
0 1
0.01 0.992282896086307
0.02 0.989028737672515
0.03 0.98671503234718
0.04 0.984033808428136
0.05 0.981390869674576
0.06 0.979258811111678
0.07 0.976694838804856
0.08 0.974527683148489
0.09 0.972208377874559
0.1 0.970483409047499
0.11 0.96875491151082
0.12 0.967253784153923
0.13 0.965449144094362
0.14 0.964150562167052
0.15 0.962929783490465
0.16 0.961714275154378
0.17 0.959974040729836
0.18 0.959101286742173
0.19 0.957846717486794
0.2 0.956381986119084
0.21 0.955590988741744
0.22 0.953640180218076
0.23 0.952250850269926
0.24 0.950735375631652
0.25 0.949130059365602
};
\addplot [semithick, color2, dotted]
table {%
0 0.999999999999999
0.01 0.987044760010489
0.02 0.982132782726774
0.03 0.977832392161414
0.04 0.973562060510366
0.05 0.969054480766415
0.06 0.965680034133005
0.07 0.961088605753732
0.08 0.957586254661713
0.09 0.953908732200903
0.1 0.950389296677129
0.11 0.94753806606862
0.12 0.94544780504352
0.13 0.942578215512869
0.14 0.941019137294207
0.15 0.939102880053768
0.16 0.936431583398599
0.17 0.934135754556998
0.18 0.933168864967399
0.19 0.930929413112646
0.2 0.928896513575437
0.21 0.927702633561042
0.22 0.925416534114975
0.23 0.922807374879147
0.24 0.921209403100045
0.25 0.919056197241671
};
\end{axis}

\end{tikzpicture}}
		\caption{Fraction of initially funded projects that are still~funded.}\label{fig:ov_robust_MES2}
	\end{subfigure}
	\caption{Some statistics regarding the $50\%$-winner threshold for \MES with different completion methods. }
	\label{fig:ov_robust_MES}
\end{figure}
In \Cref{fig:ov_robust_MES} (similar to \Cref{fig:ov_robust}), we take a more nuanced look at the robustness of outcomes. 
First, in \Cref{fig:ov_robust_MES1}, we show the probability that the outcome stays the same as the initial one depending on the resampling probability. 
The picture here is very similar to our conclusions from \Cref{ta:MES-rob}. The only completion methods that behave a bit differently are None and Add1+\Greedy. Compared to the other ones, None seems to produce quite non-robust results on the full dataset, yet more robust ones on the selected dataset. 
This implies that None is comparably non-robust on instances which are part of the full but not part of the selected dataset, i.e., instances which had very robust outcomes under all rules considered in the main body. 
The reason for why None produced non-robust outcomes might be that outcomes returned by None are often non-exhaustive, and whether additional projects get funded during the first phase of \MES might be irrelevant for the outcomes of other rules. 
For Add1+\Greedy we see that it is of comparable robustness to \Greedy on the full dataset, yet less robust on the selected one. 

Second, in \Cref{fig:ov_robust_MES2} we show the fraction of initially funded projects which are still funded for different values of the resampling probability.
So here we are not so much interested in whether the outcome changes but more by how much it changes. 
Interestingly, the relations between the different rules are different here than described before. 
The only similarity is that Epsilon still is the most robust one. 
However Add1, which we observed to be the least robust rule before here appears as the second most robust one, implying that for Add1 the outcome often changes but in case it changes it does not do so too drastically. 
None comes afterwards and the two least robust methods in terms of the fraction of overturned funding decisions are Add1+\Greedy and \Greedy. 
Given that both of them include a \Greedy part, the reason for this could be that if \Greedy is started with a different initial budget or a different set of available project (which happens in case the output of the first phase of \MES changes), then other funding decisions are expected to be overturned as well, as for \Greedy one change can easily trigger lengthy subsequent changes.

\subsubsection{An Instance-Based View on Different Completion Methods.}
We conclude by presenting the behavior of the different completion methods on some exemplary instances in \Cref{fig:MES1,fig:MES2,fig:MES3,fig:MES4,fig:MES5}.\footnote{Recall that an empty plot means that for this instance and rule, there is no initially funded project whose funding probability dropped below $90\%$ and no initially not funded project whose funding probability exceeded $10\%$.} 
These instances should illustrate that Add1+\Greedy and Add1, and \Greedy and None, behave often somewhat similarly, but that all other pairs of rules often behaves fundamentally different from each other. 
For instance in \Cref{fig:MES1,fig:MES2}, we see that \Greedy and None produce very non-robust outcomes, whereas the outcomes for Add1 and Add1+\Greedy are more robust.
Epsilon produces a semi-robust outcome in \Cref{fig:MES1} , and a very robust one in \Cref{fig:MES2}.
In \Cref{fig:MES3} we show that it can also be the other way around, as here Add1 and Add1+\Greedy produce very non-robust outcomes, whereas Epsilon, \Greedy, and None are very robust. 
Lastly, in \Cref{fig:MES4} and \Cref{fig:MES5} we show that although we have previously observed that \Greedy and None, and Add1 and Add1+\Greedy behave quite similarly on many instances, there are also examples where this is not the case. 
Notably, in both cases, this is because None and Add1 return non-exhaustive outcomes on the respective instances, which cannot happen for Add1+\Greedy and \Greedy.

\begin{figure}[t!]
	\centering 
	\begin{subfigure}{0.195\textwidth}
		\includegraphics[width=\textwidth]{./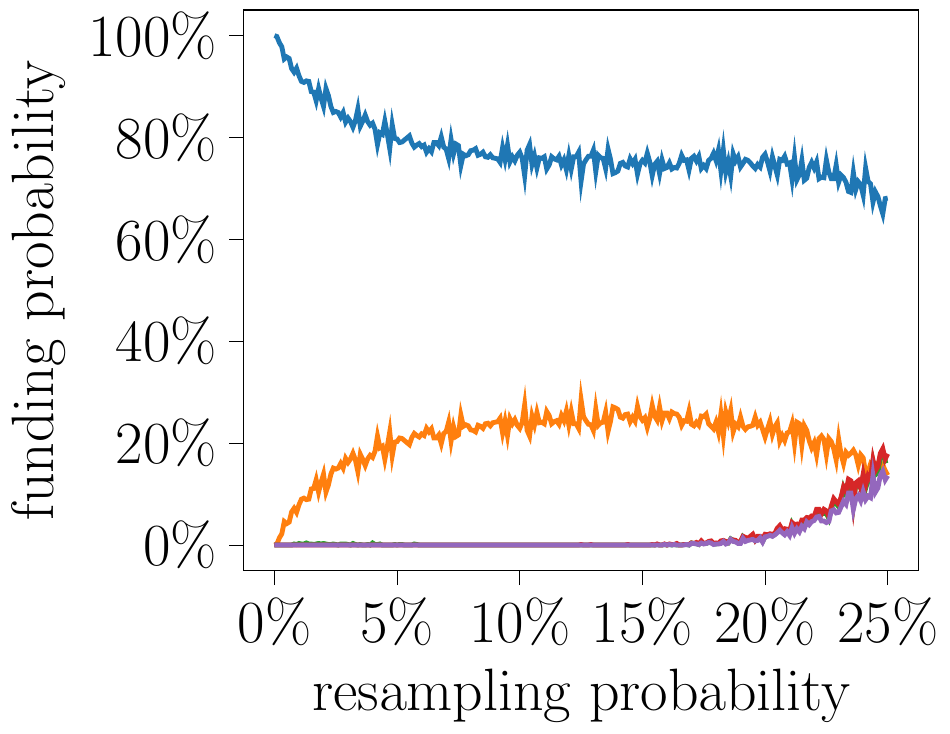}
		\caption{Add1+\Greedy}
	\end{subfigure}\hfill
	\begin{subfigure}{0.195\textwidth}
		\includegraphics[width=\textwidth]{./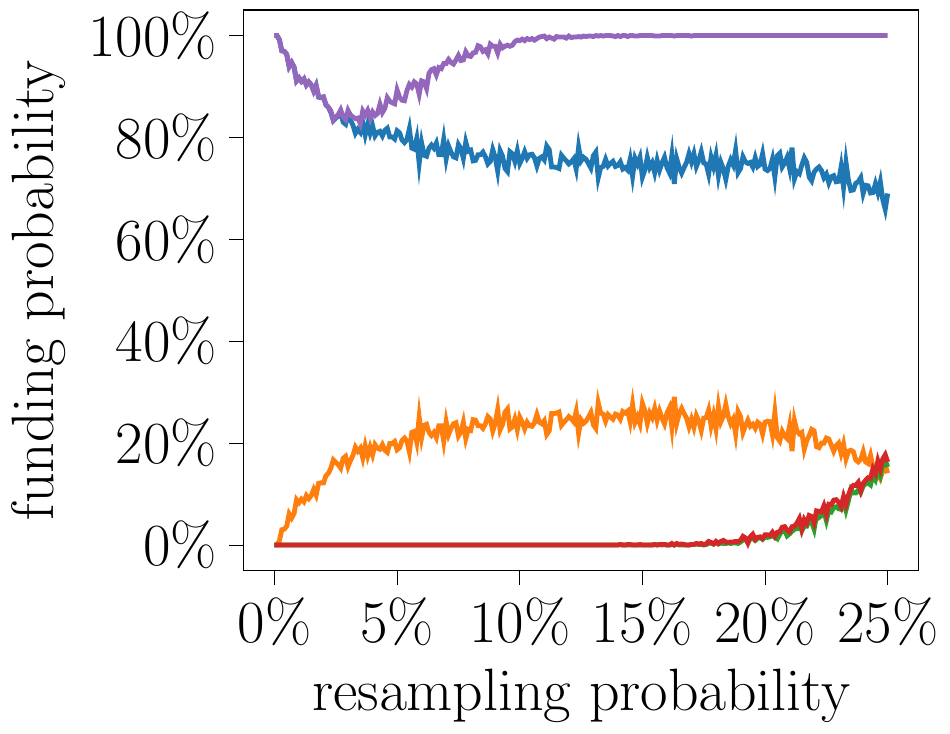}
		\caption{Add1}
	\end{subfigure}\hfill%
	\begin{subfigure}{0.195\textwidth}
		\includegraphics[width=\textwidth]{./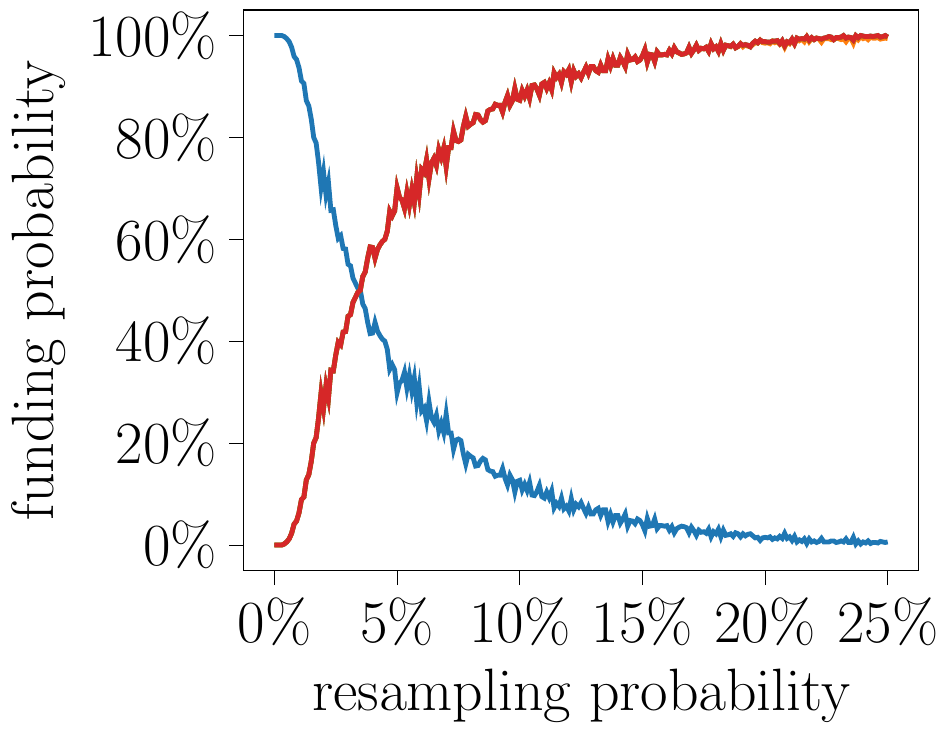}
		\caption{Epsilon}
	\end{subfigure}\hfill
	\begin{subfigure}{0.195\textwidth}
		\includegraphics[width=\textwidth]{./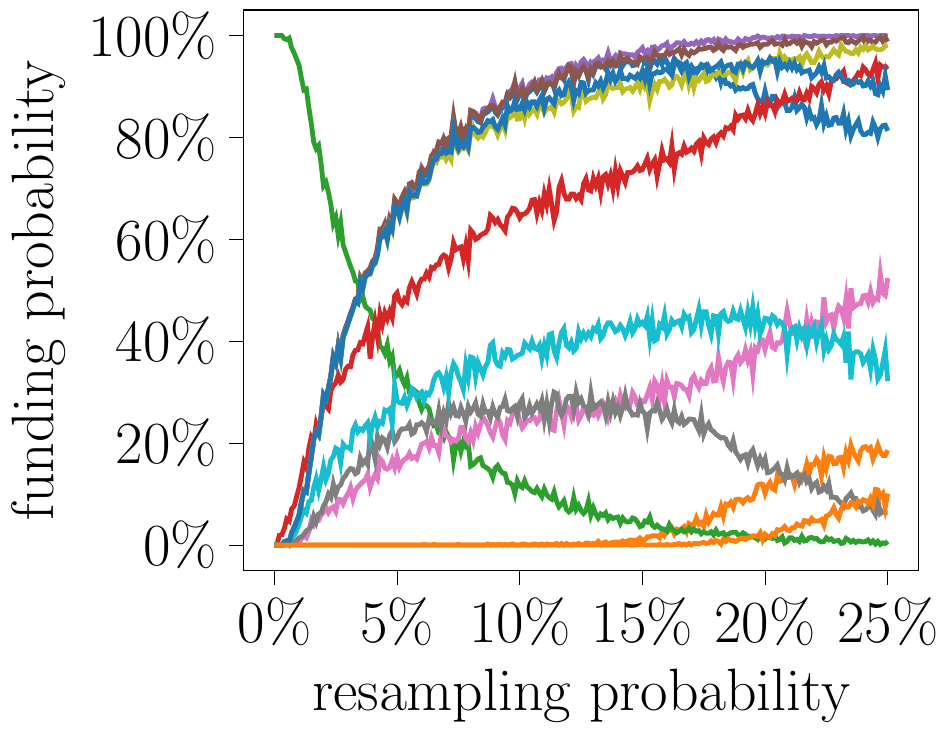}
		\caption{\Greedy}
	\end{subfigure}\hfill
	\begin{subfigure}{0.195\textwidth}
	\includegraphics[width=\textwidth]{./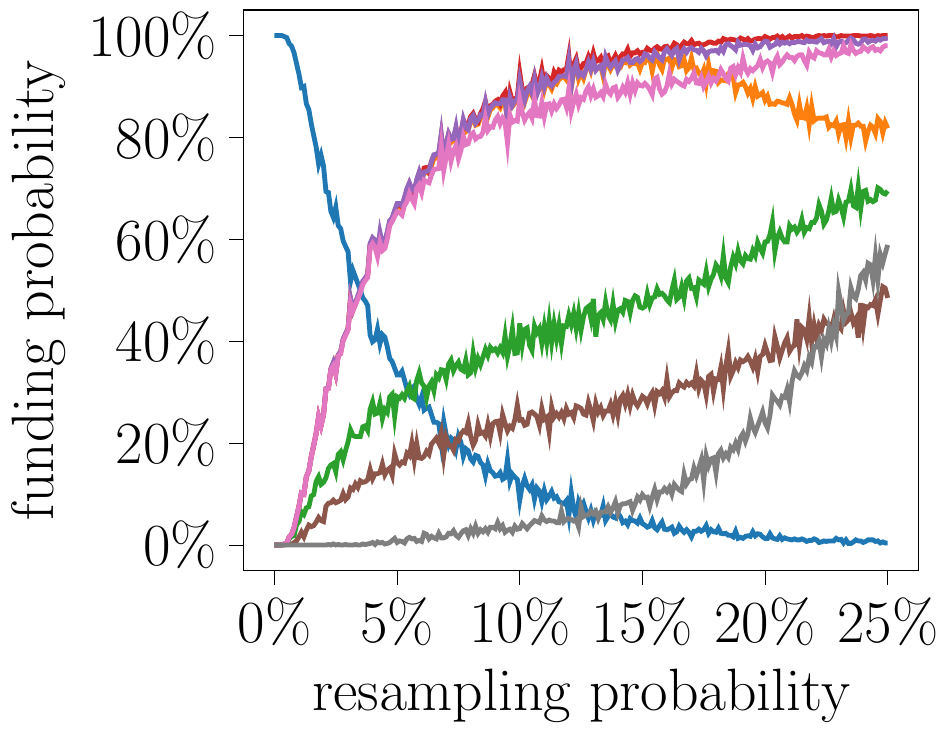}
	\caption{None}
	\end{subfigure}
	\caption{Bemowo 2022 (Warszawa) for \MES with different completion methods.} \label{fig:MES1}
\end{figure}

\begin{figure}[t!]
	\centering 
	\begin{subfigure}{0.195\textwidth}
		\includegraphics[width=\textwidth]{./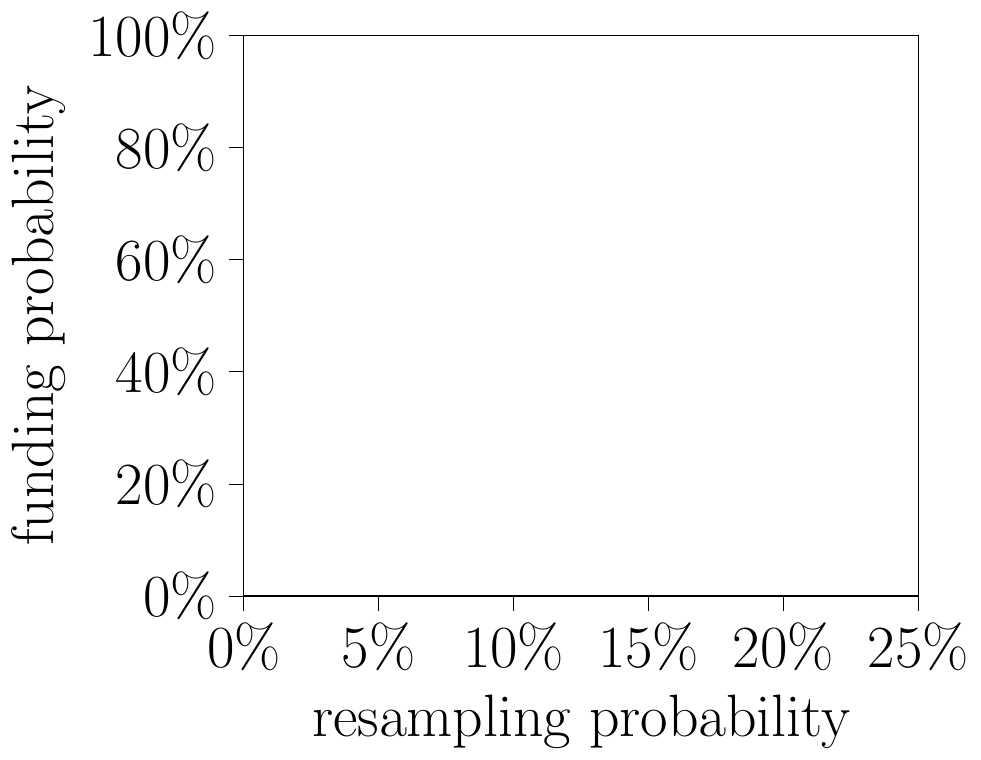}
		\caption{Add1+\Greedy}
	\end{subfigure}\hfill
	\begin{subfigure}{0.195\textwidth}
		\includegraphics[width=\textwidth]{./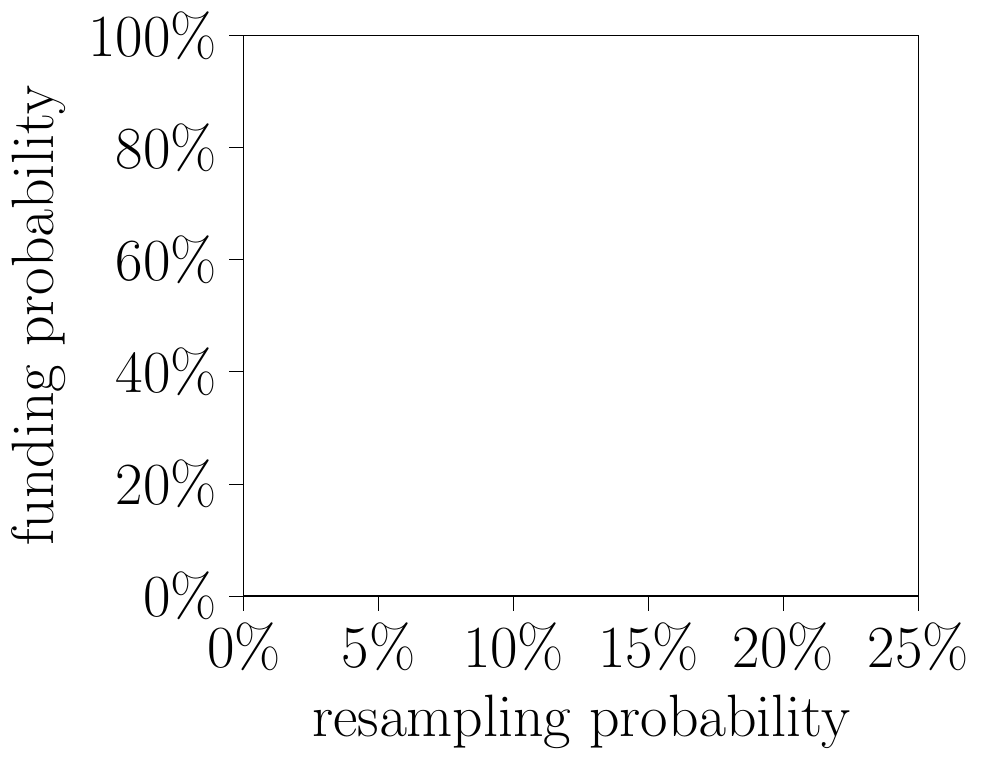}
		\caption{Add1}
	\end{subfigure}\hfill%
	\begin{subfigure}{0.195\textwidth}
		\includegraphics[width=\textwidth]{./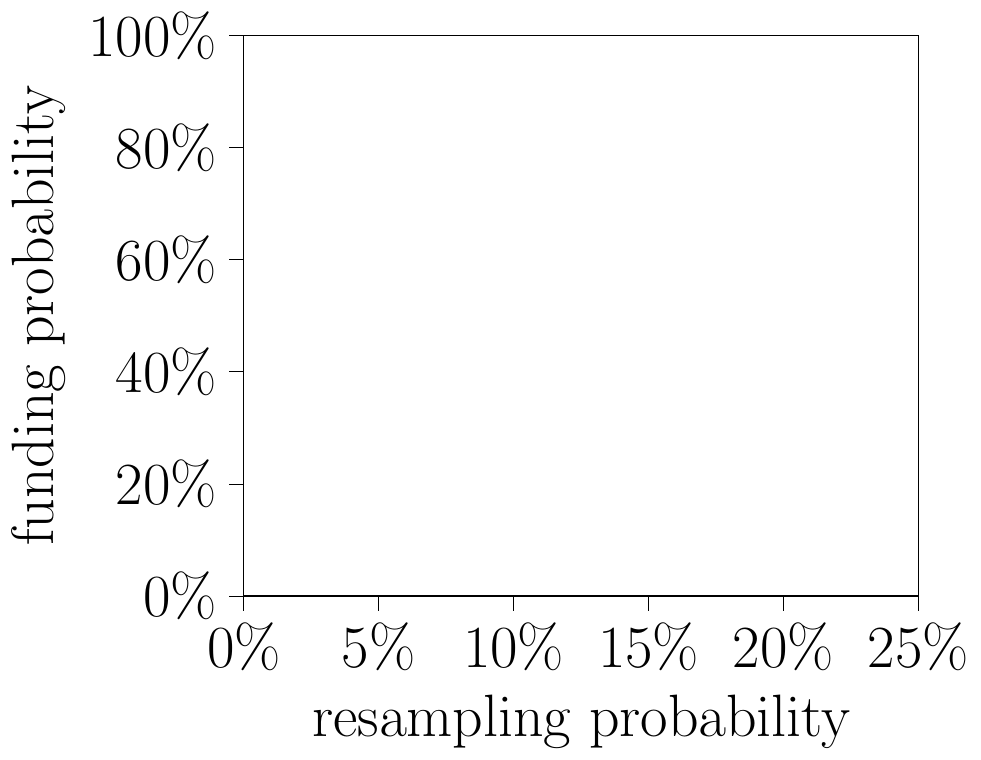}
		\caption{Epsilon}
	\end{subfigure}\hfill
	\begin{subfigure}{0.195\textwidth}
		\includegraphics[width=\textwidth]{./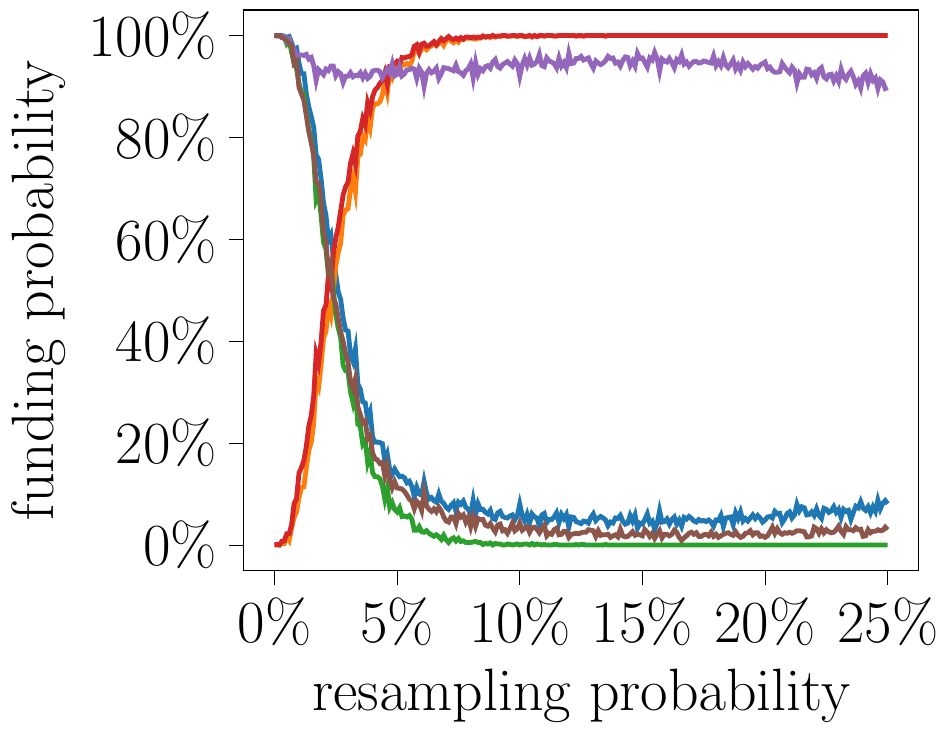}
		\caption{\Greedy}
	\end{subfigure}\hfill
	\begin{subfigure}{0.195\textwidth}
		\includegraphics[width=\textwidth]{./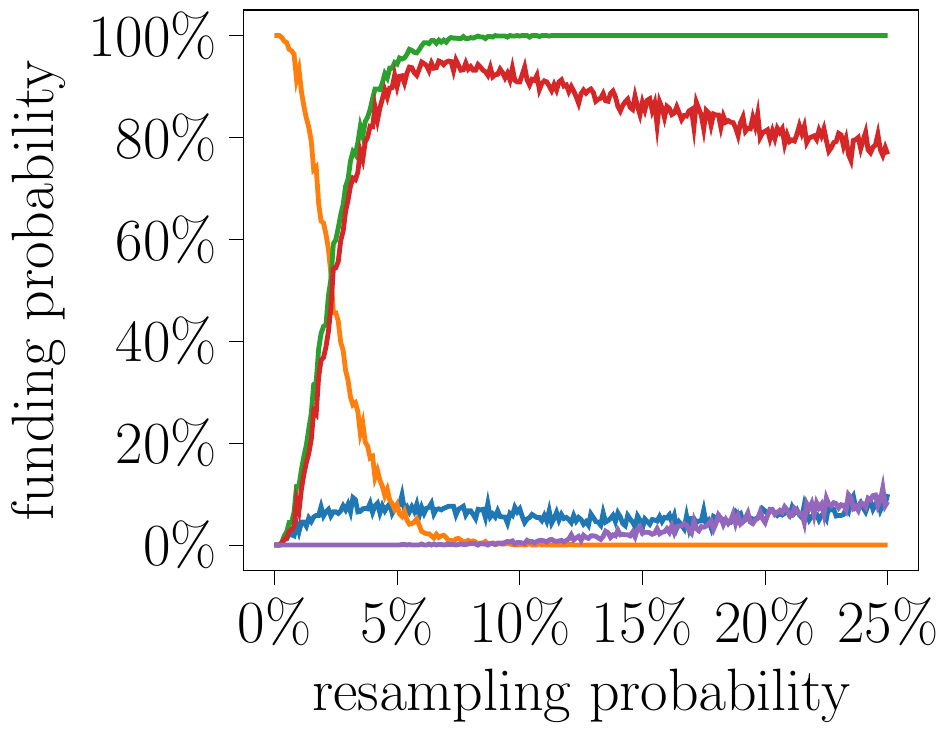}
		\caption{None}
	\end{subfigure}
	\caption{Chomiczowka 2018 (Warszawa) for \MES with different completion methods.}\label{fig:MES2}
\end{figure}

\begin{figure}[t!]
	\centering 
	\begin{subfigure}{0.195\textwidth}
		\includegraphics[width=\textwidth]{./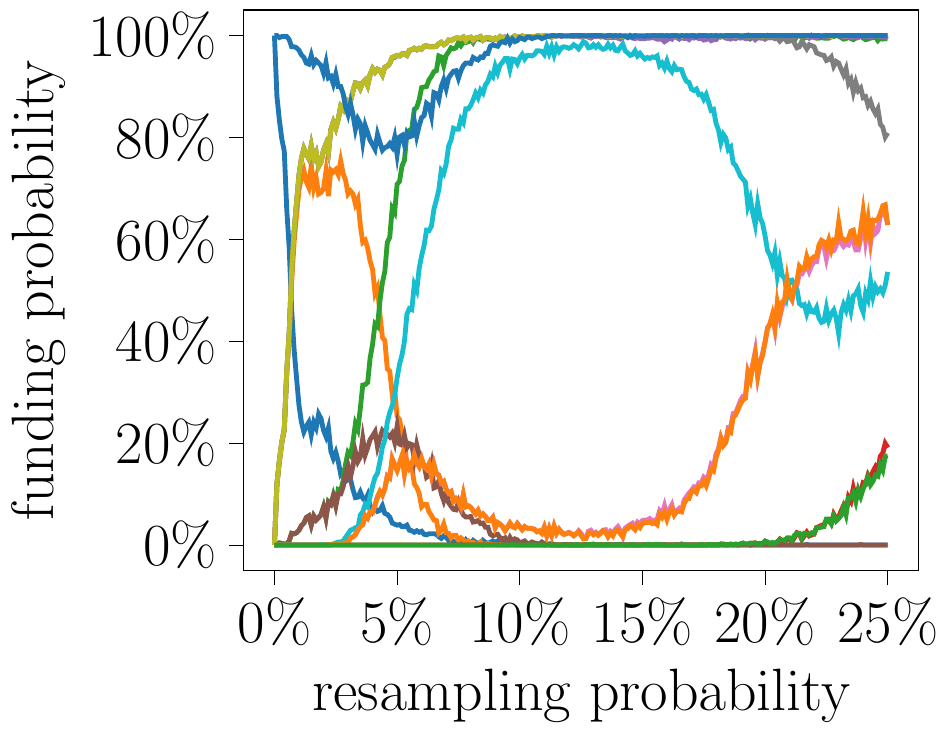}
		\caption{Add1+\Greedy}
	\end{subfigure}\hfill
	\begin{subfigure}{0.195\textwidth}
		\includegraphics[width=\textwidth]{./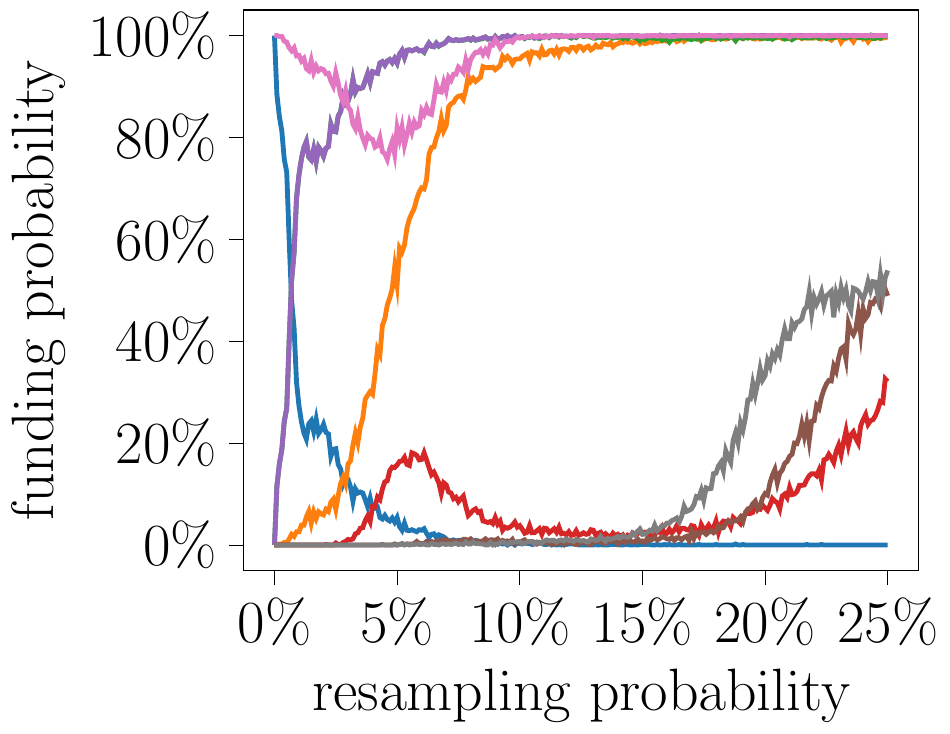}
		\caption{Add1}
	\end{subfigure}\hfill%
	\begin{subfigure}{0.195\textwidth}
		\includegraphics[width=\textwidth]{./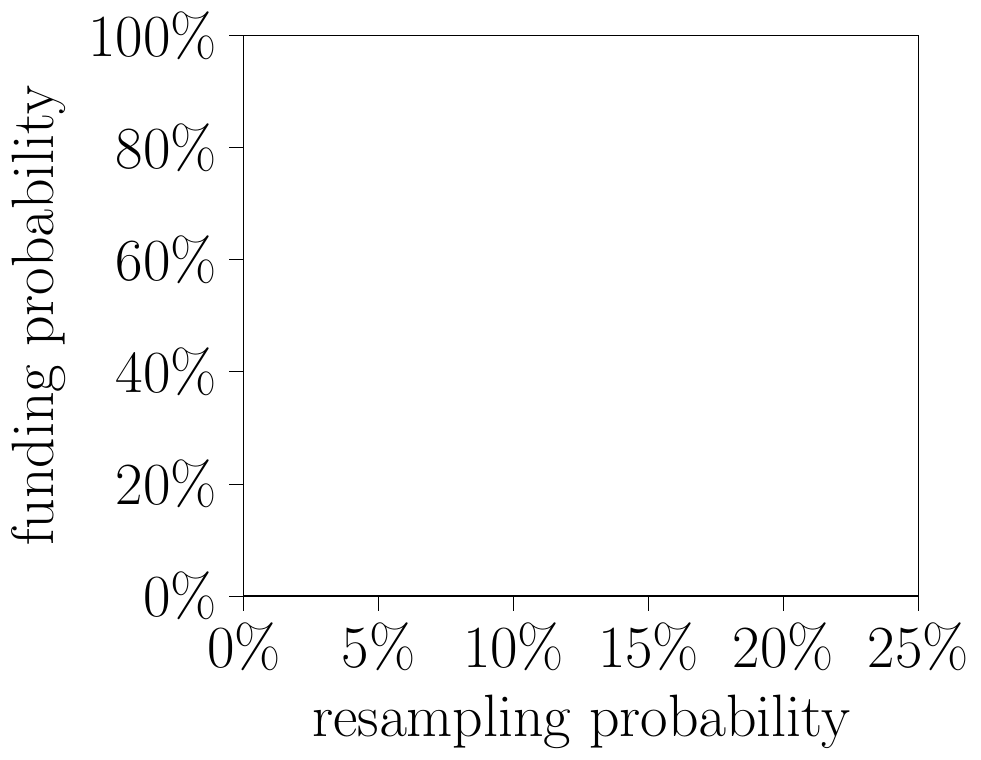}
		\caption{Epsilon}
	\end{subfigure}\hfill
	\begin{subfigure}{0.195\textwidth}
		\includegraphics[width=\textwidth]{./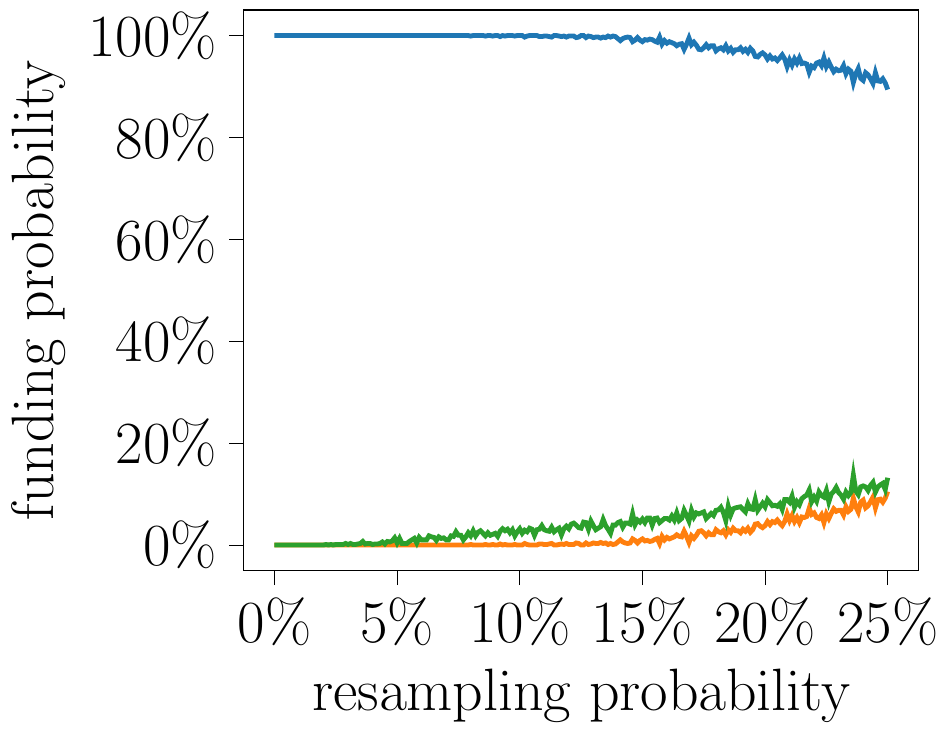}
		\caption{\Greedy}
	\end{subfigure}\hfill
	\begin{subfigure}{0.195\textwidth}
		\includegraphics[width=\textwidth]{./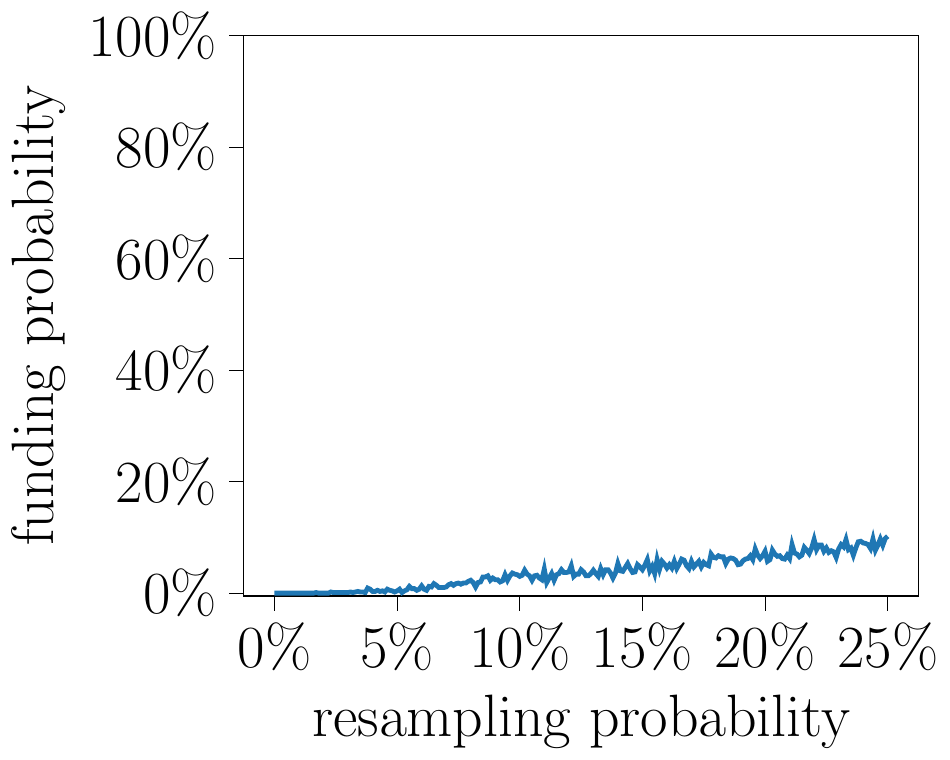}
		\caption{None}
	\end{subfigure}
	\caption{Brodno Podgrodzie 2018 (Warszawa) for \MES with different completion methods.}\label{fig:MES3}
\end{figure}

\begin{figure}[t!]
	\centering 
	\begin{subfigure}{0.195\textwidth}
		\includegraphics[width=\textwidth]{./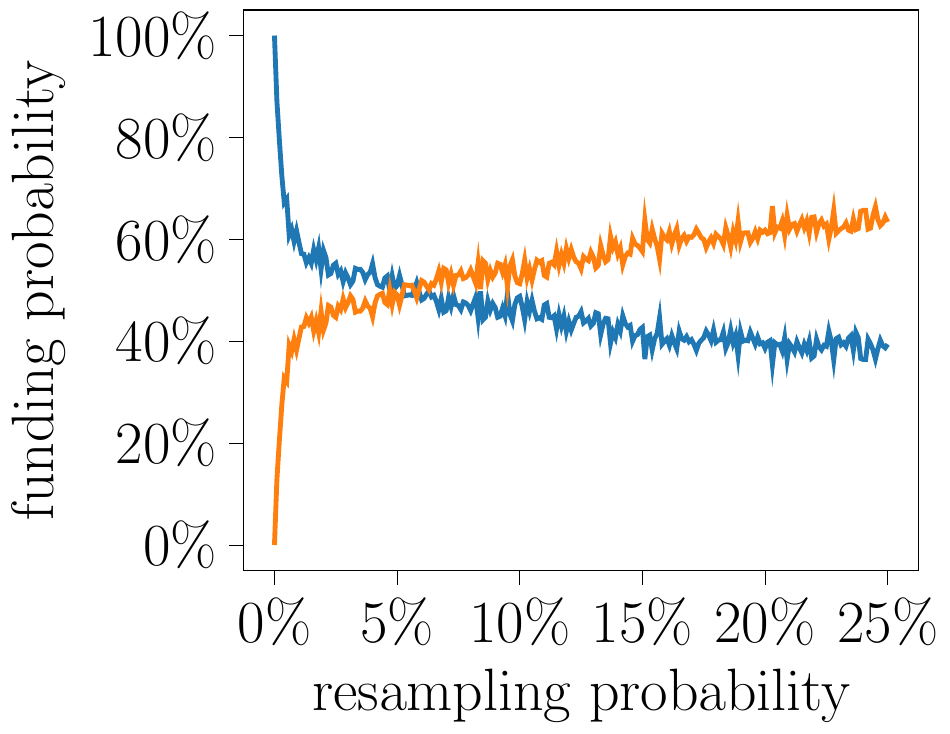}
		\caption{Add1+\Greedy}
	\end{subfigure}\hfill
	\begin{subfigure}{0.195\textwidth}
		\includegraphics[width=\textwidth]{./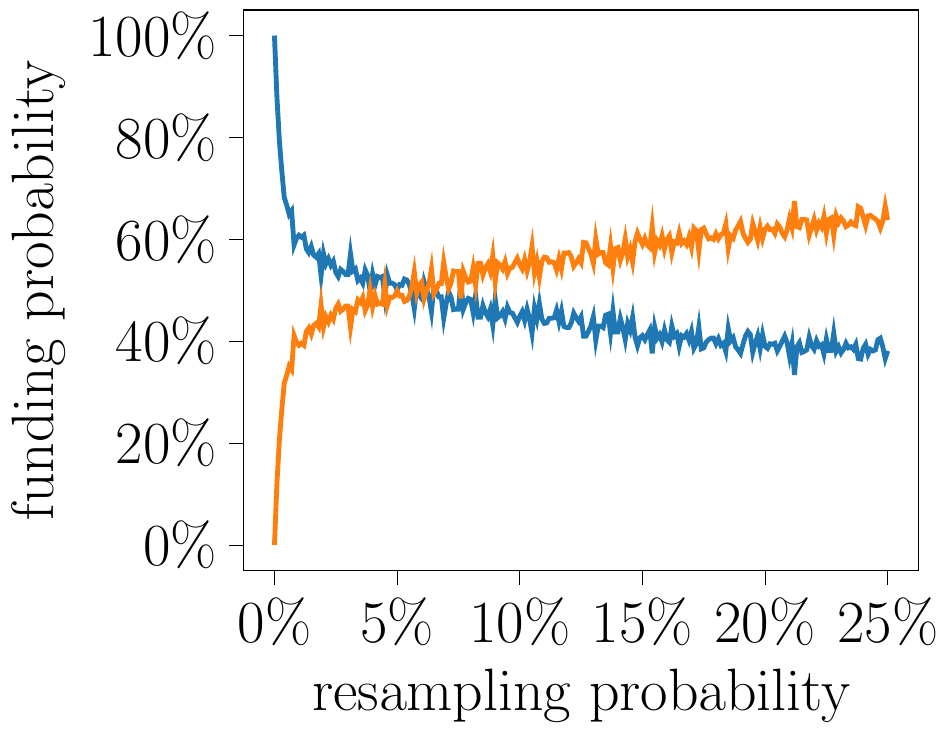}
		\caption{Add1}
	\end{subfigure}\hfill%
	\begin{subfigure}{0.195\textwidth}
		\includegraphics[width=\textwidth]{./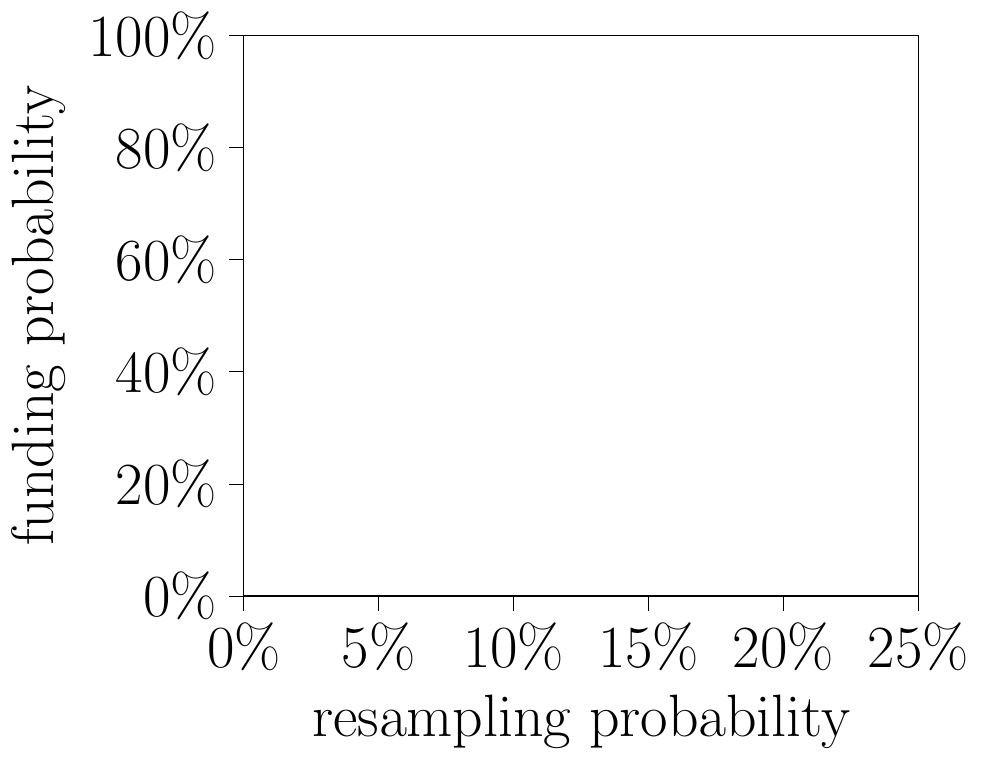}
		\caption{Epsilon}
	\end{subfigure}\hfill
	\begin{subfigure}{0.195\textwidth}
		\includegraphics[width=\textwidth]{./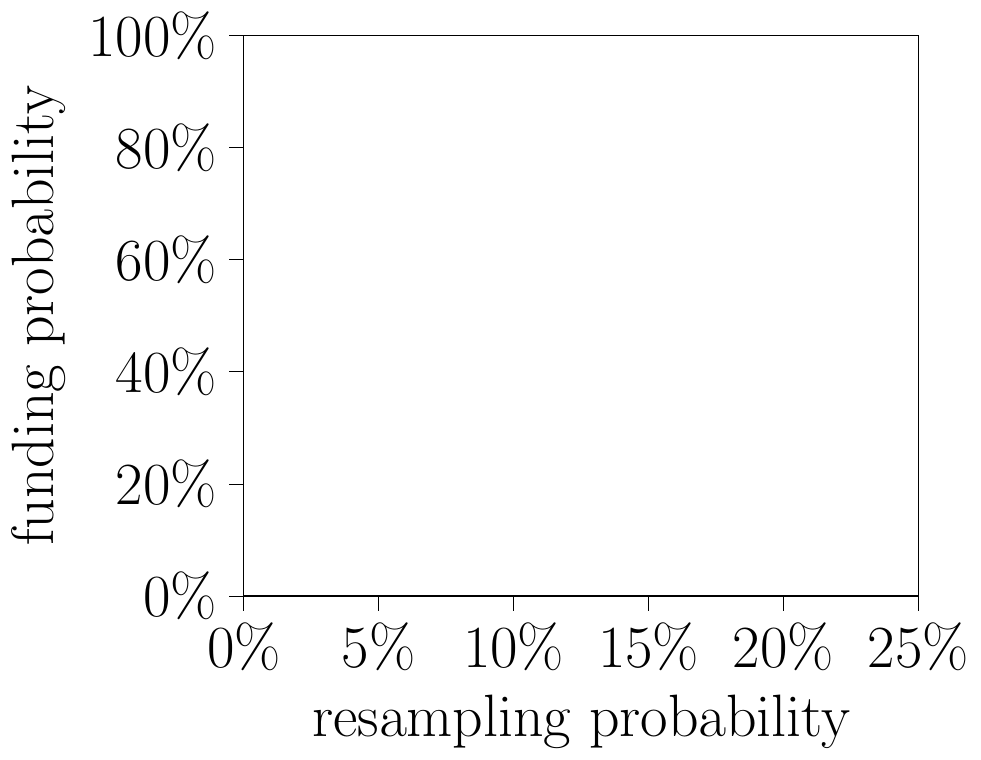}
		\caption{\Greedy}
	\end{subfigure}\hfill
	\begin{subfigure}{0.195\textwidth}
		\includegraphics[width=\textwidth]{./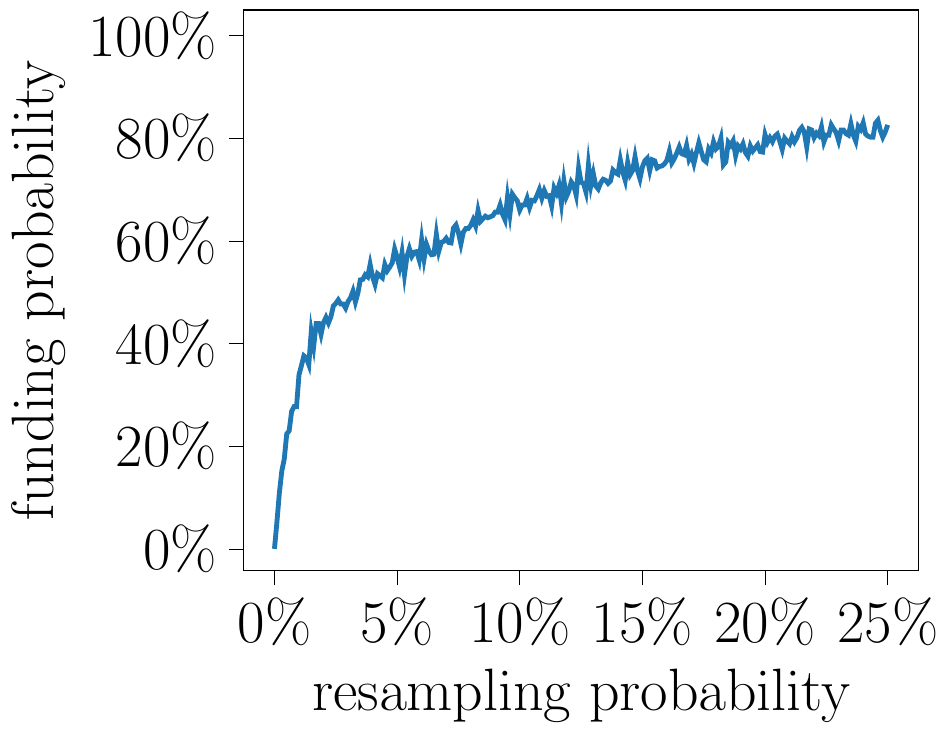}
		\caption{None}
	\end{subfigure}
	\caption{Obzar 2019 (Warszawa) for \MES with different completion methods.}\label{fig:MES4}
\end{figure}

\begin{figure}[t!]
	\centering 
	\begin{subfigure}{0.195\textwidth}
		\includegraphics[width=\textwidth]{./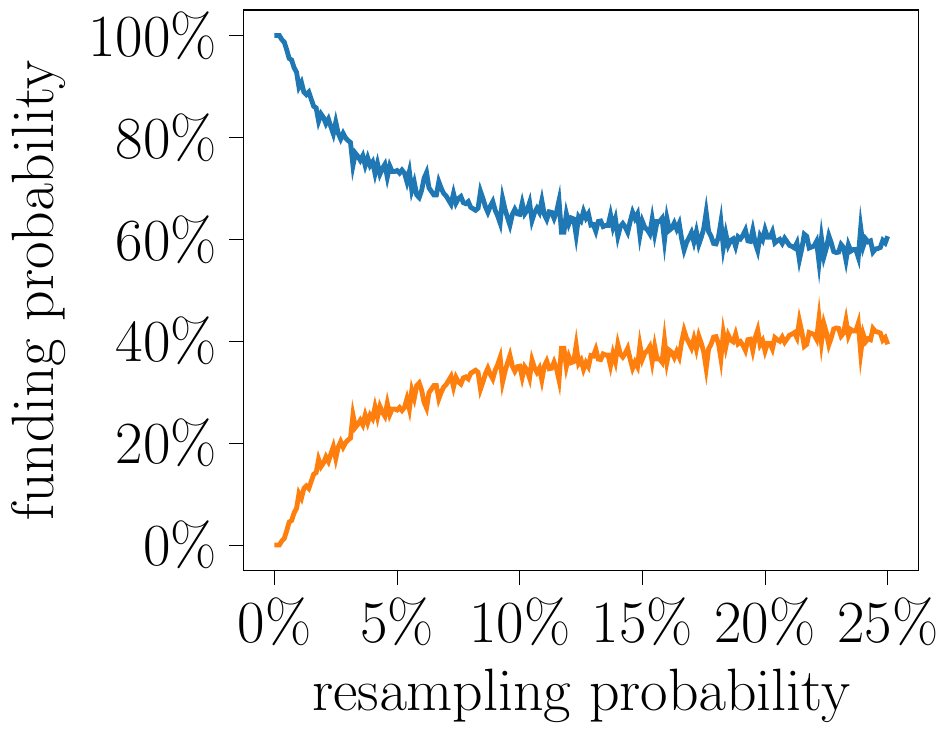}
		\caption{Add1+\Greedy}
	\end{subfigure}\hfill
	\begin{subfigure}{0.195\textwidth}
		\includegraphics[width=\textwidth]{./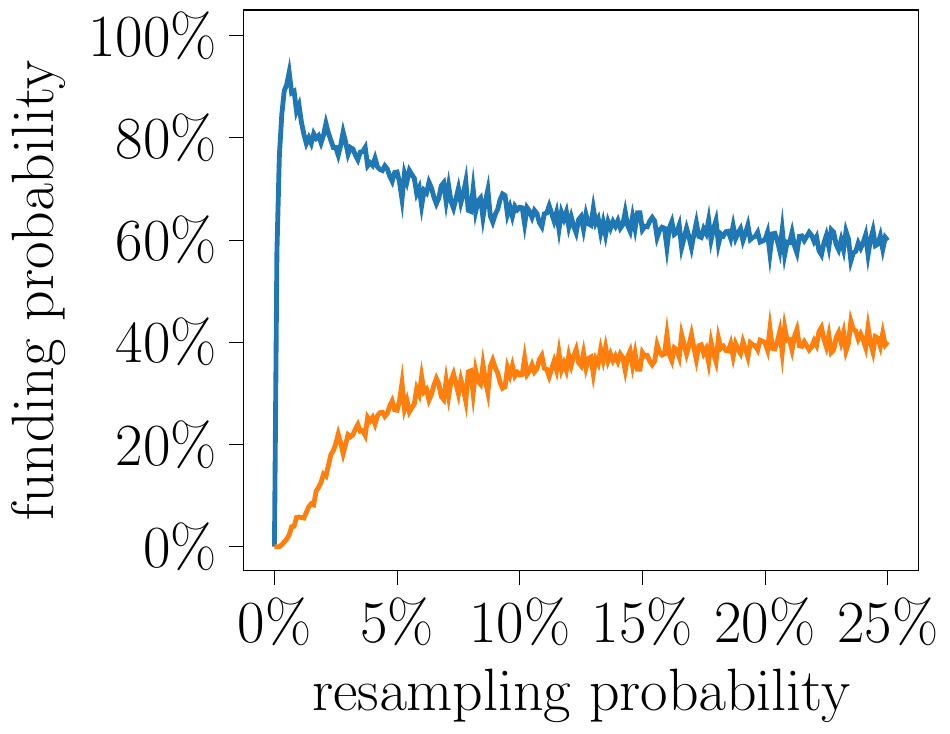}
		\caption{Add1}
	\end{subfigure}\hfill%
	\begin{subfigure}{0.195\textwidth}
		\includegraphics[width=\textwidth]{./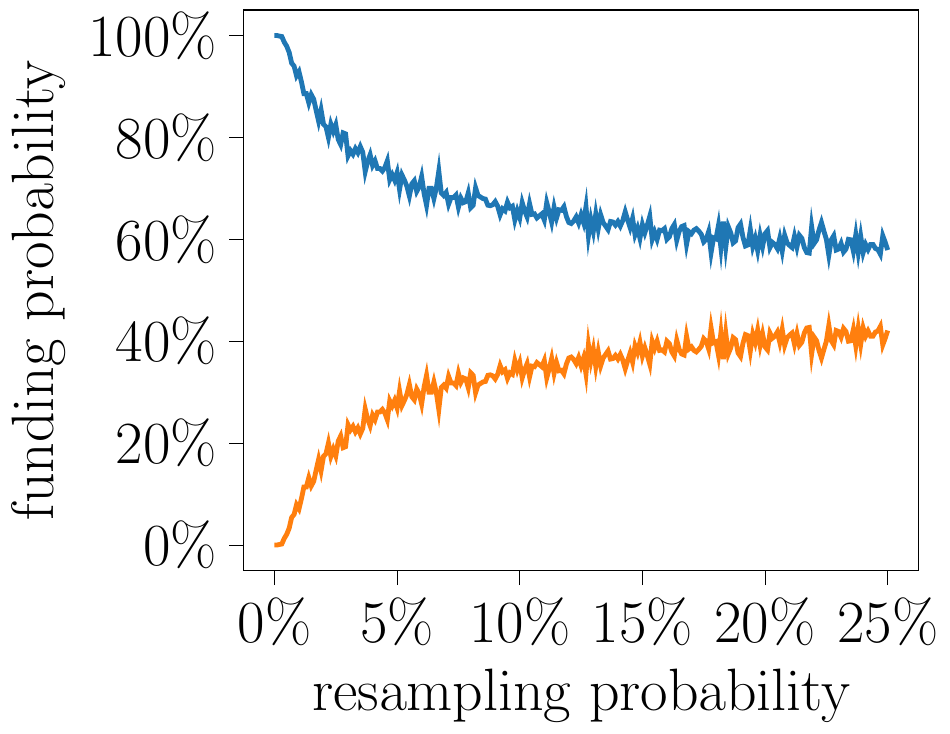}
		\caption{Epsilon}
	\end{subfigure}\hfill
	\begin{subfigure}{0.195\textwidth}
		\includegraphics[width=\textwidth]{./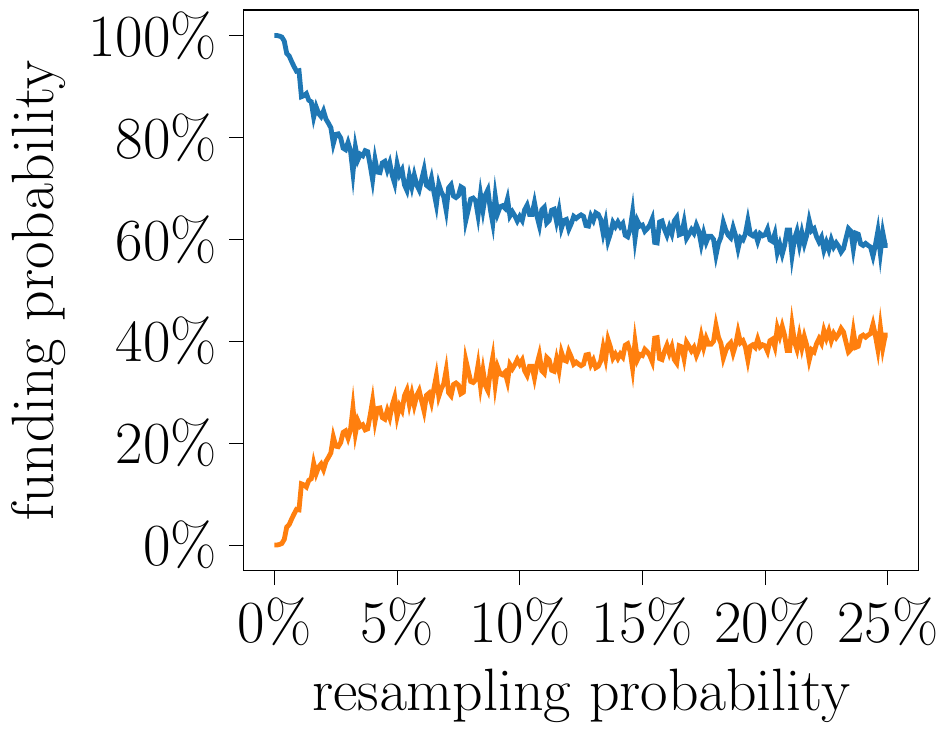}
		\caption{\Greedy}
	\end{subfigure}\hfill
	\begin{subfigure}{0.195\textwidth}
		\includegraphics[width=\textwidth]{./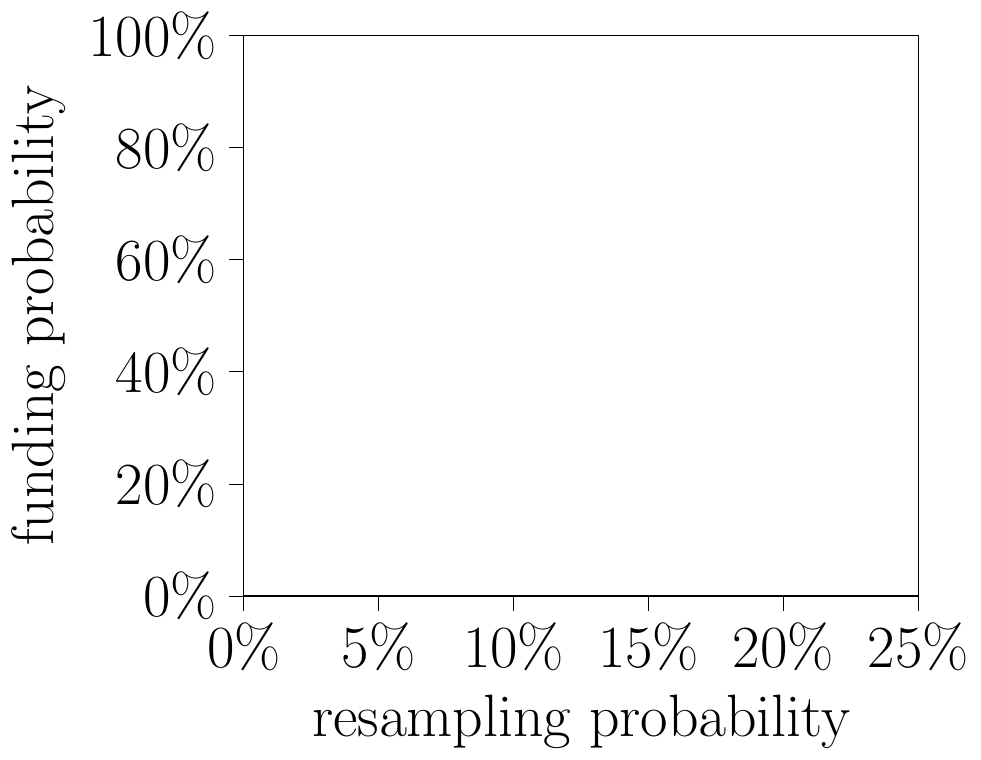}
		\caption{None}
	\end{subfigure}
	\caption{Rejon 13 2018 (Wroclaw) for \MES with different completion methods.} \label{fig:Rejon}\label{fig:MES5}
\end{figure}

\end{document}